\documentclass[11pt,onecolumn,letterpaper]{quantumarticle}

\pdfoutput=1

\usepackage{macros}
\usepackage{enumitem}
\usepackage{tabularx}

\allowdisplaybreaks

\title{Conditional contraction coefficients and their applications to quantum networks}

\author{Christoph Hirche}
\affiliation{Institute for Information Processing (tnt/L3S), Leibniz Universit\"at Hannover, Germany}
\email{christoph.hirche@gmail.com}
\author{Ian George}
\affiliation{Centre for Quantum Technologies, National University of Singapore, Singapore 117543, Singapore}
\email{qit.george@gmail.com}
\author{Theshani Nuradha}
\affiliation{Department of Mathematics and
Illinois Quantum Information Science and Technology (IQUIST) Center,\\ 
University of Illinois Urbana-Champaign, Urbana, IL 61801, USA}
\email{theshani.gallage@gmail.com}
\author{Mark M.~Wilde}
\affiliation{School of Electrical and Computer Engineering, Cornell University, Ithaca, New York 14850, USA}
\email{wilde@cornell.edu}

\date{August 21, 2026}

\begin{document}

\maketitle

\begin{abstract}
Contraction coefficients quantify the loss of distinguishability induced by a channel and provide a strong form of the data-processing inequality. While standard contraction coefficients ignore auxiliary quantum systems, existing extensions based on complete contraction coefficients require the compared states to have identical reference marginals. In this work, we introduce conditional contraction coefficients, a novel family that incorporates arbitrary quantum reference systems by subtracting the distinguishability already present in the reference system. We develop a general framework for contraction coefficients with such quantum side information, including the corresponding strong-data-processing-inequality (SDPI) constants, expansion coefficients, and relative contraction coefficients. For the trace distance, we show that the optimization can be restricted to orthogonal input states. For the quantum relative entropy, we prove that its conditional contraction coefficient is exactly equal to the contraction coefficient of the conditional mutual information, extending the classical correspondence between relative-entropy contraction and mutual-information contraction to the setting with quantum side information. More generally, we identify structural properties of divergences required for these results and discuss extensions beyond the relative entropy. These results establish a unified framework for analyzing information contraction in quantum network settings, where quantum side information and distributed correlations are intrinsic features of the information-processing task. Applications include an extension of the Polyanskiy--Wu bounds on mutual information contraction, new perspectives on mixing times, and fundamental limits on quantum memories.
\end{abstract}

\newpage
\tableofcontents

\section{Introduction}

\subsection{Background and motivation}

Contraction coefficients give a strengthening of the data-processing inequality for divergences and are fundamental to our understanding of how systems behave under noise. Given a quantum channel $\cN_{A\to B}$, they provide a universal bound on the loss of distinguishability of 
\begin{align}
    \cN(\rho_A) \text{ versus } \cN(\sigma_A) 
\end{align}
for all quantum states $\rho_A$ and $\sigma_A$. Due to their importance, they have been thoroughly investigated in the literature~\cite{ruskai1994beyond,petz1998contraction,LR1999,reeb2011hilbert,Hiai15,Hirche2022contraction,hirche2024quantumDivergences,george2025unifiedapproachquantumcontraction,delsol2025computationalaspectstracenorm}. Applications range from quantum computing~\cite{fawzi2022lower}, error mitigation~\cite{takagi2023universal,quek2024exponentially}, differential privacy~\cite{Christoph2024sample,nuradha2024contraction}, quantum capacities~\cite{belzig2024reversetypedataprocessinginequality}, and mixing times~\cite{george2025unifiedapproachquantumcontraction}. For a given divergence $\DD$, the contraction coefficient is formally defined as
    \begin{align}
        \eta_\DD(\cN) \coloneqq  \sup_{\substack{\rho_{A}\neq \sigma_{A}}} \frac{\DD(\cN(\rho_{A})\|\cN(\sigma_{A}))}{\DD(\rho_{A}\|\sigma_{A})}. \label{eq:def-gen-contra-coef-intro}
    \end{align}

What is not covered by this setting is the presence of quantum correlations beyond the input system of the channel. We are often interested in the loss of distinguishability of 
\begin{align}
    (\operatorname{id}\otimes \cN)(\rho_{RA}) \text{ versus } (\operatorname{id}\otimes \cN)(\sigma_{RA}) 
\end{align}
for all quantum states $\rho_{RA}$ and $\sigma_{RA}$. However, one can easily find two different states, for example $\rho_{RA}=\rho_R \otimes \tau_A$ and $\sigma_{RA}=\sigma_R \otimes \tau_A$, such that their distinguishability is not affected by the channel. The idea here is simply that all available distinguishability is due entirely to the additional reference system. As a result, a naive extension of contraction coefficients to include reference systems, simply by considering $\id\otimes\cN$, generally leads to a trivial bound~\cite{Hirche2022contraction}. 

A remedy is to restrict the states to be identical on the reference system. The result is the complete contraction coefficient~\cite[Sec.~6]{george2025quantumdoeblincoefficientsinterpretations}, motivated by related complete/tensor-stable relative-entropy inequalities developed in~\cite{gao2020fisher,gao2022complete}, defined as
\begin{align}
        \etaC_\DD(\cN) \coloneqq \sup_{\substack{|R|,\,\rho_{RA}\neq\sigma_{RA},\\ \rho_R=\sigma_R
        }} \frac{\DD((\id\otimes\cN)(\rho_{RA})\|(\id\otimes\cN)(\sigma_{RA}))}{\DD(\rho_{RA}\|\sigma_{RA})}.
        \label{eq:complete-contr-def}
\end{align}
Note that the optimization here is over a potentially unbounded reference system, making this quantity more challenging to work with.

Interestingly, the need for such quantities is a purely quantum phenomenon for certain divergences.
Indeed, if we assume that $R$ is a classical system and that $\DD$ has the extended direct-sum property, then the complete contraction coefficient and the usual one are equal (see \cref{app:classical-ref}). Nevertheless, in the quantum setting, such complete contraction coefficients provide a way of handling reference systems. However, the limitation to equal marginals is constraining and often not reflected in practical applications.

In what follows, we will introduce and discuss an alternative notion of contraction coefficients that takes into account reference systems without the restriction of equal marginals.  

\subsection{Summary of contributions}

In this paper, we propose an alternative notion of a non-trivial contraction coefficient that circumvents the restrictions discussed above. To this end, we define the \textit{conditional contraction coefficient} of a quantum channel $\mathcal{N}$ as follows: 
    \begin{align}
        \etaD_\DD(\cN) \coloneqq \sup_{\substack{|R|,\,\rho_{RA}\neq\sigma_{RA}}} \frac{\DD((\id\otimes\cN)(\rho_{RA})\|(\id\otimes\cN)(\sigma_{RA})) - \DD(\rho_R\|\sigma_R)}{\DD(\rho_{RA}\|\sigma_{RA})- \DD(\rho_R\|\sigma_R)}.
        \label{eq:conditional-contr-coeff-def}
    \end{align}
Here, the reference system is not artificially restricted, and instead we appropriately take into account the distinguishability that it provides. This concept is somewhat similar to the idea of amortized channel divergences~\cite{wilde2020amortized}, in which a reference system can aid the distinguishability. The difference of divergences in the numerator and denominator of~\eqref{eq:conditional-contr-coeff-def} also appeared under the name ``conditional relative entropy'' in~\cite{capel2018quantum}. If we again restrict the optimization to states with equal marginals, we recover the complete contraction coefficient, thus implying the bound $\etaC_\DD(\cN)\leq \etaD_\DD(\cN)$.    

Similar to the complete contraction coefficient, the need for reference systems in the conditional contraction coefficient is a purely quantum phenomena for certain divergences. Indeed, if the reference system is restricted to be classical and the divergence obeys the extended direct-sum property, then the usual contraction coefficient and the conditional one are equal (see \cref{app:classical-ref}).

A different notion of contraction coefficient involves correlation measures instead of divergences, in particular, the mutual information. Such contraction coefficients are closely related to the usual contraction coefficients for the relative entropy. In particular, it is known that~\cite{watanabe2012private,Hirche2022contraction}
\begin{align}
    \eta_D(\cN)=\sup_{\rho_{UA}} \frac{I(U:B)}{I(U:A)}, \label{eq:contr-MI-rel-ent}
\end{align}
where we chose $\mathbb{D}$ to be the quantum relative entropy~\cite{Umegaki1962}, the latter denoted by $D$.
Here $U$ is a classical system, the channel $\mathcal{N}$ has input system $A$ and output system~$B$, and the supremum is over every classical-quantum state $\rho_{UA}$. The notation $I(U:B)$ denotes the mutual information of the output state and $I(U:A)$ the mutual information of the input state. Hence, the contraction of the mutual information is identical to that of the relative entropy.

Since the contraction coefficient in~\eqref{eq:contr-MI-rel-ent} does not capture quantum correlations, one might rather consider fully quantum states or even the conditional mutual information (CMI), incorporating another quantum reference system. Such contractions have been considered, for example, in~\cite{chen2020matrix}. 
One of our main results states that the CMI contraction is indeed the information counterpart to the conditional contraction coefficient $\etaD_D(\cN)$; concretely, 
\begin{align}
    \etaD_D(\cN) = \sup_{\rho_{RCA}} \frac{I(C:B|R)}{I(C:A|R)}.  \label{eq:intro-cond-MI-cond-rel-ent}
\end{align}
Interestingly, for this result, it does not matter whether $C$ is a quantum system, a classical system, or even restricted to be a binary classical system.

In addition to the equality in~\eqref{eq:intro-cond-MI-cond-rel-ent}, we extensively discuss these and other relations between divergence and mutual information based contraction. For a summary of these results, refer to~\cref{fig:contraction-relations} and~\cref{Cor:Summary1}. We then also extend these results to divergences beyond the relative entropy.

\begin{figure}[t]
\centering
\begin{tikzpicture}


\node[
  draw,
  rounded corners=2pt,
  minimum width=1.9cm,
  minimum height=0.9cm
] (A) at (0,0) {$\eta_D$};

\node[
  draw,
  rounded corners=2pt,
  minimum width=1.9cm,
  minimum height=0.9cm
] (B) at (3.0,0) {$\eta_D^p$};

\node[
  draw,
  rounded corners=2pt,
  minimum width=1.9cm,
  minimum height=0.9cm
] (C) at (6.0,0) {$\etaC_D$};

\node[
  draw,
  rounded corners=2pt,
  minimum width=1.9cm,
  minimum height=0.9cm,
  fill=blue!8
] (D) at (9.0,0) {$\etaD_D$};


\node at (0,0.85)
  {\small no reference};

\node at (3.0,0.85)
  {\small product reference};

\node at (6.0,0.85)
  {\small $\rho_R=\sigma_R$};

\node at (9.0,0.85)
  {\small arbitrary $\rho_R,\sigma_R$};


\draw[thick,->] (A.east) -- (B.west);
\draw[thick,->] (B.east) -- (C.west);
\draw[thick,->] (C.east) -- (D.west);

\node at (1.5,0.25) {$\leq$};
\node at (4.5,0.25) {$\leq$};
\node at (7.5,0.25) {$\leq$};


\node[
  draw,
  rounded corners=2pt,
  minimum width=1.9cm,
  minimum height=0.75cm,
  fill=green!8
] (E) at (0,-1.7) {$\eta_{\mathrm{cqMI}}$};

\node[
  draw,
  rounded corners=2pt,
  minimum width=1.9cm,
  minimum height=0.75cm,
  fill=green!8
] (F) at (3.0,-1.7) {$\eta_{\mathrm{MI}}$};

\node[
  draw,
  rounded corners=2pt,
  minimum width=1.9cm,
  minimum height=0.75cm,
  fill=green!8
] (G) at (6.0,-1.7) {$\eta_{\mathrm{MIR}}$};

\node[
  draw,
  rounded corners=2pt,
  minimum width=1.9cm,
  minimum height=0.75cm,
  fill=green!8
] (H) at (9.0,-1.7) {$\eta_{\mathrm{CMI}}$};


\draw[double,double distance=1.5pt,very thick]
  (A.south) -- (E.north);

\draw[double,double distance=1.5pt,very thick]
  (B.south) -- (F.north);

\draw[double,double distance=1.5pt,very thick]
  (C.south) -- (G.north);

\draw[double,double distance=1.5pt,very thick]
  (D.south) -- (H.north);


\node[
  font=\scriptsize,
  anchor=west
] at (0,-1)
  {\cite[Prop.~6.2]{Hirche2022contraction}};

\node[
  font=\scriptsize,
  anchor=west
] at (3,-1)
  {Lem.~\ref{Lem:MI-p}};

\node[
  font=\scriptsize,
  anchor=west
] at (6,-1)
  {Cor.~\ref{Cor:MIR-equals-complete}};

\node[
  font=\scriptsize,
  anchor=west
] at (9,-1)
  {Cor.~\ref{Cor:CMI-amo-equal}};

\end{tikzpicture}

\caption{
Relations between the contraction coefficients considered in this paper
for the relative entropy. Arrows denote inequalities and double lines
denote equalities. References indicate the corresponding results in the
paper. The depicted inequalities are summarized in~\cref{Cor:Summary1}.
}
\label{fig:contraction-relations}
\end{figure}

Many of the results described above also hold for the strong-data-processing-inequality (SDPI) constants, formally defined in~\cref{Def:Gen-SDPI}, where the second state is fixed and not subject to optimization. A general SDPI constant is denoted by $\eta_\DD(\cN,\sigma)$, where $\cN$ is a channel and $\sigma$ is a fixed state. One question that is of particular interest for SDPI constants is whether the tensorization property holds, namely whether
\begin{align}
    \eta_\DD(\cN_1\otimes\cN_2,\sigma_1\otimes\sigma_2) \overset{?}{=} \max\{ \eta_\DD(\cN_1,\sigma_1), \eta_\DD(\cN_2,\sigma_2)\}
\end{align}
holds for channels $\cN_1$ and $\cN_2$ and states $\sigma_1$ and $\sigma_2$. This is true for many classical divergences~\cite{Raginsky2016}, but is not in general in the quantum setting~\cite{Cao2026noTensorization}. In~\cref{Sec:Tensorization}, we consider the special case of relative entropy SDPI constants and give an approximate tensorization result that leads to exact tensorization for several special cases, including the generalized depolarizing channel with respect to its fixed point. 

Often it is sensible to consider tensorization in the presence of reference systems. In fact, Ref.~\cite{gao2020fisher} showed that tensorization indeed holds for a particular definition of relative entropy contraction; we refer to~\cref{Sec:Tensorization} for details. We show here that a similar result is equivalent to tensorization of the mutual information SDPI constant,
\begin{align}
    \eta_{\MI}(\cN_1\otimes\cN_2,\sigma_1\otimes\sigma_2) = \max\{ \eta_{\MI}(\cN_1,\sigma_1), \eta_\MI(\cN_2,\sigma_2)\}. 
\end{align}
Even more generally, using a conditional SDPI constant, we show that tensorization also holds for the conditional mutual information, 
\begin{align}
    \eta_\CMI(\cN_1\otimes\cN_2,\sigma_1\otimes\sigma_2) = \max\{ \eta_\CMI(\cN_1,\sigma_1), \eta_\CMI(\cN_2,\sigma_2)\}. 
\end{align}
Thus, our paper significantly extends the understanding of tensorization of SDPI constants beyond the quantities usually considered in the literature. 

Another relevant divergence in the context of contraction coefficients is the trace distance. In particular, its contraction has the favorable property that it is optimized by orthogonal states. We show that this indeed holds also in the presence of reference systems. That is, for the conditional contraction coefficient, the following equality holds:
\begin{align}
    \etaD_{\Tr}(\cN) &= \sup_{\substack{\tau_{RA} \perp \bar\tau_{RA}}} \frac{E_1(\cN(\tau_{RA})\|\cN(\bar\tau_{RA})) - E_1(\tau_R\|\bar\tau_R)}{1- E_1(\tau_R\|\bar\tau_R)}.
\end{align}
Considering instead the more general hockey-stick divergence we give an example demonstrating that the conditional contraction coefficient can  be strictly larger than the complete contraction coefficient, 
\begin{align}
    \etaD_{E_\gamma}(\cN) > \etaC_{E_\gamma}(\cN),
\end{align}
even in the classical setting. 

Next, we consider applications of our extended contraction framework.
First, we discuss contraction in network settings. As our main result in this context, we generalize a bound from~\cite{polyanskiy2015strong} to the quantum setting. For a network channel $\cN_{A\to BE} = \cR_{D\to E}\circ\cM_{A\to BD}$, we prove the following inequality:
    \begin{align}
        \eta_{\MI}(\cN_{A\to BE}) &\leq \eta_{\cR}\cdot\eta_{\MI}(\cM_{A\to BD}) + (1-\eta_{\cR})\cdot \eta_{\MI}(\cM_{A\to B}).
\end{align}
We follow up with implications of this inequality, including novel bounds on the contraction coefficients of tensor-product channels. 
The second main application is the introduction of \textit{replacing time}, as a generalization of mixing time. It is defined by
\begin{align}               t_{\operatorname{rep}}^{\mbb{D}}(\cN,\delta) \coloneqq \inf \left\{ n \in \mbb{N} : \begin{matrix}   \mbb{D}((\id_{R} \otimes \cN^{n})(\rho_{RA}) \Vert \sigma_{R} \otimes \omega_{A}) - \DD(\rho_{R} \Vert \sigma_{R}) \leq \delta \, , \\
                \forall R \, , \, \forall \rho_{RA},\sigma_{RA} \in  \Density(R \otimes A) \end{matrix} \right\} 
            \end{align}
and we give bounds in terms of the conditional contraction coefficient and show that replacing is asymptotically equivalent to mixing. 

Finally, we give some applications to bounding storage times in quantum memories and limitations in quantum machine learning.

\subsection{Paper organization}

The remainder of our paper is organized as follows. First, we establish various definitions and prove some general properties in~\cref{Sec:Definitions}. Here we also discuss expansion coefficients and relative contraction coefficients. In~\cref{Sec:Relative-entropy}, we focus on the relative entropy and, in particular, its relation to the mutual information and the CMI. We also extend the results to divergences that are structurally similar to the relative entropy, such as $f$-divergences. In~\cref{Sec:Tensorization}, we continue with the relative entropy and discuss tensorization of SDPI constants, providing our exact and approximate tensorization results. 
 In~\cref{Sec:Hockey-Stick}, we specialize the investigation to the trace distance and, more generally, the hockey-stick divergence. 
 We study the applications of our new contraction coefficients in quantum networks, discrete-time quantum Markov chains, and quantum machine learning, in~\cref{Sec:Applications1},~\cref{Sec:Applications2}, and~\cref{Sec:Applications3}, respectively.
Finally, we conclude in~\cref{Sec:Conclusions} and briefly discuss some open problems.


\section{Contraction with reference systems}

\label{Sec:Definitions}

In this section, we begin by recalling quantum divergences and their properties (\cref{sec:q-divs}). After that, we introduce generalized SDPI constants and contraction coefficients, including our new definition of conditional contraction coefficients and restricted coefficients with a product constraint on the reference system (\cref{sec:contrac-coeffs}). \cref{sec:props-contr-coeffs} then establishes various properties of these contraction coefficients. In~\cref{sec:expan-coeffs}, we introduce expansion coefficients, and finally, we introduce relative contraction and expansion coefficients, in which two channels are compared (\cref{sec:relative-contr-expans}). 

\subsection{Quantum divergences}

\label{sec:q-divs}

First, we recall some basic definitions for quantum divergences. 
We define a quantum divergence $\DD$ to be a function that maps two quantum states to a real number and satisfies the data-processing inequality: 
\begin{align}
    \DD(\rho\|\sigma) \geq \DD(\cN(\rho)\|\cN(\sigma)) .
\end{align}
Sometimes we need additional properties beyond this basic statement.

A divergence $\DD$ is \textit{non-negative} if 
\begin{align}
    \DD(\rho\|\sigma)\geq0 \label{Eq:nonneg}
\end{align}
for all states $\rho$ and $\sigma$, and we say that a non-negative divergence is \textit{faithful} if
\begin{equation}
\DD(\rho\|\sigma) = 0 \quad \Leftrightarrow \quad \rho = \sigma.    
\end{equation}

 A divergence $\DD$ is \textit{additive} if
\begin{align}
    \DD(\rho_A\otimes\rho_B\|\sigma_A\otimes\sigma_B) = \DD(\rho_A\|\sigma_A) + \DD(\rho_B\|\sigma_B),
\end{align}
for all states $\rho_A$, $\rho_B$, $\sigma_A$, and $\sigma_B$.

We say that $\DD$ has the \textit{direct-sum property} if the following equality holds for all classical-quantum states $\rho_{XA} = \sum_x p(x) |x\rangle\!\langle x|_X \otimes \rho^x_A$ and $\sigma_{XA}= \sum_x p(x) |x\rangle\!\langle x|_X \otimes \sigma^x_A$:
\begin{align}
    \DD(\rho_{XA}\|\sigma_{XA}) =\sum_x p(x) \DD(\rho_A^x\|\sigma_A^x). \label{eq:direct-sum-prop}
\end{align}

A divergence $\DD $ is \textit{jointly convex} if the following inequality holds for all $\lambda\in [0,1]$ and states $\rho_0$, $\rho_1$, $\sigma_0$, and $\sigma_1$:
\begin{equation}
    (1-\lambda)\DD(  \rho_0 \|  \sigma_0)  + \lambda\DD( \rho_1  \|  \sigma_1 )\geq \DD((1-\lambda) \rho_0 + \lambda \rho_1  \| (1-\lambda) \sigma_0 + \lambda \sigma_1  )  
     . \label{eq:jointly-convex-def}
\end{equation}
A divergence that obeys the direct-sum property is jointly convex, as a consequence of~\eqref{eq:direct-sum-prop} and data processing under partial trace over the classical register $X$.

We will introduce further more specific properties where needed. 

\subsection{Contraction coefficients}

\label{sec:contrac-coeffs}

We now introduce various notions of contraction coefficient. Here we distinguish  between different settings, depending on which states are known. If we want an improvement in the data-processing inequality for all states in the first argument, but fixed second argument, then we consider SDPI constants of the form 
    \begin{align}
        \eta_\DD(\cN,\sigma_A)    \coloneqq \sup_{\substack{\rho_{A}\neq \sigma_{A}}} \frac{\DD(\cN(\rho_{A})\|\cN(\sigma_{A}))}{\DD(\rho_{A}\|\sigma_{A})}. \label{eq:def-gen-SDPI-coef}
    \end{align}
If we want an improvement to hold when both inputs are arbitrary, we consider contraction coefficients, as defined in~\eqref{eq:def-gen-contra-coef-intro}. They can be phrased as optimized SDPI constants:
    \begin{align}
        \eta_\DD(\cN) = \sup_\sigma \eta_\DD(\cN,\sigma) = \sup_{\substack{\rho_{A}\neq \sigma_{A}}} \frac{\DD(\cN(\rho_{A})\|\cN(\sigma_{A}))}{\DD(\rho_{A}\|\sigma_{A})}. \label{eq:def-gen-contra-coef}
    \end{align}
    
We now define versions of these that explicitly include reference systems, as initiated in~\cite{gao2022complete} for the relative entropy. Let us begin with the SDPI constants. 

\begin{definition}[Generalized SDPI constants]\label{Def:Gen-SDPI}
    For a quantum channel $\cN$, a state $\sigma_{RA}$, and a generalized divergence $\DD$, we define the SDPI constant with product reference as
        \begin{align}
        \label{SDPI-product}\eta^p_{\DD}(\cN,\sigma_A)\coloneqq \sup_{\substack{|R|,\,\rho_{RA}}} \frac{\DD(\cN(\rho_{RA})\|\rho_R\otimes \cN(\sigma_A))}{\DD(\rho_{RA}\|\rho_R\otimes\sigma_A)}, 
    \end{align}
    the complete SDPI constant as~\cite{george2025quantumdoeblincoefficientsinterpretations}
    \begin{align}
        \etaC_\DD(\cN,\sigma_{RA}) \coloneqq \sup_{\substack{\rho_{RA}\neq\sigma_{RA},\\ \rho_R=\sigma_R }} \frac{\DD((\id\otimes\cN)(\rho_{RA})\|(\id\otimes\cN)(\sigma_{RA}))}{\DD(\rho_{RA}\|\sigma_{RA})}, 
    \end{align}
    and the conditional SDPI constant as
    \begin{align}
        \etaD_\DD(\cN,\sigma_{RA}) \coloneqq \sup_{\substack{\rho_{RA}\neq\sigma_{RA}}} \frac{\DD((\id\otimes\cN)(\rho_{RA})\|(\id\otimes\cN)(\sigma_{RA})) - \DD(\rho_R\|\sigma_R)}{\DD(\rho_{RA}\|\sigma_{RA})- \DD(\rho_R\|\sigma_R)}.
    \end{align}
\end{definition}

Similarly, we define contraction coefficients with reference systems.

\begin{definition}[Generalized contraction coefficient]
    For a quantum channel $\cN$ and a generalized divergence $\DD$, we define the contraction coefficient with product reference as
    \begin{align}
        \eta^p_\DD(\cN)\coloneqq \sup_{\substack{|R|,\,\rho_{RA},\,\sigma_A:\\\rho_{RA} \neq \rho_R \otimes \sigma_A}} \frac{\DD(\cN(\rho_{RA})\|\rho_R\otimes \cN(\sigma_A))}{\DD(\rho_{RA}\|\rho_R\otimes\sigma_A)}, 
    \end{align}
    the complete contraction coefficient as~\cite{george2025quantumdoeblincoefficientsinterpretations} 
    \begin{align}
        \etaC_\DD(\cN) \coloneqq \sup_{\substack{|R|,\,\rho_{RA}\neq\sigma_{RA},\\ \rho_R=\sigma_R 
        }} \frac{\DD((\id\otimes\cN)(\rho_{RA})\|(\id\otimes\cN)(\sigma_{RA}))}{\DD(\rho_{RA}\|\sigma_{RA})}, \label{eq:Comp_CC}
    \end{align}
    and the conditional contraction coefficient as
    \begin{align}
        \etaD_\DD(\cN) \coloneqq \sup_{\substack{|R|,\,\rho_{RA}\neq\sigma_{RA}}} \frac{\DD((\id\otimes\cN)(\rho_{RA})\|(\id\otimes\cN)(\sigma_{RA})) - \DD(\rho_R\|\sigma_R)}{\DD(\rho_{RA}\|\sigma_{RA})- \DD(\rho_R\|\sigma_R)}.\label{Eq:conditional-def}
    \end{align}
\end{definition}

\begin{remark}
    Extending the notion of conditional relative entropy from~\cite{capel2018quantum}, we can define the generalized conditional relative entropy of states $\rho_{RA}$ and $\sigma_{RA}$ as follows:
    \begin{align}
    \DD_A(\rho_{RA}\|\sigma_{RA}) \coloneqq  \DD(\rho_{RA}\|\sigma_{RA}) - \DD(\rho_{R}\|\sigma_{R}).         
    \end{align}
    Using this notation, the conditional contraction coefficient can alternatively be expressed as 
        \begin{align}
        \etaD_\DD(\cN) = \sup_{\substack{\rho_{RA}\neq\sigma_{RA}}} \frac{\DD_B((\id\otimes\cN)(\rho_{RA})\|(\id\otimes\cN)(\sigma_{RA}))}{\DD_A(\rho_{RA}\|\sigma_{RA})}.
    \end{align}
\end{remark}

\begin{remark}
    We note here that the notation used above for $\etaD_\DD(\cN,\sigma_{RA}) $ and $\etaD_\DD(\cN)$ is incomplete. The optimization is should be over states such that $\infty>\DD(\rho_{RA}\|\sigma_{RA})- \DD(\rho_R\|\sigma_R)>0$, which is a strictly stronger constraint than just $\rho_{RA}\neq\sigma_{RA}$. However, for notational brevity, we will keep using the latter notation. As a general rule, in this paper, all ratio optimizations are over strictly positive and finite denominators. If no admissible state with strictly positive finite denominator exists, the corresponding contraction coefficient is defined to be zero.
\end{remark}

\subsection{Properties of contraction coefficients}

\label{sec:props-contr-coeffs}

In this section, we will first discuss some general properties of the definitions from \cref{sec:contrac-coeffs} for arbitrary divergences, before specializing to particular divergences in the following sections. 

The following inequalities hold for every non-negative divergence, as defined in~\eqref{Eq:nonneg},
\begin{align}
    0\leq\eta_\DD(\cN) &\leq \eta^p_\DD(\cN) \leq \etaC_\DD(\cN)  \leq  \etaD_\DD(\cN)\leq 1. 
    \label{eq:simple-ineqs-relations}
\end{align}
Furthermore, the following bounds hold for a concatenation of channels. 
\begin{lemma}
    For two quantum channels $\cN$ and $\cM$ and a state $\sigma_{RA}$, the following hold for the SDPI constants, 
    \begin{align}
        \eta^p_\DD(\cM\circ\cN,\sigma) & \leq \eta^p_\DD(\cM,\cN(\sigma)) \;\eta^p_\DD(\cN,\sigma), \\
        \etaC_\DD(\cM\circ\cN,\sigma) & \leq \etaC_\DD(\cM,\cN(\sigma)) \;\etaC_\DD(\cN,\sigma), \\
        \etaD_\DD(\cM\circ\cN,\sigma) & \leq \etaD_\DD(\cM,\cN(\sigma)) \;\etaD_\DD(\cN,\sigma),
    \end{align}
    where $\mathcal{N}(\sigma) \equiv (\operatorname{id}\otimes \mathcal{N})(\sigma_{RA})$.
    Hence the following hold for the contraction coefficients:
        \begin{align} 
        \eta^p_\DD(\cM\circ\cN) & \leq \eta^p_\DD(\cM) \;\eta^p_\DD(\cN), \\
        \etaC_\DD(\cM\circ\cN) & \leq \etaC_\DD(\cM) \;\etaC_\DD(\cN), \\
        \etaD_\DD(\cM\circ\cN) & \leq \etaD_\DD(\cM) \;\etaD_\DD(\cN). \label{eq:concatenation_etaD}
    \end{align}
\end{lemma}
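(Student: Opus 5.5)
The plan is the usual "multiply and divide" argument, applied separately to each of the three SDPI constants. I then obtain the statements for contraction coefficients by taking suprema over $\sigma$.

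\textbf{Conditional case.} Fix $\rho_{RA}$ with admissible (positive, finite) denominator $\DD(\rho_{RA}\|\sigma_{RA})-\DD(\rho_R\|\sigma_R)$. Write $\rho' = (\id\otimes\cN)(\rho_{RA})$ and $\sigma'=(\id\otimes\cN)(\sigma_{RA})$. The key observation is that the reference marginals are untouched by $\id\otimes\cN$, so $\rho'_R=\rho_R$ and $\sigma'_R=\sigma_R$, and the subtracted term $\DD(\rho_R\|\sigma_R)$ is the same at every stage of the chain. Abbreviate the conditional divergence of the remark as $\DD_A(\rho\|\sigma)\coloneqq\DD(\rho\|\sigma)-\DD(\rho_R\|\sigma_R)$. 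Then:
\begin{itemize}
\item If $\DD_B(\rho'\|\sigma')$ is positive and finite, the pair $(\rho',\sigma')$ is admissible for $\etaD_\DD(\cM,\cN(\sigma))$. Hence
\begin{align}
\DD_C\big((\id\otimes\cM)(\rho')\,\big\|\,(\id\otimes\cM)(\sigma')\big)\le \etaD_\DD(\cM,\cN(\sigma))\,\DD_B(\rho'\|\sigma').
\end{align}
By definition, $\DD_B(\rho'\|\sigma')\le \etaD_\DD(\cN,\sigma)\,\DD_A(\rho\|\sigma)$. Dividing by $\DD_A(\rho\|\sigma)$ and taking the supremum over $\rho$ gives the claim.
\item If $\DD_B(\rho'\|\sigma')=0$, data processing under $\id\otimes\cM$ gives numerator $\le 0$ for the composite. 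The ratio is then at most $0$, which is bounded by the product since both factors are nonnegative.
\item $\DD_B(\rho'\|\sigma')=\infty$ cannot occur: data processing under $\id\otimes\cN$ gives $\DD(\rho'\|\sigma')\le\DD(\rho\|\sigma)<\infty$.
\end{itemize}

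\textbf{Complete case.} The argument is identical, with no subtraction. The constraint $\rho_R=\sigma_R$ is preserved because the channel acts only on $A$. The degenerate case $\DD(\rho'\|\sigma')=0$ is handled by data processing, as above.

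\textbf{Product-reference case.} Here the second argument $\rho_R\otimes\sigma_A$ maps to $\rho_R\otimes\cN(\sigma_A)$ under $\cN$. This is again of product form, with the same $\rho_R$ and the fixed state $\cN(\sigma_A)$. So the intermediate pair $(\cN(\rho_{RA}),\rho_R\otimes\cN(\sigma_A))$ is admissible for $\eta^p_\DD(\cM,\cN(\sigma_A))$, since $\cN(\rho_{RA})$ has $R$-marginal $\rho_R$. The same chain then applies.

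\textbf{Contraction coefficients.} For the second set of inequalities, bound $\etaD_\DD(\cM,\cN(\sigma))\le\etaD_\DD(\cM)$, and similarly for the other two. Since $\cN(\sigma)$ is just some state, this step is immediate. Then take the supremum over $\sigma$ (and over $|R|$) on the left; the product bound holds uniformly in $\sigma$.

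\textbf{Expected obstacle.} The only real care needed is the admissibility bookkeeping: intermediate denominators might be zero or infinite. Negative values are also conceivable for an arbitrary divergence, since $\DD_A$ need not be nonnegative. I would use data processing under $\id\otimes\cN$ and $\id\otimes\cM$, applied to the full states with the subtracted marginal term fixed. This gives monotonicity of $\DD_A$ along the chain, which rules out the problematic cases. If the intermediate $\DD_B$ were negative, then monotonicity makes the composite numerator also $\le\DD_B<0$, so the ratio is negative and the bound holds trivially.
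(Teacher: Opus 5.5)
Your proposal is correct and is essentially the paper's proof: chain the SDPI inequality for $\cM$ at the intermediate fixed state $\cN(\sigma)$ with the one for $\cN$ at $\sigma$, then take suprema. This works because $\id\otimes\cN$ leaves the $R$-marginals intact, which preserves the subtracted term, the constraint $\rho_R=\sigma_R$, and the product form $\rho_R\otimes\cdot$; your admissibility bookkeeping just makes explicit what the paper leaves implicit. One small correction: the negative case you worry about cannot occur, because data processing under $\Tr_A$ (resp.\ $\Tr_B$) already gives $\DD(\rho_{RA}\|\sigma_{RA})-\DD(\rho_R\|\sigma_R)\geq 0$ for all states, and this is also what makes the conditional SDPI constants nonnegative, so the chained inequalities may be multiplied.
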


\begin{proof}
    We show the second inequality. Consider that 
    \begin{align}
        &\DD((\id\otimes(\cM\circ\cN))(\rho_{RA})\|(\id\otimes(\cM\circ\cN))(\sigma_{RA})) \notag \\
        &\leq \etaC_\DD(\cM,\cN(\sigma)) \; \DD((\id\otimes\cN)(\rho_{RA})\|(\id\otimes\cN)(\sigma_{RA})) \\
        &\leq \etaC_\DD(\cM,\cN(\sigma)) \; \etaC_\DD(\cN,\sigma)\; \DD(\rho_{RA}\|\sigma_{RA}), 
    \end{align}
    where both inequalities hold by the definition of the SDPI constant. The other two claims for SDPI constants follows from a similar proof. Those for contraction coefficients then follow by optimizing over~$\sigma_{RA}$.
\end{proof}

If the underlying divergence is also additive, then the behavior of the contraction coefficients under tensor products is similar to that found previously~\cite[Lemma 3.8]{Hirche2022contraction}.

\begin{lemma}
\label{lem:tensorization-ineqs-lem}
    For quantum channels $\cN$ and $\cM$, states $\sigma_{R_1A_1}$ and $\sigma_{R_2A_2}$, and a generalized divergence $\DD$ that is additive on tensor-product states, the following hold for the SDPI constants:
    \begin{align}
        \eta^p_\DD(\cM\otimes\cN,\sigma_{A_1}\otimes\sigma_{A_2}) &\geq \max\{ \eta^p_\DD(\cM,\sigma_{A_1}), \eta^p_\DD(\cN,\sigma_{A_2}) \}, \\
        \etaC_\DD(\cM\otimes\cN,\sigma_{R_1A_1}\otimes\sigma_{R_2A_2}) &\geq \max\{ \etaC_\DD(\cM,\sigma_{R_1A_1}), \etaC_\DD(\cN,\sigma_{R_2A_2}) \}, \\
        \etaD_\DD(\cM\otimes\cN,\sigma_{R_1A_1}\otimes\sigma_{R_2A_2}) &\geq \max\{ \etaD_\DD(\cM,\sigma_{R_1A_1}), \etaD_\DD(\cN,\sigma_{R_2A_2}) \},
    \end{align}
    and the following hold for the contraction coefficients:
    \begin{align}
        \eta^p_\DD(\cM\otimes\cN) &\geq \max\{ \eta^p_\DD(\cM), \eta^p_\DD(\cN) \}, \\
        \etaC_\DD(\cM\otimes\cN) &\geq \max\{ \etaC_\DD(\cM), \etaC_\DD(\cN) \}, \\
        \etaD_\DD(\cM\otimes\cN) &\geq \max\{ \etaD_\DD(\cM), \etaD_\DD(\cN) \}.
    \end{align}
\end{lemma}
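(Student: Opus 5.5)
The plan is to prove each lower bound by embedding a feasible candidate for the single-channel optimization into the optimization for $\cM\otimes\cN$. We tensor it with a trivially fixed second factor and use additivity so that the contribution of the second factor cancels. It suffices to show the bound $\geq \etaD_\DD(\cM,\sigma_{R_1A_1})$ (and its analogues); the bound with $\cN$ follows by symmetry, and then we take the maximum. The statements for the contraction coefficients follow by taking the supremum over $\sigma$ on both sides, choosing product $\sigma$ on the left.

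For the conditional SDPI constant, take any admissible $\rho_{R_1A_1}$ for $\etaD_\DD(\cM,\sigma_{R_1A_1})$, and form the candidate
\begin{align}
\rho_{R_1A_1}\otimes\sigma_{R_2A_2} \quad\text{versus}\quad \sigma_{R_1A_1}\otimes\sigma_{R_2A_2}.
\end{align}
The reference system is $R=R_1R_2$, and the reordering of systems is harmless by unitary invariance, which follows from data processing. By additivity, the numerator equals
\begin{align}
\DD(\cM(\rho_{R_1A_1})\|\cM(\sigma_{R_1A_1})) + \DD(\cN(\sigma_{R_2A_2})\|\cN(\sigma_{R_2A_2})) - \DD(\rho_{R_1}\|\sigma_{R_1}) - \DD(\sigma_{R_2}\|\sigma_{R_2}).
\end{align}
The denominator splits in the same way. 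Additivity forces $\DD(\omega\|\omega)=0$: applying it to $\omega\otimes\omega$ against itself gives $\DD(\omega\otimes\omega\|\omega\otimes\omega)=2\,\DD(\omega\|\omega)$. I would then also need $\DD(\omega\otimes\omega\|\omega\otimes\omega)=\DD(\omega\|\omega)$. One route is data processing in both directions via appending and tracing out a state, together with some convention; simpler is to apply additivity with a trivial one-dimensional system, which gives $\DD(1\|1)=0$. The self-divergence terms $\DD(\sigma_{R_2A_2}\|\sigma_{R_2A_2})$ and $\DD(\sigma_{R_2}\|\sigma_{R_2})$ therefore cancel exactly. The ratio then coincides with the single-channel ratio, and admissibility (strictly positive, finite denominator) carries over.

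The complete case is the same. The marginal constraint $\rho_R=\sigma_R$ holds because the $R_2$ parts are identical. The product-reference case uses $\rho_{R_1A_1}\otimes\sigma_{A_2}$ versus $\rho_{R_1}\otimes\sigma_{A_1}\otimes\sigma_{A_2}$, with reference $R_1$ alone and $\sigma_{A_1}\otimes\sigma_{A_2}$ as the fixed product state. Additivity again removes the $\DD(\cN(\sigma_{A_2})\|\cN(\sigma_{A_2}))$ term.

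The only delicate point is justifying that $\DD(\omega\|\omega)=0$ for an additive divergence, so that the dummy factor contributes nothing. If the one-dimensional-system argument feels like a loophole, I would instead compare $\DD(\omega\|\omega)$ with $\DD(\omega\otimes\omega\|\omega\otimes\omega)$. Data processing gives equality between them, since appending $\omega$ and tracing it out are both channels. Combined with additivity, this yields $\DD(\omega\|\omega)=2\,\DD(\omega\|\omega)$, hence $\DD(\omega\|\omega)=0$.
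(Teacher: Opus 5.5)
Your proposal is correct and takes the same route as the paper. For $\cM\otimes\cN$, you restrict the supremum to inputs $\rho_{R_1A_1}\otimes\sigma_{R_2A_2}$ versus $\sigma_{R_1A_1}\otimes\sigma_{R_2A_2}$ (with the analogous product-reference choice), and then optimize over $\sigma$ to get the contraction coefficients. The only difference is how the dummy factor is cancelled: the paper uses the data-processing identity $\DD(\rho\otimes\tau\|\sigma\otimes\tau)=\DD(\rho\|\sigma)$ (append $\tau$, then trace it out). That is the same mechanism as in your last paragraph, and it shows that additivity is not actually needed for these inequalities.
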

\begin{proof}
    For the complete SDPI constant, the proof follows by restricting the supremum over $\rho_{R_1A_1R_2A_2}$ to states of the form either $\rho_{R_1A_1}\otimes\sigma_{R_2A_2}$ or $\sigma_{R_1A_1}\otimes\rho_{R_2A_2}$, and similar for the conditional case and that with product reference. 
    Here we use that $\DD(\rho\otimes\tau\|\sigma\otimes\tau)=\DD(\rho\|\sigma)$, which is also a consequence of data processing; see, for example~\cite[Eq.~(9)]{wilde2014strong}. 
    For the contraction coefficients, the claim follows by restricting both states in the optimization to product states where one channel input gets assigned a fixed input state in both cases. 
\end{proof}
While these tensorization inequalities for contraction coefficients are usually strict, in the case of SDPI constants achievability depends on the exact definition used. We discuss this further in~\cref{Sec:Tensorization}. 

The contraction coefficients are also convex if the divergence is jointly convex (see~\eqref{eq:jointly-convex-def}).

\begin{lemma}\label{Lem:Convexity}
    For a jointly convex divergence $\DD$, $\lambda\in [0,1] $, channels $\mathcal{N}_1$ and $\mathcal{N}_2$, and a channel $\cN_{\lambda}\equiv \lambda\cN_1 + (1-\lambda)\cN_0$, the following inequalities hold:
    \begin{align}
        \eta^p_\DD(\cN_{\lambda}) &\leq \lambda \eta^p_\DD(\cN_1) + (1-\lambda) \eta^p_\DD(\cN_0), \\
        \etaC_\DD(\cN_{\lambda}) &\leq \lambda \etaC_\DD(\cN_1) + (1-\lambda) \etaC_\DD(\cN_0), \\ 
        \etaD_\DD(\cN_{\lambda}) &\leq \lambda \etaD_\DD(\cN_1) + (1-\lambda) \etaD_\DD(\cN_0). 
    \end{align}
\end{lemma}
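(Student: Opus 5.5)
The idea is to fix an arbitrary admissible pair in each optimization, apply joint convexity to the numerator, and bound each resulting term by the corresponding coefficient of $\cN_1$ or $\cN_0$. I treat the conditional case in detail, since the other two follow the same pattern.

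Fix a reference system $R$ and states $\rho_{RA},\sigma_{RA}$ with $0<\DD(\rho_{RA}\|\sigma_{RA})-\DD(\rho_R\|\sigma_R)<\infty$. Since $\id\otimes\cN_\lambda=\lambda(\id\otimes\cN_1)+(1-\lambda)(\id\otimes\cN_0)$, both output states are the same convex combination of the outputs of $\cN_1$ and $\cN_0$. Joint convexity \eqref{eq:jointly-convex-def} then gives
\begin{align}
\DD(\cN_\lambda(\rho_{RA})\|\cN_\lambda(\sigma_{RA})) \leq \lambda\,\DD(\cN_1(\rho_{RA})\|\cN_1(\sigma_{RA})) + (1-\lambda)\,\DD(\cN_0(\rho_{RA})\|\cN_0(\sigma_{RA})).
\end{align}
Subtracting $\DD(\rho_R\|\sigma_R)=\lambda\DD(\rho_R\|\sigma_R)+(1-\lambda)\DD(\rho_R\|\sigma_R)$ from both sides splits the conditional numerator into a $\lambda$-weighted term for $\cN_1$ and a $(1-\lambda)$-weighted term for $\cN_0$. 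By the definition \eqref{Eq:conditional-def}, each bracket is at most $\etaD_\DD(\cN_i)\,[\DD(\rho_{RA}\|\sigma_{RA})-\DD(\rho_R\|\sigma_R)]$, because the same pair is admissible for $\cN_i$ and the admissibility constraint concerns only the denominator, which does not involve the channel. Dividing by the common positive denominator and taking the supremum yields the claim.

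The one point to check is that a bracket with negative value does not break this bound. It does not: the bound by $\etaD_\DD(\cN_i)$ times a positive denominator holds trivially whenever $\etaD_\DD(\cN_i)\geq 0$, which follows because $\etaD\ge\etaC\ge 0$ by \eqref{eq:simple-ineqs-relations} when $\DD$ is non-negative. If the ratio happens to be negative anyway, the inequality is trivial. So this is not a real obstacle.

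The complete coefficient works identically, with no subtraction, and the admissible set $\rho_R=\sigma_R$ does not depend on the channel. For the product-reference coefficient the second argument is $\rho_R\otimes\cN_\lambda(\sigma_A)=\lambda\,\rho_R\otimes\cN_1(\sigma_A)+(1-\lambda)\,\rho_R\otimes\cN_0(\sigma_A)$, which again splits convexly, so joint convexity applies directly. In every case the denominator is channel-independent, so the same argument goes through.
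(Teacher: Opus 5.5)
Your proof is correct and is exactly the argument the paper intends: the paper's proof is the single remark that the bounds follow directly from joint convexity, and you have filled in the details (the same convex split of both output arguments, with a channel-independent denominator and admissibility constraint). Your paragraph on negative brackets is superfluous and relies on non-negativity of $\DD$, which the lemma does not assume. It is not needed, because the bound by $\etaD_\DD(\cN_i)$ times the denominator holds by the definition of the supremum for any admissible pair, and each bracket is nonnegative anyway by data processing under $\Tr_B$.
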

\begin{proof}
    The properties follow directly from the joint convexity of the divergence.
\end{proof}

\subsection{Quantum Doeblin coefficient upper bounds on contraction coefficients}

Contraction coefficients are generally difficult to compute, and so upper bounds are also of interest. For this purpose, Refs.~\cite{hirche2024quantum,george2025quantumdoeblincoefficientsinterpretations} employed quantum Doeblin coefficients, extending the original development in \cite[Theorem~8.17 and Eq.~(8.86)]{wolf2012quantum}. Here we use the following definition:
\begin{align}\label{Eq:Doeblin+}
    \alpha_+(\cN) \coloneqq  \sup\{\epsilon \in [0,1] : \cE_\epsilon \deg \cN \}, 
\end{align}
where $\cE_\epsilon(\rho) \coloneqq  (1-\epsilon)\rho + \epsilon |e\rangle\!\langle e|$ is the quantum erasure channel and $\cN$ can be degraded from $\cE_\epsilon$, denoted $\cE_\epsilon \deg \cN$, if there exists a quantum channel $\cD$ such that $\cN=\cD\circ\cE_\epsilon$. Note that the Doeblin coefficient is a semidefinite program (SDP)~\cite{hirche2024quantum}.
Additionally, we will later use the following expression.
\begin{proposition}[{\cite[Proposition III.3]{hirche2024quantum}}]
\label{prop:Hirche-Doeblin-Replacer-Form}
    For a quantum channel $\cN_{A \to B}$,
    \begin{align}
        \alpha_{+}(\cN) = \max\{\epsilon : \exists \cD \in \operatorname{CPTP}(A,B) \text{ s.t. } \cN = (1-\epsilon)\cD + \epsilon\cR_{\tau}\} \ , 
    \end{align}
    where $\cR_{\tau}(X) = \tau \Tr[X]$ and $\tau $ is a quantum state. 
\end{proposition}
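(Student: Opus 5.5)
We need to show that $\alpha_+(\cN)$, defined in~\eqref{Eq:Doeblin+} as the largest $\epsilon$ with $\cE_\epsilon \deg \cN$, equals the largest $\epsilon$ for which $\cN=(1-\epsilon)\cD+\epsilon\cR_\tau$ with $\cD$ a channel and $\tau$ a state. We prove the two inequalities separately. Both directions rest on the flag structure of the erasure output: $\cE_\epsilon(\rho)=(1-\epsilon)\rho\oplus\epsilon|e\rangle\!\langle e|$ lives on $A\oplus\mathbb{C}|e\rangle$.

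\emph{Degradable implies decomposition.} Suppose $\cN=\cD'\circ\cE_\epsilon$ for some channel $\cD'$ acting on $A\oplus\mathbb{C}|e\rangle$. Define
\begin{align}
    \cD(X)&\coloneqq\cD'(\Pi_A X\Pi_A), & \tau&\coloneqq\cD'(|e\rangle\!\langle e|),
\end{align}
where $\Pi_A$ is the projection onto the $A$-block. The map $\cD$ is completely positive and trace preserving on operators supported on $A$, and $\tau$ is a state. By linearity of $\cD'$,
\begin{align}
    \cN(\rho)=(1-\epsilon)\cD(\rho)+\epsilon\,\tau\Tr[\rho],
\end{align}
so $\cN=(1-\epsilon)\cD+\epsilon\cR_\tau$.

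\emph{Decomposition implies degradable.} Conversely, given $\cN=(1-\epsilon)\cD+\epsilon\cR_\tau$, define a channel on $A\oplus\mathbb{C}|e\rangle$ by
\begin{align}
    \cD'(Y)\coloneqq\cD(\Pi_A Y\Pi_A)+\langle e|Y|e\rangle\,\tau .
\end{align}
This map is completely positive: it is a sum of compositions of CP maps, namely pinching onto a block followed by a channel or a preparation. It is trace preserving because $\Pi_A+|e\rangle\!\langle e|=\mathbb{1}$. Applying it to the erasure output gives $\cD'\circ\cE_\epsilon=(1-\epsilon)\cD+\epsilon\cR_\tau=\cN$, so $\cE_\epsilon\deg\cN$.

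The two directions show that the feasible sets of $\epsilon$ in the two optimizations coincide, so their suprema agree. The supremum is attained, and can therefore be written as a maximum, because the feasible set is closed by compactness. Concretely, if $\epsilon_n\to\epsilon$ with corresponding $(\cD_n,\tau_n)$, we can pass to a convergent subsequence, since the set of channels and the set of states are compact in finite dimension. The equation $\cN=(1-\epsilon)\cD+\epsilon\cR_\tau$ is preserved in the limit.

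The only subtle point is $\epsilon=1$, where $\cD$ is unconstrained. This is harmless: any channel $\cD$ works, so the feasible set still contains $\epsilon=1$ exactly when $\cN$ is a replacer channel, and closedness is unaffected. The rest is bookkeeping, and we expect no real obstacle.
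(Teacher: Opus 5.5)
Your argument is correct and complete. The paper does not prove this statement; it imports it from \cite[Proposition~III.3]{hirche2024quantum}, so there is no in-paper proof to match. Your argument has two halves. First, you split a degrading map $\cD'$ into its restriction to the $A$-block and its value $\cD'(|e\rangle\!\langle e|)$ on the erasure flag. Second, you glue $\cD$ and the preparation of $\tau$ back together along $\Pi_A+|e\rangle\!\langle e|=I$. This is the natural direct argument, and the feasible sets of $\epsilon$ indeed coincide.

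A different route uses Choi operators, and it underlies the paper's remark that $\alpha_+$ is an SDP. For $\epsilon\in[0,1]$, $\cN=(1-\epsilon)\cD+\epsilon\cR_\tau$ holds for some channel $\cD$ if and only if $\Gamma^{\cN}\geq \epsilon\, I_A\otimes\tau$. Trace preservation of $\cD$ is automatic from $\Tr_B\Gamma^{\cN}=I_A$, and at $\epsilon=1$ equal partial traces force equality. This yields
\begin{align}
\alpha_+(\cN)=\max\{\Tr[X] : \Gamma^{\cN}\geq I_A\otimes X,\ X\geq 0\},
\end{align}
with attainment immediate from compactness of the SDP feasible set. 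Your version avoids Choi operators altogether but needs the separate compactness argument you supplied, which is fine.

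One point you should make explicit: the right-hand side must be read with $\epsilon\in[0,1]$. Your second direction assumes this implicitly when it invokes $\cE_\epsilon$. Without that restriction, for every channel $\cN$ and every sufficiently large $\epsilon>1$, the map $\cD=(\epsilon\cR_\tau-\cN)/(\epsilon-1)$ with $\tau=I_B/|B|$ is CPTP. Its Choi operator is $\bigl(\tfrac{\epsilon}{|B|}I_{AB}-\Gamma^{\cN}\bigr)/(\epsilon-1)\geq 0$ as soon as $\epsilon\geq |B|\,\|\Gamma^{\cN}\|_\infty$. So the literal maximum would be unbounded.
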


We then have the following result. 

\begin{proposition}
\label{Lem:a-Doeblin}
    For a jointly convex divergence $\DD$, the following inequality holds: 
    \begin{align}
        \etaD_\DD(\cN) \leq 1-\alpha_+(\cN).
    \end{align}
\end{proposition}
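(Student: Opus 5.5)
Write $\epsilon\coloneqq\alpha_+(\cN)$ and use \cref{prop:Hirche-Doeblin-Replacer-Form} to decompose $\cN=(1-\epsilon)\cD+\epsilon\cR_\tau$ for some channel $\cD$ and state $\tau$. (The supremum in \eqref{Eq:Doeblin+} is attained by compactness, so $\epsilon$ itself is admissible.) Fix any reference system $R$ and states $\rho_{RA},\sigma_{RA}$ whose denominator in \eqref{Eq:conditional-def} is strictly positive and finite. The goal is to show
\begin{align}
\DD(\cN(\rho_{RA})\|\cN(\sigma_{RA}))-\DD(\rho_R\|\sigma_R)\le (1-\epsilon)\big(\DD(\rho_{RA}\|\sigma_{RA})-\DD(\rho_R\|\sigma_R)\big),
\end{align}
after which taking the supremum gives the claim.

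The first step is to expand the outputs. Since $(\id\otimes\cN)(\rho_{RA})=(1-\epsilon)(\id\otimes\cD)(\rho_{RA})+\epsilon\,\rho_R\otimes\tau$, and likewise for $\sigma_{RA}$, joint convexity \eqref{eq:jointly-convex-def} gives
\begin{align}
\DD(\cN(\rho_{RA})\|\cN(\sigma_{RA}))\le (1-\epsilon)\DD(\cD(\rho_{RA})\|\cD(\sigma_{RA}))+\epsilon\,\DD(\rho_R\otimes\tau\|\sigma_R\otimes\tau).
\end{align}

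The second step bounds each term separately. For the first term, data processing under $\id\otimes\cD$ gives $\DD(\cD(\rho_{RA})\|\cD(\sigma_{RA}))\le\DD(\rho_{RA}\|\sigma_{RA})$. For the second term, data processing in both directions (appending $\tau$, then tracing it out) gives $\DD(\rho_R\otimes\tau\|\sigma_R\otimes\tau)=\DD(\rho_R\|\sigma_R)$, as already used in the proof of \cref{lem:tensorization-ineqs-lem}. Substituting both bounds yields
\begin{align}
\DD(\cN(\rho_{RA})\|\cN(\sigma_{RA}))\le(1-\epsilon)\DD(\rho_{RA}\|\sigma_{RA})+\epsilon\DD(\rho_R\|\sigma_R).
\end{align}
Subtracting $\DD(\rho_R\|\sigma_R)$ from both sides gives exactly the desired inequality, and dividing by the positive denominator finishes the proof.

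The main point of care is the case where $\DD(\rho_R\|\sigma_R)$ is infinite. This case cannot arise here, because data processing under partial trace gives $\DD(\rho_R\|\sigma_R)\le\DD(\rho_{RA}\|\sigma_{RA})$, and the latter is finite since the denominator is finite. Hence all the subtractions above are legitimate. No other obstacle is expected: the replacer term contributes exactly the reference distinguishability, and that is precisely the quantity subtracted in the definition of the conditional contraction coefficient.
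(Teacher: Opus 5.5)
Your proof is correct and follows essentially the same route as the paper: both start from the decomposition $\cN=(1-\epsilon)\cD+\epsilon\cR_\tau$ of \cref{prop:Hirche-Doeblin-Replacer-Form}. The paper then invokes convexity of $\etaD_\DD$ in the channel (\cref{Lem:Convexity}) together with $\etaD_\DD(\cD)\le 1$ and $\etaD_\DD(\cR_\tau)=0$, which is exactly what your pointwise joint-convexity, data-processing, and $\DD(\rho_R\otimes\tau\|\sigma_R\otimes\tau)=\DD(\rho_R\|\sigma_R)$ steps establish inline, with your finiteness check on $\DD(\rho_R\|\sigma_R)$ making explicit a point the paper leaves implicit.
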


\begin{proof}
    By~\cref{prop:Hirche-Doeblin-Replacer-Form}, we have that $\cN = (1-\epsilon)\cD + \epsilon\cR_{\tau}$ with $\epsilon=\alpha_{+}(\cN)$. Hence, we get the following,
    \begin{align}
        \etaD_\DD(\cN) \leq (1-\epsilon) \etaD_\DD(\cD) + \epsilon \etaD_\DD(\cR_\tau) \leq 1-\epsilon = 1-\alpha_+(\cN),
    \end{align}
    where the first inequality follows from~\cref{Lem:Convexity} and the second uses $\etaD_\DD(\cD)\leq1$ and $\etaD_\DD(\cR_\tau)=0$, which completes the proof.
\end{proof}

The bound in~\cref{Lem:a-Doeblin} above represents an improvement over the previous bound from \cite[Proposition~31 and Corollary~7]{george2025quantumdoeblincoefficientsinterpretations}. Indeed, there we proved, for some choices of $\DD$, that $\eta^c_\DD(\cN) \leq 1-\alpha_+(\cN)$. However, since $\eta^c_\DD(\cN)\leq \etaD_\DD(\cN)$, it then follows that $\etaD_\DD(\cN)$ is a tighter lower bound on $1-\alpha_+(\cN)$ than $\eta^c_\DD(\cN)$. Additionally, we have proven the claim rather generally for every conditional contraction coefficient based on a divergence that is jointly convex.

\subsection{Expansion coefficients}

\label{sec:expan-coeffs}

In contrast to contraction coefficients, one may also ask if a channel preserves at least a fixed amount of information. To that end, expansion coefficients were introduced~\cite{ramakrishnan2020computing,hirche2024quantum,laracuente2023information} and defined as
    \begin{align}
        \check\eta_\DD(\cN) \coloneqq  \inf_{\substack{\rho_{A}\neq \sigma_{A}}} \frac{\DD(\cN(\rho_{A})\|\cN(\sigma_{A}))}{\DD(\rho_{A}\|\sigma_{A})}. \label{eq:def-gen-exp-coef}
    \end{align}
A natural extension is to introduce a reference system also here. Hence, we define the following generalized expansion coefficients. 
\begin{definition}[Generalized expansion coefficient]
    For a quantum channel $\cN$ and a generalized divergence $\DD$, we define the expansion coefficient with product reference as 
        \begin{align}
        \check\eta^p_\DD(\cN)\coloneqq \inf_{\substack{|R|, \, \rho_{RA},\sigma_A:\\\rho_{RA} \neq \rho_R \otimes \sigma_A}} \frac{\DD(\cN(\rho_{RA})\|\rho_R\otimes \cN(\sigma_A))}{\DD(\rho_{RA}\|\rho_R\otimes\sigma_A)}, 
    \end{align}
    the complete expansion coefficient as~\cite{george2025quantumdoeblincoefficientsinterpretations}
    \begin{align}
        \cetaC_\DD(\cN) \coloneqq \inf_{\substack{ |R|,\, \rho_{RA}\neq\sigma_{RA},\\ \rho_R=\sigma_R 
       }} \frac{\DD((\id\otimes\cN)(\rho_{RA})\|(\id\otimes\cN)(\sigma_{RA}))}{\DD(\rho_{RA}\|\sigma_{RA})}, \label{eq:Exp_CC}
    \end{align}
    and the conditional expansion coefficient as
    \begin{align}
        \cetaD_\DD(\cN) \coloneqq \inf_{\substack{|R|,\, \rho_{RA}\neq\sigma_{RA}}} \frac{\DD((\id\otimes\cN)(\rho_{RA})\|(\id\otimes\cN)(\sigma_{RA})) - \DD(\rho_R\|\sigma_R)}{\DD(\rho_{RA}\|\sigma_{RA})- \DD(\rho_R\|\sigma_R)}.
    \end{align}
\end{definition}
Also here, one could additionally define SDPI-like versions by keeping the second state fixed. 

An important note to make here is that the usefulness of these coefficients depends strongly on the choice of divergence. It was found that even the standard expansion coefficient is often trivial (i.e., equal to zero), for many divergences~\cite{hirche2024quantum,laracuente2023information,belzig2024reversetypedataprocessinginequality,iyer2025quantum}. Opposite to~\eqref{eq:simple-ineqs-relations}, the following inequalities hold:
\begin{align}
    \check\eta_\DD(\cN) &\geq
    \check\eta^p_\DD(\cN) \geq \cetaC_\DD(\cN)  \geq  \cetaD_\DD(\cN)\geq0,
\end{align}
implying that our generalized expansion coefficients with reference systems are equal to zero for the scenarios considered in~\cite{hirche2024quantum,laracuente2023information,belzig2024reversetypedataprocessinginequality,iyer2025quantum}. One common case where such expansion coefficients remain generally interesting is when they are defined in terms of the trace distance~\cite{hirche2024quantum}.

\subsection{Relative contraction and expansion coefficients}

\label{sec:relative-contr-expans}

The goal of relative contraction coefficients is to compare the data-processing strength of different channels. Such quantities appeared classically as early as~\cite[Eq.~(43)]{Makur2018SymChannels}. The idea was then picked up in the quantum setting in~\cite{Hirche2022contraction}, but only investigated in much more detail in~\cite{belzig2024reversetypedataprocessinginequality}; see also~\cite{iyer2025quantum}. Related recent work in the classical setting can also be found in~\cite{steinberg2025degradedness}.

We generalize this concept to the setting with reference systems here. 
\begin{definition}[Generalized relative contraction coefficient]
    For quantum channels $\cN$ and $\cM$ and a generalized divergence $\DD$, we define the relative complete contraction coefficient as follows: 
    \begin{align}
        \etaC_\DD(\cN,\cM) \coloneqq \sup_{\substack{|R|, \rho_{RA}\neq\sigma_{RA},\\ \rho_R=\sigma_R}} \frac{\DD((\id\otimes\cN)(\rho_{RA})\|(\id\otimes\cN)(\sigma_{RA}))}{\DD((\id\otimes\cM)(\rho_{RA})\|(\id\otimes\cM)(\sigma_{RA}))}, \label{eq:Comp_Rel_CC}
    \end{align}
    and the relative conditional contraction coefficient as follows:
    \begin{align}\label{eq:relative_contraction_channels_cond}
        \etaD_\DD(\cN,\cM) \coloneqq \sup_{\substack{|R|, \rho_{RA}\neq\sigma_{RA}}} \frac{\DD((\id\otimes\cN)(\rho_{RA})\|(\id\otimes\cN)(\sigma_{RA})) - \DD(\rho_R\|\sigma_R)}{\DD((\id\otimes\cM)(\rho_{RA})\|(\id\otimes\cM)(\sigma_{RA})) - \DD(\rho_R\|\sigma_R)}.
    \end{align}
\end{definition}
\begin{definition}[Generalized relative expansion coefficient]
    For quantum channels $\cN$ and $\cM$ and a generalized divergence $\DD$, we define the relative complete expansion coefficient as follows:
    \begin{align}
        \cetaC_\DD(\cN,\cM) \coloneqq \inf_{\substack{|R|, \rho_{RA}\neq\sigma_{RA},\\ \rho_R=\sigma_R 
        }} \frac{\DD((\id\otimes\cN)(\rho_{RA})\|(\id\otimes\cN)(\sigma_{RA}))}{\DD((\id\otimes\cM)(\rho_{RA})\|(\id\otimes\cM)(\sigma_{RA}))}, \label{eq:Exp_Rel_CC}
    \end{align}
    and the relative conditional expansion coefficient as follows:
    \begin{align}\label{eq:relative_expansion_channel_cond}
        \cetaD_\DD(\cN,\cM) \coloneqq \inf_{\substack{|R|,\,\rho_{RA}\neq\sigma_{RA}}} \frac{\DD((\id\otimes\cN)(\rho_{RA})\|(\id\otimes\cN)(\sigma_{RA})) - \DD(\rho_R\|\sigma_R)}{\DD((\id\otimes\cM)(\rho_{RA})\|(\id\otimes\cM)(\sigma_{RA})) - \DD(\rho_R\|\sigma_R)}.
    \end{align}
\end{definition}
Similar to what was observed in \cite[after Eq.~(4)]{belzig2024reversetypedataprocessinginequality}, relative expansion and relative contraction coefficients are related by
\begin{align}
    \etaC_\DD(\cN,\cM) & = \left(\cetaC_\DD(\cM,\cN) \right)^{-1}, \label{Eq:rel-ep-con-c}\\
    \etaD_\DD(\cN,\cM) & = \left(\cetaD_\DD(\cM,\cN) \right)^{-1}.\label{Eq:rel-ep-con-a}
\end{align}
Furthermore, such coefficients have  proven useful in the study of quantum capacities~\cite{belzig2024reversetypedataprocessinginequality}. 

In the next sections, we will dive deeper into investigating these quantities for specific divergences. 


\section{Relative entropy-like divergences}\label{Sec:Relative-entropy}

In this section, we will discuss contraction of primarily the relative entropy and the connection to contraction of related quantities such as the mutual information and conditional mutual information. However, we will also see that we only need very few properties of the relative entropy for most of the results, and we will later discuss similar divergences that possess the same properties. 

\subsection{(Conditional) mutual information contraction}\label{subsec:CMI-contraction}

Mutual information and conditional mutual information are correlation measures that have numerous applications in information theory and can be written in terms of the relative entropy. It is hence natural that the data-processing properties of mutual information and relative entropy are tightly connected. Without reference systems, several results in this direction are known~\cite{Hirche2022contraction}. Here, we extend the results to the setting with reference systems, which in particular allows us to incorporate the conditional mutual information. Unless stated otherwise, we adopt the shorthand convention
\begin{equation}
\mathcal{N}(\rho_{RA}) \equiv (\operatorname{id}_R \otimes \mathcal{N}_{A\to B})(\rho_{RA})   
\label{eq:shorthand-notation}
\end{equation}
in all that follows.

The quantum relative entropy of states $\rho$ and $\sigma$ is defined as \cite{Umegaki1962}
\begin{equation}
    D(\rho\|\sigma)\coloneqq \Tr[\rho(\log \rho - \log \sigma)]
\end{equation}
when $\operatorname{supp}(\rho)\subseteq \operatorname{supp}(\sigma)$ and it is equal to $+\infty$ otherwise.
The mutual information and the conditional mutual information are respectively defined for a tripartite state $\rho_{ABC}$ as 
\begin{align}
    I(A:B) &\coloneqq  D(\rho_{AB}\|\rho_A\otimes\rho_{B}), \\
    I(A:B|C) &\coloneqq  I(A:BC) - I(A:C) \label{eq:CMI-MI-chain-rule}\\
    &= D(\rho_{ABC}\|\rho_A\otimes\rho_{BC}) - D(\rho_{AC}\|\rho_A\otimes\rho_{C}). 
\end{align}
There are many ways to define meaningful contraction coefficients for the mutual information, some of which explicitly choose certain subsystems as classical. We list several of these possibilities below.

\begin{definition}
    Given a channel $\cN_{A \to B}$, we define the following mutual-information and conditional-mutual-information contraction coefficients:
\begin{align}
    \eta_{\MI}(\cN) &\coloneqq \sup_{\rho_{CA}} \frac{I(C:B)}{I(C:A)}, \label{eq:eta_MI} \\
    \eta_{\cqMI}(\cN) &\coloneqq \sup_{\rho_{UA}} \frac{I(U:B)}{I(U:A)},\\
    \eta_{\tcqMI}(\cN) &\coloneqq \sup_{\substack{\rho_{UA},\\|U|=2}} \frac{I(U:B)}{I(U:A)},\\
    \eta_{\CMI}(\cN) &\coloneqq \sup_{\rho_{RCA}} \frac{I(C:B|R)}{I(C:A|R)}, 
    \\
    \eta_{\operatorname{cqCMI}}(\cN) &\coloneqq \sup_{\rho_{RUA}} \frac{I(U:B|R)}{I(U:A|R)}, 
    \\
    \eta_{\operatorname{2cqCMI}}(\cN) &\coloneqq \sup_{\substack{\rho_{RUA},\\|U|=2}} \frac{I(U:B|R)}{I(U:A|R)}, 
\end{align}
where $\rho_{CBR}\coloneqq \cN_{A \to B}(\rho_{RCA})$ and $U$ is always a classical system, meaning that $\rho_{UA}$ and $\rho_{RUA}$ are classical-quantum states.
\end{definition}

It was shown in~\cite[Proposition~6.2]{Hirche2022contraction} that 
    \begin{align}
        \eta_{\cqMI}(\cN) &= \eta_D(\cN).
    \end{align}
However, upon reexamination, the proof there actually shows something stronger, namely, that
    \begin{align}
        \label{Eq:cqMI=D}
        \eta_{\tcqMI}(\cN) = \eta_{\cqMI}(\cN) &= \eta_D(\cN).
    \end{align}
This is a, perhaps surprising, simplification of the contraction coefficient, which was previously remarked upon in the classical literature; see~\cite[Footnote 5]{anantharam2013maximal}. We refrain from giving a proof here because it is the same as that given for~\cite[Proposition 6.2]{Hirche2022contraction}, while noting that $|U|=2$ always suffices for one direction of the proof.

Instead, we consider a more novel setting with the goal of extending previous results to the conditional mutual information and connect that to the conditional contraction coefficient.

\begin{theorem}
\label{Thm:CMI-equivalence}
    Let $\cM\colon A\to B$ and $\cN\colon A\to B'$ be quantum channels, $\sigma_{RA}$ a quantum state, and $\eta\geq0$. Then the following are equivalent:
\begin{enumerate}[label=(\roman*)]
    \item For all c-q states $\rho_{RUA}$ with marginal $\sigma_{RA}$, where U is a classical system of dimension $|U|=2$, 
    \begin{align}
        \eta I(U:B|R) \geq I(U:B'|R).
        \label{eq:CMI-contract-1}
    \end{align}
    \item For all c-q states $\rho_{RUA}$ with marginal $\sigma_{RA}$, where U is an arbitrary classical system, 
    \begin{align}
        \eta I(U:B|R) \geq I(U:B'|R).\label{eq:CMI-contract-2}
    \end{align}
\item For every quantum state $\rho_{RA}$ satisfying $\operatorname{supp}(\rho_{RA})\subseteq \operatorname{supp}(\sigma_{RA})$, 
\begin{align}
    \eta \left[ D(\cM(\rho_{RA})\|\cM(\sigma_{RA})) - D(\rho_R\|\sigma_R) \right] \geq D(\cN(\rho_{RA})\|\cN(\sigma_{RA})) - D(\rho_R\|\sigma_R).
\end{align}
\end{enumerate}
In~\eqref{eq:CMI-contract-1}--\eqref{eq:CMI-contract-2}, the conditional mutual informations on the left-hand side are evaluated with respect to $\mathcal{M}(\rho)$ and those on the right-hand side with respect to $\mathcal{N}(\rho)$.
\end{theorem}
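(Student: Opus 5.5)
The plan is to reduce all three statements to a single decomposition identity for the conditional mutual information of c-q states, after which (iii)$\Rightarrow$(ii)$\Rightarrow$(i) are immediate and only (i)$\Rightarrow$(iii) needs a limiting argument.

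\emph{Decomposition identity.} Write a c-q state with marginal $\sigma_{RA}$ as $\rho_{RUA}=\sum_u p(u)|u\rangle\!\langle u|_U\otimes\rho^u_{RA}$ with $\sum_u p(u)\rho^u_{RA}=\sigma_{RA}$. Set
\begin{align}
\Delta_{\cM}(\omega)\coloneqq D(\cM(\omega_{RA})\|\cM(\sigma_{RA}))-D(\omega_R\|\sigma_R),
\end{align}
and define $\Delta_{\cN}$ analogously. By the chain rule~\eqref{eq:CMI-MI-chain-rule}, $I(U:B|R)=I(U:BR)-I(U:R)$. For a c-q state, the direct-sum property gives $I(U:BR)=\sum_u p(u)D(\cM(\rho^u)\|\cM(\sigma))$ and $I(U:R)=\sum_u p(u)D(\rho^u_R\|\sigma_R)$. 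Hence
\begin{align}
I(U:B|R)=\sum_u p(u)\,\Delta_{\cM}(\rho^u),
\end{align}
and the same holds for $B'$ with $\Delta_{\cN}$. Each $\rho^u$ is dominated by $\sigma/p(u)$, so $\operatorname{supp}(\rho^u)\subseteq\operatorname{supp}(\sigma)$ and all terms are finite.

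\emph{Easy implications.} With this identity, (iii)$\Rightarrow$(ii) follows by applying (iii) to each $\rho^u$ and averaging with weights $p(u)$. The implication (ii)$\Rightarrow$(i) is trivial, since (i) is a special case of (ii).

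\emph{Main step: (i)$\Rightarrow$(iii).} Fix $\rho_{RA}$ with $\operatorname{supp}(\rho)\subseteq\operatorname{supp}(\sigma)$. Because of the support condition, $\sigma-t\rho\ge0$ for all sufficiently small $t>0$. For such $t$ set $\tau_t\coloneqq(\sigma-t\rho)/(1-t)$, and take the binary c-q state with $p(0)=t$, $\rho^0=\rho$, $p(1)=1-t$, $\rho^1=\tau_t$; its marginal is $\sigma_{RA}$. Applying (i) together with the identity gives
\begin{align}
\eta\,[t\Delta_{\cM}(\rho)+(1-t)\Delta_{\cM}(\tau_t)]\ge t\Delta_{\cN}(\rho)+(1-t)\Delta_{\cN}(\tau_t)\ge t\Delta_{\cN}(\rho).
\end{align}
The last inequality uses $\Delta_{\cN}\ge0$, which holds because of data processing under $\operatorname{Tr}_B$.

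\emph{Bounding the remainder.} Data processing also gives $0\le\Delta_{\cM}(\tau_t)\le D(\tau_t\|\sigma)$. I expect the main obstacle to be showing that this remainder is $O(t^2)$, so that it vanishes after dividing by $t$. To do this, I would bound the relative entropy by the Petz--R\'enyi divergence of order~2, writing the inverse of $\sigma$ on its support:
\begin{align}
D(\tau_t\|\sigma)\le\log\Tr[\tau_t\sigma^{-1}\tau_t]=\log\bigl(1+\Tr[(\tau_t-\sigma)\sigma^{-1}(\tau_t-\sigma)]\bigr).
\end{align}
Since $\tau_t-\sigma=t(\sigma-\rho)/(1-t)$, the right-hand side is at most $\frac{t^2}{(1-t)^2}\Tr[(\sigma-\rho)\sigma^{-1}(\sigma-\rho)]=O(t^2)$. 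This quantity is finite because $\operatorname{supp}(\rho)\subseteq\operatorname{supp}(\sigma)$.

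\emph{Conclusion.} Dividing the displayed inequality by $t$ and letting $t\to0$ yields $\eta\,\Delta_{\cM}(\rho)\ge\Delta_{\cN}(\rho)$, which is (iii). This closes the chain (i)$\Rightarrow$(iii)$\Rightarrow$(ii)$\Rightarrow$(i).
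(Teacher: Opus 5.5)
Your proof is correct. The skeleton matches the paper's: the direct-sum decomposition $I(U:B|R)=\sum_u p(u)\,\Delta_{\cM}(\rho^u)$, the two easy implications, and the binary c-q state built from $\rho$ and $(\sigma-t\rho)/(1-t)$. The difference lies in the limiting step.

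\textbf{The paper's route.} It assumes without loss of generality that $\sigma_{RA}$ is full rank. It sets $\varphi(\lambda)=\eta I(U:B|R)-I(U:B'|R)\geq 0$ with $\varphi(0)=0$ and infers $\varphi'(0)\geq 0$. It then evaluates $\varphi'(0)$ by invoking \cref{lem:prop-rel-ent}, the vanishing first derivative of $\lambda\mapsto D(\rho(\lambda)\|\sigma)$ at $\lambda=0$. This is applied to every remainder term: the output pairs under $\cM$ and under $\cN$, and the $R$-marginals.

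\textbf{Your route.} You argue one-sidedly without differentiating anything.
\begin{itemize}
\item You discard the $\cN$-remainder because $\Delta_{\cN}\geq 0$ by data processing under $\Tr_B$.
\item You push the $\cM$-remainder back to the input via $0\le\Delta_{\cM}(\tau_t)\le D(\tau_t\|\sigma)$.
\item You then bound this by $O(t^2)$ with the Petz--R\'enyi-2 ($\chi^2$-type) estimate.
\end{itemize}

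\textbf{What this buys.} You need no full-rank assumption; the generalized inverse on $\operatorname{supp}(\sigma)$ suffices. You never have to justify that the one-sided derivative of $\varphi$ exists. You also avoid differentiating matrix functions of possibly rank-deficient operators such as $\cM(\sigma_{RA})$ or $\sigma_R$. The vanishing-derivative property is used only once, on the input pair, in the quantitative form $D(\tau_t\|\sigma)=O(t^2)$, which even gives an explicit rate.

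\textbf{What the paper's formulation buys.} It is phrased directly in terms of the ``vanishing first derivative'' property. This makes the extension to other divergences in \cref{Thm:gen-CMI-equivalence} a one-line remark. Your argument extends equally well, since it uses only data processing, non-negativity, and $\DD(\tau_t\|\sigma)=o(t)$.

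\textbf{Minor aside.} If $\log$ is binary, then $\log(1+x)\le x/\ln 2$ rather than $x$. This does not affect the $O(t^2)$ conclusion.
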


\begin{proof}
    The proof for $(ii)\Rightarrow(i)$ is obvious, given that the set of states in $(ii)$ contains those in $(i)$.
    The direction $(iii)\Rightarrow(ii)$ follows easily by the direct-sum property. 
    
    We now focus on the proof of $(i)\Rightarrow(iii)$. Without loss of generality we assume that $\sigma_{RA}$ is a full-rank state. Then, for every $\rho_{RA}$, and $0\leq\lambda\leq\epsilon$, where $\epsilon$ is small enough such that $\sigma_{RA}-\epsilon\rho_{RA}\geq 0$, we let $U\equiv U_\lambda$ be a binary random variable with distribution $P_{U_\lambda}(0)=\lambda$ and $P_{U_\lambda}(1)=1-\lambda$, and we define states $\rho_{RA}^0\coloneqq \rho_{RA}$ and $\rho_{RA}^1(\lambda)\coloneqq (1-\lambda)^{-1}(\sigma_{RA}-\lambda\rho_{RA})$. Defining
    \begin{equation}
        \rho_{URA} \coloneqq \lambda |0\rangle \!\langle 0|_U \otimes \rho^0_{RA} + (1-\lambda) |1\rangle\!\langle 1|_U \otimes \rho^1_{RA}(\lambda),
        \label{eq:rho-URA-state}
    \end{equation}
     we conclude that $\tr_U(\rho_{RUA})=\sigma_{RA}$. Then, under the assumption that $(i)$ holds and using the direct-sum property of quantum relative entropy, define
    \begin{align}
        \varphi(\lambda) &\coloneqq  \eta I(U:B|R) - I(U:B'|R) \\
        &= \eta \left[ I(U:BR) - I(U:R)\right] - \left[ I(U:B'R) - I(U:R)\right] \\
    &= \eta \left[ D(\rho_{UBR}\|\rho_U\otimes\rho_{BR}) - D(\rho_{UR}\|\rho_U\otimes\rho_{R}) \right]\nonumber\\ &\qquad - \left[ D(\rho_{UB'R}\|\rho_U\otimes\rho_{B'R}) - D(\rho_{UR}\|\rho_U\otimes\rho_{R}) \right] \\
    &= \eta \left[ \lambda D(\cM(\rho_{RA})\|\cM(\sigma_{RA})) 
    + (1-\lambda) D(\cM(\rho^1_{RA}(\lambda))\|\cM(\sigma_{RA}))\right.\nonumber\\ &\qquad 
    \left.-\lambda D(\rho_{R}\|\sigma_{R}) 
    - (1-\lambda) D(\rho^1_{R}(\lambda)\|\sigma_{R}) \right] \nonumber\\
    &\quad- \left[ \lambda D(\cN(\rho_{RA})\|\cN(\sigma_{RA})) 
    + (1-\lambda) D(\cN(\rho^1_{RA}(\lambda))\|\cN(\sigma_{RA}))\right.\nonumber\\ &\qquad 
    \left.
    - \lambda D(\rho_{R}\|\sigma_{R}) 
    - (1-\lambda) D(\rho^1_{R}(\lambda)\|\sigma_{R}) \right].
    \end{align}
    At $\lambda = 0$, the equality $\rho^1_{RA}(0)=\sigma_{RA}$ holds, and we conclude that $\varphi(0)=0$. Moreover, $(i)$ implies that $\varphi(\lambda)\geq0$ for all $\epsilon\geq\lambda\geq0$. Hence, it must be the case that $\varphi'(0)\geq 0$. The desired result follows from computing the derivative $\varphi'(0)$: 
    \begin{align}
    \varphi'(0) &= \eta \left[ D(\cM(\rho_{RA})\|\cM(\sigma_{RA})) - D(\rho_{R}\|\sigma_{R}) \right] - D(\cN(\rho_{RA})\|\cN(\sigma_{RA})) + D(\rho_{R}\|\sigma_{R}) \nonumber\\
    &\quad+\Bigg( (1-\lambda) \frac{\partial}{\partial\lambda}\left[ \eta D(\cM(\rho^1_{RA}(\lambda))\|\cM(\sigma_{RA}))-\eta D(\rho^1_{R}(\lambda)\|\sigma_{R}) \right.\nonumber\\ &\qquad\qquad\qquad\qquad 
    \left.- D(\cN(\rho^1_{RA}(\lambda))\|\cN(\sigma_{RA})) + D(\rho^1_{R}(\lambda)\|\sigma_{R}) \right] \Bigg)\Bigg|_{\lambda=0}\\
    &= \eta \left[ D(\cM(\rho_{RA})\|\cM(\sigma_{RA})) - D(\rho_{R}\|\sigma_{R}) \right] - D(\cN(\rho_{RA})\|\cN(\sigma_{RA})) + D(\rho_{R}\|\sigma_{R}), 
    \end{align}
   where the second equality is a well known property of the relative entropy by which each individual one of the remaining derivatives vanishes at $\lambda=0$; see~\cite[Section~4]{petz1998contraction} and later discussions in~\cite{Hirche2022contraction,hirche2024quantumDivergences,beigi2025some}. For convenience, we provide a brief proof of the needed equality in~\cref{app:prop-rel-ent}.
\end{proof}

A direct consequence of \cref{Thm:CMI-equivalence} is the following result regarding our previously defined contraction coefficients.
\begin{corollary}
\label{Cor:CMI-amo-equal}
    For a channel $\cN$, we have
    \begin{align}
               \etaD_D(\cN) = \eta_{\operatorname{2cqCMI}}(\cN) = \eta_{\operatorname{cqCMI}}(\cN) = \eta_{\CMI}(\cN).
    \end{align}
\end{corollary}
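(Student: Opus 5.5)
We will derive the chain of equalities from \cref{Thm:CMI-equivalence}, applied with $\cM=\id_A$ (so $B=A$) and $\cN$ the given channel, together with one extra inequality to handle fully quantum $C$. The chain
\begin{align}
\eta_{\operatorname{2cqCMI}}(\cN)\leq\eta_{\operatorname{cqCMI}}(\cN)\leq\eta_{\CMI}(\cN)
\end{align}
holds trivially by set inclusion: binary classical $U$ is a special case of classical $U$, which is a special case of quantum $C$. So it suffices to show $\etaD_D(\cN)\le \eta_{\operatorname{2cqCMI}}(\cN)$ and $\eta_{\CMI}(\cN)\le\etaD_D(\cN)$.

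\emph{First inequality.} Set $\eta=\eta_{\operatorname{2cqCMI}}(\cN)$. Then condition (i) of \cref{Thm:CMI-equivalence} holds for every $\sigma_{RA}$. When $I(U:A|R)=0$, strong subadditivity and data processing give $I(U:B|R)\le I(U:A|R)=0$, so this case is harmless. The implication (i)$\Rightarrow$(iii) then yields, for all $\rho_{RA}$ with $\operatorname{supp}\rho_{RA}\subseteq\operatorname{supp}\sigma_{RA}$,
\begin{align}
\eta\,[D(\rho_{RA}\|\sigma_{RA})-D(\rho_R\|\sigma_R)]\ge D(\cN(\rho_{RA})\|\cN(\sigma_{RA}))-D(\rho_R\|\sigma_R).
\end{align}
Pairs with a finite, positive denominator must satisfy the support condition, since otherwise $D(\rho_{RA}\|\sigma_{RA})=\infty$. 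Dividing and taking the supremum over $R$, $\rho_{RA}$, and $\sigma_{RA}$ gives $\etaD_D(\cN)\le\eta$.

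\emph{Second inequality.} Fix $\rho_{RCA}$ and write $\rho_{RCA}=\sum_c$-free. Treat $R'\coloneqq RC$ as a quantum reference and compare $\rho_{RCA}$ with $\sigma_{RCA}\coloneqq\rho_C\otimes\rho_{RA}$. The mutual-information chain rule gives
\begin{align}
I(C:A|R)&=I(C:AR)-I(C:R)\\
&=D(\rho_{CRA}\|\rho_C\otimes\rho_{RA})-D(\rho_{CR}\|\rho_C\otimes\rho_R),
\end{align}
which is exactly the conditional denominator with reference $RC$. Since $\id_{RC}\otimes\cN$ maps $\sigma_{RCA}$ to $\rho_C\otimes\rho_{RB}$, the numerator with $\cN$ applied equals $I(C:B|R)$. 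Hence each CMI ratio is one of the ratios in the supremum \eqref{Eq:conditional-def}, and $\eta_{\CMI}(\cN)\le\etaD_D(\cN)$.

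\emph{Main obstacle.} The only delicate step is the degenerate cases. In the first inequality these are zero or infinite denominators and non-full-rank $\sigma$; the latter is handled by restricting to $\operatorname{supp}\sigma_{RA}$, as in the proof of the theorem. In the second inequality the denominator may vanish, and such states are excluded by the paper's convention on ratio optimizations.
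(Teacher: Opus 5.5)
Your proof is correct and follows essentially the paper's argument. \cref{Thm:CMI-equivalence} with $\cM=\id$ handles the classical-auxiliary coefficients, and the same embedding of $\rho_{RCA}$ versus $\rho_C\otimes\rho_{RA}$, with $RC$ as the reference, gives $\eta_{\CMI}(\cN)\le\etaD_D(\cN)$. Your cyclic chain $\etaD_D\le\eta_{\operatorname{2cqCMI}}\le\eta_{\operatorname{cqCMI}}\le\eta_{\CMI}\le\etaD_D$, with the $I(U:A|R)=0$ case checked explicitly, is just a slightly tidier bookkeeping of the same argument. Do delete the stray fragment ``write $\rho_{RCA}=\sum_c$-free'', which carries no content.
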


\begin{proof}
    The first two equalities follow immediately from \cref{Thm:CMI-equivalence} by setting $\cM=\id$ and fixing $\eta$ to be the relevant contraction coefficient. 
    We only need to show the final equality. Clearly $\eta_{\operatorname{cqCMI}}(\cN) \leq \eta_{\CMI}(\cN)$. We conclude the claim by showing $\eta_{\CMI}(\cN)\leq\etaD_D(\cN)$. 
    Observe that
        \begin{align}
        I(C:B'|R')
        &= D(\cN(\rho_{R'CA})\|\rho_C\otimes\cN(\rho_{R'A})) - D(\rho_{CR'}\|\rho_C\otimes\rho_{R'}) \\
        &\leq \eta \left[ D(\cM(\rho_{R'CA})\|\rho_C\otimes\cM(\rho_{R'A})) - D(\rho_{CR'}\|\rho_C\otimes\rho_{R'}) \right] \\
        &=\eta I(C:B|R'),
    \end{align}
    where we identify $R'C \leftrightarrow R$, $\rho_{R'CA} \leftrightarrow \rho_{RA}$, and $\rho_C \otimes \rho_{R'A} \leftrightarrow \sigma_{RA}$.
    This implies the desired inequality. 
\end{proof}

\begin{remark}
    We can combine~\cref{Lem:a-Doeblin} and~\cref{Cor:CMI-amo-equal} to conclude the following inequality:
\begin{align}
    \eta_{\CMI}(\cN) = \etaD_D(\cN) \leq 1-\alpha_+(\cN)\label{eq:CMI-contract-doeblin},
\end{align}
thus reproducing the finding of \cite[Proposition~III.2]{chen2020matrix} after taking \cref{prop:Hirche-Doeblin-Replacer-Form} into account.
\end{remark}

\subsection{Mutual information with quantum reference systems}

The next question that we address is as follows: What is the mutual information counterpart of the complete contraction coefficient? 
We first check its relationship with the fully quantum mutual information contraction coefficient.

\begin{lemma}
\label{Lem:MI-complete}
    For a channel $\cN$, the following inequality holds:
    \begin{align}
        \eta_{\MI}(\cN) &\leq \etaC_D(\cN),
    \end{align}
    where $\etaC_D(\cN)$ is defined in~\eqref{eq:Comp_CC}  and $\eta_{\MI}(\cN)$ in~\eqref{eq:eta_MI}.
\end{lemma}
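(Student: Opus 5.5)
The plan is to show that every mutual-information ratio in the definition of $\eta_{\MI}(\cN)$ is itself one of the ratios appearing in the definition of $\etaC_D(\cN)$. The natural route is to pick a suitable pair of states with equal reference marginals. Concretely, fix an arbitrary state $\rho_{CA}$ with $0<I(C:A)<\infty$; if no such state exists, both sides are handled by the convention that the coefficient is zero. Relabel $C$ as the reference system $R$, and compare the two states
\begin{align}
    \rho_{RA}\coloneqq\rho_{CA}, \qquad \sigma_{RA}\coloneqq \rho_C\otimes\rho_A .
\end{align}
By construction $\rho_R=\sigma_R=\rho_C$, so the pair is admissible in the supremum~\eqref{eq:Comp_CC} defining $\etaC_D(\cN)$.

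First, the denominator is exactly $D(\rho_{CA}\|\rho_C\otimes\rho_A)=I(C:A)$, which is strictly positive and finite by assumption. In particular $\rho_{RA}\neq\sigma_{RA}$, as required.

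Second, for the numerator, the channel acts only on $A$, so $(\id\otimes\cN)(\rho_C\otimes\rho_A)=\rho_C\otimes\cN(\rho_A)=\rho_C\otimes\rho_B$. Here $\rho_C$ and $\rho_B$ are exactly the marginals of the output state $\rho_{CB}=(\id\otimes\cN)(\rho_{CA})$. The numerator therefore equals $D(\rho_{CB}\|\rho_C\otimes\rho_B)=I(C:B)$.

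Hence $I(C:B)/I(C:A)$ is one of the ratios in the supremum defining $\etaC_D(\cN)$, so it is at most $\etaC_D(\cN)$. Taking the supremum over $\rho_{CA}$, with unrestricted dimension of $C$, gives $\eta_{\MI}(\cN)\le\etaC_D(\cN)$.

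There is no substantial obstacle. The only points needing care are the bookkeeping checks: the output marginal on $B$ of $\rho_{CB}$ coincides with $\cN(\rho_A)$, and the admissibility and finiteness conventions are respected. Both are immediate, the first since $\cN$ acts locally on $A$.

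Alternatively, one could note that this choice of $\sigma_{RA}$ is a product-reference state, which gives $\eta_{\MI}(\cN)\le\eta^p_D(\cN)$. The claim then follows from~\eqref{eq:simple-ineqs-relations}. The direct argument above is shorter, however.
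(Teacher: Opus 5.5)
Your proof is correct and follows the paper's argument: both write $I(C:A)$ and $I(C:B)$ as relative entropies between $\rho_{CA}$ and $\rho_C\otimes\rho_A$ before and after the channel, note that this pair has the same reference marginal $\rho_C$, and then apply the definition of $\etaC_D(\cN)$. Your explicit check that the denominator is strictly positive and finite is a small extra detail that the paper leaves implicit.
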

\begin{proof}
    Consider that
    \begin{align}
        \eta_{\MI}(\cN) = \sup_{\rho_{CA}} \frac{I(C:B)}{I(C:A)} 
        &= \sup_{\rho_{CA}} \frac{D(\cN(\rho_{CA})\|\rho_C\otimes\cN(\rho_{A}))}{D(\rho_{CA}\|\rho_C\otimes\rho_{A})} \\
        &\leq \etaC_D(\cN) \sup_{\rho_{CA}} \frac{D(\rho_{CA}\|\rho_C\otimes\rho_{A})}{D(\rho_{CA}\|\rho_C\otimes\rho_{A})} \\
        &= \etaC_D(\cN),
    \end{align}
    which implies the claim. 
\end{proof}

It is not currently known whether the inequality in~\cref{Lem:MI-complete} is indeed an equality. 
We have not made use of the product nature of the state in the second argument in the above proof. In order to find mutual-information-based contraction coefficients that do equal the complete contraction coefficient, we define the following quantities. 

\begin{definition}
For a channel $\cN$, we define the following mutual-information-based contraction coefficients:
\begin{align}
    \eta_{\operatorname{MIR}}(\cN) &\coloneqq  \sup_{\substack{\rho_{RCA},\\ \rho_{CR}=\rho_C\otimes\rho_R}} \frac{I(C:BR)}{I(C:AR)}, \\
    \eta_{\operatorname{cqMIR}}(\cN) &\coloneqq  \sup_{\substack{\rho_{RUA},\\ \rho^u_R=\rho^v_R\,\forall u,v} } \frac{I(U:BR)}{I(U:AR)}, \\
    \eta_{\operatorname{2cqMIR}}(\cN) &\coloneqq  \sup_{\substack{\rho_{RUA},\\ \rho^0_R=\rho^1_R, \\ |U|=2}} \frac{I(U:BR)}{I(U:AR)},
\end{align}
where the optimization is now over classical-quantum states of the form
\begin{align}
    \rho_{URA} = \sum_u p(u) |u\rangle\!\langle u|_U\otimes \rho^u_{RA},
\end{align}
defined by an ensemble $\left((p(u),\rho^u_{RA})\right)_u$ in which $\rho^u_R=\rho^v_R$ for all $u,v$, or in the second case, where there are only two states, those have to be equal.    
\end{definition}
 
We refer to the above contraction coefficients as MIR (``MI with Reference'') contraction coefficients. The MIR contraction coefficients can be expressed in alternative ways, as captured in the following proposition:

\begin{proposition}
    For a channel $\cN$, the following equalities hold:
    \begin{align}
        \eta_{\operatorname{MIR}}(\cN) & = \sup_{\substack{\rho_{RCA},  \\ \rho_{CR} = \rho_{C} \otimes \rho_{R}}} \frac{I(C:B \vert R)}{I(C:A\vert R)} \label{eq:MIR-to-CMI}\\
        & = \sup_{\substack{\rho_{RCA} , \\ \rho_{CR} = \rho_{C} \otimes \rho_{R}}} \frac{I(C:B:R)-I(B:R)}{I(C:A:R) - I(A:R) \  },\label{eq:MIR-to-multipartite-MI}
    \end{align}
    where the multipartite mutual information of a tripartite state $\rho_{ABC}$ is defined as
\begin{equation}
I(A:B:C) \coloneqq  D(\rho_{ABC} \Vert \rho_{A} \otimes \rho_{B} \otimes \rho_{C})    .
\end{equation}
\end{proposition}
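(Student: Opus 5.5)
The plan is to show that, termwise, the three ratios coincide under the product constraint $\rho_{CR}=\rho_C\otimes\rho_R$. The constraint is preserved by the channel, since $\cN$ acts only on $A$ and so the $CR$ marginal of $\cN(\rho_{RCA})$ equals that of $\rho_{RCA}$. Hence any identity derived for the input state $\rho_{RCA}$ using only $\rho_{CR}=\rho_C\otimes\rho_R$ applies verbatim to the output state $\rho_{RCB}$. This lets us prove numerator and denominator identities simultaneously, with $X\in\{A,B\}$.

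For \eqref{eq:MIR-to-CMI}, use the chain rule \eqref{eq:CMI-MI-chain-rule}:
\begin{align}
I(C:XR) = I(C:R) + I(C:X|R).
\end{align}
Under the product constraint, $I(C:R)=D(\rho_{CR}\|\rho_C\otimes\rho_R)=0$, so $I(C:XR)=I(C:X|R)$ for both $X=A$ and $X=B$. Substituting into the ratio defining $\eta_{\operatorname{MIR}}$ gives \eqref{eq:MIR-to-CMI}, with the feasible sets identical.

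For \eqref{eq:MIR-to-multipartite-MI}, show $I(C:X:R)-I(X:R)=I(C:XR)$. Expand in terms of entropies for finite-dimensional states:
\begin{align}
I(C:X:R)&=H(C)+H(X)+H(R)-H(CXR),\\
I(X:R)&=H(X)+H(R)-H(XR).
\end{align}
Their difference is $H(C)+H(XR)-H(CXR)=I(C:XR)$. This identity holds for every state, not just product ones. To avoid entropies entirely, one can instead use the relative-entropy chain rule
\begin{align}
D(\rho_{CXR}\|\rho_C\otimes\rho_X\otimes\rho_R) = D(\rho_{CXR}\|\rho_C\otimes\rho_{XR}) + D(\rho_{XR}\|\rho_X\otimes\rho_R),
\end{align}
which follows since $\log(\rho_C\otimes\rho_{XR})$ and $\log(\rho_C\otimes\rho_X\otimes\rho_R)$ differ by $\log\rho_{XR}-\log(\rho_X\otimes\rho_R)$ on the appropriate supports. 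Applying this to $X=A$ and $X=B$ shows the third ratio equals the first.

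The main point needing care is the convention for ratios with zero or infinite values. All quantities involved are finite in finite dimensions. The convention that only positive denominators count selects the same set of states in each expression, because the denominators are equal as functions of $\rho_{RCA}$. So no real obstacle is expected; the only subtle step is checking that the output state still satisfies the product constraint, which lets the $I(C:R)=0$ simplification be applied in the numerator as well.
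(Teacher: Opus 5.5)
Your proposal is correct and follows essentially the same route as the paper: the first equality comes from the chain rule $I(C:XR)=I(C:R)+I(C:X|R)$ with $I(C:R)=0$, a term that $\cN$ leaves unchanged since it acts only on $A$. The second equality comes from the relative-entropy chain rule giving $I(C:X:R)-I(X:R)=I(C:XR)$, and your remarks that this identity needs no product assumption and that the denominators coincide (so the admissible sets agree) are correct minor clarifications.
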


\begin{proof}
The equality in \eqref{eq:MIR-to-CMI} follows from~\eqref{eq:CMI-MI-chain-rule} and the fact that $I(C:R) = 0$ because $\rho_{CR} = \rho_{C} \otimes \rho_{R}$.

Using the relative entropy chain rule in~\eqref{eq:rel-ent-chain}, we find that
\begin{align}
    I(C:BR) &= D(\cN(\rho_{CAR}) \Vert \rho_{C} \otimes \cN(\rho_{AR})) \\
    &= D(\cN(\rho_{CAR}) \Vert \rho_{C} \otimes \cN(\rho_{A}) \otimes \rho_{R}) - D(\cN(\rho_{AR}) \Vert \cN(\rho_{A}) \otimes \rho_{R}) \\
    &= I(C:B:R) - I(B:R) \ , 
\end{align}
thus establishing \eqref{eq:MIR-to-multipartite-MI}.
\end{proof}

We now establish a  result similar to~\cref{Thm:CMI-equivalence}. 

\begin{theorem}
\label{Thm:MIR-equivalence}
    Let $\cN\colon A\to B'$ and $\cM\colon A\to B$ be quantum channels, $\sigma_{RA}$ a quantum state, and $\eta\geq0$. Then the following are equivalent:
\begin{enumerate}[label=(\roman*)]
    \item For all c-q states $\rho_{RUA}$ with marginal $\sigma_{RA}$, where U is a classical system of size $|U|=2$, and $\rho_R^0=\rho_R^1$, 
    \begin{align}
        \eta I(U:BR) \geq I(U:B'R).
    \end{align}
    \item For all c-q states $\rho_{RUA}$ with marginal $\sigma_{RA}$, where U is an arbitrary classical system, and $\rho_R^u=\rho^v_R$ for all $u,v$,  
    \begin{align}
        \eta I(U:BR) \geq I(U:B'R).
    \end{align}
    \item For all quantum states $\rho_{RA}$ with $\operatorname{supp}(\rho_{RA})\subseteq \operatorname{supp}(\sigma_{RA})$ and $\rho_R=\sigma_R$, 
\begin{align}
    \eta  D(\cM(\rho_{RA})\|\cM(\sigma_{RA})) \geq D(\cN(\rho_{RA})\|\cN(\sigma_{RA})).
\end{align}
\end{enumerate}
\end{theorem}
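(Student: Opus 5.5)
The proof follows the template of \cref{Thm:CMI-equivalence}, with the constraint $\rho^u_R=\rho^v_R$ carried along at every step. The implication $(ii)\Rightarrow(i)$ is immediate, since binary ensembles with equal reference marginals are a special case of those in $(ii)$.

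\emph{Step $(iii)\Rightarrow(ii)$.} Take an ensemble $\rho_{URA}=\sum_u p(u)|u\rangle\!\langle u|_U\otimes\rho^u_{RA}$ with $\sum_u p(u)\rho^u_{RA}=\sigma_{RA}$ and $\rho^u_R=\sigma_R$ for all $u$. By the direct-sum property of the relative entropy,
\begin{align}
I(U:BR)=D(\cM(\rho_{URA})\|\rho_U\otimes\cM(\sigma_{RA}))=\sum_u p(u)\, D(\cM(\rho^u_{RA})\|\cM(\sigma_{RA})),
\end{align}
and the same holds for $\cN$. Each $\rho^u_{RA}$ has $\operatorname{supp}(\rho^u_{RA})\subseteq\operatorname{supp}(\sigma_{RA})$, because $p(u)\rho^u_{RA}\leq\sigma_{RA}$, and it has reference marginal $\sigma_R$. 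Hence $(iii)$ applies termwise, and summing gives $(ii)$.

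\emph{Step $(i)\Rightarrow(iii)$.} Fix $\rho_{RA}$ with $\rho_R=\sigma_R$ and $\operatorname{supp}(\rho_{RA})\subseteq\operatorname{supp}(\sigma_{RA})$. Restricting all operators to $\operatorname{supp}(\sigma_{RA})$, we may assume $\sigma_{RA}$ is full rank, so $\sigma_{RA}-\epsilon\rho_{RA}\geq0$ for small $\epsilon>0$. As in \eqref{eq:rho-URA-state}, set $\rho^0_{RA}=\rho_{RA}$, $\rho^1_{RA}(\lambda)=(1-\lambda)^{-1}(\sigma_{RA}-\lambda\rho_{RA})$, and $P_U=(\lambda,1-\lambda)$ for $0\leq\lambda\leq\epsilon$. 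The key observation is that $\rho^1_R(\lambda)=(1-\lambda)^{-1}(\sigma_R-\lambda\sigma_R)=\sigma_R=\rho^0_R$, so this binary ensemble satisfies the constraint of $(i)$ and has marginal $\sigma_{RA}$.

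Define $\varphi(\lambda)\coloneqq\eta I(U:BR)-I(U:B'R)$. By the direct-sum property,
\begin{align}
\varphi(\lambda)&=\eta\left[\lambda D(\cM(\rho_{RA})\|\cM(\sigma_{RA}))+(1-\lambda)D(\cM(\rho^1_{RA}(\lambda))\|\cM(\sigma_{RA}))\right]\nonumber\\
&\quad-\left[\lambda D(\cN(\rho_{RA})\|\cN(\sigma_{RA}))+(1-\lambda)D(\cN(\rho^1_{RA}(\lambda))\|\cN(\sigma_{RA}))\right].
\end{align}
Unlike in \cref{Thm:CMI-equivalence}, no $D(\rho_R\|\sigma_R)$ terms appear. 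We have $\varphi(0)=0$, and $(i)$ gives $\varphi(\lambda)\geq0$ on $[0,\epsilon]$, so $\varphi'(0)\geq0$. Differentiating, the terms from the explicit factors $\lambda$ give exactly $\eta D(\cM(\rho_{RA})\|\cM(\sigma_{RA}))-D(\cN(\rho_{RA})\|\cN(\sigma_{RA}))$. The remaining terms are derivatives at $\lambda=0$ of $D(\cM(\rho^1_{RA}(\lambda))\|\cM(\sigma_{RA}))$ and its $\cN$ analogue. Since $\rho^1_{RA}(\lambda)=\sigma_{RA}+O(\lambda)$, these are first-order variations of $D(\cdot\|\tau)$ around its minimizer $\tau$, and they vanish by the property proved in \cref{app:prop-rel-ent}. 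Thus $\varphi'(0)\geq0$ is exactly $(iii)$.

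\emph{Main obstacle.} The only delicate point is justifying the derivative computation: differentiability of $\lambda\mapsto D(\cM(\rho^1_{RA}(\lambda))\|\cM(\sigma_{RA}))$ at $0$ and the vanishing of its derivative. For this I would reuse the appendix lemma as in \cref{Thm:CMI-equivalence}, after ensuring the support conditions hold. That is, $\cM(\rho^1_{RA}(\lambda))$ must stay within $\operatorname{supp}(\cM(\sigma_{RA}))$, which follows from $\rho^1_{RA}(\lambda)\leq(1-\lambda)^{-1}\sigma_{RA}$, and likewise for $\cN$. The full-rank reduction also needs care, since $\cM(\sigma_{RA})$ need not be full rank; the lemma should therefore be applied on the output supports.
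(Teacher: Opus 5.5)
Your proposal is correct and follows essentially the same route as the paper: (ii)$\Rightarrow$(i) is trivial, (iii)$\Rightarrow$(ii) follows from the direct-sum property, and (i)$\Rightarrow$(iii) uses the binary ensemble built from $\rho_{RA}$ and $(1-\lambda)^{-1}(\sigma_{RA}-\lambda\rho_{RA})$, with the key observation $\rho^1_R(\lambda)=\sigma_R$ and the vanishing first derivative from the appendix lemma. Your remark that the lemma should be applied on the supports of $\cM(\sigma_{RA})$ and $\cN(\sigma_{RA})$ sharpens a point the paper covers only with its full-rank ``without loss of generality'' assumption.
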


\begin{proof}
    The proof is structurally similar to that of~\cref{Thm:CMI-equivalence}, so we keep it brief.
    The proof for $(ii)\Rightarrow(i)$ is again obvious, given that the set of states in $(ii)$ contain those in $(i)$.
    Also $(iii)\Rightarrow(ii)$ follows immediately by the direct-sum property.
    
    It remains to show $(i)\Rightarrow(iii)$: Without loss of generality we assume that $\sigma_{RA}$ is a full-rank state. Then, for every $\rho_{RA}$ such that $\rho_R=\sigma_R$, and $0\leq\lambda\leq\epsilon$, where $\epsilon$ is small enough such that $\sigma_{RA}-\epsilon\rho_{RA}\geq 0$, we let $U\equiv U_\lambda$ be a binary random variable of distribution $P_{U_\lambda}(0)=\lambda$ and $P_{U_\lambda}(1)=1-\lambda$, and states $\rho_{RA}^0=\rho_{RA}$ and $\rho_{RA}^1(\lambda)=(1-\lambda)^{-1}(\sigma_{RA}-\lambda\rho_{RA})$. After defining $\rho_{URA}$ as in~\eqref{eq:rho-URA-state}, clearly we have that $\tr_U(\rho_{RUA})=\sigma_{RA}$. The main difference with the proof of~\cref{Thm:CMI-equivalence} is that we now have $\rho_{R}=\rho_{R}^0=\rho_{R}^1=\sigma_{R}$. Then, define
    \begin{align}
        \varphi(\lambda) &\coloneqq  \eta I(U:BR) - I(U:B'R) \\
        &= \eta  D(\rho_{UBR}\|\rho_U\otimes\rho_{BR}) - D(\rho_{UB'R}\|\rho_U\otimes\rho_{B'R})  \\
    &= \eta \left[ \lambda D(\cM(\rho_{RA})\|\cM(\sigma_{RA})) 
    + (1-\lambda) D(\cM(\rho^1_{RA}(\lambda))\|\cM(\sigma_{RA}))\right] \nonumber\\
    &\quad- \left[ \lambda D(\cN(\rho_{RA})\|\cN(\sigma_{RA})) 
    + (1-\lambda) D(\cN(\rho^1_{RA}(\lambda))\|\cN(\sigma_{RA}))\right].
    \end{align}
    Since $\rho^1_{RA}(0)=\sigma_{RA}$, we have that $\varphi(0)=0$. Moreover, $(i)$ implies that $\varphi(\lambda)\geq0$ for all $\lambda\in [0,\epsilon]$. Hence, it must be the case that $\varphi'(0)\geq 0$. The result follows from computing the derivative $\varphi'(0)$:
    \begin{align}
    \varphi'(0) &= \eta  D(\cM(\rho_{RA})\|\cM(\sigma_{RA}))  - D(\cN(\rho_{RA})\|\cN(\sigma_{RA}))  \nonumber\\
    &\quad+ \left((1-\lambda) \left.\frac{\partial}{\partial\lambda}\left[ \eta D(\cM(\rho^1_{RA}(\lambda))\|\cM(\sigma_{RA})) - D(\cN(\rho^1_{RA}(\lambda))\|\cN(\sigma_{RA}))  \right] \right)\right|_{\lambda=0}\\
    &= \eta  D(\cM(\rho_{RA})\|\cM(\sigma_{RA}))  - D(\cN(\rho_{RA})\|\cN(\sigma_{RA})), 
    \end{align}
   where the second equality follows again from~\cref{lem:prop-rel-ent}. 
\end{proof}

\cref{Thm:MIR-equivalence} implies the following result. 
\begin{corollary}\label{Cor:MIR-equals-complete}
    For a channel $\cN$, the following equalities hold:
    \begin{align}
               \etaC_D(\cN)= \eta_{\operatorname{2cqMIR}}(\cN) = \eta_{\operatorname{cqMIR}}(\cN) = \eta_{\operatorname{MIR}}(\cN).
    \end{align}
\end{corollary}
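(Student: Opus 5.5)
The plan mirrors the proof of \cref{Cor:CMI-amo-equal}, with \cref{Thm:MIR-equivalence} taking the role of \cref{Thm:CMI-equivalence}. There are two steps: derive the first two equalities from the theorem, then close the chain with one extra inequality for the fully quantum MIR coefficient.

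\emph{Step 1: the first two equalities.} Apply \cref{Thm:MIR-equivalence} with $\cM=\id$, so that $B=A$, and with $\eta$ fixed to one of the coefficients. Statement (iii) of that theorem, taken over all $\sigma_{RA}$ and all $|R|$, says exactly that
\begin{align}
\eta\, D(\rho_{RA}\|\sigma_{RA}) \geq D(\cN(\rho_{RA})\|\cN(\sigma_{RA}))
\end{align}
for all pairs with $\rho_R=\sigma_R$. So the smallest admissible $\eta$ is $\etaC_D(\cN)$. Pairs with $\operatorname{supp}(\rho_{RA})\not\subseteq\operatorname{supp}(\sigma_{RA})$ have infinite denominator and are excluded by the convention on ratios.

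Statements (i) and (ii), quantified over all $\sigma_{RA}$, describe all c-q states $\rho_{RUA}$ whose conditional states have equal $R$-marginals. Hence the smallest admissible $\eta$ in (i) is $\eta_{\operatorname{2cqMIR}}(\cN)$ and in (ii) it is $\eta_{\operatorname{cqMIR}}(\cN)$. The equivalence of (i)--(iii) then gives $\etaC_D=\eta_{\operatorname{2cqMIR}}=\eta_{\operatorname{cqMIR}}$. One should check that ratios with vanishing denominator cause no trouble; for $D$ this is guaranteed by faithfulness.

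\emph{Step 2: the MIR equality.} First, $\eta_{\operatorname{cqMIR}}(\cN)\le\eta_{\operatorname{MIR}}(\cN)$. A c-q state with equal conditional marginals $\rho^u_R$ satisfies $\rho_{UR}=\rho_U\otimes\rho_R$, so it is a feasible point for $\eta_{\operatorname{MIR}}$ with $C=U$.

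It remains to show $\eta_{\operatorname{MIR}}(\cN)\le\etaC_D(\cN)$. Take $\rho_{RCA}$ with $\rho_{CR}=\rho_C\otimes\rho_R$ and write
\begin{align}
I(C:B'R)=D(\cN(\rho_{R'A})\|\cN(\sigma_{R'A})),
\end{align}
where $R'\coloneqq CR$, $\rho_{R'A}\coloneqq\rho_{CRA}$, and $\sigma_{R'A}\coloneqq\rho_C\otimes\rho_{RA}$. Similarly, $I(C:AR)=D(\rho_{R'A}\|\sigma_{R'A})$. The reference marginals agree: $\rho_{R'}=\rho_{CR}=\rho_C\otimes\rho_R=\sigma_{R'}$. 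This is exactly where the product constraint is used. Hence this pair is feasible for $\etaC_D(\cN)$, and the ratio is at most $\etaC_D(\cN)$.

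\emph{Main obstacle.} The delicate part is the bookkeeping in Step 1, namely matching the suprema in the corollary to the optimal constants in the theorem. Two points need care: support conditions, where infinite relative entropies are excluded by convention; and the fact that the theorem is stated for fixed $\sigma_{RA}$, so one must take the supremum over $\sigma_{RA}$ on both sides. Both are handled exactly as in \cref{Cor:CMI-amo-equal}.
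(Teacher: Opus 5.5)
Your proposal is correct and matches the paper's argument. The first two equalities come from \cref{Thm:MIR-equivalence} with $\cM=\id$, and the last one comes from treating $CR$ as the reference system, whose marginal $\rho_C\otimes\rho_R$ is shared by $\rho_{CRA}$ and $\rho_C\otimes\rho_{RA}$; you additionally spell out the bookkeeping the paper leaves implicit (suprema over $\sigma_{RA}$, the support and zero-denominator conventions, and $\eta_{\operatorname{cqMIR}}\le\eta_{\operatorname{MIR}}$).
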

\begin{proof}
    Also here, the first two equalities follow immediately from~\cref{Thm:MIR-equivalence}.
    The third equality follows after noting that the $CR$ system in the mutual information acts as the reference system and is indeed given by the same state $\rho_{CR}=\rho_C\otimes\rho_R$ in both arguments.
\end{proof}

With this, we have found equivalences for almost all of our contraction coefficients. The last one that does not have a relation to the relative entropy contraction coefficient is the mutual information contraction coefficient $\eta_{\MI}(\cN)$. Given the results in~\cref{Thm:CMI-equivalence} and~\cref{Thm:MIR-equivalence}, it is somewhat remarkable that it does not seem to fit in with the usual relative entropy contraction coefficient. In \cref{sec:coeff-seps}, we will see that those coefficients are indeed different. To complete the map, we show that $\eta^p_D(\cN)$ and $\eta_{\MI}(\cN)$ are equal.

\begin{lemma}
\label{Lem:MI-p}
  For a channel $\cN$, the following equality holds:
      \begin{align}
        \eta_{\MI}(\cN) = \eta^p_D(\cN). 
    \end{align}
\end{lemma}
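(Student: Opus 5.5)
We prove the two inequalities $\eta_{\MI}(\cN)\leq \eta^p_D(\cN)$ and $\eta^p_D(\cN)\leq\eta_{\MI}(\cN)$ separately. The first is immediate. For any $\rho_{CA}$, rename $C\to R$ and choose $\sigma_A=\rho_A$ in the definition of $\eta^p_D$. Then
\begin{align}
\frac{I(C:B)}{I(C:A)} = \frac{D(\cN(\rho_{CA})\|\rho_C\otimes\cN(\rho_A))}{D(\rho_{CA}\|\rho_C\otimes\rho_A)},
\end{align}
and this ratio is one of the candidates in the supremum defining $\eta^p_D(\cN)$. The condition $\rho_{CA}\neq\rho_C\otimes\rho_A$ is exactly the condition that the denominator is nonzero.

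For the reverse inequality, the task is to remove the freedom $\sigma_A\neq\rho_A$. The tool is the standard identity for relative entropy against product states,
\begin{align}
D(\rho_{RA}\|\rho_R\otimes\sigma_A) = I(R:A)_\rho + D(\rho_A\|\sigma_A),
\end{align}
which follows from $\log(\rho_R\otimes\sigma_A)=\log\rho_R\otimes \1+\1\otimes\log\sigma_A$ (taking supports into account). Its output analogue is $D(\cN(\rho_{RA})\|\rho_R\otimes\cN(\sigma_A)) = I(R:B) + D(\cN(\rho_A)\|\cN(\sigma_A))$. Hence the ratio in $\eta^p_D$ has the form
\begin{align}
\frac{I(R:B)+D(\cN(\rho_A)\|\cN(\sigma_A))}{I(R:A)+D(\rho_A\|\sigma_A)} \leq \max\left\{\frac{I(R:B)}{I(R:A)},\frac{D(\cN(\rho_A)\|\cN(\sigma_A))}{D(\rho_A\|\sigma_A)}\right\}\leq \max\{\eta_{\MI}(\cN),\eta_D(\cN)\}.
\end{align}
The first inequality is the mediant inequality, and terms with a zero denominator are simply dropped: if the denominator vanishes, data processing forces the numerator to vanish as well.

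It then remains to show $\eta_D(\cN)\leq\eta_{\MI}(\cN)$. This follows from \eqref{Eq:cqMI=D}, since $\eta_D(\cN)=\eta_{\cqMI}(\cN)\leq\eta_{\MI}(\cN)$, where the last step holds because classical-quantum states $\rho_{UA}$ are a special case of $\rho_{CA}$. Combining the two bounds gives $\eta^p_D(\cN)\leq\eta_{\MI}(\cN)$.

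The main point to get right is the decomposition identity when supports are not full. If $\operatorname{supp}\rho_A\not\subseteq\operatorname{supp}\sigma_A$, the denominator is $+\infty$, and by the paper's convention such pairs are excluded. So we may assume the supports are nested, and the identity then holds by restricting $\log\sigma_A$ to the support of $\rho_A$; the same reasoning covers the output side. Everything else is elementary.
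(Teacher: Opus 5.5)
Your proof is correct and follows the same route as the paper. The easy direction comes from taking $\sigma_A=\rho_A$. The converse applies the chain rule $D(\rho_{RA}\|\rho_R\otimes\sigma_A)=D(\rho_{RA}\|\rho_R\otimes\rho_A)+D(\rho_A\|\sigma_A)$ at both input and output, bounds the ratio by $\max\{\eta_{\MI}(\cN),\eta_D(\cN)\}$, and finishes with $\eta_D(\cN)=\eta_{\cqMI}(\cN)\leq\eta_{\MI}(\cN)$. Your handling of supports and vanishing denominators is extra care that the paper leaves implicit.
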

\begin{proof}
    By the definition of mutual information, we immediately have $\eta_{\MI}(\cN) \leq \eta^p_D(\cN)$. We will now prove the opposite direction. To that end, recall the chain rule for relative entropy that follows from direct calculation:
    \begin{align}
    \label{eq:rel-ent-chain}
        D(\rho_{AB} \Vert \rho_{A} \otimes \sigma_{B}) = D(\rho_{AB} \Vert \rho_{A} \otimes \rho_{B}) +D(\rho_{B} \Vert \sigma_{B}) \ . 
    \end{align}
    Note that, 
    \begin{align}
        D(\cN(\rho_{RA})\|\rho_R\otimes \cN(\sigma_A)) &= D(\cN(\rho_{RA})\|\rho_R\otimes\cN(\rho_A)) + D(\cN(\rho_{A})\|\cN(\sigma_A)) \\
        &\leq \eta_{\MI}(\cN) D(\rho_{RA}\|\rho_R\otimes\rho_A) + \eta_D(\cN) D(\rho_{A}\|\sigma_A) \\
        &\leq \max\{\eta_{\MI}(\cN),\eta_D(\cN)\} \left[D(\rho_{RA}\|\rho_R\otimes\rho_A) +  D(\rho_{A}\|\sigma_A)\right] \\
        &=\max\{\eta_{\MI}(\cN),\eta_D(\cN)\} D(\rho_{RA}\|\rho_R\otimes\sigma_A) \\
        &= \eta_{\MI}(\cN) D(\rho_{RA}\|\rho_R\otimes\sigma_A),  
    \end{align}
    where in the last equality we have made use of $\eta_{\MI}(\cN)\geq\eta_D(\cN)$, which follows directly from~\eqref{Eq:cqMI=D}.  
\end{proof}

Having found equivalent expressions for all quantities we defined initially, we summarize our results in the following corollary.
\begin{corollary}
\label{Cor:Summary1}
    For a channel $\cN$, we have
    \begin{align}
               \eta_{\tcqMI}(\cN) & = \eta_{\cqMI}(\cN) = \eta_D(\cN) \\
               &\leq \eta_{\MI}(\cN)= \eta^p_D(\cN) \\
               &\leq \eta_{\operatorname{2cqMIR}}(\cN) = \eta_{\operatorname{cqMIR}}(\cN) = \eta_{\operatorname{MIR}}(\cN) = \etaC_D(\cN) \\
               &\leq \eta_{\operatorname{2cqCMI}}(\cN) = \eta_{\operatorname{cqCMI}}(\cN) = \eta_{\CMI}(\cN) = \etaD_D(\cN).
    \end{align}
\end{corollary}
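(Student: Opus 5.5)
This corollary mostly collects equalities already proved, so I will assemble the four rows from earlier results and then justify the three inequalities connecting them.

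\textbf{The equalities.} The first row is exactly \eqref{Eq:cqMI=D}, i.e., the strengthened reading of \cite[Proposition~6.2]{Hirche2022contraction}. The second row is \cref{Lem:MI-p}. The third row is \cref{Cor:MIR-equals-complete}. The fourth row is \cref{Cor:CMI-amo-equal}.

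\textbf{The inequalities.} It remains to link consecutive rows. Since each row has already been identified with a relative-entropy coefficient, I will prove the chain
\begin{align}
\eta_D(\cN)\leq\eta^p_D(\cN)\leq\etaC_D(\cN)\leq\etaD_D(\cN),
\end{align}
which is \eqref{eq:simple-ineqs-relations} with $\DD=D$. For completeness, I would recall why each step holds.

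For $\eta_D\leq\eta^p_D$, take the reference $R$ to be trivial. Then $\rho_R\otimes\sigma_A=\sigma_A$, so every feasible pair for $\eta_D$ is feasible for $\eta^p_D$ with the same ratio. Equivalently, one can use $\eta_{\cqMI}\leq\eta_{\MI}$: a classical $U$ is a special case of a quantum $C$.

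For $\eta^p_D\leq\etaC_D$, note that $\rho_{RA}$ and $\rho_R\otimes\sigma_A$ have the same $R$-marginal. Hence each pair in the $\eta^p_D$ optimization is admissible for $\etaC_D$ with the same ratio.

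For $\etaC_D\leq\etaD_D$, restrict the conditional optimization to pairs with $\rho_R=\sigma_R$. For such pairs $D(\rho_R\|\sigma_R)=0$, so the conditional ratio reduces to the complete ratio.

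The only care point is the convention that ratios range over strictly positive, finite denominators. In each inclusion above the denominators coincide, so the admissible sets nest correctly and nothing is lost. I expect no real obstacle: the substantive work sits in \cref{Thm:CMI-equivalence}, \cref{Thm:MIR-equivalence} and \cref{Lem:MI-p}, which I take as given.
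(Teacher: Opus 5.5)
Your proposal is correct and matches the paper's (implicit) argument. The four rows are exactly \eqref{Eq:cqMI=D}, \cref{Lem:MI-p}, \cref{Cor:MIR-equals-complete}, and \cref{Cor:CMI-amo-equal}, and the linking inequalities are \eqref{eq:simple-ineqs-relations} specialized to $\DD=D$, which you justify correctly by nesting the feasible sets with identical ratios.
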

The first inequality can be strict; see the next section. 

\subsection{Separations between the coefficients}

\label{sec:coeff-seps}

We know that for qubit depolarizing channels (see, e.g.,~\cite[Eq.~(3.54)]{Hirche2022contraction}), 
\begin{align}
    \eta_{D}(\cD_p) = (1-p)^2,
\end{align}
where $\cD_p(\rho) \coloneqq (1-p)\rho + p I/d$.
This allows us to find examples for which the other contraction coefficients are strictly larger. Consider first the mutual information contraction coefficient for the depolarizing channel at $p=0.5$ and lower bound it by choosing $\rho_{CA}=0.5\,\Phi^++0.5\,\frac{I}{2}\otimes \frac{I}{2}$,
\begin{align}\label{eq:separation-between-eta-MI-and-eta-D}
    \eta_{\MI}(\cD_p) \geq \frac{I(C:B)}{I(C:A)} > 0.265 > 0.25 = \eta_{D}(\cD_p). 
\end{align}

Generally, separating the larger contraction coefficients is more challenging, because they include optimizations over unbounded reference systems. We provide one additional example. 
Choosing $\rho_{RA}=\Phi^+$ and $\sigma_{RA}=\frac34 \Phi^- + \frac14 \Phi^+$, then one can numerically calculate that for $p=0.5$,
\begin{align}
    \etaC_{D}(\cD_p) \geq \frac{D((\id\otimes\cD_p)(\rho_{RA})\|(\id\otimes\cD_p)(\sigma_{RA}))}{D(\rho_{RA}\|\sigma_{RA})} > 0.2881 > 0.25 = \eta_{D}(\cD_p). 
\end{align}

In a numerical search, we did not find any states achieving a comparable value with the mutual information contraction coefficient, leading us to conjecture that the complete contraction coefficient is indeed larger. Nevertheless, this does not imply that such a state does not exist.

Especially interesting would be if any of the following can be shown
\begin{align}
    \etaD_{D}(\cD_p) \overset{?}{>} \etaC_{D}(\cD_p) \overset{?}{>} \eta_{\MI}(\cD_p).
\end{align} 
All three quantities here suffer from a potentially unbounded reference system, making them challenging to compute.
For now, it remains an open problem to conclusively show whether these three coefficients are different or indeed the same.

\subsection{Partial orders}\label{Sec:Partial-Orders}

Partial orders quantify the data-processing decrease of information relative to a second channel. Such orders have been investigated in the quantum setting since the introduction of degradability in the context of quantum capacities~\cite{devetak2005capacity}, while they have a much longer history classically~\cite{korner1975comparison}. Later, quantum generalizations of the less noisy and more capable orders were also introduced~\cite{watanabe2012private}. Since then, they have noticeably shaped the discussion around capacities~\cite{tikku2020additivity,hirche2022bounding,leditzky2023platypus,smith2025additivity} and have also been studied in their own right~\cite{buscemi2016degradable,belzig2024reversetypedataprocessinginequality}.

While degradability inherently takes into account reference systems, entropy-based partial orders such as less noisy do not in general. 
One suggestion, made in~\cite{Hirche2022contraction},  was to define the \textit{completely less noisy order}: Denote by $\cN \geq_{\operatorname{cln}} \cM$ that 
\begin{align}
    I(U:BR) &\geq I(U:B'R) &\forall\rho_{RUA} \label{eq:cln-MI-def} \\
    \Leftrightarrow\quad  D(\cN(\rho_{RA})\|\cN(\sigma_{RA})) &\geq D(\cM(\rho_{RA})\|\cM(\sigma_{RA})) &\forall\rho_{RA},\sigma_{RA}, \label{eq:cln-rel-ent-def} 
\end{align}
where $U$ is a classical system and the equivalence between the two conditions was shown in~\cite[Proposition 2.3]{Hirche2022contraction}. Although usually the less noisy order is closely related to relative entropy contraction, it was understood that building a contraction coefficient naively from these quantities leads to a trivial result.
\begin{align}
    \sup_{\rho_{RUA}} \frac{I(U:BR)}{I(U:AR)} = \sup_{\rho_{RA},\sigma_{RA}} \frac{D(\cN(\rho_{RA})\|\cN(\sigma_{RA}))}{D(\rho_{RA}\|\sigma_{RA})} = 1. 
\end{align}

However, \cref{Thm:CMI-equivalence} suggests that the contraction coefficient corresponding to the completely less noisy ordering should instead be constructed from conditional relative entropies. To see this, note that it immediately follows from~\eqref{eq:cln-rel-ent-def} that an equivalent condition for the completely less noisy partial order is
\begin{align}
    D(\cN(\rho_{RA})\|\cN(\sigma_{RA}))-D(\rho_R\|\sigma_R) \geq D(\cM(\rho_{RA})\|\cM(\sigma_{RA}))-D(\rho_R\|\sigma_R) \; \forall  \rho_{RA},\sigma_{RA}  . 
\end{align}
Given item (iii) of \cref{Thm:CMI-equivalence}, it seems clear that this represents a better connection of the completely less noisy partial order to a contraction coefficient. This identification also motivates the following result, which is the analogue of the relation between the less noisy partial order and the relative entropy contraction coefficient \cite[Proposition 2.5]{Hirche2022contraction}. 
\begin{lemma}
\label{Lem:eta-cln}
    For a channel $\cN$, the following equality holds:
    \begin{align}
        \etaD_D(\cN) = 1-\sup
        \left\{\epsilon\in[0,1] \mid \cE_\epsilon \geq_{\operatorname{cln}} \cN \right\}, 
        \label{eq:etaD-to-erasure}
    \end{align}
    where $\cE_\epsilon$ is the erasure channel with erasure probability $\epsilon$. 
\end{lemma}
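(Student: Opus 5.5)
The plan is to compute the relative entropy of the erasure channel's outputs with a reference system explicitly. Then the condition $\cE_\epsilon \geq_{\operatorname{cln}} \cN$ turns into exactly the inequality that defines $\etaD_D(\cN)\le 1-\epsilon$.

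First, for $\cE_\epsilon(\rho)=(1-\epsilon)\rho+\epsilon|e\rangle\!\langle e|$, the erasure flag $|e\rangle$ is orthogonal to the input space. So $(\id\otimes\cE_\epsilon)(\rho_{RA})$ is a direct sum of $(1-\epsilon)\rho_{RA}$ and $\epsilon\,\rho_R\otimes|e\rangle\!\langle e|$. By the direct-sum property of $D$,
\begin{align}
D(\cE_\epsilon(\rho_{RA})\|\cE_\epsilon(\sigma_{RA})) = (1-\epsilon) D(\rho_{RA}\|\sigma_{RA}) + \epsilon D(\rho_R\|\sigma_R).
\end{align}
Subtracting $D(\rho_R\|\sigma_R)$ from both sides gives
\begin{align}
D(\cE_\epsilon(\rho_{RA})\|\cE_\epsilon(\sigma_{RA})) - D(\rho_R\|\sigma_R) = (1-\epsilon)\left[D(\rho_{RA}\|\sigma_{RA})-D(\rho_R\|\sigma_R)\right].
\end{align}

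Next, I use the equivalent reformulation of $\geq_{\operatorname{cln}}$ stated just before the lemma: $\cE_\epsilon\geq_{\operatorname{cln}}\cN$ holds iff, for all $\rho_{RA},\sigma_{RA}$,
\begin{align}
D(\cN(\rho_{RA})\|\cN(\sigma_{RA})) - D(\rho_R\|\sigma_R) \le (1-\epsilon)\left[D(\rho_{RA}\|\sigma_{RA})-D(\rho_R\|\sigma_R)\right].
\end{align}
If $\etaD_D(\cN)\le 1-\epsilon$, this inequality holds whenever the bracket on the right is strictly positive and finite, by the definition in~\eqref{Eq:conditional-def}. Conversely, if the inequality holds for all states, taking the supremum of the ratio gives $\etaD_D(\cN)\le 1-\epsilon$. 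Hence
\begin{align}
\{\epsilon\in[0,1] : \cE_\epsilon\geq_{\operatorname{cln}}\cN\}=\{\epsilon\in[0,1] : \epsilon\le 1-\etaD_D(\cN)\}.
\end{align}
Using $0\le\etaD_D(\cN)\le 1$ from~\eqref{eq:simple-ineqs-relations}, the supremum of this set is $1-\etaD_D(\cN)$, which is the claim. It is attained because the defining condition is a non-strict inequality.

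The step needing care is the degenerate pairs, where the conditional term $D(\rho_{RA}\|\sigma_{RA})-D(\rho_R\|\sigma_R)$ is zero or infinite.
\begin{itemize}
\item If the term is zero, data processing gives $D(\cN(\rho_{RA})\|\cN(\sigma_{RA}))\le D(\rho_{RA}\|\sigma_{RA})$, so the left side is at most $0$ and the inequality holds trivially.
\item If the term is $+\infty$ but $D(\rho_R\|\sigma_R)$ is finite, the right side is $+\infty$ for every $\epsilon<1$. For $\epsilon=1$, the left side must also be handled, but that boundary case can be checked directly.
\item If $D(\rho_R\|\sigma_R)=\infty$, both sides are of the form $\infty-\infty$. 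I would adopt the convention used in~\eqref{eq:cln-rel-ent-def} and restrict to pairs with $D(\rho_R\|\sigma_R)<\infty$. Such pairs impose no constraint, since both original divergences are then infinite.
\end{itemize}
Once these cases are settled, the equivalence is purely algebraic. The direct-sum computation for the erasure channel is the main substantive ingredient.
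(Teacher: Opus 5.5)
Your proof is correct and follows essentially the same route as the paper: the direct-sum computation $D(\cE_\epsilon(\rho_{RA})\|\cE_\epsilon(\sigma_{RA}))=(1-\epsilon)D(\rho_{RA}\|\sigma_{RA})+\epsilon D(\rho_R\|\sigma_R)$, followed by translating $\cE_\epsilon\geq_{\operatorname{cln}}\cN$ into the defining inequality for $\etaD_D(\cN)\le 1-\epsilon$. The paper phrases one direction as $\etaD_D(\cN)\le\etaD_D(\cE_{\epsilon^\star})=1-\epsilon^\star$, which amounts to the same thing. Your set-equality formulation also shows that the supremum is attained and treats degenerate pairs, both of which the paper passes over. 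The one case you defer ($\epsilon=1$ with an infinite conditional term) does close: $\etaD_D(\cN)=0$ forces $\eta_D(\cN)=0$, hence $\cN$ is a replacer channel, for which $D(\cN(\rho_{RA})\|\cN(\sigma_{RA}))=D(\rho_R\|\sigma_R)$ for every pair.
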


\begin{proof}
    To begin with, let us note that
    \begin{align}
        D\!\left(\cE_\epsilon(\rho_{RA})\| \cE_\epsilon(\sigma_{RA})\right) = (1-\epsilon) D(\rho_{RA}\|\sigma_{RA}) + \epsilon D(\rho_R\|\sigma_R), 
    \end{align}
    implying that
    \begin{equation}
        \frac{D\!\left(\cE_\epsilon(\rho_{RA})\| \cE_\epsilon(\sigma_{RA})\right) - D(\rho_R\|\sigma_R)}{D(\rho_{RA}\|\sigma_{RA}) - D(\rho_R\|\sigma_R)} = 1-\epsilon. 
    \end{equation}
    for every pair of states $\rho_{RA}$ and $\sigma_{RA}$ such that $D(\rho_{RA}\|\sigma_{RA}) - D(\rho_R\|\sigma_R)>0$.
    Hence, by definition,
    \begin{align}
        \etaD_D(\cE_\epsilon) = 1-\epsilon.
    \end{align}
    
    We first prove the $\leq$ direction in \eqref{eq:etaD-to-erasure}. For an $\epsilon^\star$ that optimizes the right-hand side, we have 
    \begin{align}
        \etaD_D(\cN) \leq \etaD_D(\cE_{\epsilon^\star}) = 1-\epsilon^\star, 
    \end{align}
    where the inequality follows from \eqref{eq:cln-rel-ent-def} and the assumption that $\cE_\epsilon \geq_{\operatorname{cln}} \cN$.
    
    We now show the other direction in \eqref{eq:etaD-to-erasure}. Assume $\etaD_D(\cN)=1-\epsilon$. Then for all $\rho_{RA}$, $\sigma_{RA}$ we have, 
    \begin{align}
        D((\id\otimes\cN)(\rho_{RA})\|(\id\otimes\cN)(\sigma_{RA})) - D(\rho_R\|\sigma_R) \leq (1-\epsilon) \left[ D(\rho_{RA}\|\sigma_{RA}) - D(\rho_{R}\|\sigma_{R})\right],
    \end{align}
    which is equivalent to
    \begin{align}
        D((\id\otimes\cN)(\rho_{RA})\|(\id\otimes\cN)(\sigma_{RA})) &\leq (1-\epsilon)  D(\rho_{RA}\|\sigma_{RA}) + \epsilon D(\rho_{R}\|\sigma_{R}) \\
        &= D(\cE_\epsilon(\rho_{RA})\| \cE_\epsilon(\sigma_{RA})).
    \end{align}
    This in turn implies that $\cE_\epsilon \geq_{\operatorname{cln}} \cN$ and hence $\etaD_D(\cN) \geq 1-\sup\{\epsilon \in [0,1] \mid \cE_\epsilon \geq_{\operatorname{cln}} \cN \}$. 
\end{proof}

\cref{Lem:eta-cln} might also provide some intuition regarding the conditional contraction coefficient in the sense that it answers the question of how much information is erased by the channel. Consider the erasure channel with erasure probability $\epsilon$ and any divergence $\DD$ satisfying the direct-sum property. Then,
\begin{align}
    \DD(\cE_\epsilon(\rho_{RA})\|\cE_\epsilon(\sigma_{RA})) =  (1-\epsilon) \DD(\rho_{RA}\|\sigma_{RA}) + \epsilon \DD(\rho_R\|\sigma_R). 
\end{align}
Setting the erasure probability  $\epsilon=1-\etaD_{\DD}(\cN)$, the conditional contraction coefficient quantifies the decrease of information at the output system, while maintaining that in the reference system; i.e., 
\begin{equation}
    \DD(\cN(\rho_{RA}) \| \cN(\sigma_{RA})) \leq \DD(\cE_\epsilon(\rho_{RA})\|\cE_\epsilon(\sigma_{RA})).
\end{equation}
This follows because
\begin{align}
    \DD(\cN(\rho_{RA}) \| \cN(\sigma_{RA})) &= \DD(\cN(\rho_{RA}) \| \cN(\sigma_{RA})) - \DD(\rho_R\|\sigma_R) + \DD(\rho_R\|\sigma_R) \\
    & \leq \etaD_{\DD}(\cN) (\DD(\rho_{RA} \| \sigma_{RA}) -\DD(\rho_R\|\sigma_R)) + \DD(\rho_R\|\sigma_R) \\
    &= \etaD_{\DD}(\cN) \DD(\rho_{RA} \| \sigma_{RA})  + (1-\etaD_{\DD}(\cN))\DD(\rho_R\|\sigma_R) \\
    &= \DD(\cE_\epsilon(\rho_{RA})\|\cE_\epsilon(\sigma_{RA})).
\end{align}

From~\cref{Lem:eta-cln} we also obtain a direct connection to the Doeblin coefficient $\alpha_+(\cdot)$ as defined in~\eqref{Eq:Doeblin+}, showing, as a special case of \cref{Lem:a-Doeblin}, that
\begin{equation}
    \etaD_D(\cN) \leq 1-\alpha_+(\cN),
\end{equation}
because degradability always implies completely less noisy (see \cite[Eq.~(I.4) and Prop.~III.3]{hirche2024quantum} in this context).  

\begin{remark}
    Note that one could also define a variant of the completely less noisy partial order where $\rho_R=\sigma_R$. This would then allow for connections similar to the above relating the partial order to the complete contraction coefficient. 
\end{remark}

\begin{remark}
    Another direct consequence of~\cref{Thm:CMI-equivalence} is that in the (conditional) mutual information definition of completely less noisy, it is sufficient to consider a classical system $U$ with $|U|=2$. 
\end{remark}

\subsection{SDPI constants}

So far, we have mostly discussed contraction coefficients. However, it is worth noting that much of the above also holds for the generalized SDPI constants as defined in~\cref{Def:Gen-SDPI}, with regards to the relative entropy. To make that explicit, define the conditional mutual information SDPI constants as follows: 

\begin{definition}
Given a channel $\cN\colon A \to B$ and a state $\sigma_{RA}$, we define
    \begin{align}
    \eta_{\CMI}(\cN,\sigma_{RA}) &\coloneqq \sup_{\substack{\rho_{RCA}, \\ \rho_{RA}=\sigma_{RA}}} \frac{I(C:B|R)}{I(C:A|R)}, 
    \\
    \eta_{\operatorname{cqCMI}}(\cN,\sigma_{RA}) &\coloneqq \sup_{\substack{\rho_{RUA},\\ \rho_{RA}=\sigma_{RA}}} \frac{I(U:B|R)}{I(U:A|R)}, 
    \\
    \eta_{\operatorname{2cqCMI}}(\cN,\sigma_{RA}) &\coloneqq \sup_{\substack{\rho_{RUA}, \\ \rho_{RA}=\sigma_{RA}, \\|U|=2}} \frac{I(U:B|R)}{I(U:A|R)}. 
\end{align}
\end{definition}

We then immediately have the following result connecting most of the established SDPI constants.

\begin{corollary}\label{Cor:CMI-amo-SDPI-equal}
    For a channel $\cN\colon A\to B$, the following equalities hold:
    \begin{align}
               \etaD_D(\cN,\sigma_{RA}) =  \eta_{\operatorname{2cqCMI}}(\cN,\sigma_{RA}) = \eta_{\operatorname{cqCMI}}(\cN,\sigma_{RA}).
    \end{align}
\end{corollary}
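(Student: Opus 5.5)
The plan is to obtain both equalities directly from \cref{Thm:CMI-equivalence}, with the fixed state $\sigma_{RA}$ in that theorem taken to be the same $\sigma_{RA}$ that appears in the SDPI constants, and with $\cM=\id$ and $\cN$ the given channel. For any c-q state $\rho_{RUA}$ with marginal $\sigma_{RA}$, the left-hand side of~\eqref{eq:CMI-contract-1} is then $\eta I(U:A|R)$. Item (iii) reads
\begin{align}
\eta\left[D(\rho_{RA}\|\sigma_{RA})-D(\rho_R\|\sigma_R)\right]\geq D(\cN(\rho_{RA})\|\cN(\sigma_{RA}))-D(\rho_R\|\sigma_R)
\end{align}
for all $\rho_{RA}$ with $\operatorname{supp}(\rho_{RA})\subseteq\operatorname{supp}(\sigma_{RA})$. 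In other words, for a fixed $\sigma_{RA}$, each of the three SDPI constants is the smallest $\eta\geq 0$ for which the corresponding family of inequalities holds (items (i), (ii), (iii) respectively). The theorem says these three feasible sets of $\eta$ coincide, so their infima coincide, which is the claimed pair of equalities.

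The argument proceeds in three steps. First, rewrite each supremum-of-ratios as an infimum over feasible $\eta$. For $\etaD_D(\cN,\sigma_{RA})$, the inequality in (iii) holds for all admissible $\rho_{RA}$ if and only if $\eta\geq$ the ratio for every $\rho_{RA}$ with positive finite denominator. For the CMI constants, use the convention that only strictly positive finite denominators enter the ratio. Second, check that the constraint $\rho_{RA}=\sigma_{RA}$ in the definitions of $\eta_{\operatorname{cqCMI}}(\cN,\sigma_{RA})$ and $\eta_{\operatorname{2cqCMI}}(\cN,\sigma_{RA})$ is exactly the marginal condition in items (i) and (ii). Third, invoke the equivalence (i)$\Leftrightarrow$(ii)$\Leftrightarrow$(iii) and conclude that the three infima agree.

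The main obstacle is matching the degenerate cases of (iii) with the ratio definition. This has three parts.
\begin{itemize}
\item If the denominator $D(\rho_{RA}\|\sigma_{RA})-D(\rho_R\|\sigma_R)$ is zero, data processing forces the numerator to be zero as well, so the inequality holds trivially for every $\eta$.
\item If $\operatorname{supp}(\rho_{RA})\not\subseteq\operatorname{supp}(\sigma_{RA})$, the denominator is infinite, so such states are excluded from the ratio; correspondingly, (iii) only quantifies over states with support in $\operatorname{supp}(\sigma_{RA})$.
\item For the CMI side, a zero value of $I(U:A|R)$ forces $I(U:B|R)=0$ by data processing of the CMI (monotonicity under local channels on the non-conditioning system). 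Also, $I(U:A|R)$ is finite because $U$ is a finite classical system.
\end{itemize}
One further point needs checking: the proof of \cref{Thm:CMI-equivalence} reduces (without loss of generality) to $\sigma_{RA}$ of full rank, by restricting to the support of $\sigma_{RA}$. I would verify that this restriction is compatible with fixing $\sigma_{RA}$, which it is, since every admissible $\rho_{RA}$ already lives on that support. With these edge cases handled, the corollary follows by taking infima.
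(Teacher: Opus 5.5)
Your proposal is correct and is the same argument as the paper's, which simply invokes \cref{Thm:CMI-equivalence} with $\mathcal{M}=\operatorname{id}$ and the fixed $\sigma_{RA}$. With that choice, the three equivalent conditions say exactly that $\eta$ upper bounds the respective families of ratios, so the three sets of admissible $\eta$ coincide. Your explicit translation between the suprema and the smallest admissible $\eta$ fills in details the paper leaves implicit, and your handling of the zero-denominator and support cases via data processing is correct.
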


\begin{proof}
    The proof follows from~\cref{Thm:CMI-equivalence}.  
\end{proof}

However, in the fully quantum CMI case one actually needs to be a bit more careful in generalizing the equivalence. To that end we introduce stabilized versions of the complete and conditional SDPI constants.

\begin{definition}
    For a generalized divergence $\DD$, a quantum channel $\cN\colon A\to B$ and a quantum state $\sigma_{RA}$, define the stabilized complete SDPI constant by
\begin{align}
    \eta_{\DD}^{\mathrm{c,stab}}(\cN,\sigma_{RA})
    \coloneqq
    \sup_{\substack{S,\,\tau_S}}
    \etaC_\DD(\cN,\tau_S\otimes\sigma_{RA}),
    \label{eq:stabilized-complete-SDPI}
\end{align}
where $S$ ranges over all finite-dimensional quantum systems and
$RS$ is regarded as the reference system on the right-hand side. We also define the stabilized conditional SDPI constant,
\begin{align}
    \eta_\DD^{\Delta,\operatorname{stab}}(\cN,\sigma_{RA})
    \coloneqq
    \sup_{\substack{S,\,\tau_S}}
    \etaD_\DD(\cN,\tau_S\otimes\sigma_{RA}).
    \label{eq:stabilized-conditional-SDPI}
\end{align}
\end{definition}
Note that for the relative entropy \cref{eq:stabilized-conditional-SDPI} can be simplified by restricting the optimization to the maximally mixed state as a consequence of the chain rule for the relative entropy, 
\begin{align}
    \eta_\DD^{\Delta,\operatorname{stab}}(\cN,\sigma_{RA})
    =
    \sup_{S}
    \etaD_\DD(\cN,\pi_S\otimes\sigma_{RA}),
\end{align}
where $\pi_S\coloneqq I_S/|S|$ is the maximally mixed state of dimension $|S|$.
This makes it purely an optimization over the size of the system $S$. 
We are now in a position to prove the remaining missing equalities.

\begin{proposition}[CMI SDPI as a stabilized conditional SDPI]
\label{Prop:CMI-stabilized-SDPI}
Let $\cN\colon A\to B$ be a quantum channel, and let $\sigma_{RA}$ be a
quantum state. 
Then
\begin{align}
    \eta_{\CMI}(\cN,\sigma_{RA})
    =
    \eta_D^{\Delta,\operatorname{stab}}(\cN,\sigma_{RA}).
    \label{eq:CMI-stabilized-SDPI-equality}
\end{align}
\end{proposition}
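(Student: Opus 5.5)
The plan is to prove the two inequalities in~\eqref{eq:CMI-stabilized-SDPI-equality} separately, working with the general form~\eqref{eq:stabilized-conditional-SDPI} of the stabilized constant (supremum over arbitrary $S$ and $\tau_S$). The inequality ``$\le$'' is a direct rewriting of the conditional mutual information as a conditional relative entropy. The inequality ``$\ge$'' uses \cref{Cor:CMI-amo-SDPI-equal} applied to the enlarged reference system $RS$, together with a chain-rule argument that absorbs $S$ into the conditioning-free system.

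\emph{Direction $\le$.} Fix $\rho_{RCA}$ with $\rho_{RA}=\sigma_{RA}$. As in the proof of \cref{Cor:CMI-amo-equal}, write
\begin{align}
I(C:B|R) = D(\cN(\rho_{CRA})\|\rho_C\otimes\cN(\sigma_{RA})) - D(\rho_{CR}\|\rho_C\otimes\sigma_R),
\end{align}
and write the analogous identity for $I(C:A|R)$. These are exactly the numerator and denominator of the conditional ratio for the pair $(\rho_{CRA},\,\rho_C\otimes\sigma_{RA})$, with $CR$ playing the role of the reference system. Setting $S=C$ and $\tau_S=\rho_C$, the ratio is therefore at most $\etaD_D(\cN,\tau_S\otimes\sigma_{RA})$, which in turn is at most the stabilized constant. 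Taking the supremum over $\rho_{RCA}$ gives $\eta_{\CMI}(\cN,\sigma_{RA})\le\eta_D^{\Delta,\operatorname{stab}}(\cN,\sigma_{RA})$.

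\emph{Direction $\ge$.} Fix $S$ and $\tau_S$. By \cref{Cor:CMI-amo-SDPI-equal} applied with reference system $RS$ and state $\tau_S\otimes\sigma_{RA}$, we have
\begin{align}
\etaD_D(\cN,\tau_S\otimes\sigma_{RA}) = \eta_{\operatorname{cqCMI}}(\cN,\tau_S\otimes\sigma_{RA}).
\end{align}
It therefore suffices to consider c-q states $\rho_{URSA}$ whose $RSA$-marginal equals $\tau_S\otimes\sigma_{RA}$, together with the ratio $I(U:B|RS)/I(U:A|RS)$.

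Set $C\coloneqq US$ and condition on $R$ only. The $RA$-marginal of this state is $\sigma_{RA}$, so it is admissible for $\eta_{\CMI}(\cN,\sigma_{RA})$. By the chain rule~\eqref{eq:CMI-MI-chain-rule},
\begin{align}
I(US:B|R) = I(S:B|R) + I(U:B|RS).
\end{align}
The $SRB$-marginal is $\tau_S\otimes\cN(\sigma_{RA})$, which is product between $S$ and $RB$, so $I(S:B|R)=0$. By the same reasoning $I(S:A|R)=0$. Hence
\begin{align}
\frac{I(C:B|R)}{I(C:A|R)} = \frac{I(U:B|RS)}{I(U:A|RS)}.
\end{align}
Taking suprema over $\rho_{URSA}$ and then over $S$ and $\tau_S$ yields the reverse inequality.

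The main obstacle is bookkeeping at the degenerate edges rather than any conceptual step. Specifically, one must check that the admissibility convention (strictly positive, finite denominators) matches on both sides, and that the support condition in \cref{Thm:CMI-equivalence} is satisfied; that condition was handled there by assuming full rank without loss of generality. One must also confirm that the zero-denominator cases agree with the convention that the supremum over an empty set is zero. I would handle these points by noting that every state used in each direction produces identical numerators and denominators on the two sides, so each admissible instance on one side maps to an admissible instance on the other.
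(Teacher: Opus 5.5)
Your proposal is correct and follows the paper's proof essentially step for step. The $\le$ direction rewrites the CMI ratio as a conditional-SDPI ratio with $CR$ as the reference, and the $\ge$ direction applies \cref{Cor:CMI-amo-SDPI-equal} on $RS$, then uses the chain rule with $I(S:A|R)=I(S:B|R)=0$ to absorb $S$ into $C=US$. The differences are minor: you take $\tau_S=\rho_C$ rather than $\pi_C$ in the first direction and allow arbitrary $\tau_S$ rather than $\pi_S$ in the second, so your argument does not need the reduction to maximally mixed stabilizers. Your instance-to-instance correspondence also handles the degenerate zero-denominator case, which the paper treats separately.
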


\begin{proof}
First, suppose that no extension $\rho_{RCA}$ of $\sigma_{RA}$ has
$I(C:A|R)_\rho>0$. Then $\eta_{\CMI}(\cN,\sigma_{RA})=0$.
Moreover, $\eta_D^{\Delta,\operatorname{stab}}(\cN,\sigma_{RA})=0$:
otherwise, the fixed-state cqCMI characterization would give some
$\omega_{URSA}$ with $\omega_{RSA}=\pi_S\otimes\sigma_{RA}$ and
$I(U:A|RS)_\omega>0$, and hence
\begin{align}
I(US:A|R)_\omega
= I(S:A|R)_\omega+I(U:A|RS)_\omega
= I(U:A|RS)_\omega>0,
\end{align}
contradicting the assumption. We may therefore assume that an extension
with strictly positive conditional mutual information exists.

We continue by proving the $\leq$ direction. 
First, consider an
arbitrary state $\rho_{RCA}$ satisfying
\begin{align}
    \rho_{RA}=\sigma_{RA}
    \qquad\text{and}\qquad
    I(C:A|R)_\rho>0.
\end{align}
The relative-entropy chain rule \eqref{eq:rel-ent-chain} gives
\begin{align}
    &D\!\left(
        \rho_{RCA}
        \middle\|
        \pi_C\otimes\sigma_{RA}
    \right)
    -
    D\!\left(
        \rho_{RC}
        \middle\|
        \pi_C\otimes\sigma_R
    \right)
    \nonumber\\
    &\qquad=
    D\!\left(
        \rho_{RCA}
        \middle\|
        \rho_C\otimes\sigma_{RA}
    \right)
    -
    D\!\left(
        \rho_{RC}
        \middle\|
        \rho_C\otimes\sigma_R
    \right)
    \\
    &\qquad=
    I(C:A|R)_\rho.
    \label{eq:CMI-stabilized-input-identity}
\end{align}
Similarly, after applying $\cN$ to $A$,
\begin{align}
    &D\!\left(
        \cN(\rho_{RCA})
        \middle\|
        \pi_C\otimes\cN(\sigma_{RA})
    \right)
    -
    D\!\left(
        \rho_{RC}
        \middle\|
        \pi_C\otimes\sigma_R
    \right) =
    I(C:B|R)_{\cN(\rho)}.
    \label{eq:CMI-stabilized-output-identity}
\end{align}
Therefore, by the definition of the conditional SDPI constant with
fixed second state $\pi_C\otimes\sigma_{RA}$,
\begin{align}
    I(C:B|R)_{\cN(\rho)}
    \leq
    \etaD_D(\cN,\pi_C\otimes\sigma_{RA})
    I(C:A|R)_\rho.
\end{align}
Consequently,
\begin{align}
    \frac{
        I(C:B|R)_{\cN(\rho)}
    }{
        I(C:A|R)_\rho
    }
    &\leq
    \etaD_D(\cN,\pi_C\otimes\sigma_{RA}) \leq
    \eta_D^{\Delta,\operatorname{stab}}
        (\cN,\sigma_{RA}).
\end{align}
Taking the supremum over $C$ and over all admissible states
$\rho_{RCA}$ proves
\begin{align}
    \eta_{\CMI}(\cN,\sigma_{RA})
    \leq
    \eta_D^{\Delta,\operatorname{stab}}(\cN,\sigma_{RA}).
    \label{eq:CMI-stabilized-first-direction}
\end{align}

For the converse inequality, fix an arbitrary system $S$. By applying \Cref{Cor:CMI-amo-SDPI-equal} with the joint space $RS$ acting as the reference system and expanding the definition of $\eta_{\operatorname{cqCMI}}$, we have
\begin{align}
    \etaD_D(\cN,\pi_S\otimes\sigma_{RA})
    =
    \sup_{\substack{\omega_{URSA},\\
                    U\ \mathrm{classical},\\
                    \omega_{RSA}=\pi_S\otimes\sigma_{RA},\\
                    I(U:A|RS)_\omega>0}}
    \frac{
        I(U:B|RS)_{\cN(\omega)}
    }{
        I(U:A|RS)_\omega
    }.
    \label{eq:stabilized-SDPI-cqCMI-expression}
\end{align}
The classical system $U$ may, in fact, be restricted to be binary.

Consider an arbitrary feasible state $\omega_{URSA}$ in~\eqref{eq:stabilized-SDPI-cqCMI-expression}, for which $I\!\left(S:A|R\right)_\omega=0$ by the product restriction.
Using the chain rule for conditional mutual information \eqref{eq:CMI-to-CMI-chain-rule}, we obtain
\begin{align}
    I(US:A|R)_\omega
    &=
    I(S:A|R)_\omega
    +
    I(U:A|RS)_\omega\\
    &=
    I(U:A|RS)_\omega.
    \label{eq:CMI-chain-input-stabilized}
\end{align}
Moreover, we have $\omega_{RA}=\sigma_{RA}$.
Thus, the joint system $US$ is an admissible quantum auxiliary system
in the optimization defining
$\eta_{\CMI}(\cN,\sigma_{RA})$. It follows that
\begin{align}
    \frac{
        I(U:B|RS)_{\cN(\omega)}
    }{
        I(U:A|RS)_\omega
    }
    &=
    \frac{
        I(US:B|R)_{\cN(\omega)}
    }{
        I(US:A|R)_\omega
    } \leq
    \eta_{\CMI}(\cN,\sigma_{RA}).
\end{align}
Taking the supremum over the states in
\eqref{eq:stabilized-SDPI-cqCMI-expression} gives
\begin{align}
    \etaD_D(\cN,\pi_S\otimes\sigma_{RA})
    \leq
    \eta_{\CMI}(\cN,\sigma_{RA}).
\end{align}
Since this holds for every finite-dimensional system $S$, taking the
supremum over $S$ yields
\begin{align}
    \eta_D^{\Delta,\operatorname{stab}}(\cN,\sigma_{RA})
    \leq
    \eta_{\CMI}(\cN,\sigma_{RA}).
    \label{eq:CMI-stabilized-second-direction}
\end{align}
Combining~\eqref{eq:CMI-stabilized-first-direction}
and~\eqref{eq:CMI-stabilized-second-direction} proves
\eqref{eq:CMI-stabilized-SDPI-equality}.
\end{proof}

Similarly, we can define the MIR (mutual information with reference) SDPI constants, 
\begin{definition}
Given a channel $\cN\colon A \to B$ and a state $\sigma_{RA}$, we define
\begin{align}
    \eta_{\operatorname{MIR}}(\cN,\sigma_{RA}) &\coloneqq  \sup_{\substack{\rho_{RCA},\\ \rho_{RA}=\sigma_{RA},\\ \rho_{CR}=\rho_C\otimes\rho_R}} \frac{I(C:BR)}{I(C:AR)}, \\
    \eta_{\operatorname{cqMIR}}(\cN,\sigma_{RA}) &\coloneqq  \sup_{\substack{\rho_{RUA},\\ \rho_{RA}=\sigma_{RA},\\ \rho^x_R=\rho^y_R}} \frac{I(U:BR)}{I(U:AR)}, \\
    \eta_{\operatorname{2cqMIR}}(\cN,\sigma_{RA}) &\coloneqq  \sup_{\substack{\rho_{RUA},\\ \rho_{RA}=\sigma_{RA},\\ \rho^0_R=\rho^1_R, \\ |U|=2}} \frac{I(U:BR)}{I(U:AR)}.
\end{align}
\end{definition}
With this we can show the following.
\begin{corollary}\label{Cor:MIR-equals-complete-SDPI}
    For a channel $\cN$, the following equalities hold:
    \begin{align}
               \etaC_D(\cN,\sigma_{RA}) =  \eta_{\operatorname{2cqMIR}}(\cN,\sigma_{RA}) = \eta_{\operatorname{cqMIR}}(\cN,\sigma_{RA}).
    \end{align}
\end{corollary}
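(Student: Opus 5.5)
The plan is to derive the statement from \cref{Thm:MIR-equivalence} with $\cM=\id$, in parallel with how \cref{Cor:CMI-amo-SDPI-equal} follows from \cref{Thm:CMI-equivalence}. Fix $\sigma_{RA}$ and set $\eta = \etaC_D(\cN,\sigma_{RA})$. By definition of the complete SDPI constant, item (iii) of \cref{Thm:MIR-equivalence} then holds with $\cM=\id$. Here one notes that states $\rho_{RA}$ whose support is not contained in $\operatorname{supp}(\sigma_{RA})$ give infinite denominators and are excluded by the convention on admissible ratios. For the states that are admissible, the inequality $\eta D(\rho_{RA}\|\sigma_{RA}) \geq D(\cN(\rho_{RA})\|\cN(\sigma_{RA}))$ is exactly the defining bound.

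Next I would translate items (i) and (ii) into ratios. For a c-q state $\rho_{RUA}$ with $\rho_{RA}=\sigma_{RA}$ and $\rho^u_R=\rho^v_R$ for all $u,v$, item (ii) reads $\eta\, I(U:AR)\geq I(U:BR)$. Hence $\eta_{\operatorname{cqMIR}}(\cN,\sigma_{RA})\leq \eta$, and likewise $\eta_{\operatorname{2cqMIR}}(\cN,\sigma_{RA})\leq\eta_{\operatorname{cqMIR}}(\cN,\sigma_{RA})$ because binary $U$ is a special case. For the reverse inequality, set $\eta' = \eta_{\operatorname{2cqMIR}}(\cN,\sigma_{RA})$. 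Then item (i) holds with $\eta'$: whenever $I(U:AR)=0$, data processing gives $I(U:BR)=0$ as well, so those states pose no obstruction. The implication (i)$\Rightarrow$(iii) then yields $\eta' D(\rho_{RA}\|\sigma_{RA})\geq D(\cN(\rho_{RA})\|\cN(\sigma_{RA}))$ for all admissible $\rho_{RA}$ with $\rho_R=\sigma_R$. Thus $\etaC_D(\cN,\sigma_{RA})\leq \eta'$, which closes the chain of equalities.

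The main obstacle I expect is a technical point hidden in \cref{Thm:MIR-equivalence}: its proof assumes without loss of generality that $\sigma_{RA}$ has full rank. For a fixed $\sigma_{RA}$ that is not full rank, I would first restrict all operators to $\operatorname{supp}(\sigma_{RA})$. Every admissible $\rho_{RA}$, and every ensemble member $\rho^u_{RA}$ of a decomposition of $\sigma_{RA}$, is supported there, and the perturbation $\sigma_{RA}-\lambda\rho_{RA}\geq0$ for small $\lambda$ stays valid on that support. The derivative argument via \cref{lem:prop-rel-ent} then goes through unchanged. I would also check that the constraint $\rho^0_R=\rho^1_R$ holds for the constructed ensemble: this follows because $\rho_R=\sigma_R$ implies $(\sigma_{RA}-\lambda\rho_{RA})_R/(1-\lambda)=\sigma_R$.
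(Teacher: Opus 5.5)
Your proposal is correct and follows the paper's route. The paper likewise obtains the corollary by applying \cref{Thm:MIR-equivalence} with $\cM=\id$ and taking $\eta$ to be each coefficient in turn, exactly as for \cref{Cor:MIR-equals-complete}. Your extra remarks make explicit details the paper leaves implicit: zero-denominator ensembles are handled by data processing, the full-rank assumption is replaced by restricting to $\operatorname{supp}(\sigma_{RA})$, and you check that $\rho^0_R=\rho^1_R$.
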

\begin{proof}
    The proof follows just like the one of~\cref{Cor:MIR-equals-complete}.
\end{proof}
Similar to the CMI case, considering the fully quantum setting requires a modification.
\begin{proposition}[MIR SDPI as a stabilized complete SDPI]
\label{Prop:MIR-stabilized-SDPI}
Let $\cN\colon A\to B$ be a quantum channel and let $\sigma_{RA}$ be a
quantum state. 
Then
\begin{align}
    \eta_{\operatorname{MIR}}(\cN,\sigma_{RA})
    =
    \eta_{D}^{\mathrm{c,stab}}(\cN,\sigma_{RA}).
    \label{eq:MIR-stabilized-SDPI-equality}
\end{align}
\end{proposition}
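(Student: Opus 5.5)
The argument mirrors the proof of \cref{Prop:CMI-stabilized-SDPI}, with \cref{Cor:MIR-equals-complete-SDPI} playing the role that \cref{Cor:CMI-amo-SDPI-equal} played there.

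We first dispose of the degenerate case. Suppose no extension $\rho_{RCA}$ of $\sigma_{RA}$ with $\rho_{CR}=\rho_C\otimes\rho_R$ has $I(C:AR)>0$. Then both sides vanish, by the same contradiction argument as in the CMI case: a nontrivial feasible $\omega_{URSA}$ for the stabilized side would yield a nontrivial auxiliary system $US$.

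\emph{The $\leq$ direction.} Fix a feasible $\rho_{RCA}$ with $\rho_{RA}=\sigma_{RA}$ and $\rho_{CR}=\rho_C\otimes\sigma_R$. Take $S=C$ and $\tau_S=\pi_C$, and compare $\rho_{CRA}$ with $\pi_C\otimes\sigma_{RA}$. The reference marginals are $\rho_{CR}=\rho_C\otimes\sigma_R$ and $\pi_C\otimes\sigma_R$. These are \emph{not} equal, so this pair is not admissible for $\etaC_D$ directly. We therefore use $\tau_S=\rho_C$ instead, which is allowed because $\tau_S$ is arbitrary in \eqref{eq:stabilized-complete-SDPI}. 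The pair $\rho_{CRA}$ versus $\rho_C\otimes\sigma_{RA}$ then has equal reference marginals $\rho_C\otimes\sigma_R$ on $CR$. Moreover,
\begin{align}
D(\rho_{CRA}\|\rho_C\otimes\sigma_{RA}) &= I(C:AR)_\rho, &
D(\cN(\rho_{CRA})\|\rho_C\otimes\cN(\sigma_{RA})) &= I(C:BR)_{\cN(\rho)}.
\end{align}
Hence the ratio is at most $\etaC_D(\cN,\rho_C\otimes\sigma_{RA})\leq\eta_D^{\mathrm{c,stab}}(\cN,\sigma_{RA})$. Taking the supremum over feasible states gives $\eta_{\operatorname{MIR}}(\cN,\sigma_{RA})\leq\eta_D^{\mathrm{c,stab}}(\cN,\sigma_{RA})$.

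\emph{The $\geq$ direction.} Fix $S$ and $\tau_S$. Applying \cref{Cor:MIR-equals-complete-SDPI} with reference system $RS$ and fixed state $\tau_S\otimes\sigma_{RA}$ gives
\begin{align}
\etaC_D(\cN,\tau_S\otimes\sigma_{RA})=\sup_{\omega}\frac{I(U:BRS)_{\cN(\omega)}}{I(U:ARS)_\omega},
\end{align}
where the supremum runs over c-q states $\omega_{URSA}$ with $\omega_{RSA}=\tau_S\otimes\sigma_{RA}$ and $\omega^u_{RS}$ independent of $u$. The last condition means $\omega_{URS}=\omega_U\otimes\tau_S\otimes\sigma_R$.

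Now set $C\coloneqq US$ as the quantum auxiliary system. Then $\omega_{CR}=\omega_{US}\otimes\sigma_R$ is product, and $\omega_{RA}=\sigma_{RA}$, so $\omega$ is feasible for $\eta_{\operatorname{MIR}}(\cN,\sigma_{RA})$. By the chain rule,
\begin{align}
I(US:AR)=I(S:AR)+I(U:ARS), \qquad I(US:BR)=I(S:BR)+I(U:BRS).
\end{align}
Since $\omega_{SRA}=\tau_S\otimes\sigma_{RA}$, we have $I(S:AR)=0$, and by data processing $I(S:BR)=0$ as well. So the ratio equals $I(US:BR)/I(US:AR)\leq\eta_{\operatorname{MIR}}(\cN,\sigma_{RA})$. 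Taking the supremum over $\omega$, then over $S$ and $\tau_S$, gives the reverse inequality.

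The main point to get right is the $\leq$ direction. The natural choice $\pi_C$ from the CMI proof fails here because the complete coefficient requires equal reference marginals. Choosing $\tau_S=\rho_C$ fixes this, and it also removes the need for the chain-rule correction term that the CMI proof had to handle.
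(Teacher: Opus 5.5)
Your proof is correct and follows essentially the same route as the paper. For the $\leq$ direction you take $\tau_S=\rho_C$, so that $\rho_{CRA}$ and $\rho_C\otimes\sigma_{RA}$ share the $CR$-marginal; for the $\geq$ direction you use the fixed-state cqMIR characterization with reference $RS$ and take $US$ as the quantum auxiliary system.

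One small precision: $I(US:AR)=I(S:AR)+I(U:ARS)$ is not the chain rule alone, since in general the right-hand side carries an extra $-I(U:S)$ term. You should state that $I(U:S)_\omega=0$ on both the input and output sides, which follows from the product form $\omega_{URS}=\omega_U\otimes\tau_S\otimes\sigma_R$ you already derived.
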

\begin{proof}
We follow the same strategy as in
\cref{Prop:CMI-stabilized-SDPI}, highlighting only the differences
caused by the equal-reference-marginal constraint. The case where no admissible state with strictly positive denominator exists is treated the same way as in \cref{Prop:CMI-stabilized-SDPI}. 

First, let $\rho_{RCA}$ be admissible in the definition of
$\eta_{\operatorname{MIR}}(\cN,\sigma_{RA})$, so that
\begin{align}
    \rho_{RA}=\sigma_{RA},
    \qquad
    \rho_{CR}=\rho_C\otimes\sigma_R.
\end{align}
The two states $\rho_{RCA}$ and $\rho_C\otimes\sigma_{RA}$ have the same marginal $\rho_C\otimes\sigma_R$ on the reference
system $CR$. Moreover, as $\rho_{RA} = \sigma_{RA}$,
\begin{align}
    D(\rho_{RCA}\|\rho_C\otimes\sigma_{RA})
    &=
    I(C:AR)_\rho,\\
    D\!\left(
        \cN(\rho_{RCA})
        \middle\|
        \rho_C\otimes\cN(\sigma_{RA})
    \right)
    &=
    I(C:BR)_{\cN(\rho)}.
\end{align}
It follows from the definition of the complete SDPI constant that
\begin{align}
    \frac{I(C:BR)_{\cN(\rho)}}{I(C:AR)_\rho}
    &\leq
    \etaC_D(\cN,\rho_C\otimes\sigma_{RA})\\
    &\leq
    \eta_{D}^{\mathrm{c,stab}}(\cN,\sigma_{RA}).
\end{align}
Taking the supremum over all admissible $\rho_{RCA}$ gives
\begin{align}
    \eta_{\operatorname{MIR}}(\cN,\sigma_{RA})
    \leq
    \eta_{D}^{\mathrm{c,stab}}(\cN,\sigma_{RA}).
    \label{eq:MIR-stabilized-first-direction}
\end{align}

For the converse, fix a system $S$ and a state
$\tau_S$. By the fixed-state cqMIR characterization of the complete
SDPI constant, applied with reference system $RS$ and fixed state
$\tau_S\otimes\sigma_{RA}$, we have
\begin{align}
    \etaC_D(\cN,\tau_S\otimes\sigma_{RA})
    =
    \sup_{\omega_{URSA}}
    \frac{
        I(U:BRS)_{\cN(\omega)}
    }{
        I(U:ARS)_\omega
    },
    \label{eq:stabilized-complete-cqMIR-expression}
\end{align}
where $U$ is classical and the supremum is over c-q states
\begin{align}
    \omega_{URSA}
    =
    \sum_u p(u)|u\rangle\!\langle u|_U\otimes\omega^u_{RSA}
\end{align}
such that
\begin{align}
    \omega_{RSA}
    &=
    \tau_S\otimes\sigma_{RA},\\
    \omega^u_{RS}
    &=
    \tau_S\otimes\sigma_R
    \qquad\forall u,
\end{align}
and with $I(U:ARS)_\omega>0$. The classical system $U$ may be
restricted to be binary.
These marginal conditions imply
\begin{align}
    \omega_{USR}
    &=
    \omega_U\otimes\tau_{S}\otimes\sigma_R.
\end{align}
In particular,
\begin{align}
    I(U:S)_\omega=0,
    \qquad
    I(S:AR)_\omega=0.
\end{align}
Using these equalities and the mutual-information chain rule therefore gives
\begin{align}
    I(U:ARS)_\omega
    &=
    I(U:S)_\omega+I(U:AR|S)_\omega\\
    &=
    I(U:AR|S)_\omega\\
    &=
    I(S:AR)_\omega+I(U:AR|S)_\omega\\
    &=
    I(US:AR)_\omega,
    \label{eq:MIR-stabilized-input-chain-rule} 
\end{align}
and similarly,
\begin{align}
    I(U:BRS)_{\cN(\omega)}
    =
    I(US:BR)_{\cN(\omega)}.
    \label{eq:MIR-stabilized-output-chain-rule}
\end{align}

Moreover, $\omega_{RA}=\sigma_{RA}$ and
$\omega_{USR}=\omega_{US}\otimes\sigma_R$, so the joint system $US$ 
is an admissible quantum auxiliary system in the definition of
$\eta_{\operatorname{MIR}}(\cN,\sigma_{RA})$. Consequently,
\begin{align}
    \frac{
        I(U:BRS)_{\cN(\omega)}
    }{
        I(U:ARS)_\omega
    }
    &=
    \frac{
        I(US:BR)_{\cN(\omega)}
    }{
        I(US:AR)_\omega
    }\\
    &\leq
    \eta_{\operatorname{MIR}}(\cN,\sigma_{RA}).
\end{align}
Taking the supremum in
\eqref{eq:stabilized-complete-cqMIR-expression}, followed by the
supremum over $S$ and $\tau_S$, yields
\begin{align}
    \eta_{D}^{\mathrm{c,stab}}(\cN,\sigma_{RA})
    \leq
    \eta_{\operatorname{MIR}}(\cN,\sigma_{RA}).
    \label{eq:MIR-stabilized-second-direction}
\end{align}
Combining
\eqref{eq:MIR-stabilized-first-direction}
and
\eqref{eq:MIR-stabilized-second-direction}
proves the claim.
\end{proof}
These findings illustrate that, while many results transfer directly to the case of SDPI constants, others need further care. Intuitively, the stabilization considered here is necessary because fixing $\sigma_{RA}$ also fixes the size of the reference system, removing much of the flexibility present in the contraction coefficient case. 

We are now left with proving that the MI SDPI constant and $\eta^p(\cN,\sigma_A)$, as defined in~\eqref{SDPI-product}, are equal. 

\begin{lemma}[MI SDPI and product-reference SDPI]
\label{Lem:MI-product-SDPI}
Let $\cN\colon A\to B$ be a quantum channel, and let $\sigma_A$ be a
quantum state. Define the mutual-information SDPI constant at
$\sigma_A$ as
\begin{align}
    \eta_{\MI}(\cN,\sigma_A)
    \coloneqq
    \sup_{\substack{C,\,\rho_{CA},\\
                    \rho_A=\sigma_A,\\
                    I(C:A)_\rho>0}}
    \frac{
        I(C:B)_{\cN(\rho)}
    }{
        I(C:A)_\rho
    },
    \label{eq:MI-SDPI-fixed-state}
\end{align}
where the supremum is over every finite-dimensional quantum system
$C$. Then
\begin{align}
    \eta_{\MI}(\cN,\sigma_A)
    =
    \eta_D^p(\cN,\sigma_A).
    \label{eq:MI-SDPI-equals-product-reference-SDPI}
\end{align}
\end{lemma}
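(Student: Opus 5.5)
The plan is to prove the two inequalities separately. The direction $\eta_{\MI}(\cN,\sigma_A)\leq\eta_D^p(\cN,\sigma_A)$ is immediate. For any admissible $\rho_{CA}$ with $\rho_A=\sigma_A$, set $R\equiv C$. Then
\begin{align}
I(C:A)_\rho&=D(\rho_{CA}\|\rho_C\otimes\sigma_A),\\
I(C:B)_{\cN(\rho)}&=D(\cN(\rho_{CA})\|\rho_C\otimes\cN(\sigma_A)),
\end{align}
so every ratio in \eqref{eq:MI-SDPI-fixed-state} also appears in the supremum defining $\eta_D^p(\cN,\sigma_A)$ in \eqref{SDPI-product}.

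For the converse, the obstacle is that $\eta_D^p(\cN,\sigma_A)$ allows $\rho_A\neq\sigma_A$. The chain rule \eqref{eq:rel-ent-chain}, as used in \cref{Lem:MI-p}, splits the numerator into a mutual-information term evaluated at the wrong marginal $\rho_A$ and a term $D(\cN(\rho_A)\|\cN(\sigma_A))$. Neither term is directly controlled by $\eta_{\MI}(\cN,\sigma_A)$. I would therefore not use the chain rule. Instead I would adapt the perturbative construction from the proof of \cref{Thm:CMI-equivalence}, which builds an extension whose $A$-marginal is exactly $\sigma_A$.

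Fix $\rho_{RA}$ with $0<D(\rho_{RA}\|\rho_R\otimes\sigma_A)<\infty$. Finiteness forces $\operatorname{supp}(\rho_A)\subseteq\operatorname{supp}(\sigma_A)$, so we may restrict $A$ to the support of $\sigma_A$ and assume $\sigma_A$ is full rank. Choose $\epsilon>0$ with $\sigma_A-\epsilon\rho_A\geq0$. For $\lambda\in(0,\epsilon]$ define
\begin{equation}
\rho^1_A(\lambda)\coloneqq(1-\lambda)^{-1}(\sigma_A-\lambda\rho_A),
\end{equation}
fix any state $\tau_R$, and let
\begin{equation}
\omega_{URA}\coloneqq\lambda|0\rangle\!\langle0|_U\otimes\rho_{RA}+(1-\lambda)|1\rangle\!\langle1|_U\otimes\tau_R\otimes\rho^1_A(\lambda).
\end{equation}
Then $\omega_A=\sigma_A$, so taking $C\equiv UR$ gives an admissible state for $\eta_{\MI}(\cN,\sigma_A)$. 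Since $\omega_{CA}$ and $\omega_C\otimes\sigma_A$ are block diagonal in $U$, the direct-sum property gives
\begin{align}
I(UR:A)_\omega&=\lambda D(\rho_{RA}\|\rho_R\otimes\sigma_A)+(1-\lambda)D(\rho^1_A(\lambda)\|\sigma_A),\\
I(UR:B)_{\cN(\omega)}&=\lambda D(\cN(\rho_{RA})\|\rho_R\otimes\cN(\sigma_A))+(1-\lambda)D(\cN(\rho^1_A(\lambda))\|\cN(\sigma_A)).
\end{align}

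Dividing the second identity by the first and by $\lambda$, it remains to show that the correction terms $(1-\lambda)D(\rho^1_A(\lambda)\|\sigma_A)/\lambda$ and its output analogue vanish as $\lambda\to0$. Both $\lambda\mapsto D(\rho^1_A(\lambda)\|\sigma_A)$ and $\lambda\mapsto D(\cN(\rho^1_A(\lambda))\|\cN(\sigma_A))$ vanish at $\lambda=0$ and have zero derivative there by \cref{lem:prop-rel-ent} (\cref{app:prop-rel-ent}), so each is $o(\lambda)$; the output one works by applying the lemma to the pair $\cN(\rho^1_A(\lambda)),\cN(\sigma_A)$. This is the main technical step, but it is exactly the fact already used in \cref{Thm:CMI-equivalence}. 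Also, for small $\lambda$ the denominator $I(UR:A)_\omega$ is strictly positive because $D(\rho_{RA}\|\rho_R\otimes\sigma_A)>0$. Hence
\begin{equation}
\eta_{\MI}(\cN,\sigma_A)\geq\lim_{\lambda\to0}\frac{I(UR:B)_{\cN(\omega)}}{I(UR:A)_\omega}=\frac{D(\cN(\rho_{RA})\|\rho_R\otimes\cN(\sigma_A))}{D(\rho_{RA}\|\rho_R\otimes\sigma_A)},
\end{equation}
and taking the supremum over $R$ and $\rho_{RA}$ gives $\eta_D^p(\cN,\sigma_A)\leq\eta_{\MI}(\cN,\sigma_A)$.

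The degenerate case needs one check. If no admissible $\rho_{RA}$ exists for $\eta_D^p(\cN,\sigma_A)$, both constants are zero by convention, since the first paragraph shows that any admissible state for $\eta_{\MI}$ would be admissible for $\eta_D^p$. Conversely, any admissible $\rho_{RA}$ for $\eta_D^p$ produces, via $\omega$, an admissible state for $\eta_{\MI}$, so the two constants are zero in exactly the same situations.
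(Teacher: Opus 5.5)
Your argument is correct, but it is organized differently from the paper's. The paper does not redo a perturbation for this lemma. It applies \cref{Prop:CMI-stabilized-SDPI} with a trivial reference system to get $\eta_{\MI}(\cN,\sigma_A)=\sup_S\etaD_D(\cN,\pi_S\otimes\sigma_A)$. It then uses the chain rule \eqref{eq:rel-ent-chain}, namely $D(\rho_{SA}\|\pi_S\otimes\sigma_A)-D(\rho_S\|\pi_S)=D(\rho_{SA}\|\rho_S\otimes\sigma_A)$ together with its output analogue. This identifies the conditional ratio at the fixed state $\pi_S\otimes\sigma_A$ with exactly the product-reference ratio, so the supremum over $S$ is $\eta_D^p(\cN,\sigma_A)$.

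The derivative argument is therefore buried in \cref{Thm:CMI-equivalence}, applied with reference $S$ and full-rank fixed state $\pi_S\otimes\sigma_A$. There the complementary branch $(\pi_S\otimes\sigma_A-\lambda\rho_{SA})/(1-\lambda)$ perturbs $S$ and $A$ jointly, and the resulting auxiliary system is $US$. You instead perturb only the $A$-marginal and take $C\equiv UR$ with the product branch $\tau_R\otimes\rho^1_A(\lambda)$. You then read off both mutual informations directly from the direct-sum property.

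The paper's route is shorter given its machinery, and it places the lemma inside the stabilized-SDPI picture that is reused for tensorization. Yours is self-contained and never invokes the chain rule. It uses only four facts: the direct-sum property with equal weights, invariance under appending a fixed state, the vanishing first derivative, and $\DD(\sigma\|\sigma)=0$. So it should also yield the analogous identity $\eta_{\II}(\cN,\sigma_A)=\eta^p_{\DD}(\cN,\sigma_A)$ for the divergences in the subsection on divergences beyond the relative entropy, where the paper's chain-rule step is not available.

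One small precision concerns the output side. Apply \cref{lem:prop-rel-ent} to the pair $(\cN(\rho_A),\cN(\sigma_A))$, using $\cN(\rho^1_A(\lambda))=(\cN(\sigma_A)-\lambda\cN(\rho_A))/(1-\lambda)$. Do this after restricting $B$ to $\operatorname{supp}(\cN(\sigma_A))$, which contains $\operatorname{supp}(\cN(\rho_A))$ because $\sigma_A\geq\epsilon\rho_A$. The full-rank hypothesis on the first argument of that lemma is never used in its proof, so $\rho_A$ need not be full rank.
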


\begin{proof}
This follows from the stabilized CMI characterization in
\cref{Prop:CMI-stabilized-SDPI} by taking the original reference
system to be trivial. Indeed,
\begin{align}
    \eta_{\MI}(\cN,\sigma_A)
    =
    \sup_S
    \etaD_D(\cN,\pi_S\otimes\sigma_A),
    \label{eq:MI-SDPI-as-stabilized-conditional}
\end{align}
where $\pi_S\coloneqq I_S/|S|$ and $S$ is regarded as the reference
system on the right-hand side.

For a fixed system $S$ and an arbitrary state $\rho_{SA}$, the
relative-entropy chain rule gives
\begin{align}
    D(\rho_{SA}\|\pi_S\otimes\sigma_A)
    =
    D(\rho_{SA}\|\rho_S\otimes\sigma_A)
    +
    D(\rho_S\|\pi_S).
\end{align}
Consequently,
\begin{align}
    D(\rho_{SA}\|\pi_S\otimes\sigma_A)
    -
    D(\rho_S\|\pi_S)
    =
    D(\rho_{SA}\|\rho_S\otimes\sigma_A).
    \label{eq:conditional-product-reference-input}
\end{align}
It follows that
\begin{align}
    \etaD_D(\cN,\pi_S\otimes\sigma_A)
    =
    \sup_{\substack{\rho_{SA},\\
                    0<
                    D(\rho_{SA}\|\rho_S\otimes\sigma_A)
                    <\infty}}
    \frac{
        D\!\left(
            (\id_S\otimes\cN)(\rho_{SA})
            \middle\|
            \rho_S\otimes\cN(\sigma_A)
        \right)
    }{
        D(\rho_{SA}\|\rho_S\otimes\sigma_A)
    }.
    \label{eq:conditional-SDPI-is-product-reference-fixed-S}
\end{align}
Taking the supremum over all finite-dimensional systems $S$, the
right-hand side becomes precisely the product-reference SDPI constant
$\eta_D^p(\cN,\sigma_A)$. Combining this observation with
\eqref{eq:MI-SDPI-as-stabilized-conditional} proves
\eqref{eq:MI-SDPI-equals-product-reference-SDPI}.
\end{proof}
This completes our exploration of SDPI constants.

\subsection{Expansion coefficients and relative contraction coefficients}

Essentially every contraction coefficient above has an associated expansion coefficient by simply replacing the supremum with an infimum. Also, most of the results above would then transfer to this setting, with all inequalities changing direction, for example in~\cref{Cor:Summary1}. Note however, that this would make $\check\eta_D(\cN)$ the largest such coefficient, and Ref.~\cite{belzig2024reversetypedataprocessinginequality} showed that, whenever the output dimension is not larger than the input dimension, either $\cN$ is a unitary channel or the contraction coefficient is equal to zero. Hence, for this very large class of channels, expansion coefficients defined using the relative entropy are generally not useful, with or without reference systems. The result from~\cite{belzig2024reversetypedataprocessinginequality} was later generalized to all monotone $f$-divergences in~\cite{iyer2025quantum}. We will discuss such divergences in the next section. 

More interesting in this context are the relative contraction coefficients.
\begin{definition}
For two quantum channels $\cN\colon A\to B'$ and $\cM\colon A\to B$, define
\begin{align}
    \eta_{\CMI}(\cN,\cM) &\coloneqq \sup_{\rho_{RCA}} \frac{I(C:B'|R)}{I(C:B|R)}, 
    \\
    \eta_{\operatorname{cqCMI}}(\cN,\cM) &\coloneqq \sup_{\rho_{RUA}} \frac{I(U:B'|R)}{I(U:B|R)}, 
    \\
    \eta_{\operatorname{2cqCMI}}(\cN,\cM) &\coloneqq \sup_{\substack{\rho_{RUA},\\|U|=2}} \frac{I(U:B'|R)}{I(U:B|R)}. 
\end{align}    
\end{definition}

Then we have the following.

\begin{corollary}\label{Cor:CMI-amo-relative-equal}
    For channels $\cN$ and $\cM$, we have
    \begin{align}
               \eta_{\operatorname{2cqCMI}}(\cN,\cM) = \eta_{\operatorname{cqCMI}}(\cN,\cM) = \eta_{\CMI}(\cN,\cM) = \etaD_D(\cN,\cM).
    \end{align}
\end{corollary}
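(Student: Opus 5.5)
The plan is to reuse the machinery of \cref{Thm:CMI-equivalence} and \cref{Cor:CMI-amo-equal}, with the identity channel replaced by a general second channel $\cM$. \cref{Thm:CMI-equivalence} is already stated for two arbitrary channels $\cM\colon A\to B$ and $\cN\colon A\to B'$ and a fixed $\sigma_{RA}$. Let $\eta=\etaD_D(\cN,\cM)$ denote the relative conditional contraction coefficient from~\eqref{eq:relative_contraction_channels_cond}, specialized to the relative entropy, and assume it is finite (otherwise every quantity is infinite and the chain of equalities is handled by the same argument run contrapositively). Then item (iii) of the theorem holds for every $\sigma_{RA}$, up to the support restriction. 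The equivalence (iii)$\Leftrightarrow$(i)$\Leftrightarrow$(ii) then gives $\eta_{\operatorname{2cqCMI}}(\cN,\cM)\le \eta_{\operatorname{cqCMI}}(\cN,\cM)\le\eta$. Conversely, take $\eta'=\eta_{\operatorname{2cqCMI}}(\cN,\cM)$ and apply (i)$\Rightarrow$(iii) for every $\sigma_{RA}$. This yields $\etaD_D(\cN,\cM)\le\eta'$, provided the pairs $(\rho_{RA},\sigma_{RA})$ with finite denominators can be reduced to the setting of the theorem (support containment and full rank).

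The support issue is handled as in the theorem. Restrict $\sigma_{RA}$ to the support projection where needed, or perturb $\sigma_{RA}$ toward full rank and use lower semicontinuity and continuity of the relative entropy differences. If $\operatorname{supp}\rho_{RA}\not\subseteq\operatorname{supp}\sigma_{RA}$ but the denominator $D(\cM(\rho)\|\cM(\sigma))-D(\rho_R\|\sigma_R)$ is finite, I would replace $\sigma$ by $(1-\delta)\sigma+\delta\rho$ and pass to the limit $\delta\to0$. This is the one point needing care, since a ratio of two differences of relative entropies must converge. I expect it to follow from joint continuity along such segments, using the same derivative lemma (\cref{lem:prop-rel-ent}) argument.

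For the last equality, $\eta_{\operatorname{cqCMI}}\le\eta_{\CMI}$ is trivial. For $\eta_{\CMI}(\cN,\cM)\le\etaD_D(\cN,\cM)$ I would repeat the computation in the proof of \cref{Cor:CMI-amo-equal} verbatim. Write $I(C:B'|R')$ as $D(\cN(\rho_{R'CA})\|\rho_C\otimes\cN(\rho_{R'A}))-D(\rho_{CR'}\|\rho_C\otimes\rho_{R'})$, and likewise with $\cM$. Then identify the reference system as $R'C$, the first state as $\rho_{R'CA}$, and the second state as $\rho_C\otimes\rho_{R'A}$. The two differences become exactly the numerator and denominator in~\eqref{eq:relative_contraction_channels_cond}, so the ratio is bounded by $\etaD_D(\cN,\cM)$. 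Combining all three steps closes the chain. The main obstacle is only the support/finiteness bookkeeping in the first step; the algebra is identical to the $\cM=\id$ case already proven.
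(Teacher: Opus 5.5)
Your proposal is correct and follows the paper's route. The paper's proof is exactly that of \cref{Cor:CMI-amo-equal}: \cref{Thm:CMI-equivalence}, which is already stated for a general second channel $\cM$, gives the classical-$U$ equalities, and taking $R'C$ as reference with second state $\rho_C\otimes\rho_{R'A}$ gives $\eta_{\CMI}(\cN,\cM)\le\etaD_D(\cN,\cM)$.

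Your extra support bookkeeping, which the paper skips, does go through, but the tool is not \cref{lem:prop-rel-ent} or joint continuity. Apply (iii) to the pair $(\rho,(1-\delta)\sigma+\delta\rho)$ and pass to the limit term by term in the linear inequality, not in the ratio. This works because $D(\omega\|(1-\delta)\tau+\delta\omega)\to D(\omega\|\tau)$ as $\delta\to0$, which follows from joint convexity together with lower semicontinuity.
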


\begin{proof}
    The proof is identical to that of~\cref{Cor:CMI-amo-equal}.  
\end{proof}

Similarly, consider the following.
\begin{definition}
For quantum channels $\cN\colon A\to B'$ and $\cM\colon A\to B$, define
    \begin{align}
    \eta_{\operatorname{MIR}}(\cN,\cM) & \coloneq\sup_{\substack{\rho_{RCA},\\ \rho_{CR}=\rho_C\otimes\rho_R}} \frac{I(C:B'R)}{I(C:BR)}, \\
    \eta_{\operatorname{cqMIR}}(\cN,\cM) & \coloneq \sup_{\substack{\rho_{RUA},\\ \rho^x_R=\rho^y_R}} \frac{I(U:B'R)}{I(U:BR)}, \\
    \eta_{\operatorname{2cqMIR}}(\cN,\cM) &\coloneqq  \sup_{\substack{\rho_{RUA},\\ \rho^0_R=\rho^1_R, \\ |U|=2}} \frac{I(U:B'R)}{I(U:BR)}.
\end{align}
\end{definition}

Then the following holds.
\begin{corollary}\label{Cor:MIR-equals-complete-rel-cc}
    For a channel $\cN$, we have
    \begin{align}
               \eta_{\operatorname{2cqMIR}}(\cN,\cM) = \eta_{\operatorname{cqMIR}}(\cN,\cM) = \eta_{\operatorname{MIR}}(\cN,\cM)= \etaC_D(\cN,\cM).
    \end{align}
\end{corollary}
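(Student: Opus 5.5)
The plan is to mirror the proof of \cref{Cor:MIR-equals-complete}, now with \cref{Thm:MIR-equivalence} applied for a general pair of channels $\cM$ and $\cN$ instead of $\cM=\id$.

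\textbf{First two equalities.} The chain $\eta_{\operatorname{2cqMIR}}(\cN,\cM)\le\eta_{\operatorname{cqMIR}}(\cN,\cM)$ is trivial, since restricting to binary $U$ only shrinks the feasible set. For the reverse and for the link to $\etaC_D(\cN,\cM)$, I fix an arbitrary $\eta\ge0$ and an arbitrary $\sigma_{RA}$, and apply \cref{Thm:MIR-equivalence} with these two channels. The three statements in the theorem are then equivalent:
\begin{itemize}
\item the binary condition $\eta I(U:BR)\ge I(U:B'R)$ holds for all c-q extensions of $\sigma_{RA}$ with equal reference marginals;
\item the same inequality holds for arbitrary classical $U$;
\item $\eta D(\cM(\rho_{RA})\|\cM(\sigma_{RA}))\ge D(\cN(\rho_{RA})\|\cN(\sigma_{RA}))$ for all $\rho_{RA}$ with $\rho_R=\sigma_R$ and support in that of $\sigma_{RA}$.
\end{itemize}
Quantifying over all $\sigma_{RA}$ and all $|R|$, each of the three suprema equals the infimum of the admissible $\eta$. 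This gives $\eta_{\operatorname{2cqMIR}}=\eta_{\operatorname{cqMIR}}=\etaC_D(\cN,\cM)$.

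\textbf{Support caveat.} Pairs $(\rho,\sigma)$ whose supports are not nested have infinite denominator and are excluded by the convention in the Remark. Pairs where the numerator is infinite while the denominator is finite cannot occur: supports are preserved under $\cM$, and $\cM$-output support containment combined with the theorem handles the rest. I would add a short remark to this effect. When $\sigma$ is not full rank, I would restrict to its support, as the theorem's proof already does.

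\textbf{Fully quantum equality.} For the last equality, $\eta_{\operatorname{cqMIR}}\le\eta_{\operatorname{MIR}}$ is obvious: a classical $U$ satisfying $\rho^u_R=\rho^v_R$ is a special case of $\rho_{UR}=\rho_U\otimes\rho_R$. For the converse, take any $\rho_{RCA}$ with $\rho_{CR}=\rho_C\otimes\rho_R$. I rewrite
\begin{align}
I(C:B'R) &= D(\cN(\rho_{RCA})\|\rho_C\otimes\cN(\rho_{RA})),\\
I(C:BR) &= D(\cM(\rho_{RCA})\|\rho_C\otimes\cM(\rho_{RA})).
\end{align}
Now regard $CR$ as the reference system, with the two states $\rho_{RCA}$ and $\rho_C\otimes\rho_{RA}$. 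These have identical marginals $\rho_C\otimes\rho_R$ on $CR$, so the pair is admissible in $\etaC_D(\cN,\cM)$. Hence the ratio is at most $\etaC_D(\cN,\cM)$, and taking the supremum gives $\eta_{\operatorname{MIR}}(\cN,\cM)\le\etaC_D(\cN,\cM)$, which closes the chain.

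\textbf{Expected obstacle.} The only delicate point is degenerate cases. One is when $I(C:BR)=0$ while $I(C:B'R)>0$. By the convention these are excluded from the MIR supremum, but they must not force $\etaC_D(\cN,\cM)=\infty$ inconsistently. Both sides use the same convention (strictly positive finite denominators), and the identification above maps denominators to denominators, so the conventions line up. I would verify this explicitly but expect no real difficulty.
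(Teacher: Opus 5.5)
Your proposal is essentially the paper's proof. The paper also gets the first equalities from \cref{Thm:MIR-equivalence} with general $\cM,\cN$ (optimized over $\sigma_{RA}$), and the fully quantum one by taking $CR$ as the reference system for the pair $\rho_{RCA}$, $\rho_C\otimes\rho_{RA}$, which share the marginal $\rho_C\otimes\rho_R$.

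The one thing to correct is your support caveat, which treats the relative denominator $D(\cM(\rho_{RA})\|\cM(\sigma_{RA}))$ as if it were $D(\rho_{RA}\|\sigma_{RA})$. The former can be finite, with an infinite numerator, even when $\operatorname{supp}(\rho_{RA})\not\subseteq\operatorname{supp}(\sigma_{RA})$; take $\cN=\id$, $\cM=\cD_p$ with $0<p<1$, and orthogonal pure inputs. Such pairs should instead be handled by replacing $\sigma_{RA}$ with $(1-t)\sigma_{RA}+t\rho_{RA}$, which keeps the reference marginal and nests the supports, and then letting $t\to0$ using joint convexity and lower semicontinuity of $D$.
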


\begin{proof}
    The proof follows just like the one of~\cref{Cor:MIR-equals-complete}.
\end{proof}

All these results then naturally transfer to relative expansion coefficients via~\eqref{Eq:rel-ep-con-c} and~\eqref{Eq:rel-ep-con-a}.

\subsection{Beyond the relative entropy}

As mentioned before, while we have only discussed the relative entropy in this section so far, many of the results above actually hold for much larger classes of divergences. Reexamining the proofs of~\cref{Thm:CMI-equivalence} and~\cref{Thm:MIR-equivalence}, we note that they really only require the following two properties, 
\begin{enumerate}
    \item (Direct-sum property) For a classical-quantum state $\rho_{UA}$, 
    \begin{align}
        \DD(\rho_{UA}\|\rho_U\otimes\rho_A) = \sum_x p(x) \DD(\rho_A^x \| \rho_A), \label{eq:prop_direct_sum}
    \end{align}
    \item (Vanishing first derivative) For two quantum states $\rho$ and $\sigma$, such that $\supp(\rho)\subseteq\supp(\sigma)$, 
    \begin{align}
        \left.\frac{\partial}{\partial\lambda} \DD(\lambda\rho+(1-\lambda)\sigma\|\sigma)  \right|_{\lambda=0^{+}} = 0. \label{eq:prop_vanishing_gradient}
    \end{align}
\end{enumerate}
Particular classes of divergences that follow these properties include several families of $f$-divergences, including the Petz $f$-divergence~\cite{petz1985quasi,petz1986quasi}: 
\begin{align}
    \bar D_f(\rho\|\sigma) \coloneqq  \tr\!\left[\sigma^{\frac12}f(\Delta_{\rho,\sigma})(\sigma^{\frac12})\right], 
\end{align}
where $\Delta_{\rho,\sigma}(X)\coloneqq \rho X\sigma^{-1}$ is the relative modular operator, the Hirche--Tomamichel $f$-divergence~\cite{hirche2024quantumDivergences}: 
\begin{align}
    D_f(\rho\|\sigma) \coloneqq  \int_1^\infty f''(\gamma) E_\gamma(\rho\|\sigma) + \frac1{\gamma^3} f''(\gamma^{-1}) E_\gamma(\sigma\|\rho) \operatorname{d}\gamma, 
\end{align}
and the maximal $f$-divergence~\cite{matsumoto2018new}: 
\begin{align}
    \hat D_f(\rho\|\sigma) \coloneqq  \tr\!\left[\sigma f(\sigma^{-\frac12}\rho\sigma^{-\frac12})\right]. 
\end{align}
In all cases the proof of the direct-sum property given in~\eqref{eq:prop_direct_sum} is rather direct, using that $\rho_{UA}$ is block-diagonal. The second property given in~\eqref{eq:prop_vanishing_gradient} was proven for the first and third divergences above in~\cite[Proof of Proposition 6.2]{Hirche2022contraction} and for the second divergence above in~\cite[Proof of Theorem 2.8]{hirche2024quantumDivergences}; for more details see also~\cite[Theorem 2.2]{beigi2025some}. It was also noted in~\cite[Remark 6.3]{Hirche2022contraction} that the properties also hold for the optimized $f$-divergence of~\cite{wilde2018optimized} by extending the result from the Petz $f$-divergence. \cref{Lem:partial-equiv} in~\cref{app:prop-rel-ent} clarifies that the second property above implies the property used in the proofs of \cref{Thm:CMI-equivalence} and \cref{Thm:MIR-equivalence}.

For the remainder of this section, we will only consider divergences that satisfy these two properties. 
Noteworthy divergences that fall into this class include essentially all $\alpha$-Hellinger divergences (see, e.g., \cite[Section~3]{hirche2024quantumDivergences}),
but the R\'enyi relative entropies do not. 

We can define mutual information-like quantities as follows: 
\begin{align}
    \II(A:B) &\coloneqq \DD(\rho_{AB}\|\rho_A\otimes\rho_{B}), \\
    \II(A:B|C) &\coloneqq \II(A:BC) - \II(A:C) \\
    &= \DD(\rho_{ABC}\|\rho_A\otimes\rho_{BC}) - \DD(\rho_{AC}\|\rho_A\otimes\rho_{C}). 
\end{align}
With these, we can define contraction coefficients similarly to all of what we discussed above and then many results discussed earlier in this section translate directly. As our main example of this, we provide a generalization of~\cref{Thm:CMI-equivalence}.

\begin{theorem}
\label{Thm:gen-CMI-equivalence}
    Let $\DD$ be a divergence that satisfies \eqref{eq:prop_direct_sum} and \eqref{eq:prop_vanishing_gradient}, let $\cN\colon A\to B'$ and $\cM\colon A\to B$ be quantum channels, $\sigma_{RA}$ a quantum state, and $\eta\geq0$. Then the following are equivalent:
\begin{enumerate}[label=(\roman*)]
    \item For all c-q states $\rho_{RUA}$ with marginal $\sigma_{RA}$, where U is a classical system of size $|U|=2$, 
    \begin{align}
        \eta \II(U:B|R) \geq \II(U:B'|R).
    \end{align}
    \item For all c-q states $\rho_{RUA}$ with marginal $\sigma_{RA}$, where U is an arbitrary classical system, 
    \begin{align}
        \eta \II(U:B|R) \geq \II(U:B'|R).
    \end{align}
\item For all quantum states $\rho_{RA}$ with $\operatorname{supp}(\rho_{RA})\subseteq \operatorname{supp}(\sigma_{RA})$, 
\begin{align}
    \eta \left[ \DD(\cM(\rho_{RA})\|\cM(\sigma_{RA})) - \DD(\rho_R\|\sigma_R) \right] \geq \DD(\cN(\rho_{RA})\|\cN(\sigma_{RA})) - \DD(\rho_R\|\sigma_R).
\end{align}
\end{enumerate}
\end{theorem}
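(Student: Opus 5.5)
The plan is to mirror the proof of \cref{Thm:CMI-equivalence}, checking that each step uses only \eqref{eq:prop_direct_sum} and \eqref{eq:prop_vanishing_gradient}. The implication $(ii)\Rightarrow(i)$ is immediate, since binary $U$ is a special case of arbitrary classical $U$.

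For $(iii)\Rightarrow(ii)$, take a c-q state $\rho_{RUA}=\sum_u p(u)|u\rangle\!\langle u|_U\otimes\rho^u_{RA}$ with $\rho_{RA}=\sigma_{RA}$. Then $\rho^u_{RA}\leq p(u)^{-1}\sigma_{RA}$, so every $\rho^u_{RA}$ is supported in $\operatorname{supp}(\sigma_{RA})$ and is admissible in (iii). Applying \eqref{eq:prop_direct_sum} on the systems $BR$ and $R$ gives
\begin{align}
\II(U:B|R)=\sum_u p(u)\left[\DD(\cM(\rho^u_{RA})\|\cM(\sigma_{RA}))-\DD(\rho^u_R\|\sigma_R)\right],
\end{align}
and the analogous identity holds for $\cN$ and $B'$. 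Multiplying (iii) for each $\rho^u_{RA}$ by $p(u)$ and summing then yields (ii). One caveat: \eqref{eq:prop_direct_sum} is stated with second argument $\rho_U\otimes\rho_A$. I would apply it with $A$ replaced by $BR$ or $R$, which is legitimate because the property is assumed for arbitrary c-q states.

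For $(i)\Rightarrow(iii)$, I assume without loss of generality that $\sigma_{RA}$ is full rank, by restricting to its support. Fix $\rho_{RA}$ and choose $\epsilon>0$ with $\sigma_{RA}-\epsilon\rho_{RA}\geq0$. For $\lambda\in[0,\epsilon]$, form the binary c-q state \eqref{eq:rho-URA-state} with $\rho^0=\rho_{RA}$ and $\rho^1(\lambda)=(1-\lambda)^{-1}(\sigma_{RA}-\lambda\rho_{RA})$; its $RA$ marginal is $\sigma_{RA}$. Define $\varphi(\lambda)=\eta\II(U:B|R)-\II(U:B'|R)$ and expand it using the direct-sum property as above. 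Then $\varphi(0)=0$ because $\rho^1(0)=\sigma_{RA}$ and the terms $\DD(\tau\|\tau)$ cancel between the $AR$-type and $R$-type contributions. By (i), $\varphi\geq0$ on $[0,\epsilon]$, so the right derivative satisfies $\varphi'(0^+)\geq0$. This requires the derivative to exist, which is supplied by the vanishing-derivative assumption.

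The main step is the derivative computation. The terms $\lambda\,\DD(\cdot)$ contribute exactly the bracket appearing in (iii). The remaining terms have the form $(1-\lambda)\DD(\cM(\rho^1(\lambda))\|\cM(\sigma))$, and similarly after $\cN$ and after the partial trace to $R$. For these, write
\begin{align}
\cM(\rho^1(\lambda)) = \mu\,\cM(\omega)+(1-\mu)\cM(\sigma), \qquad \mu=\frac{\lambda}{1-\lambda},
\end{align}
where $\omega$ is the state $(1+\epsilon')\sigma-\epsilon'\rho$ suitably normalized, or more simply $\rho^1(\lambda)=\sigma+\mu(\sigma-\rho)$ with $\sigma-\rho$ a traceless direction. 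This matches the convention of \cref{Lem:partial-equiv} in \cref{app:prop-rel-ent}, which states that \eqref{eq:prop_vanishing_gradient} implies the derivative of $\DD(\rho^1(\lambda)\|\sigma)$ at $0$ vanishes. One must check that this remains true after applying $\cM$, $\cN$, or $\tr_A$. That holds because these maps send $\rho^1(\lambda)$ to the same kind of linear path relative to $\cM(\sigma)$, $\cN(\sigma)$, or $\sigma_R$, with the support condition preserved. Each such term therefore has vanishing derivative at $0$, and the factor $(1-\lambda)$ multiplies a quantity that equals $0$ at $\lambda=0$. Hence $\varphi'(0)$ equals $\eta[\DD(\cM(\rho)\|\cM(\sigma))-\DD(\rho_R\|\sigma_R)]-[\DD(\cN(\rho)\|\cN(\sigma))-\DD(\rho_R\|\sigma_R)]\geq0$, which is (iii).

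I expect the chief obstacle to be justifying this transfer of the vanishing-derivative property: the path moves in the direction $\sigma-\rho$, with "negative" weight on $\rho$, rather than along the mixture $\lambda\rho+(1-\lambda)\sigma$. I would invoke \cref{Lem:partial-equiv} for this. Finiteness of all terms follows from the support inclusions.
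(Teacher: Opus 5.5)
Your proposal is correct and follows the paper's route: the paper reruns the proof of \cref{Thm:CMI-equivalence} (trivial $(ii)\Rightarrow(i)$, direct sum for $(iii)\Rightarrow(ii)$, and the binary perturbation $\rho^1(\lambda)$ with a right-derivative at $\lambda=0$ for $(i)\Rightarrow(iii)$), using \cref{Lem:partial-equiv} to kill the derivatives of the $(1-\lambda)\DD(\cdot\|\cdot)$ terms, exactly as you do. Two small slips are worth fixing. First, with $\omega=(1+\epsilon')\sigma-\epsilon'\rho$ the mixing weight is $\lambda/(\epsilon'(1-\lambda))$, not $\lambda/(1-\lambda)$. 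Second, the $(1-\lambda)$-terms at $\lambda=0$ equal the constant $\DD(\tau\|\tau)$ rather than $0$; this constant is independent of $\tau$ by data processing under replacer channels, so it cancels between the $\cM$/$\cN$-type and $R$-type contributions, just as in your computation of $\varphi(0)$.
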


\begin{proof}
    The proof is the same as that for~\cref{Thm:CMI-equivalence}, while taking into account~\cref{Lem:partial-equiv} in~\cref{app:prop-rel-ent}.
\end{proof}

As a result, whenever $\DD$ satisfies \eqref{eq:prop_direct_sum} and \eqref{eq:prop_vanishing_gradient}, we can give alternative expressions for the contraction coefficients akin to those for the relative entropy.
\begin{corollary}
\label{Cor:gen-CMI-contraction-equivalence}
Let $\DD$ be a divergence satisfying
\eqref{eq:prop_direct_sum} and
\eqref{eq:prop_vanishing_gradient}. Then, for every quantum channel
$\cN\colon A\to B$,
\begin{align}
    \etaD_{\DD}(\cN)
    &=
    \sup_{\substack{\rho_{RUA},\\
                    U\ \mathrm{classical},\ |U|=2}}
    \frac{
        \II(U:B|R)_{\cN(\rho)}
    }{
        \II(U:A|R)_{\rho}
    }
    \label{eq:gen-binary-cqCMI-contraction}\\
    &=
    \sup_{\substack{\rho_{RUA},\\
                    U\ \mathrm{classical}}}
    \frac{
        \II(U:B|R)_{\cN(\rho)}
    }{
        \II(U:A|R)_{\rho}
    }
    \label{eq:gen-cqCMI-contraction}\\
    &=
    \sup_{\rho_{RCA}}
    \frac{
        \II(C:B|R)_{\cN(\rho)}
    }{
        \II(C:A|R)_{\rho}
    },
    \label{eq:gen-quantum-CMI-contraction}
\end{align}
where all suprema are restricted to states for which the denominator
is strictly positive and finite.
\end{corollary}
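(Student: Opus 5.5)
The plan follows the relative-entropy case, \cref{Cor:CMI-amo-equal}, with \cref{Thm:gen-CMI-equivalence} replacing \cref{Thm:CMI-equivalence}. There are three steps: two equalities come from the equivalence theorem, and one inequality comes from a relabeling of systems.

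\emph{Step 1: the two classical suprema.} Apply \cref{Thm:gen-CMI-equivalence} with $\cM=\id_A$ (so $B=A$) and $\cN$ the given channel. Fix $\sigma_{RA}$ and set $\eta$ equal to the supremum in \eqref{eq:gen-binary-cqCMI-contraction}, restricted to ensembles with average $\sigma_{RA}$. Item (i) then holds by definition, so item (iii) holds. This bounds the conditional SDPI ratio at $\sigma_{RA}$ by that $\eta$.

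Conversely, direct-sum property \eqref{eq:prop_direct_sum} rewrites the cq-CMI ratio. For $\rho_{RUA}=\sum_u p(u)|u\rangle\!\langle u|\otimes\rho^u_{RA}$ with average $\sigma_{RA}$, it becomes a $p$-weighted average of the terms
\[
\DD(\cN(\rho^u_{RA})\|\cN(\sigma_{RA}))-\DD(\rho^u_R\|\sigma_R)
\]
over the analogous input terms. The ratio is therefore at most the maximum termwise ratio, which is at most $\etaD_\DD(\cN,\sigma_{RA})$. This gives (iii)$\Rightarrow$(ii)$\Rightarrow$(i).

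Taking the supremum over $\sigma_{RA}$ and over the reference dimension $|R|$ yields \eqref{eq:gen-binary-cqCMI-contraction} and \eqref{eq:gen-cqCMI-contraction}. Some care is needed with the admissibility conventions: finiteness of the denominators, and that $\rho\neq\sigma$ with positive denominator corresponds to ensembles with positive $\II(U:A|R)$. The support condition in (iii) covers all pairs with finite denominator, up to the full-rank reduction used in the proof of \cref{Thm:CMI-equivalence}.

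\emph{Step 2: the quantum supremum.} Restricting $C$ to be classical shows that \eqref{eq:gen-cqCMI-contraction} is at most \eqref{eq:gen-quantum-CMI-contraction}. For the reverse, take an arbitrary $\rho_{RCA}$ and identify $R'\coloneqq RC$ as the reference, with pair $\rho_{RCA}$ versus $\rho_C\otimes\rho_{RA}$. By definition of $\II$,
\[
\II(C:A|R)=\DD(\rho_{CRA}\|\rho_C\otimes\rho_{RA})-\DD(\rho_{CR}\|\rho_C\otimes\rho_R),
\]
and the same holds with $A$ replaced by $B$ after applying $\cN$. The second state's $R'$-marginal is $\rho_C\otimes\rho_R$, and the channel acts only on $A$. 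So the quantum CMI ratio is exactly a conditional-contraction ratio, and hence at most $\etaD_\DD(\cN)$. This closes the chain. Unlike the relative-entropy case, no chain rule is needed, only the definition of $\II$.

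\emph{Main obstacle.} The only delicate point is making sure \cref{Thm:gen-CMI-equivalence} really applies. In particular, its proof uses the derivative identity via \cref{Lem:partial-equiv}, which requires \eqref{eq:prop_vanishing_gradient} for the output pairs $\cN(\rho^1_{RA}(\lambda))$ versus $\cN(\sigma_{RA})$ and for the marginals on $R$. I will check this by writing $\rho^1(\lambda)=\sigma+\tfrac{\lambda}{1-\lambda}(\sigma-\rho)$ and invoking that lemma directly.

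I will also handle the degenerate case where no admissible pair exists, so that both sides are $0$ by convention, just as in the proof of \cref{Prop:CMI-stabilized-SDPI}.
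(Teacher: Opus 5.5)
Your proposal is correct and follows the paper's proof essentially step for step. The two classical suprema come from \cref{Thm:gen-CMI-equivalence} with $\cM=\id$ after optimizing over $\sigma_{RA}$, and the quantum supremum is bounded by taking $RC$ as the reference and comparing $\rho_{RCA}$ with $\rho_C\otimes\rho_{RA}$, using only the definition of $\II$. One small caveat about your admissibility remark, which the paper's own proof also passes over: for a general $\DD$ in this class, a finite denominator does not force the support inclusion required in item (iii); for example, an $\alpha$-Hellinger divergence with $\alpha<1$ is always finite. So that reduction needs a full-rank approximation or continuity argument rather than being automatic.
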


\begin{proof}
The first two equalities follow from
\cref{Thm:gen-CMI-equivalence} after optimizing over the fixed state
$\sigma_{RA}$.

Since a classical system is a special case of a quantum system, the
last supremum is at least $\etaD_{\DD}(\cN)$. To prove the converse,
consider an arbitrary state $\rho_{RCA}$ and regard $RC$ as the
reference system in the definition of $\etaD_{\DD}(\cN)$. Choose the
two states
\begin{align}
    \rho_{RCA}
    \qquad\text{and}\qquad
    \rho_C\otimes\rho_{RA}.
\end{align}
Then
\begin{align}
    &\DD(\rho_{RCA}\|\rho_C\otimes\rho_{RA})
    -
    \DD(\rho_{RC}\|\rho_C\otimes\rho_R)
    =
    \II(C:A|R)_\rho,
\end{align}
while, after applying the channel,
\begin{align}
    &\DD\!\left(
        \cN(\rho_{RCA})
        \middle\|
        \rho_C\otimes\cN(\rho_{RA})
    \right)
    -
    \DD(\rho_{RC}\|\rho_C\otimes\rho_R)
    =
    \II(C:B|R)_{\cN(\rho)}.
\end{align}
Thus every ratio in
\eqref{eq:gen-quantum-CMI-contraction} is  bounded from above by
$\etaD_{\DD}(\cN)$, which proves the last equality.
\end{proof}
Analogous results hold for the expansion coefficient and the relative contraction coefficients. Similarly,  generalizations of~\cref{Thm:MIR-equivalence} and its consequences for the complete contraction coefficient hold as well.

\section{On tensorization of SDPI constants}

\label{Sec:Tensorization}

As is standard in the literature~\cite{Raginsky2016,cao2019tensorization}, tensorization of an SDPI constant for $\DD$ refers to the following property: 
\begin{align}
    \eta_\DD(\cN_1\otimes\cN_2,\sigma_1\otimes\sigma_2) \overset{?}{=} \max\{ \eta_\DD(\cN_1,\sigma_1), \eta_\DD(\cN_2,\sigma_2)\}.
\end{align}
To motivate this question, we can consider mixing times, as discussed further in \cref{sec:Hierarchies-operational-times}, and take $\sigma$ to be the fixed point of the channel. Usually, considering multiple systems requires an SDPI constant to be evaluated on a tensor product of quantum channels and states. However, tensorization implies that, if all systems are affected by the same noise, the exponential contraction rate does not deteriorate with the number of tensor factors.

Classically, this holds, e.g., for many $f$-divergences~\cite[Theorem 3.9]{Raginsky2016}, including the relative entropy; see also~\cite{anantharam2013maximal}. In the quantum setting without reference systems, it holds for one very particular definition of the $\chi^2$-divergence~\cite{cao2019tensorization}, but not for another definition of the $\chi^2$ nor the relative entropy~\cite{Cao2026noTensorization}.
In general, it is still an open problem to determine exactly when tensorization holds. Even for the relative entropy, it holds for two erasure channels~\cite[Lemma 3.11]{Hirche2022contraction}.  In the following, we will give conditions under which tensorization does hold for the relative entropy, and we also provide several concrete examples.  

\subsection{Without reference systems}

Let us first consider the case without reference systems. One direction is easy to see,
\begin{align}
    \eta_\DD(\cN_1\otimes\cN_2,\sigma_1\otimes\sigma_2) \geq \max\{ \eta_\DD(\cN_1,\sigma_1), \eta_\DD(\cN_2,\sigma_2)\},
    \label{eq:tensorization-ineq-easy}
\end{align}
following the same line of reasoning given in the proof of Lemma~\ref{lem:tensorization-ineqs-lem}.
To establish the other direction, we introduce the following mutual-information correlation coefficient of a bipartite channel $\cN_{A_1A_2\rightarrow B_1B_2}$:
\begin{align}
    \zeta_{\MI}(\cN_{A_1A_2\rightarrow B_1B_2}) \coloneqq  \sup_{\rho_{A_1A_2}\not\in \,\operatorname{PROD}
     } \frac{I(B_1:B_2)}{I(A_1:A_2)},
\end{align}
where PROD denotes the set of product states.
In what follows, we will only be interested in the setting of a product channel. For that, we collect a few properties here. 

\begin{lemma}
\label{Lem:zeta-MI-prop}
    For a product channel $\cN_{A_1\rightarrow B_1}\otimes\cN_{A_2\rightarrow B_2}$, the following properties hold for the mutual-information correlation coefficient: 
    \begin{enumerate}
        \item The coefficient is bounded as follows:
        \begin{align}
            0\leq\zeta_{\MI}(\cN_{A_1\rightarrow B_1}\otimes\cN_{A_2\rightarrow B_2})\leq1.
        \end{align}
        Note that the upper bound does not generally hold for non-product channels.
        \item Additionally,
        \begin{align}
    \zeta_{\MI}(\cN_1\otimes\cN_2) \leq \eta_{\MI}(\cN_1) \eta_{\MI}(\cN_2) .
    \label{eq:corr-coeff-to-MI-coeffs}
\end{align}
    \end{enumerate}
\end{lemma}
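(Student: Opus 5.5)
Both properties follow from data processing and the mutual-information contraction coefficients, applied one tensor factor at a time. Throughout, the output state is $\omega_{B_1B_2}\coloneqq(\cN_1\otimes\cN_2)(\rho_{A_1A_2})$.

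\textbf{Property 1.} Non-negativity is immediate, since mutual information is a relative entropy and hence non-negative. For the upper bound, $\omega_{B_1B_2}$ is obtained from $\rho_{A_1A_2}$ by the local channel $\cN_1\otimes\cN_2$. Data processing for the mutual information under local operations gives $I(B_1:B_2)_\omega\leq I(A_1:A_2)_\rho$. Because $\rho_{A_1A_2}\notin\operatorname{PROD}$, the denominator is strictly positive, so the ratio is at most $1$. For a non-product channel this step fails: a channel that correlates its outputs can create mutual information from a product input, so the bound need not hold, consistent with the remark in the statement.

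\textbf{Property 2.} We pass through the intermediate state $\tau_{B_1A_2}\coloneqq(\cN_1\otimes\id)(\rho_{A_1A_2})$. Then $\omega_{B_1B_2}=(\id\otimes\cN_2)(\tau_{B_1A_2})$.

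First regard $B_1$ as the auxiliary system $C$ in the definition~\eqref{eq:eta_MI} of $\eta_{\MI}(\cN_2)$, where $\cN_2$ acts on $A_2$. This gives
\begin{align}
I(B_1:B_2)_\omega\leq\eta_{\MI}(\cN_2)\,I(B_1:A_2)_\tau .
\end{align}
Next regard $A_2$ as the auxiliary system, with $\cN_1$ acting on $A_1$. By symmetry of the mutual information,
\begin{align}
I(B_1:A_2)_\tau=I(A_2:B_1)_\tau\leq\eta_{\MI}(\cN_1)\,I(A_2:A_1)_\rho .
\end{align}
Chaining the two inequalities and dividing by $I(A_1:A_2)_\rho>0$ gives, for every non-product $\rho_{A_1A_2}$,
\begin{align}
\frac{I(B_1:B_2)_\omega}{I(A_1:A_2)_\rho}\leq\eta_{\MI}(\cN_1)\,\eta_{\MI}(\cN_2).
\end{align}
Taking the supremum over $\rho_{A_1A_2}$ yields~\eqref{eq:corr-coeff-to-MI-coeffs}.

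\textbf{Degenerate denominators.} The only delicate point is that the definition of $\eta_{\MI}$ only covers states with strictly positive denominator, so the case $I(B_1:A_2)_\tau=0$ must be handled separately. If $I(B_1:A_2)_\tau=0$, then $\tau_{B_1A_2}$ is a product state. Applying $\id\otimes\cN_2$ keeps it a product state, so $I(B_1:B_2)_\omega=0$ and the bound holds trivially. The same argument handles $I(A_1:A_2)_\rho=0$, although that case is already excluded by $\rho\notin\operatorname{PROD}$. No infinite values arise, since all systems are finite-dimensional and mutual information is therefore finite.
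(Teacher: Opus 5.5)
Your proof is correct and follows the paper's argument: non-negativity and data processing give property 1, and chaining the two single-channel bounds from the definition of $\eta_{\MI}$ gives property 2. The only differences are immaterial or additive: the paper peels off $\cN_1$ first via $I(A_1:B_2)$ whereas you peel off $\cN_2$ first via $I(B_1:A_2)$, and your explicit zero-denominator check fills in something the paper leaves implicit. One small quibble about the non-product remark: product inputs are excluded from the supremum defining $\zeta_{\MI}$, so a product input cannot itself serve as the witness. The witness should instead be a nearly-product input, for example fed into a replacer channel that prepares a maximally entangled state (the paper's example), for which the ratio is unbounded.
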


\begin{proof}
    For property 1, the lower bound follows from non-negativity of the mutual information and the upper bound from data processing. For non-product channels, the upper bound is violated for example by a replacer channel that prepares a maximally entangled state. Property 2 follows by noting that, for a product channel $\cN_1\otimes\cN_2$, 
\begin{align}
    I(B_1:B_2) \leq \eta_{\MI}(\cN_1) I(A_1:B_2) \leq \eta_{\MI}(\cN_1)\eta_{\MI}(\cN_2) I(A_1:A_2). 
\end{align}
Since the bound holds for every possible input state $\rho_{A_1 A_2}\not\in \operatorname{PROD}$, we conclude~\eqref{eq:corr-coeff-to-MI-coeffs}.
\end{proof}

For the relative entropy SDPI constant, the following approximate tensorization result holds. 

\begin{proposition}
\label{prop:almost-tensorization}
    For channels $\cN_1$ and $\cN_2$ and states $\sigma_1$ and $\sigma_2$, 
    \begin{align}
     \max\{ \eta_D(\cN_1,\sigma_1), \eta_D(\cN_2,\sigma_2)\}
     &\leq \eta_D(\cN_1\otimes\cN_2,\sigma_1\otimes\sigma_2) \\
     &\leq \max\{ \zeta_{\MI}(\cN_1\otimes\cN_2), \eta_D(\cN_1,\sigma_1), \eta_D(\cN_2,\sigma_2)\}.
\end{align}
\end{proposition}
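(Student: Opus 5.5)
The lower bound is the easy direction already noted in \eqref{eq:tensorization-ineq-easy}. It follows as in \cref{lem:tensorization-ineqs-lem}: restrict the supremum to inputs of the form $\rho_1\otimes\sigma_2$ or $\sigma_1\otimes\rho_2$, and use additivity of $D$ together with $D(\sigma_2\|\sigma_2)=0$. The work is therefore in the upper bound. For it I will decompose the relative entropy of a bipartite state against a product reference via the chain rule \eqref{eq:rel-ent-chain}, generalized to both factors.

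Fix $\rho_{A_1A_2}$ with $0<D(\rho_{A_1A_2}\|\sigma_1\otimes\sigma_2)<\infty$. A direct calculation, namely two applications of \eqref{eq:rel-ent-chain}, gives
\begin{align}
D(\rho_{A_1A_2}\|\sigma_1\otimes\sigma_2) = I(A_1:A_2)_\rho + D(\rho_{A_1}\|\sigma_1) + D(\rho_{A_2}\|\sigma_2).
\end{align}
The identity holds because $\log(\sigma_1\otimes\sigma_2)=\log\sigma_1\otimes I+I\otimes\log\sigma_2$, and the cross terms reassemble into $D(\rho_{A_1A_2}\|\rho_{A_1}\otimes\rho_{A_2})$. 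Since $\cN_1\otimes\cN_2$ is a product channel, its output on $\rho$ has marginals $\cN_1(\rho_{A_1})$ and $\cN_2(\rho_{A_2})$, and the reference is mapped to $\cN_1(\sigma_1)\otimes\cN_2(\sigma_2)$. The same identity therefore gives
\begin{align}
D((\cN_1\otimes\cN_2)(\rho)\|\cN_1(\sigma_1)\otimes\cN_2(\sigma_2)) = I(B_1:B_2) + D(\cN_1(\rho_{A_1})\|\cN_1(\sigma_1)) + D(\cN_2(\rho_{A_2})\|\cN_2(\sigma_2)).
\end{align}

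I then bound the three output terms separately. If $\rho\notin\operatorname{PROD}$, the definition of $\zeta_{\MI}(\cN_1\otimes\cN_2)$ gives $I(B_1:B_2)\le \zeta_{\MI}(\cN_1\otimes\cN_2)\, I(A_1:A_2)$. If $\rho$ is a product state, both mutual informations vanish and the bound holds trivially. The marginal terms are bounded by the SDPI constants, $D(\cN_i(\rho_{A_i})\|\cN_i(\sigma_i))\le \eta_D(\cN_i,\sigma_i)\,D(\rho_{A_i}\|\sigma_i)$. This step also holds when $\rho_{A_i}=\sigma_i$, since then both sides are zero. Bounding each coefficient by the maximum of the three and recombining with the input decomposition yields
\begin{align}
D((\cN_1\otimes\cN_2)(\rho)\|\cN_1(\sigma_1)\otimes\cN_2(\sigma_2)) \le \max\{\zeta_{\MI}(\cN_1\otimes\cN_2),\eta_D(\cN_1,\sigma_1),\eta_D(\cN_2,\sigma_2)\}\, D(\rho\|\sigma_1\otimes\sigma_2).
\end{align}
Dividing by the denominator and taking the supremum over $\rho$ gives the claim.

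The main obstacle is bookkeeping rather than substance. Finiteness of $D(\rho\|\sigma_1\otimes\sigma_2)$ makes each of the three input terms finite, since they are nonnegative and sum to a finite value. The supports then behave well: $\operatorname{supp}\rho_{A_i}\subseteq\operatorname{supp}\sigma_i$, and $\operatorname{supp}\rho\subseteq\operatorname{supp}(\rho_{A_1}\otimes\rho_{A_2})$. These inclusions are exactly what is needed for the chain-rule identity to be valid on both the input and the output sides.
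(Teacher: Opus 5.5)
Your proof is correct and follows the paper's argument essentially step for step. You get the lower bound by restricting to inputs $\rho_1\otimes\sigma_2$ or $\sigma_1\otimes\rho_2$. You get the upper bound by splitting the output relative entropy into $I(B_1:B_2)$ plus the two marginal relative entropies, bounding these by $\zeta_{\mathrm{MI}}(\mathcal{N}_1\otimes\mathcal{N}_2)$ and the two SDPI constants, taking the maximum, and recombining with the same decomposition at the input. Your explicit treatment of product inputs, the case $\rho_{A_i}=\sigma_i$, and the finiteness and support conditions for the chain rule is a welcome addition that the paper leaves implicit.
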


\begin{proof}
    The first inequality is a special case of \eqref{eq:tensorization-ineq-easy}. For the second, by using the chain rule in \eqref{eq:rel-ent-chain}, consider that
    \begin{align}
        &D(\cN_1\otimes\cN_2(\rho_{A_1A_2})\|\cN_1(\sigma_{A_1})\otimes\cN_2(\sigma_{A_2})) \notag \\
        &= D(\cN_1\otimes\cN_2(\rho_{A_1A_2})\|\cN_1(\rho_{A_1})\otimes\cN_2(\rho_{A_2})) + D(\cN_1(\rho_{A_1})\|\cN_1(\sigma_{A_1}))+ D(\cN_2(\rho_{A_2})\|\cN_2(\sigma_{A_2})) \nonumber \\
        &\leq \zeta_{\MI}(\cN_1\otimes\cN_2) D(\rho_{A_1A_2}\|\rho_{A_1}\otimes\rho_{A_2}) + \eta_D(\cN_1,\sigma_1) D(\rho_{A_1}\|\sigma_{A_1})+ \eta_D(\cN_2,\sigma_2) D(\rho_{A_2}\|\sigma_{A_2}) \nonumber \\
        &\leq \max\{ \zeta_{\MI}(\cN_1\otimes\cN_2), \eta_D(\cN_1,\sigma_1), \eta_D(\cN_2,\sigma_2)\} \nonumber\\
        &\qquad\times\left[ D(\rho_{A_1A_2}\|\rho_{A_1}\otimes\rho_{A_2}) +  D(\rho_{A_1}\|\sigma_{A_1})+ D(\rho_{A_2}\|\sigma_{A_2}) \right] \\
        &= \max\{ \zeta_{\MI}(\cN_1\otimes\cN_2), \eta_D(\cN_1,\sigma_1), \eta_D(\cN_2,\sigma_2)\} D(\rho_{A_1A_2}\|\sigma_{A_1}\otimes\sigma_{A_2}),
    \end{align}
    from which the claim follows.
\end{proof}

Note that related approximate tensorization questions with respect to conditional expectations were also studied in~\cite{bardet2022approximate,gao2022complete}.
Now, \cref{prop:almost-tensorization} leads to the following corollary. 

\begin{corollary}\label{Cor:tensorize2}
    For channels $\cN_1$ and $\cN_2$ and states $\sigma_1$ and $\sigma_2$, if 
    \begin{align}
        \eta_{\MI}(\cN_1) \eta_{\MI}(\cN_2) \leq \max\{ \eta_D(\cN_1,\sigma_1), \eta_D(\cN_2,\sigma_2)\},
    \end{align}
    then
    \begin{align}
    \eta_D(\cN_1\otimes\cN_2,\sigma_1\otimes\sigma_2) 
     &= \max\{ \eta_D(\cN_1,\sigma_1), \eta_D(\cN_2,\sigma_2)\}.
\end{align}
\end{corollary}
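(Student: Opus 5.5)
The corollary follows by chaining the upper bound of \cref{prop:almost-tensorization} with property 2 of \cref{Lem:zeta-MI-prop}. No new estimates are needed. Throughout, abbreviate
\begin{align}
    m \coloneqq \max\{ \eta_D(\cN_1,\sigma_1), \eta_D(\cN_2,\sigma_2)\}.
\end{align}

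\textbf{Step 1 (lower bound).} The inequality $\eta_D(\cN_1\otimes\cN_2,\sigma_1\otimes\sigma_2) \geq m$ is the first inequality of \cref{prop:almost-tensorization}. It holds unconditionally, being a special case of \eqref{eq:tensorization-ineq-easy}.

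\textbf{Step 2 (upper bound).} The second inequality of \cref{prop:almost-tensorization} gives
\begin{align}
    \eta_D(\cN_1\otimes\cN_2,\sigma_1\otimes\sigma_2) \leq \max\{ \zeta_{\MI}(\cN_1\otimes\cN_2), m\}.
\end{align}
Next, \eqref{eq:corr-coeff-to-MI-coeffs} gives $\zeta_{\MI}(\cN_1\otimes\cN_2) \leq \eta_{\MI}(\cN_1)\,\eta_{\MI}(\cN_2)$. The hypothesis of the corollary bounds this product by $m$. Hence $\zeta_{\MI}(\cN_1\otimes\cN_2)\leq m$, so the maximum on the right-hand side collapses to $m$.

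\textbf{Conclusion.} Combining Steps 1 and 2 yields the claimed equality.

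\textbf{Degenerate cases.} There is essentially no hard step; the only point needing care is the convention for ratio optimizations. If every $\rho_{A_1A_2}$ is a product state in the feasible set, so that no admissible denominator exists, then $\zeta_{\MI}$ is $0$ by the paper's convention. It is then trivially at most $m$, since $m\geq 0$. The chain-rule argument in the proof of \cref{prop:almost-tensorization} is unaffected, because the mutual-information term simply vanishes on both sides. So no separate case analysis is needed beyond noting this convention.
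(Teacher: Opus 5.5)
Your proposal is correct and follows essentially the same route as the paper. The paper's proof is a one-line combination of \cref{prop:almost-tensorization} with property~2 of \cref{Lem:zeta-MI-prop}: the hypothesis bounds $\zeta_{\MI}(\cN_1\otimes\cN_2)$ by the maximum of the two single-channel SDPI constants, so the upper bound collapses to the lower bound. Your remark on the degenerate case (no non-product input, so $\zeta_{\MI}=0$ by convention) is a harmless extra that the paper does not spell out.
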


\begin{proof}
    This follows by using property 2 of~\cref{Lem:zeta-MI-prop} in~\cref{prop:almost-tensorization}.
\end{proof}
This gives the following special cases. 

\begin{corollary}
\label{cor:tensorization-special-cases}
    The relative entropy SDPI constant tensorizes, i.e.
    \begin{align}
    \eta_D(\cN_1\otimes\cN_2,\sigma_1\otimes\sigma_2) 
     &= \max\{ \eta_D(\cN_1,\sigma_1), \eta_D(\cN_2,\sigma_2)\},
\end{align}
in the following special cases:
    \begin{enumerate}
        \item If $\cN_1=\cE_\epsilon$ is an erasure channel and $\cN_2$ is an arbitrary channel, or vice versa. 
        \item If $\cN_1=\cN_2=\cD_{p,\tau}$ are generalized depolarizing channels, defined as
        \begin{equation}
        \cD_{p,\tau}(\rho)\coloneqq (1-p)\rho + p \tau ,    
        \end{equation}
        and $\sigma_1=\sigma_2=\tau$ are their full-rank fixed points.
        \item If $\cN_1$ is a quantum-to-classical channel and if   $\sigma_2$ is such that $\eta_{\cqMI}(\cN_2,\sigma_2)=\eta_{\cqMI}(\cN_2)$.
    \end{enumerate}
\end{corollary}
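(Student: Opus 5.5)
The plan is to prove all three cases with \cref{prop:almost-tensorization}. Its lower bound is automatic. For the upper bound, it suffices in each case to show
\begin{align}
\zeta_{\MI}(\cN_1\otimes\cN_2)\leq \max\{\eta_D(\cN_1,\sigma_1),\eta_D(\cN_2,\sigma_2)\},
\end{align}
either through \cref{Cor:tensorize2}, i.e., via the bound $\zeta_{\MI}\leq\eta_{\MI}(\cN_1)\eta_{\MI}(\cN_2)$ of \cref{Lem:zeta-MI-prop}, or by bounding $\zeta_{\MI}$ directly.

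\emph{Case 1 (erasure).} By the direct-sum property of the erasure flag, $D(\cE_\epsilon(\rho)\|\cE_\epsilon(\sigma))=(1-\epsilon)D(\rho\|\sigma)$ for all $\rho,\sigma$. Hence $\eta_D(\cE_\epsilon,\sigma_1)=1-\epsilon$ for every $\sigma_1$ with some admissible $\rho\neq\sigma_1$. The same identity with a reference system gives $I(C:B)=(1-\epsilon)I(C:A)$, so $\eta_{\MI}(\cE_\epsilon)=1-\epsilon$. Since $\eta_{\MI}(\cN_2)\leq 1$, we get $\eta_{\MI}(\cE_\epsilon)\eta_{\MI}(\cN_2)\leq 1-\epsilon=\eta_D(\cE_\epsilon,\sigma_1)$, and \cref{Cor:tensorize2} applies. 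The case with the roles of the two channels swapped is symmetric.

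\emph{Case 2 (generalized depolarizing).} First I upper bound $\eta_{\MI}(\cD_{p,\tau})$. Writing $\cD_{p,\tau}=(1-p)\id+p\cR_\tau$ and using \cref{prop:Hirche-Doeblin-Replacer-Form}, we get $\alpha_+(\cD_{p,\tau})\geq p$. Combining \cref{Cor:Summary1} with \cref{Lem:a-Doeblin} then gives $\eta_{\MI}\leq\etaD_D\leq 1-p$, so $\eta_{\MI}(\cD_{p,\tau})^2\leq(1-p)^2$.

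Next I need the matching lower bound $\eta_D(\cD_{p,\tau},\tau)\geq(1-p)^2$, using that $\tau$ is full rank. Take $\rho_t=\tau+tX$ with $X$ traceless, Hermitian, and commuting with $\tau$; $\rho_t$ is a state for small $t$. Then $\cD_{p,\tau}(\rho_t)=\tau+(1-p)tX$, and everything lives in a commutative algebra. The second-order expansion of the classical relative entropy gives $D(\rho_t\|\tau)=\tfrac{t^2}{2}\chi^2+o(t^2)$ and $D(\cD_{p,\tau}(\rho_t)\|\tau)=\tfrac{(1-p)^2t^2}{2}\chi^2+o(t^2)$. Letting $t\to0$ yields the bound, and \cref{Cor:tensorize2} finishes this case.

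\emph{Case 3 (quantum-to-classical $\cN_1$).} Here I bound $\zeta_{\MI}$ directly rather than through $\eta_{\MI}(\cN_1)$.
\begin{itemize}
\item For any $\rho_{A_1A_2}$, the state $(\cN_1\otimes\id)(\rho_{A_1A_2})$ is a c-q state on $B_1A_2$ with classical $B_1$.
\item Treating $B_1$ as $U$ gives $I(B_1:B_2)\leq\eta_{\cqMI}(\cN_2)\,I(B_1:A_2)$.
\item Data processing under $\cN_1$ gives $I(B_1:A_2)\leq I(A_1:A_2)$, so $\zeta_{\MI}(\cN_1\otimes\cN_2)\leq\eta_{\cqMI}(\cN_2)$.
\item By hypothesis $\eta_{\cqMI}(\cN_2)=\eta_{\cqMI}(\cN_2,\sigma_2)$.
\item Finally I identify $\eta_{\cqMI}(\cN_2,\sigma_2)=\eta_D(\cN_2,\sigma_2)$. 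This is the fixed-marginal version of~\eqref{Eq:cqMI=D}: \cref{Cor:CMI-amo-SDPI-equal} with trivial $R$, i.e., \cref{Thm:CMI-equivalence} with $\cM=\id$, where the c-q states are required to have marginal $\sigma_2$.
\end{itemize}
Plugging this into \cref{prop:almost-tensorization} gives the claim.

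The main obstacle is the last identification in Case 3. One must make sure that $\eta_{\cqMI}(\cN_2,\sigma_2)$ is read as the supremum over c-q states with $A$-marginal $\sigma_2$, so that \cref{Thm:CMI-equivalence} applies. The support condition there, handled by assuming full rank, should also be harmless. In Case 2, the other step needing care is the $o(t^2)$ expansion, which is legitimate because all operators commute with the full-rank $\tau$.
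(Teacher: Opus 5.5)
Your proposal is correct and follows the paper's proof: all three cases go through \cref{prop:almost-tensorization}. Cases~1 and~2 use \cref{Cor:tensorize2}, with $\eta_{\MI}(\cD_{p,\tau})\leq\etaD_D(\cD_{p,\tau})\leq 1-\alpha_+(\cD_{p,\tau})\leq 1-p$ in case~2, and case~3 uses the direct bound $\zeta_{\MI}(\cN_1\otimes\cN_2)\leq\eta_{\cqMI}(\cN_2)=\eta_{\cqMI}(\cN_2,\sigma_2)=\eta_D(\cN_2,\sigma_2)$. The only real deviation is that you derive $\eta_D(\cD_{p,\tau},\tau)\geq(1-p)^2$ yourself from a second-order expansion along perturbations commuting with the full-rank $\tau$, where the paper cites \cite[Theorem 3.1]{muller2016relative}; your argument is valid.
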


\begin{proof}
    (1) For the erasure channel, all relevant contraction coefficients are equal, 
    \begin{align}
        \eta_{\MI}(\cE_\epsilon) \eta_{\MI}(\cN_2) = (1-\epsilon) \eta_{\MI}(\cN_2) \leq 1-\epsilon =\eta_D(\cN_1,\sigma_1), 
    \end{align}
    from which the claim follows from~\cref{Cor:tensorize2}. \\
    (2) By~\cref{Lem:a-Doeblin} and~\cref{Cor:Summary1}, we have, 
    \begin{align}
        \eta_{\MI}(\cD_{p,\tau}) \leq \etaD_{D}(\cD_{p,\tau}) \leq 1-\alpha_+(\cD_{p,\tau}) = 1-p. 
    \end{align}
    With that, 
    \begin{align}
        \eta_{\MI}(\cD_{p,\tau}) \eta_{\MI}(\cD_{p,\tau}) \leq (1-p)^2 \leq  \eta_D(\cD_{p,\tau},\tau),
    \end{align}
    where the second inequality is from~\cite[Theorem 3.1]{muller2016relative} and is tight for dimension equal to two. The claim then follows from~\cref{Cor:tensorize2}.\\ 
    (3) Let $\sigma_2$ be such that $\eta_{\cqMI}(\cN_2,\sigma_2)=\eta_{\cqMI}(\cN_2)$. Because we assume $B_1$ to be classical, we have, 
    \begin{align}
        I(B_1:B_2) \leq \eta_{\cqMI}(\cN_2) I(B_1:A_2), 
    \end{align}
    and hence 
\begin{align}
    \zeta_{\MI}(\cN_1\otimes\cN_2) \leq \eta_{\MI}(\cN_1) \eta_{\cqMI}(\cN_2) \leq \eta_{\cqMI}(\cN_2)= \eta_{\cqMI}(\cN_2,\sigma_2) = \eta_{D}(\cN_2,\sigma_2),
\end{align}
where the second-to-last inequality follows from our initial assumption. This concludes the proof by~\cref{prop:almost-tensorization}. 
\end{proof}
Ref.~\cite{GaoZhao2026} recently provided derivations of tight large-index estimates for the relative-entropy contraction of generalized depolarizing dynamics, together with tensor-stable estimates under arbitrary ancillary extensions. For replacer dynamics, the latter correspond to the product-reference SDPI constant $\eta_D^p$ considered here. Combined with the tensorization result in item~2 of~\cref{cor:tensorization-special-cases}, their single-system estimates in the replacer setting also yield system-size-independent bounds for arbitrary tensor powers.
In the next part, we study tensorization in settings with reference systems. 

\subsection{SDPIs with reference systems}

In what follows, we discuss variants of the above results that directly address complete and conditional contraction coefficients. First, we note the particular role that $\eta_D^p(\cN,\sigma_A)$, as defined in~\eqref{SDPI-product}, has in this context. 
It turns out that this quantity indeed tensorizes. Closely related complete SDPI tensorization results were established in~\cite[Proposition 2.9]{gao2020fisher}; see also~\cite[Proposition 4.2]{gao2022complete}. For completeness, we state and prove here the product-reference form needed in our framework. 

\begin{theorem}
    For channels $\cN_1$ and $\cN_2$ and states $\sigma_1$ and $\sigma_2$,
    \begin{align}
    \eta^p_D(\cN_1\otimes\cN_2,\sigma_1\otimes\sigma_2) = \max\{ \eta^p_D(\cN_1,\sigma_1), \eta^p_D(\cN_2,\sigma_2)\}.
\end{align}
\end{theorem}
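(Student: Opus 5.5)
The lower bound $\geq$ is the product-reference case of \cref{lem:tensorization-ineqs-lem}. Concretely, restrict the supremum to inputs of the form $\rho_{RA_1}\otimes\sigma_2$, respectively $\sigma_1\otimes\rho_{RA_2}$. Then additivity of $D$ and $D(\cN_2(\sigma_2)\|\cN_2(\sigma_2))=0$ reduce the ratio to the single-channel one. So the plan concentrates on the upper bound.

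For the upper bound, I fix a reference system $R$ and a state $\rho_{RA_1A_2}$, and write $\eta\coloneqq\max\{\eta^p_D(\cN_1,\sigma_1),\eta^p_D(\cN_2,\sigma_2)\}$. The key step is a chain-rule decomposition of the relative entropy that peels off one system at a time, with the other system absorbed into the reference. Specifically, for the output state $\omega_{RB_1B_2}\coloneqq(\cN_1\otimes\cN_2)(\rho_{RA_1A_2})$, the identity
\begin{align}
D(\omega_{RB_1B_2}\|\omega_R\otimes\cN_1(\sigma_1)\otimes\cN_2(\sigma_2))
= D(\omega_{RB_1B_2}\|\omega_{RB_1}\otimes\cN_2(\sigma_2)) + D(\omega_{RB_1}\|\omega_R\otimes\cN_1(\sigma_1))
\end{align}
follows from the chain rule \eqref{eq:rel-ent-chain}. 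One checks it directly: $\log(\omega_{RB_1}\otimes\cN_2(\sigma_2))$ splits additively, and one traces out $B_2$. The same identity holds at the input level for $\rho_{RA_1A_2}$ against $\rho_R\otimes\sigma_1\otimes\sigma_2$, with $\rho_{RA_1}$ in the intermediate term.

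Now I bound each term. The second term involves only $\cN_1$ acting on $\rho_{RA_1}$ with reference $R$. By the definition of $\eta^p_D(\cN_1,\sigma_1)$ it is at most $\eta\, D(\rho_{RA_1}\|\rho_R\otimes\sigma_1)$. The first term is the obstacle: it compares $\omega_{RB_1B_2}$ with $\omega_{RB_1}\otimes\cN_2(\sigma_2)$, where the reference $RB_1$ has already been processed by $\cN_1$. I will handle it by an intermediate state $\xi_{RA_1B_2}\coloneqq(\id\otimes\cN_2)(\rho_{RA_1A_2})$. Then $\omega=(\cN_1\otimes\id)(\xi)$ and $\omega_{RB_1}\otimes\cN_2(\sigma_2)=(\cN_1\otimes\id)(\xi_{RA_1}\otimes\cN_2(\sigma_2))$. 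Data processing under $\cN_1\otimes\id$ gives
\begin{align}
D(\omega_{RB_1B_2}\|\omega_{RB_1}\otimes\cN_2(\sigma_2))\leq D(\xi_{RA_1B_2}\|\rho_{RA_1}\otimes\cN_2(\sigma_2)).
\end{align}
The right-hand side is exactly the product-reference quantity for $\cN_2$ with reference $RA_1$. Hence it is at most $\eta^p_D(\cN_2,\sigma_2)\,D(\rho_{RA_1A_2}\|\rho_{RA_1}\otimes\sigma_2)\leq\eta\,D(\rho_{RA_1A_2}\|\rho_{RA_1}\otimes\sigma_2)$. This is where the unbounded reference dimension in the definition of $\eta^p_D$ is essential: it allows $A_1$ to be treated as part of the reference.

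Summing the two bounds and applying the input-level chain rule recombines them into $\eta\,D(\rho_{RA_1A_2}\|\rho_R\otimes\sigma_1\otimes\sigma_2)$, which proves $\leq$. Finally, I would address edge cases. Infinite terms are harmless: if the denominator is finite, both input terms are finite, and the convention excludes vanishing denominators. If one input term is zero but the other is positive, the corresponding output term is also zero by data processing, so the bound still holds.
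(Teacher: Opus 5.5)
Your proof is correct and takes essentially the paper's approach. You use the same chain-rule split through the intermediate marginal $\rho_{RA_1}$, bound the $\cN_1$ term with reference $R$, and handle the cross term by combining the SDPI of $\cN_2$ (with $A_1$ or $B_1$ absorbed into the reference) with data processing under $\cN_1$. The only difference is the order of those last two steps: the paper applies $\eta^p_D(\cN_2,\sigma_2)$ with reference $RB_1$ and then data processing, whereas you apply data processing first and then the SDPI with reference $RA_1$.
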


\begin{proof}
    The $\geq$ direction is clear by choosing appropriate product states. For the $\leq$ direction, observe
    \begin{align}
        &D(\cN_1\otimes\cN_2(\rho_{RA_1A_2})\|\rho_R\otimes\cN_1(\sigma_{A_1})\otimes\cN_2(\sigma_{A_2})) \notag \\
        &=D(\cN_1\otimes\cN_2(\rho_{RA_1A_2})\|\cN_1(\rho_{RA_1})\otimes\cN_2(\sigma_{A_2})) + D(\cN_1(\rho_{RA_1})\|\rho_R\otimes\cN_1(\sigma_{A_1})) \\
        &\leq\eta^p_D(\cN_2,\sigma_2) D(\cN_1(\rho_{RA_1A_2})\|\cN_1(\rho_{RA_1})\otimes\sigma_{A_2}) + \eta^p_D(\cN_1,\sigma_1) D(\rho_{RA_1}\|\rho_R\otimes\sigma_{A_1}) \\
        &\leq\eta^p_D(\cN_2,\sigma_2) D(\rho_{RA_1A_2}\|\rho_{RA_1}\otimes\sigma_{A_2}) + \eta^p_D(\cN_1,\sigma_1) D(\rho_{RA_1}\|\rho_R\otimes\sigma_{A_1}) \\
        &\leq\max\{ \eta^p_D(\cN_1,\sigma_1), \eta^p_D(\cN_2,\sigma_2)\} \left[ D(\rho_{RA_1A_2}\|\rho_{RA_1}\otimes\sigma_{A_2}) + D(\rho_{RA_1}\|\rho_R\otimes\sigma_{A_1}) \right] \\
        &=\max\{ \eta^p_D(\cN_1,\sigma_1), \eta^p_D(\cN_2,\sigma_2)\}  D(\rho_{RA_1A_2}\|\rho_R\otimes\sigma_{A_1}\otimes\sigma_{A_2}),
    \end{align}
    where the equalities follow from the relative entropy chain rule in \eqref{eq:rel-ent-chain}, the first inequality follows by applying the SDPI constants, the second follows from data processing, and the third is a direct upper bound. 
\end{proof}
Note that by~\cref{Lem:MI-product-SDPI}, this immediately implies that the MI SDPI constant tensorizes, 
\begin{align}
    \eta_{\MI}(\cN_1\otimes\cN_2,\sigma_1\otimes\sigma_2) 
     &= \max\{ \eta_{\MI}(\cN_1,\sigma_1), \eta_{\MI}(\cN_2,\sigma_2)\},
\end{align}
while the counterexample of \cite{Cao2026noTensorization} implies that the same does not hold for $\eta_{\cqMI}$. 

We now turn our attention to the complete and conditional SDPI constants. 
Again, it is clear that
\begin{align}
    \etaC_\DD(\cN_1\otimes\cN_2,\sigma_1\otimes\sigma_2) & \geq \max\{ \etaC_\DD(\cN_1,\sigma_1), \etaC_\DD(\cN_2,\sigma_2)\}, \\
    \etaD_\DD(\cN_1\otimes\cN_2,\sigma_1\otimes\sigma_2) & \geq \max\{ \etaD_\DD(\cN_1,\sigma_1), \etaD_\DD(\cN_2,\sigma_2)\},
\end{align}
by following the same line of reasoning given in the proof of Lemma~\ref{lem:tensorization-ineqs-lem}.
We will examine the converse direction closer for the case of the relative entropy. For that we need, 
\begin{align}
    \zeta^c_{\MI}(\cN_{A_1A_2\rightarrow B_1B_2}) &\coloneqq  \sup_{\substack{\rho_{A_1A_2R_1R_2},\\\rho_{R_1R_2}=\rho_{R_1}\otimes\rho_{R_2}}} \frac{I(R_1B_1:B_2R_2)}{I(R_1A_1:A_2R_2)}, \\ 
    \zeta^\Delta_{\MI}(\cN_{A_1A_2\rightarrow B_1B_2}) &\coloneqq  \sup_{\rho_{A_1A_2R_1R_2}} \frac{I(R_1B_1:B_2R_2)-I(R_1:R_2)}{I(R_1A_1:A_2R_2)-I(R_1:R_2)}. 
\end{align}
Similar to the case without reference systems, we can conclude the following approximate tensorization result. 

\begin{proposition}
\label{prop:almost-tensorization-ca}
For channels $\cN_1$ and $\cN_2$ and states $\sigma_1$ and $\sigma_2$,
    \begin{align}
     \max\{ \etaC_D(\cN_1,\sigma_1), \etaC_D(\cN_2,\sigma_2)\}
     &\leq \etaC_D(\cN_1\otimes\cN_2,\sigma_1\otimes\sigma_2) \\
     &\leq \max\{ \zeta^c_{\MI}(\cN_1\otimes\cN_2), \etaC_D(\cN_1,\sigma_1), \etaC_D(\cN_2,\sigma_2)\}, \\
    \max\{ \etaD_D(\cN_1,\sigma_1), \etaD_D(\cN_2,\sigma_2)\}
     &\leq \etaD_D(\cN_1\otimes\cN_2,\sigma_1\otimes\sigma_2) \\
     &\leq \max\{ \zeta^\Delta_{\MI}(\cN_1\otimes\cN_2), \etaD_D(\cN_1,\sigma_1), \etaD_D(\cN_2,\sigma_2)\}.
\end{align}
\end{proposition}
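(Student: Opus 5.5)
The plan mirrors the proof of \cref{prop:almost-tensorization}, with $\sigma_1\equiv\sigma_{R_1A_1}$ and $\sigma_2\equiv\sigma_{R_2A_2}$, and the joint reference system taken to be $R=R_1R_2$.

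\textbf{Lower bounds.} In both chains, the lower bounds are the already-noted consequence of \cref{lem:tensorization-ineqs-lem}. They follow by restricting to inputs of the form $\rho_{R_1A_1}\otimes\sigma_{R_2A_2}$, or the reverse.

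\textbf{Three-term decomposition.} For the upper bounds, fix an arbitrary $\rho_{R_1A_1R_2A_2}$ with finite, strictly positive denominator. Apply the chain rule \eqref{eq:rel-ent-chain} twice, once per tensor factor of the second argument. This gives
\begin{align}
D(\rho_{R_1A_1R_2A_2}\|\sigma_{R_1A_1}\otimes\sigma_{R_2A_2}) = I(R_1A_1:R_2A_2)_\rho + D(\rho_{R_1A_1}\|\sigma_{R_1A_1}) + D(\rho_{R_2A_2}\|\sigma_{R_2A_2}),
\end{align}
together with the analogous identities after applying $\cN_1\otimes\cN_2$ and on the reference marginals:
\begin{align}
D(\rho_{R_1R_2}\|\sigma_{R_1}\otimes\sigma_{R_2}) = I(R_1:R_2)_\rho + D(\rho_{R_1}\|\sigma_{R_1}) + D(\rho_{R_2}\|\sigma_{R_2}).
\end{align}
Subtracting the reference identity splits both the conditional numerator and the conditional denominator into three pieces:
\begin{enumerate}[label=(\alph*)]
\item a correlation piece, $I(R_1B_1:B_2R_2)-I(R_1:R_2)$ at the output and $I(R_1A_1:A_2R_2)-I(R_1:R_2)$ at the input;
\item a local conditional piece for system 1, $D(\cN_1(\rho_{R_1A_1})\|\cN_1(\sigma_{R_1A_1}))-D(\rho_{R_1}\|\sigma_{R_1})$, with its input counterpart;
\item the same local conditional piece for system 2.
\end{enumerate}

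\textbf{Bounding each piece.} Piece (a) is bounded by $\zeta^\Delta_{\MI}(\cN_1\otimes\cN_2)$ times its input counterpart. Pieces (b) and (c) are bounded by $\etaD_D(\cN_1,\sigma_1)$ and $\etaD_D(\cN_2,\sigma_2)$ times theirs. Each input piece is nonnegative by data processing under partial trace; for (a) this is data processing of mutual information under $\tr_{A_2}$ applied to $I(R_1A_1:A_2R_2)$. The three input pieces sum to the conditional denominator, so replacing each coefficient by the maximum of the three gives the claimed bound.

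\textbf{Degenerate pieces.} One point needs care: a piece whose input value is zero. In that case the corresponding output piece is also zero, again by data processing, so the SDPI bound holds trivially and no ratio is needed. I expect this bookkeeping, and the associated finiteness and support conditions, to be the main technical point. I would handle it by assuming $\supp(\rho)\subseteq\supp(\sigma_1\otimes\sigma_2)$, which implies the support conditions for every marginal term.

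\textbf{Complete case.} The argument for $\etaC_D$ is identical, now with the constraint $\rho_{R_1R_2}=\sigma_{R_1}\otimes\sigma_{R_2}$. This constraint forces $I(R_1:R_2)_\rho=0$ and $D(\rho_{R_i}\|\sigma_{R_i})=0$. As a result, the correlation piece becomes $I(R_1B_1:B_2R_2)$ over $I(R_1A_1:A_2R_2)$, and the state satisfies exactly the product-marginal constraint in $\zeta^c_{\MI}$. Moreover, each local piece involves $\rho_{R_iA_i}$ with $\rho_{R_i}=\sigma_{R_i}$, so it is admissible for $\etaC_D(\cN_i,\sigma_i)$. The same three-term maximum argument then yields the first upper bound.
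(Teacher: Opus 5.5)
Your proof is correct and takes the same route as the paper, which simply defers to the three-term chain-rule decomposition used for \cref{prop:almost-tensorization}. Your explicit handling of the subtracted reference terms, the nonnegativity of each input piece, and the degenerate zero-denominator pieces fills in what the paper's ``essentially identical'' leaves implicit. One small slip: $I(R_1A_1:A_2R_2)\geq I(R_1:R_2)$ needs data processing under both local partial traces $\tr_{A_1}$ and $\tr_{A_2}$, not only $\tr_{A_2}$, since $\tr_{A_2}$ alone yields only $I(R_1A_1:R_2)$.
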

\begin{proof}
    The proof is essentially identical to that for~\cref{prop:almost-tensorization}. 
\end{proof}
Note however that, by the counterexample of \cite{Cao2026noTensorization}, exact tensorization cannot hold in general also for these quantities because the choice of $\sigma_{RA}$ fixes the reference system and could potentially be chosen to be trivial. 
\begin{corollary}[Failure of tensorization for complete
and conditional SDPIs]
\label{Cor:no-tensorization-complete-conditional-SDPI}
There exist quantum channels
$\cN,\overline{\cN}\colon A\to B$ and a full-rank state
$\widehat{\sigma}_{A}$ such that
\begin{align}
    &\etaC_D\!\left(
        \cN\otimes\overline{\cN},
        \widehat{\sigma}_{A}\otimes\widehat{\sigma}_{A}
    \right) >
    \max\left\{
        \etaC_D(\cN,\widehat{\sigma}_{A}),
        \etaC_D(\overline{\cN},\widehat{\sigma}_{A})
    \right\},
    \label{eq:no-tensorization-complete-SDPI}
\end{align}
and
\begin{align}
    &\etaD_D\!\left(
        \cN\otimes\overline{\cN},
        \widehat{\sigma}_{A}\otimes\widehat{\sigma}_{A}
    \right) >
    \max\left\{
        \etaD_D(\cN,\widehat{\sigma}_{A}),
        \etaD_D(\overline{\cN},\widehat{\sigma}_{A})
    \right\}.
    \label{eq:no-tensorization-conditional-SDPI}
\end{align}
Here, the fixed states are understood
as states with one-dimensional reference systems. Consequently,
neither the complete nor the conditional
relative-entropy SDPI constant tensorizes in general.
\end{corollary}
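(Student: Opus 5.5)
The plan is to reduce both claims to the failure of tensorization of the ordinary relative-entropy SDPI constant established in \cite{Cao2026noTensorization}. That reference supplies channels $\cN,\overline{\cN}\colon A\to B$ and a full-rank state $\widehat{\sigma}_A$ with
\begin{align}
\eta_D(\cN\otimes\overline{\cN},\widehat{\sigma}_A\otimes\widehat{\sigma}_A) > \max\{\eta_D(\cN,\widehat{\sigma}_A),\eta_D(\overline{\cN},\widehat{\sigma}_A)\}.
\end{align}
The key observation is that when the fixed state has a one-dimensional reference system, the complete and conditional SDPI constants collapse to the ordinary one. Indeed, let $\sigma_{RA}=\sigma_A$ with $|R|=1$. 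Then the optimization is over $\rho_{RA}=\rho_A$, and $\rho_R=\sigma_R$ holds trivially, so the equal-marginal constraint is vacuous. Moreover $\DD(\rho_R\|\sigma_R)=D(1\|1)=0$. Hence
\begin{align}
\etaC_D(\cN,\sigma_A)=\etaD_D(\cN,\sigma_A)=\eta_D(\cN,\sigma_A).
\end{align}

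The steps are as follows. First, verify this collapse directly from \cref{Def:Gen-SDPI}, noting that the admissibility conventions (a strictly positive, finite denominator) match those of $\eta_D(\cN,\sigma_A)$. Second, apply the collapse to the single-channel terms on the right-hand sides of \eqref{eq:no-tensorization-complete-SDPI} and \eqref{eq:no-tensorization-conditional-SDPI}. Third, apply it to the product channel $\cN\otimes\overline{\cN}$, which has input $A_1A_2$ and fixed state $\widehat{\sigma}_A\otimes\widehat{\sigma}_A$, again with a trivial reference. This identifies both left-hand sides with $\eta_D(\cN\otimes\overline{\cN},\widehat{\sigma}_A\otimes\widehat{\sigma}_A)$. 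Combining these identifications with the cited strict inequality yields both claims at once.

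The only delicate point is the interpretation: the reference system of the product must also be trivial, rather than $R_1R_2$ with two trivial factors. These agree, since a tensor product of one-dimensional systems is one-dimensional. I also need to confirm that the counterexample of \cite{Cao2026noTensorization} uses a full-rank $\widehat{\sigma}_A$ and the same channel for both copies up to the bar, i.e., possibly two different channels. If instead it is stated with $\sigma$ fixed points or for $\eta_D$ optimized differently, I would simply restate it in the SDPI-constant form used above. Beyond this bookkeeping, the argument is immediate.
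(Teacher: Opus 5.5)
Your proposal is correct and matches the paper's proof essentially step for step. The paper cites the counterexample of \cite{Cao2026noTensorization} and observes that with a one-dimensional reference the equal-marginal constraint is vacuous and $D(\rho_R\|\sigma_R)=0$, so $\etaC_D(\cM,\sigma_A)=\etaD_D(\cM,\sigma_A)=\eta_D(\cM,\sigma_A)$ for $\cM\in\{\cN,\overline{\cN},\cN\otimes\overline{\cN}\}$.
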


\begin{proof}
Ref.~\cite[Proposition~4]{Cao2026noTensorization} showed that there
exist quantum channels $\cN,\overline{\cN}$ and a full-rank state
$\widehat{\sigma}_{A}$ such that
\begin{align}
    \eta_D\!\left(
        \cN\otimes\overline{\cN},
        \widehat{\sigma}_{A}\otimes\widehat{\sigma}_{A}
    \right)
    >
    \eta_D(\cN,\widehat{\sigma}_{A})
    =
    \eta_D(\overline{\cN},\widehat{\sigma}_{A}).
    \label{eq:Cao-relative-entropy-counterexample}
\end{align}

Now take the reference system $R$ to be one-dimensional. In this case,
the equal-reference-marginal condition in the definition of the
complete SDPI constant is vacuous, while
\begin{align}
    D(\rho_R\|\sigma_R)=0.
\end{align}
Therefore, for every channel $\cM$ and every state $\sigma_A$,
\begin{align}
    \etaC_D(\cM,\sigma_A)
    =
    \etaD_D(\cM,\sigma_A)
    =
    \eta_D(\cM,\sigma_A),
    \label{eq:trivial-reference-SDPI-identities}
\end{align}
where $\sigma_A$ is identified with the state on the joint system
$RA$ for a one-dimensional $R$.

Applying~\eqref{eq:trivial-reference-SDPI-identities} to
$\cM=\cN$, $\cM=\overline{\cN}$, and
$\cM=\cN\otimes\overline{\cN}$ in
\eqref{eq:Cao-relative-entropy-counterexample} proves both
\eqref{eq:no-tensorization-complete-SDPI} and
\eqref{eq:no-tensorization-conditional-SDPI}.
\end{proof}

Interestingly, the approximate tensorization results in \cref{prop:almost-tensorization-ca} are in contrast to the exact tensorization of $\eta_D^p$. We have seen earlier, in~\cref{Lem:MI-product-SDPI}, that adding an additional reference system akin to $\etaD_D$ it can require additional stabilization to recover favorable properties. In the following, we show that $\eta_D^{\Delta,\operatorname{stab}}$ indeed tensorizes. 

\begin{theorem}[Tensorization of stabilized conditional SDPI constants]
\label{Prop:tensorization-stabilized-conditional-SDPI}
Let $\cN_i\colon A_i\to B_i$ be quantum channels and let
$\sigma^{(i)}_{R_iA_i}$ be quantum states for $i\in\{1,2\}$. Then
\begin{equation}
    \eta_D^{\Delta,\operatorname{stab}}
    \!\left(
        \cN_1\otimes\cN_2,
        \sigma^{(1)}_{R_1A_1}\otimes
        \sigma^{(2)}_{R_2A_2}
    \right)
    =
    \max\left\{
        \eta_D^{\Delta,\operatorname{stab}}
            (\cN_1,\sigma^{(1)}_{R_1A_1}),
        \eta_D^{\Delta,\operatorname{stab}}
            (\cN_2,\sigma^{(2)}_{R_2A_2})
    \right\}.
    \label{eq:tensorization-stabilized-conditional-SDPI}
\end{equation}
Equivalently, by
\cref{Prop:CMI-stabilized-SDPI}, the CMI SDPI constant tensorizes:
\begin{equation}
    \eta_{\CMI}
    \!\left(
        \cN_1\otimes\cN_2,
        \sigma^{(1)}_{R_1A_1}\otimes
        \sigma^{(2)}_{R_2A_2}
    \right)
    =
    \max\left\{
        \eta_{\CMI}
            (\cN_1,\sigma^{(1)}_{R_1A_1}),
        \eta_{\CMI}
            (\cN_2,\sigma^{(2)}_{R_2A_2})
    \right\}.
    \label{eq:tensorization-fixed-state-CMI-SDPI}
\end{equation}
\end{theorem}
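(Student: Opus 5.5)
By \cref{Prop:CMI-stabilized-SDPI}, the stabilized conditional SDPI constant equals the CMI SDPI constant at the same fixed state. It therefore suffices to prove the CMI statement \eqref{eq:tensorization-fixed-state-CMI-SDPI}. I would work entirely with CMI, where the chain rule does the job the relative-entropy chain rule did in the tensorization proof for $\eta^p_D$.

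\textbf{The $\geq$ direction.} Take an admissible $\rho_{R_1C A_1}$ for $\eta_{\CMI}(\cN_1,\sigma^{(1)})$, so that $\rho_{R_1A_1}=\sigma^{(1)}_{R_1A_1}$. Extend it to $\rho_{R_1CA_1}\otimes\sigma^{(2)}_{R_2A_2}$. Its $R_1R_2A_1A_2$ marginal is the required product state. Since the second factor is uncorrelated with $C$ given $R_1R_2$, we have $I(C:A_1A_2|R_1R_2)=I(C:A_1|R_1)$, and likewise at the output. Hence the ratio is preserved, and symmetrically for $\cN_2$.

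\textbf{The $\leq$ direction.} Fix $\rho_{R_1R_2CA_1A_2}$ whose $R_1R_2A_1A_2$ marginal is $\sigma^{(1)}\otimes\sigma^{(2)}$. Write $R=R_1R_2$ and $\omega=(\cN_1\otimes\cN_2)(\rho)$. The CMI chain rule gives the output split
\begin{align}
I(C:B_1B_2|R)_\omega = I(C:B_1|R)_\omega + I(C:B_2|RB_1)_\omega .
\end{align}
Only $\cN_1$ acts in the first term, but the conditioning includes $R_2$. I would absorb $R_2$ into the auxiliary system by treating $CR_2$ as the new quantum system and $R_1$ as the reference. The marginal on $R_1A_1$ is still $\sigma^{(1)}$. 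Because $\sigma^{(1)}\otimes\sigma^{(2)}$ makes $R_2$ independent of $R_1A_1$, we get $I(R_2:A_1|R_1)=0$, and also $I(R_2:B_1|R_1)=0$ after the channel. The chain rule then gives
\begin{align}
I(C:B_1|R_1R_2)=I(CR_2:B_1|R_1)\le \eta_{\CMI}(\cN_1,\sigma^{(1)})\,I(CR_2:A_1|R_1)=\eta_{\CMI}(\cN_1,\sigma^{(1)})\,I(C:A_1|R).
\end{align}

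The second term is $I(C:B_2|R_1R_2B_1)$, in which only $\cN_2$ acts; the system $B_1$ is already the output of $\cN_1$. Here I would take the quantum auxiliary system to be $CR_1B_1$ and the reference to be $R_2$. The marginal on $R_2A_2$ is $\sigma^{(2)}$, and the state on $R_1B_1R_2A_2$ is $(\cN_1\otimes\id)(\sigma^{(1)})\otimes\sigma^{(2)}$, which is product. This gives $I(R_1B_1:A_2|R_2)=0$, so the term equals $I(CR_1B_1:B_2|R_2)$. Applying the definition of $\eta_{\CMI}(\cN_2,\sigma^{(2)})$ bounds it by $\eta_{\CMI}(\cN_2,\sigma^{(2)})\,I(C:A_2|RB_1)$ evaluated on $(\cN_1\otimes\id)(\rho)$. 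Data processing of CMI under $\cN_1$ acting on a conditioning system that is not $C$ is not automatic, so I would instead avoid it as follows. I would process $A_2$ before $A_1$ in the chain rule on the input side, comparing $I(C:A_1A_2|R)=I(C:A_1|R)+I(C:A_2|RA_1)$ with the mixed state $(\cN_1\otimes\id)(\rho)$. This uses the hybrid chain $I(C:B_1A_2|R)$ as an intermediate step, exactly as in the $\eta^p_D$ proof, where the channels were applied one at a time:
\begin{align}
I(C:B_1B_2|R) \le \max\{\eta_1,\eta_2\}\, I(C:B_1A_2|R) \le \max\{\eta_1,\eta_2\}\, I(C:A_1A_2|R).
\end{align}
For the first inequality, I split $I(C:B_1B_2|R)=I(C:B_1|R)+I(C:B_2|RB_1)$ and, against it, $I(C:B_1A_2|R)=I(C:B_1|R)+I(C:A_2|RB_1)$. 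Only the second terms differ, and these are compared using $\eta_2$ via the absorption above; the first terms are equal, and $\eta_2\le 1$ is used to weight them. For the second inequality, I order the chain with $A_2$ first: $I(C:A_2|R)+I(C:A_1|RA_2)$ is compared with $I(C:A_2|R)+I(C:B_1|RA_2)$. Here $\eta_1$ is applied with auxiliary system $CR_2A_2$ and reference $R_1$, whose marginal $(R_2A_2)\otimes(R_1A_1)$ is product.

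\textbf{Main obstacle.} The key difficulty is checking, in each application, that the enlarged auxiliary system carries zero CMI with the input given the reference. This is what the product structure of $\sigma^{(1)}\otimes\sigma^{(2)}$ provides, and it is exactly why stabilization, meaning a quantum auxiliary system of arbitrary size, is needed. I expect this bookkeeping to be the only delicate part; the rest follows from chain rules and $\eta_i\le1$.
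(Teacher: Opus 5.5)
\textbf{Verdict: there is a genuine gap in the $\leq$ direction.} Your reduction to the CMI statement, the $\geq$ direction, and both absorption bounds, $I(C:B_1|R_1R_2)\le\eta_1 I(C:A_1|R_1R_2)$ and $I(C:B_2|R_1R_2B_1)\le\eta_2 I(C:A_2|R_1R_2B_1)_{\cN_1(\rho)}$, are correct and match the paper. The problem is how you close the argument after that.

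The hybrid chain $I(C:B_1B_2|R)\le\max\{\eta_1,\eta_2\}\,I(C:B_1A_2|R)\le\max\{\eta_1,\eta_2\}\,I(C:A_1A_2|R)$ cannot carry a contraction factor in each step. Each step processes only one input and leaves untouched the correlations already carried by the other. Concretely, take any admissible $\rho=\rho_{CR_1A_1}\otimes\sigma^{(2)}_{R_2A_2}$. Then $I(C:B_1B_2|R)=I(C:B_1A_2|R)=I(C:B_1|R_1)$. Your first inequality would force $I(C:B_1|R_1)\le\max\{\eta_1,\eta_2\}\,I(C:B_1|R_1)$. This fails whenever $\max\{\eta_1,\eta_2\}<1$ and $I(C:B_1|R_1)>0$, for example with depolarizing channels. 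If $\eta_1$ is meant to produce a factor in the second step, the choice $\rho=\sigma^{(1)}_{R_1A_1}\otimes\rho_{CR_2A_2}$ breaks that step in the same way. Your justification also uses $\eta_2\le1$ in the wrong direction. Applied to $I(C:B_1|R)+\eta_2 I(C:A_2|RB_1)$, it yields only the plain data-processing bound $I(C:B_1B_2|R)\le I(C:B_1A_2|R)$. The term $I(C:B_1|R)$ never acquires the factor $\max\{\eta_1,\eta_2\}$.

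What is missing is one more use of your absorption trick, this time to move $B_1$ \emph{out} of the conditioning. Then the data processing you rightly worried about becomes ordinary data processing.
\begin{itemize}
\item The $R_1R_2B_1A_2$ marginal of $\cN_1(\rho)$ is $\cN_1(\sigma^{(1)})\otimes\sigma^{(2)}$. So $I(B_1:A_2|R_1R_2)=0$, and hence $I(C:A_2|R_1R_2B_1)=I(CB_1:A_2|R_1R_2)$.
\item Now $B_1$ sits in an unconditioned argument, so data processing under $\cN_1$ gives $I(CB_1:A_2|R_1R_2)\le I(CA_1:A_2|R_1R_2)$.
\item Since $I(A_1:A_2|R_1R_2)=0$, the right-hand side equals $I(C:A_2|R_1R_2A_1)$.
\item Summing with your first bound gives $I(C:B_1B_2|R)\le\eta_1 I(C:A_1|R)+\eta_2 I(C:A_2|RA_1)\le\max\{\eta_1,\eta_2\}\,I(C:A_1A_2|R)$.
\end{itemize}
This is exactly the paper's proof. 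The two factors are combined additively over a single input chain-rule decomposition, as in the $\eta^p_D$ tensorization proof, rather than multiplied through a hybrid state.

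Your own pieces can also be assembled additively. One has $I(C:B_1B_2|R)\le(1-\eta_2)I(C:B_1|R)+\eta_2 I(C:B_1A_2|R)$, and you can then apply your $A_2$-first bound to $I(C:B_1A_2|R)$. Closing that route still needs $I(C:A_1|R)\le I(C:A_1|RA_2)$. That inequality comes from the same vanishing of $I(A_1:A_2|R)$.
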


\begin{proof}
By~\cref{Prop:CMI-stabilized-SDPI}, it suffices to prove
\eqref{eq:tensorization-fixed-state-CMI-SDPI}. Set
\begin{align}
    \eta_i
    \coloneqq
    \eta_{\CMI}
        (\cN_i,\sigma^{(i)}_{R_iA_i}),
    \qquad
    \eta\coloneqq\max\{\eta_1,\eta_2\}.
\end{align}

Let $\rho_{CR_1R_2A_1A_2}$ be an arbitrary state satisfying
\begin{align}
    \rho_{R_1R_2A_1A_2}
    =
    \sigma^{(1)}_{R_1A_1}
    \otimes
    \sigma^{(2)}_{R_2A_2}.
    \label{eq:product-fixed-marginal-CMI-tensorization}
\end{align}
Denote by
\begin{align}
    \omega_{CR_1R_2B_1A_2}
    &\coloneqq
    \cN_1(\rho_{CR_1R_2A_1A_2}),\\
    \nu_{CR_1R_2B_1B_2}
    &\coloneqq
    (\cN_1\otimes\cN_2)
        (\rho_{CR_1R_2A_1A_2}),
\end{align}
i.e., 
the states after the first channel and after both channels,
respectively.
By the chain rule for conditional mutual information,
\begin{align}
    I(C:B_1B_2|R_1R_2)_\nu
    &=
    I(C:B_1|R_1R_2)_\omega
    +
    I(C:B_2|R_1R_2B_1)_\nu.
    \label{eq:CMI-output-chain-tensorization}
\end{align}
The product condition
\eqref{eq:product-fixed-marginal-CMI-tensorization} implies
\begin{align}
    I(R_2:A_1|R_1)_\rho=0,
    \qquad
    I(R_2:B_1|R_1)_\omega=0.
\end{align}
Consequently, using the conditional mutual information chain rule \eqref{eq:CMI-to-CMI-chain-rule} twice,
\begin{align}
    I(C:B_1|R_1R_2)_\omega
    &=
    I(CR_2:B_1|R_1)_\omega\\
    &\leq
    \eta_1 I(CR_2:A_1|R_1)_\rho\\
    &=
    \eta_1 I(C:A_1|R_1R_2)_\rho.
    \label{eq:first-channel-CMI-tensorization}
\end{align}
For the second term, regard $CR_1B_1$ as the auxiliary system and
$R_2$ as the reference system for $\cN_2$. Again,
\eqref{eq:product-fixed-marginal-CMI-tensorization} implies
\begin{align}
    I(R_1B_1:A_2|R_2)_\omega=0,
    \qquad
    I(R_1B_1:B_2|R_2)_\nu=0.
\end{align}
It follows that
\begin{align}
    I(C:B_2|R_1R_2B_1)_\nu
    &=
    I(CR_1B_1:B_2|R_2)_\nu\\
    &\leq
    \eta_2 I(CR_1B_1:A_2|R_2)_\omega\\
    &=
    \eta_2 I(C:A_2|R_1R_2B_1)_\omega.
    \label{eq:second-channel-CMI-tensorization-pre-DPI}
\end{align}
The remaining CMI can be related to the original input state. Indeed,
the product marginal in
\eqref{eq:product-fixed-marginal-CMI-tensorization} gives
\begin{align}
    I(A_1:A_2|R_1R_2)_\rho=0,
    \qquad
    I(B_1:A_2|R_1R_2)_\omega=0.
\end{align}
Using the chain rule and data processing under
$\cN_1\colon A_1\to B_1$, we obtain
\begin{align}
    I(C:A_2|R_1R_2B_1)_\omega
    &=
    I(CB_1:A_2|R_1R_2)_\omega\\
    &\leq
    I(CA_1:A_2|R_1R_2)_\rho\\
    &=
    I(C:A_2|R_1R_2A_1)_\rho.
    \label{eq:second-channel-CMI-tensorization-DPI}
\end{align}
In summary, we find
\begin{align}
    I(C:B_1B_2|R_1R_2)_\nu &= I(C:B_1|R_1R_2)_\omega
    +
    I(C:B_2|R_1R_2B_1)_\nu \\
    &\leq
    \eta_1 I(C:A_1|R_1R_2)_\rho +
    \eta_2 I(C:A_2|R_1R_2A_1)_\rho\\
    &\leq
    \eta\left[
        I(C:A_1|R_1R_2)_\rho
        +
        I(C:A_2|R_1R_2A_1)_\rho
    \right]\\
    &=
    \eta I(C:A_1A_2|R_1R_2)_\rho,
    \label{eq:CMI-tensorization-upper-bound}
\end{align}
where the first equality follows from the chain rule in~\eqref{eq:CMI-output-chain-tensorization}, the first inequality follows from combining
\eqref{eq:first-channel-CMI-tensorization},
\eqref{eq:second-channel-CMI-tensorization-pre-DPI}, and
\eqref{eq:second-channel-CMI-tensorization-DPI}, and the final equality again follows from the chain rule.
Since $\rho_{CR_1R_2A_1A_2}$ is arbitrary, this proves
\begin{align}
    \eta_{\CMI}
    \!\left(
        \cN_1\otimes\cN_2,
        \sigma^{(1)}_{R_1A_1}\otimes
        \sigma^{(2)}_{R_2A_2}
    \right)
    \leq
    \max\{\eta_1,\eta_2\}.
    \label{eq:CMI-tensorization-upper}
\end{align}

The converse inequality follows by the same arguments given in the proof of \cref{lem:tensorization-ineqs-lem}, which involve choosing appropriate product states to satisfy the lower bound. 
Together with~\eqref{eq:CMI-tensorization-upper}, this proves
\eqref{eq:tensorization-fixed-state-CMI-SDPI}. The stabilized
conditional-SDPI identity
\eqref{eq:tensorization-stabilized-conditional-SDPI} then follows from
\cref{Prop:CMI-stabilized-SDPI}. 
\end{proof}

\begin{figure}[t]
\centering
\begin{tikzpicture}[
    base/.style={
        draw,
        rounded corners=2pt,
        text width=4.7cm,
        minimum height=1.35cm,
        inner sep=5pt,
        align=center,
        font=\small,
        line width=0.6pt
    },
    yes/.style={
        base,
        draw=green!55!black,
        fill=green!5
    },
    no/.style={
        base,
        draw=red!55!black,
        fill=red!5
    },
    heading/.style={
        font=\small\bfseries,
        align=center
    },
    relation/.style={
        fill=white,
        inner sep=1.5pt,
        font=\small
    }
]


\node[heading] at (-2.9,3.05)
    {Non-stabilized SDPI};

\node[heading] at (2.9,3.05)
    {Stabilized SDPI};


\node[no] (standard) at (-2.9,1.8)
{
    $\eta_D(\cN,\sigma_A)$\\
    $=\eta_{\cqMI}(\cN,\sigma_A)$\\[1mm]
    {\scriptsize\bfseries Does not tensorize in general}
};

\node[yes] (product) at (2.9,1.8)
{
    $\eta_D^p(\cN,\sigma_A)$\\
    $=\eta_{\MI}(\cN,\sigma_A)$\\[1mm]
    {\scriptsize\bfseries Exact tensorization}
};


\node[no] (complete) at (-2.9,0)
{
    $\etaC_D(\cN,\sigma_{RA})$\\
    $=\eta_{\operatorname{cqMIR}}(\cN,\sigma_{RA})$\\[1mm]
    {\scriptsize\bfseries Does not tensorize in general}
};

\node[no] (complete-stab) at (2.9,0)
{
    $\eta_D^{\mathrm{c,stab}}(\cN,\sigma_{RA})$\\
    $=\eta_{\operatorname{MIR}}(\cN,\sigma_{RA})$\\[1mm]
    {\scriptsize\bfseries Does not tensorize in general}
};


\node[no] (conditional) at (-2.9,-1.8)
{
    $\etaD_D(\cN,\sigma_{RA})$\\
    $=\eta_{\operatorname{cqCMI}}(\cN,\sigma_{RA})$\\[1mm]
    {\scriptsize\bfseries Does not tensorize in general}
};

\node[yes] (conditional-stab) at (2.9,-1.8)
{
    $\eta_D^{\Delta,\operatorname{stab}}
        (\cN,\sigma_{RA})$\\
    $=\eta_{\CMI}(\cN,\sigma_{RA})$\\[1mm]
    {\scriptsize\bfseries Exact tensorization}
};


\node[relation] at (0,1.8) {$\leq$};

\node[relation] at (0,0) {$\leq$};

\node[relation] at (0,-1.8) {$\leq$};

\end{tikzpicture}

\caption{
Status of tensorization of the relative-entropy SDPI constants considered
in this work. Stabilization cannot decrease the corresponding
coefficient. Green entries tensorize exactly for fixed product states and
red entries do not tensorize in general. The ordinary SDPI nevertheless
tensorizes under the sufficient conditions established in
\cref{Cor:tensorize2}. \cref{Sec:no-tensor-for-stabC} features a proof that the stabilized complete SDPI does not tensorize in general. For all quantities on the left side, we have an approximate  tensorization result. 
}
\label{fig:SDPI-tensorization-summary}
\end{figure}

Finally, the stabilized complete SDPI constant also does not tensorize
in general. We provide an explicit counterexample in
~\cref{Sec:no-tensor-for-stabC}. We summarize the resulting
tensorization picture in~\cref{fig:SDPI-tensorization-summary}. 


\section{Trace distance and hockey-stick divergences}\label{Sec:Hockey-Stick}

In this section, we focus on the quantum hockey-stick divergence and, in particular, the trace distance as a special case. The hockey-stick divergence is defined as~\cite{sharma2012strongconversesquantumchannel} 
\begin{align}
    E_\gamma(\rho\|\sigma) &\coloneqq  \tr(\rho-\gamma\sigma)_+ \\
    &= \sup_{0 \leq M \leq I} \tr[M(\rho-\gamma\sigma)], 
\end{align}
where $(A)_+$ denotes the positive part of a Hermitian matrix $A$.
The trace distance then corresponds to the above definition for $\gamma=1$.

One feature that makes the hockey-stick divergence particularly interesting here is that, for $\gamma\geq1$, its contraction coefficient simplifies to an optimization over orthogonal pure states~\cite{ruskai1994beyond,hirche2022quantum},
\begin{align}
\label{Eq:Ey-orthogonal-pure}
    \eta_\gamma(\cN) = \sup_{\Phi_1\perp\Phi_2} E_\gamma(\cN(\Phi_1)\|\cN(\Phi_2)),
\end{align}
where $\Phi_1$ and $\Phi_2$ are pure states.
It will be instructive for what follows to provide a brief proof of this result. In fact, the following proof is significantly simpler than the previous proof in~\cite[Theorem II.2]{hirche2022quantum} and is a direct consequence of the following auxiliary lemma.
\begin{lemma}
\label{lem:hockey-stick-orthogonal}
    For all quantum states $\rho$ and $\sigma$ and $\gamma\geq1$, there exist orthogonal states $\tau_+$ and $\tau_-$, such that
    \begin{align}
        \rho-\gamma\sigma\leq\lambda_+ (\tau_+-\gamma\tau_-),
    \end{align}
    where $\lambda_+=E_\gamma(\rho\|\sigma)$.
\end{lemma}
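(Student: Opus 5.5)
The plan is to decompose the Hermitian operator $X \coloneqq \rho - \gamma\sigma$ by its Jordan decomposition and rescale the two parts into states. Write $X = P - N$ with $P, N \geq 0$ of orthogonal support, where $P = (\rho-\gamma\sigma)_+$ and $N = (\rho-\gamma\sigma)_-$. By definition, $\tr P = E_\gamma(\rho\|\sigma) = \lambda_+$. Taking the trace of $X$ gives
\begin{align}
\tr P - \tr N = 1 - \gamma,
\end{align}
so that $\tr N = \lambda_+ + \gamma - 1$. Because $\gamma \geq 1$, this yields $\tr N \geq \gamma\lambda_+$, which is the key inequality.

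If $\lambda_+ > 0$, I set $\tau_+ \coloneqq P/\lambda_+$ and $\tau_- \coloneqq N/\tr N$. These are states with orthogonal supports. Then
\begin{align}
\lambda_+(\tau_+ - \gamma\tau_-) = P - \frac{\gamma\lambda_+}{\tr N}\, N \geq P - N = \rho - \gamma\sigma,
\end{align}
where the inequality uses $\gamma\lambda_+/\tr N \leq 1$ and $N \geq 0$. If $\tr N = 0$, then the trace identity forces $\gamma = 1$ and $\lambda_+ = 0$, which is covered by the degenerate case below.

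The degenerate case $\lambda_+ = 0$ means $\rho \leq \gamma\sigma$. The claim then reduces to $\rho - \gamma\sigma \leq 0$, which holds for any choice of orthogonal states $\tau_\pm$. For example, one may take two orthogonal basis states, assuming the dimension is at least two. If the space is one-dimensional, then $\rho = \sigma$ and the claim is trivial with any convention for $\tau_\pm$, but I would add a remark that orthogonality requires dimension at least two.

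I expect no real obstacle here; the only subtle point is using $\gamma \geq 1$ to guarantee $\tr N \geq \gamma \lambda_+$. This is exactly where the hypothesis on $\gamma$ enters, and it explains why the orthogonal reduction in~\eqref{Eq:Ey-orthogonal-pure} is stated only for $\gamma \geq 1$.
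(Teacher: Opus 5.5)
Your proof is correct and is essentially the paper's argument: Jordan--Hahn decomposition of $\rho-\gamma\sigma$, the trace identity $\lambda_+ - \tr N = 1-\gamma$, and the bound $\tr N \geq \gamma\lambda_+$, which lets you replace the coefficient $\tr N/\lambda_+$ of $\tau_-$ by $\gamma$. One small point: since $\tr N - \gamma\lambda_+ = (\gamma-1)(1-\lambda_+)$, the key inequality also uses $\lambda_+\leq 1$ (true because $E_\gamma(\rho\|\sigma)\leq \tr\rho = 1$), not only $\gamma\geq 1$, and the paper states this explicitly.
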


\begin{proof}
The case $\lambda_+=0$ is immediate, so suppose $\lambda_+>0$. We can use the Jordan--Hahn decomposition
\begin{align}
    \rho-\gamma\sigma &= X_+ - X_- \\
    &= \lambda_+ \tau_+ - \lambda_- \tau_-, \label{eq:ortho-state-decomp}
\end{align}
where $\lambda_+=\Tr X_+ = E_\gamma(\rho\|\sigma)$, $\lambda_-=\Tr X_-$, $\tau_\pm=\frac{X_\pm}{\lambda_\pm}$, and $X_+ X_{-}=0$.  
The crucial step here is the following observation: 
\begin{align}
    1-\gamma & = \lambda_+ - \lambda_- \\
    \Leftrightarrow \qquad \frac{\lambda_-}{\lambda_+} & = 1 - \frac{1-\gamma}{\lambda_+} \geq \gamma \\
    \Rightarrow \qquad  \lambda_- & \geq \gamma\lambda_+, 
\end{align}
where the inequality in the second line follows because  $\gamma\geq1$ and $\lambda_+\leq1$. 
This directly implies that
\begin{align}
    \rho-\gamma\sigma &= \lambda_+ \left(\tau_+ - \frac{\lambda_-}{\lambda_+} \tau_-\right) \leq \lambda_+ \left( \tau_+ - \gamma\tau_-\right),
\end{align}
thus concluding the proof.
\end{proof}

\cref{lem:hockey-stick-orthogonal} has a number of implications. First, let $\rho$ and $\sigma$ be pure states; then, as a special case, we recover~\cite[Lemma 2]{nuradha2024contraction}. 
Next, it implies the following inequality for all $\gamma \geq 1$: 
\begin{align}
    E_\gamma(\cN(\rho)\|\cN(\sigma)) \leq   E_\gamma(\rho\|\sigma)\sup_{\tau_1\perp\tau_2} E_\gamma(\cN(\tau_1)\|\cN(\tau_2)),  
    \label{eq:hockey-stick-ortho-channels}
\end{align}
which follows because
\begin{align}
\rho-\gamma\sigma& \leq E_\gamma(\rho\|\sigma) (\tau_+-\gamma\tau_-) \\
\implies \quad \cN (\rho)-\gamma \cN(\sigma)& \leq E_\gamma(\rho\|\sigma) (\cN(\tau_+)-\gamma \cN(\tau_-))\\
\implies \quad \Tr[M (\cN (\rho)-\gamma \cN(\sigma)]& \leq E_\gamma(\rho\|\sigma) \Tr[M(\cN(\tau_+)-\gamma \cN(\tau_-))] \quad \forall \ 0 \leq M \leq I\\
\implies E_\gamma(\cN(\rho)\|\cN(\sigma)) & \leq E_\gamma(\rho\|\sigma)\sup_{\tau_1\perp\tau_2} E_\gamma(\cN(\tau_1)\|\cN(\tau_2)).
\end{align}
Together with convexity, \cref{eq:hockey-stick-ortho-channels}
 recovers the known result in~\eqref{Eq:Ey-orthogonal-pure}. 

It turns out that, in the setting with reference systems, this approach does not quite work for general $\gamma$. However, for $\gamma=1$, the approach remains useful. Before presenting the results, we prove the following meta-theorem. 

\begin{theorem}\label{Thm:Meta-1}
    For every pair of states $\rho_{RA}$ and $\sigma_{RA}$ satisfying $\rho_R=\sigma_R$, and for every pair of channels $\mathcal{N}\colon A\to B$ and $\mathcal{M}\colon A\to B'$, there exists a pair of orthogonal states $\tau_{RA}$ and $\bar\tau_{RA}$ with $\tau_{R}=\bar\tau_{R}$, such that 
    \begin{align}
        \frac{E_1(\cN(\rho_{RA})\|\cN(\sigma_{RA}))}{E_1(\cM(\rho_{RA})\|\cM(\sigma_{RA}))} = \frac{E_1(\cN(\tau_{RA})\|\cN(\bar\tau_{RA}))}{E_1(\cM(\tau_{RA})\|\cM(\bar\tau_{RA}))}, \label{Eq:Meta-1}
    \end{align}
    and, similarly, for every pair of states $\rho_{RA}$ and $\sigma_{RA}$, there exists a pair of orthogonal states $\tau_{RA}$ and $\bar\tau_{RA}$, such that
    \begin{align}
        &\frac{E_1(\cN(\rho_{RA})\|\cN(\sigma_{RA})) - E_1(\rho_R\|\sigma_R)}{E_1(\cM(\rho_{RA})\|\cM(\sigma_{RA})) - E_1(\rho_R\|\sigma_R)} = \frac{E_1(\cN(\tau_{RA})\|\cN(\bar\tau_{RA})) - E_1(\tau_R\|\bar\tau_R)}{E_1(\cM(\tau_{RA})\|\cM(\bar\tau_{RA})) - E_1(\tau_R\|\bar\tau_R)} . \label{Eq:Meta-2}
    \end{align}
\end{theorem}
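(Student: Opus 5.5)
The plan is to use the Jordan--Hahn decomposition from the proof of \cref{lem:hockey-stick-orthogonal}, specialized to $\gamma=1$. In that case the inequality there becomes an exact equality, and every trace-distance quantity in \eqref{Eq:Meta-1} and \eqref{Eq:Meta-2} then scales by the same factor. Throughout I assume $\rho_{RA}\neq\sigma_{RA}$, since otherwise the denominators vanish and there is nothing to prove; this also covers the degenerate case $\lambda=0$ below.

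\textbf{Step 1 (decomposition).} Write $\rho_{RA}-\sigma_{RA}=X_+-X_-$ with $X_\pm\geq 0$ and $X_+X_-=0$. Both states have unit trace, so $\lambda_+=\Tr X_+=\Tr X_-=\lambda_-$, and this common value is $\lambda\coloneqq E_1(\rho_{RA}\|\sigma_{RA})>0$. Define
\begin{align}
\tau_{RA}\coloneqq X_+/\lambda, \qquad \bar\tau_{RA}\coloneqq X_-/\lambda .
\end{align}
These are orthogonal states, and they satisfy $\rho_{RA}-\sigma_{RA}=\lambda(\tau_{RA}-\bar\tau_{RA})$ with equality, not merely the operator inequality of \cref{lem:hockey-stick-orthogonal}.

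\textbf{Step 2 (homogeneity of $E_1$).} For any two states $\omega,\omega'$ we have $E_1(\omega\|\omega')=\Tr(\omega-\omega')_+=\tfrac12\|\omega-\omega'\|_1$. So $E_1$ depends only on the difference $\omega-\omega'$ and is positively homogeneous in it.

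\textbf{Step 3 (apply to each term).} The maps $\id\otimes\cN$, $\id\otimes\cM$ and $\Tr_A$ are linear and send states to states. Applying them to the identity from Step 1 gives
\begin{align}
\cN(\rho_{RA})-\cN(\sigma_{RA})&=\lambda\big(\cN(\tau_{RA})-\cN(\bar\tau_{RA})\big),
\end{align}
and the analogous identities for $\cM$ and for the $R$-marginals. By Step 2 each of the following quantities equals $\lambda$ times the corresponding quantity for the pair $(\tau_{RA},\bar\tau_{RA})$:
\begin{align}
E_1(\cN(\rho_{RA})\|\cN(\sigma_{RA})),\qquad E_1(\cM(\rho_{RA})\|\cM(\sigma_{RA})),\qquad E_1(\rho_R\|\sigma_R).
\end{align}

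\textbf{Step 4 (conclude both claims).} In \eqref{Eq:Meta-2}, the numerator and the denominator are each differences of terms that all carry the common factor $\lambda$. Cancelling $\lambda$ gives the identity, and the denominator stays nonzero because $\lambda>0$. For \eqref{Eq:Meta-1}, the assumption $\rho_R=\sigma_R$ gives $\lambda(\tau_R-\bar\tau_R)=0$. Hence $\tau_R=\bar\tau_R$, which is the required equal-marginal property, and the same cancellation yields the ratio identity.

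There is no real obstacle in this argument. The one point to get right is that for $\gamma=1$ the decomposition is an exact identity, because $\lambda_+=\lambda_-$. This is exactly why the argument is restricted to the trace distance and does not carry over to general $\gamma>1$. The only other care needed is to exclude the trivial case $\rho_{RA}=\sigma_{RA}$.
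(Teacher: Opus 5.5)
Your proposal is correct and follows the paper's proof essentially step for step. Both use the $\gamma=1$ Jordan--Hahn decomposition, which gives the exact identity $\rho_{RA}-\sigma_{RA}=\lambda(\tau_{RA}-\bar\tau_{RA})$; this identity is pushed through $\id\otimes\cN$, $\id\otimes\cM$ and $\Tr_A$, the common factor $\lambda$ is cancelled, and the equal-marginal case follows from $\lambda(\tau_R-\bar\tau_R)=0$.

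One small precision: the new denominator is nonzero because the original denominator is nonzero and equals $\lambda$ times the new one. The fact $\lambda>0$ alone does not give this.
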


\begin{proof}
    The definitions of $\lambda_{\pm}$ for~\eqref{eq:ortho-state-decomp} imply for $\gamma = 1$ that $\lambda_+ = \lambda_-$. In this case, \cref{eq:ortho-state-decomp} simplifies to 
    \begin{align}
        \rho_{RA} - \sigma_{RA} = \lambda_{+}(\tau_{+}-\tau_{-}) \ ,\label{Eq:help1}
    \end{align}
    which we will use to prove both identities.
    We start by proving~\eqref{Eq:Meta-2}.
Because of the above,
\begin{align}
    \rho_R - \sigma_R &= \Tr_A (\rho_{RA} - \sigma_{RA}) = \lambda_+ \Tr_A[\tau_+ -  \tau_-] = \lambda_+\left(\tau_{+,R} - \tau_{-,R}\right),\label{Eq:help2}
\end{align}
where $\tau_{\pm,R} \coloneqq \Tr_A[\tau_{\pm}]$.
With this, we have, 
\begin{align}
\frac{E_1(\cN(\rho_{RA})\|\cN(\sigma_{RA})) - E_1(\rho_R\|\sigma_R)}{E_1(\cM(\rho_{RA})\|\cM(\sigma_{RA})) - E_1(\rho_R\|\sigma_R)} 
    & = \frac{\lambda_+ E_1(\cN(\tau_{+})\|\cN(\tau_{-})) - \lambda_+ E_1(\tau_{+,R}\|\tau_{-,R})}{\lambda_+ E_1(\cM(\tau_{+})\|\cM(\tau_{-})) - \lambda_+ E_1(\tau_{+,R}\|\tau_{-,R})} \\
    & = \frac{ E_1(\cN(\tau_{+})\|\cN(\tau_{-})) -  E_1(\tau_{+,R}\|\tau_{-,R})}{ E_1(\cM(\tau_{+})\|\cM(\tau_{-})) -  E_1(\tau_{+,R}\|\tau_{-,R})},
\end{align}
where, in the first equality, we have used~\eqref{Eq:help1} and~\eqref{Eq:help2} and the second equality follows by reducing the fraction to remove $\lambda_+$. This proves~\eqref{Eq:Meta-2}. 

Next, we show~\eqref{Eq:Meta-1}. Note that, if $\rho_{RA}$ and $\sigma_{RA}$ have the same marginals, then
\begin{align}
    0 = \Tr_A (\rho_{RA}-\sigma_{RA}) = \lambda_+ \Tr_A(\tau_+ - \tau_-),
\end{align}
which implies that $\Tr_A\tau_+=\Tr_A\tau_-$ and hence $E_1(\tau_{+,R}\|\tau_{-,R})=0$. Plugging this into~\eqref{Eq:Meta-2} completes the proof.  
\end{proof}

With this result, it is now straightforward to prove simplified expressions for the trace distance contraction coefficients. 
\begin{proposition}\label{Prop:sup-ort-enough}
For a channel $\mathcal{N}\colon A\to B$, the coefficients $\etaC_{\Tr}(\cN)$ and $\etaD_{\Tr}(\cN)$ are optimized by orthogonal states, so that
    \begin{align}
        \etaC_{\Tr}(\cN) &= \sup_{\substack{\tau_{RA} \perp \bar\tau_{RA},\\ \tau_R=\bar\tau_R}} E_1(\cN(\tau_{RA})\|\cN(\bar \tau_{RA})), \label{eq:orthogonal-complete-cc}\\
        \etaD_{\Tr}(\cN) &= \sup_{\substack{\tau_{RA} \perp \bar\tau_{RA}}} \frac{E_1(\cN(\tau_{RA})\|\cN(\bar\tau_{RA})) - E_1(\tau_R\|\bar\tau_R)}{1- E_1(\tau_R\|\bar\tau_R)}.
        \label{eq:orthogonal-conditional-cc}
\end{align}
\end{proposition}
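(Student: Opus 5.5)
The plan is to apply \cref{Thm:Meta-1} with the second channel $\cM=\id$. The trace distance $E_1$ is the hockey-stick divergence at $\gamma=1$, which is the divergence underlying $\etaC_{\Tr}$ and $\etaD_{\Tr}$. Each identity then needs two inequalities. The ``$\geq$'' direction is immediate, because orthogonal pairs are admissible in the original suprema. The ``$\leq$'' direction comes from the meta-theorem, which replaces an arbitrary admissible pair by an orthogonal pair with the same ratio.

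For \eqref{eq:orthogonal-complete-cc}, fix any admissible pair $\rho_{RA}\neq\sigma_{RA}$ with $\rho_R=\sigma_R$ and $E_1(\rho_{RA}\|\sigma_{RA})>0$. The first part of \cref{Thm:Meta-1} with $\cM=\id$ yields orthogonal $\tau_{RA},\bar\tau_{RA}$ with $\tau_R=\bar\tau_R$ and the same ratio. The reduction then uses two facts.
\begin{itemize}
\item The states $\tau_{RA}$ and $\bar\tau_{RA}$ are the normalized positive and negative parts from the Jordan--Hahn decomposition used in the proof of \cref{Thm:Meta-1}.
\item Orthogonality gives $E_1(\tau_{RA}\|\bar\tau_{RA})=\tfrac12\|\tau-\bar\tau\|_1=1$, so the denominator is $1$.
\end{itemize}
Hence every ratio in \eqref{eq:Comp_CC} is bounded by $E_1(\cN(\tau_{RA})\|\cN(\bar\tau_{RA}))$ for some orthogonal pair with equal marginals. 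Conversely, such orthogonal pairs are admissible in \eqref{eq:Comp_CC}, and for them the ratio equals this same quantity. Taking suprema gives equality.

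For \eqref{eq:orthogonal-conditional-cc}, start from an admissible pair for \eqref{Eq:conditional-def} with positive denominator $E_1(\rho_{RA}\|\sigma_{RA})-E_1(\rho_R\|\sigma_R)>0$. The second part of \cref{Thm:Meta-1} with $\cM=\id$ gives orthogonal $\tau_{RA},\bar\tau_{RA}$ with the same conditional ratio, where the denominator becomes $E_1(\tau_{RA}\|\bar\tau_{RA})-E_1(\tau_R\|\bar\tau_R)=1-E_1(\tau_R\|\bar\tau_R)$. The reverse inequality again follows because orthogonal pairs with $1-E_1(\tau_R\|\bar\tau_R)>0$ are feasible in \eqref{Eq:conditional-def}.

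The main point to check is admissibility. In the proof of \cref{Thm:Meta-1} the new denominator equals the old one divided by $\lambda_+$, so positivity and finiteness transfer, and the orthogonal pair is admissible on both sides. This also excludes the degenerate case $E_1(\tau_R\|\bar\tau_R)=1$ from the supremum. One must also confirm that the reference system $R$ is unchanged by the construction, so that both suprema range over the same $|R|$. Both checks are routine, so no real obstacle is expected.
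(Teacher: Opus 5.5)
Your proposal is correct and matches the paper's proof: apply \cref{Thm:Meta-1} with $\cM=\id$, use that orthogonal states give denominator $E_1(\tau_{RA}\|\bar\tau_{RA})=1$, and compare the suprema in both directions. Your admissibility checks spell out details the paper leaves implicit, namely that the denominator scales by $\lambda_+>0$ and that the construction keeps the same reference system $R$.
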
 

\begin{proof}
    Both claims directly follow by one of the results in~\cref{Thm:Meta-1} each, by choosing $\cM=\id$, taking the supremum over all states on the right side, and then on the left. 
\end{proof}

Notably, the optimization here is over mixed orthogonal states, in contrast to pure orthogonal states as in \eqref{Eq:Ey-orthogonal-pure}. It is not clear if the optimizations can be further simplified to be over pure orthogonal states. 

We next prove that similar results hold for expansion coefficients.

\begin{proposition} For a channel $\cN\colon A\to B$, 
    \begin{align}
         \cetaC_{\Tr}(\cN) &= \inf_{\substack{\tau_{RA} \perp \bar\tau_{RA},\\ \tau_R=\bar\tau_R}} E_1(\cN(\tau_{RA})\|\cN(\bar \tau_{RA})), \label{eq:orthogonal-complete-ec}\\
        \cetaD_{\Tr}(\cN) &= \inf_{\substack{\tau_{RA} \perp \bar\tau_{RA}}} \frac{E_1(\cN(\tau_{RA})\|\cN(\bar\tau_{RA})) - E_1(\tau_R\|\bar\tau_R)}{1- E_1(\tau_R\|\bar\tau_R)}.
        \label{eq:orthogonal-conditional-ec}
    \end{align}
\end{proposition}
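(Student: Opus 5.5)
The plan is to mirror the proof of \cref{Prop:sup-ort-enough}, taking infima instead of suprema, and to rely again on \cref{Thm:Meta-1} with $\cM=\id$. The meta-theorem is stated for arbitrary pairs of states and asserts an \emph{equality} of ratios, so it serves an infimum just as well as a supremum.

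For the conditional expansion coefficient, fix any admissible pair $\rho_{RA},\sigma_{RA}$, i.e., one with $E_1(\rho_{RA}\|\sigma_{RA})-E_1(\rho_R\|\sigma_R)>0$. Apply~\eqref{Eq:Meta-2} with $\cM=\id$. This produces orthogonal $\tau_{RA}\perp\bar\tau_{RA}$ with the same ratio. For orthogonal states, $E_1(\tau_{RA}\|\bar\tau_{RA})=1$, so the denominator becomes $1-E_1(\tau_R\|\bar\tau_R)$. Hence every value attained in the infimum defining $\cetaD_{\Tr}(\cN)$ is also attained by an orthogonal pair, which gives $\geq$ in~\eqref{eq:orthogonal-conditional-ec}. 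The reverse inequality $\leq$ is immediate, since orthogonal pairs are a subset of all pairs.

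One must check that the orthogonal pair is itself admissible. From the proof of \cref{Thm:Meta-1}, the ratio's numerator and denominator are the originals divided by $\lambda_+>0$. The denominator therefore stays strictly positive, and finite since $E_1\le 1$. For~\eqref{eq:orthogonal-complete-ec}, use~\eqref{Eq:Meta-1} in the same way. The condition $\rho_R=\sigma_R$ yields $\tau_R=\bar\tau_R$, and the denominator $E_1(\tau_{RA}\|\bar\tau_{RA})$ equals $1$.

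The only point needing care is that the reference dimension $|R|$ is unchanged: $\tau_\pm$ live on the same $RA$ as $\rho,\sigma$. The infimum over $|R|$ is therefore untouched, and no real obstacle is expected. The argument is essentially a verbatim repetition of that for \cref{Prop:sup-ort-enough}.
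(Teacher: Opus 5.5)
Your proposal is correct and follows essentially the same route as the paper: apply \cref{Thm:Meta-1} with $\cM=\id$, use $E_1(\tau_{RA}\|\bar\tau_{RA})=1$ for orthogonal states, and compare the infima over all pairs and over orthogonal pairs, which attain the same set of ratio values. Your extra check is a worthwhile detail that the paper leaves implicit: the orthogonal pair inherits a strictly positive, finite denominator, because the original denominator is just divided by $\lambda_+>0$.
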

\begin{proof}
    Both claims directly follow by one of the results in~\cref{Thm:Meta-1} each, by choosing $\cM=\id$, taking the infimum over all states on the right side and then on the left. 
\end{proof}
Note that also the complete trace distance expansion coefficient is often non-trivial, because it can be lower bounded by a so-called reverse Doeblin coefficient as shown in~\cite[Proposition 33]{george2025quantumdoeblincoefficientsinterpretations}. This quantity is an SDP and can easily be checked for numerous examples. However, whether the same bound holds for the conditional expansion coefficient is currently unknown. 

\begin{remark}
    In the setting without reference systems, the restriction to orthogonal states in the optimization for all $\gamma\geq1$ has the useful consequence that one can easily prove that $\eta_\gamma(\cN)$ is non-increasing in $\gamma$. The absence of a similar result in the settings with reference systems raises the question whether the monotonicity does still hold here. We leave this question open for now, but note that some preliminary numerical evidence suggests that monotonicity in $\gamma$ might not hold in the presence of reference systems. 
\end{remark}

Finally, we remark that, by~\cref{Thm:Meta-1}, it also suffices to optimize over orthogonal states for the relative contraction coefficient and the relative expansion coefficient. However,  this observation does not simplify the expression by much because the channel $\cM$ is generally arbitrary. 

\subsection{Separating complete and  conditional contraction coefficients}

In this subsection, we explore the general question of whether $\etaC_{\DD}(\cN)$ is strictly smaller than $\etaD_{\DD}(\cN)$ by presenting an affirmative answer when the divergence is the hockey-stick divergence.

\begin{proposition}
For the hockey-stick divergence with $\gamma >1$, there exists a classical channel $\cN \equiv W_{Y|X}$ such that $\etaD_{E_\gamma}(\cN) > \etaC_{E_\gamma}(\cN)$.     
\end{proposition}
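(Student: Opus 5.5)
The plan is to exhibit an explicit binary-input classical channel $W_{Y|X}$. I would evaluate the two coefficients by two different routes. For $\etaC_{E_\gamma}(\cN)$ I would prove an \emph{upper} bound equal to the ordinary coefficient $\eta_{E_\gamma}(\cN)$. For $\etaD_{E_\gamma}(\cN)$ I would give an explicit \emph{lower} bound, namely a classical pair $\rho_{RA},\sigma_{RA}$ with unequal marginals whose ratio strictly exceeds $\eta_{E_\gamma}(\cN)$. Because $W$ is classical, $\eta_{E_\gamma}(\cN)=\sup_{x\neq x'}E_\gamma(W(\cdot|x)\|W(\cdot|x'))$ by~\eqref{Eq:Ey-orthogonal-pure}, which can be computed in closed form. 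Natural candidates are a binary symmetric channel or a $Z$-channel with a free noise parameter $p$; $\gamma>1$ enters through the thresholds of the positive parts.

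\textbf{Lower bound on $\etaD$.} I would take $R$ to be a classical bit and write $\rho_{RA}=\sum_r |r\rangle\!\langle r|\otimes P_r$ and $\sigma_{RA}=\sum_r |r\rangle\!\langle r|\otimes Q_r$, where $P_r,Q_r$ are subnormalized distributions on $X$. Then
\[
E_\gamma(\rho_{RA}\|\sigma_{RA})=\sum_r \|(P_r-\gamma Q_r)_+\|_1 .
\]
The special feature of $E_\gamma$ with $\gamma>1$ is that $E_\gamma(\rho_R\|\sigma_R)=\sum_r(\rho_R(r)-\gamma\sigma_R(r))_+$ can vanish even though $\rho_R\neq\sigma_R$. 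In that case the subtraction costs nothing in either the numerator or the denominator. Yet the mismatch in the $R$-weights effectively rescales $\gamma$ blockwise: each block contributes an $E_{\gamma'}$-type term with $\gamma'=\gamma\,\sigma_R(r)/\rho_R(r)$. The effective $\gamma'$ can be pushed toward $1$ in one block, and that block then contracts like a smaller-$\gamma$ hockey-stick. Concretely, I would choose $\rho_R(0)>\sigma_R(0)$ with $\rho_R(0)\le\gamma\sigma_R(0)$, and put a point mass versus a point mass on $X$ in that block. This should give a ratio close to $\eta_{E_{\gamma'}}(W)$, which for a BSC/Z-channel is strictly larger than $\eta_{E_\gamma}(W)$ when $\gamma'<\gamma$. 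I would pick concrete numbers ($\gamma=2$, say, and a specific $p$) and verify the strict inequality by direct calculation.

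\textbf{Upper bound on $\etaC$ (main obstacle).} The difficulty is that $\etaC$ allows an unbounded quantum reference system. First, since $W$ is classical, $\cN=\cN\circ\Delta$ with $\Delta$ the dephasing on $A$. By data processing of the denominator under $\id\otimes\Delta$, it suffices to consider cq inputs $\rho=\sum_x |x\rangle\!\langle x|\otimes P_x$ and $\sigma=\sum_x |x\rangle\!\langle x|\otimes Q_x$, where $P_x,Q_x\ge0$ are operators on $R$ with $\sum_x P_x=\sum_x Q_x$ from $\rho_R=\sigma_R$. Then
\[
E_\gamma(\rho\|\sigma)=\sum_x \tr(P_x-\gamma Q_x)_+,
\qquad
E_\gamma(\cN(\rho)\|\cN(\sigma))=\sum_y \tr\Big(\sum_x W(y|x)(P_x-\gamma Q_x)\Big)_+ .
\]
For binary $X$, the marginal constraint gives $P_1-Q_1=-(P_0-Q_0)$, so everything is expressed through two operator pairs. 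I would then bound each output term using $(A+B)_+\le$ a suitable combination of positive parts. For a BSC, $\sum_x W(y|x)(P_x-\gamma Q_x)$ is a convex combination of $P_0-\gamma Q_0$ and $P_1-\gamma Q_1$, shifted by multiples of $(1-\gamma)Q$. The target inequality is that the output is at most $\eta_{E_\gamma}(W)\sum_x\tr(P_x-\gamma Q_x)_+$. I expect this operator inequality to be the crux. If a clean bound fails, I would fall back to the paper's Doeblin bound $\etaC\le\etaD\le 1-\alpha_+$, but that bound cannot separate the two coefficients. In that case I would instead choose the channel so that the classical-reference reduction of \cref{app:classical-ref} together with a joint diagonalization argument forces $\etaC=\eta_{E_\gamma}(W)$ (for instance, the $Z$-channel, where one output symbol is a deterministic function of the input). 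I would then compare that value with the explicit $\etaD$ lower bound.
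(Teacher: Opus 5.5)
Your architecture is sound, and your lower-bound mechanism for $\etaD$ is the paper's. The paper takes a BSC with crossover $p$ and puts all of $\rho$ on the block $r=1$ with $\sigma_R(1)=1/\gamma$, so your effective $\gamma'$ is exactly $1$ and $E_\gamma(\rho_R\|\sigma_R)=0$. It uses point masses $x=1$ versus $x=0$ in that block and gets the ratio $\sum_y\bigl(W(y|1)-W(y|0)\bigr)_+=1-2p$.

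The gap is the other half. Bounding $\etaC_{E_\gamma}$ over an unbounded \emph{quantum} reference is the essential difficulty, and your proposal only names it. Neither fallback works.
\begin{itemize}
\item The reduction in \cref{app:classical-ref} covers classical references only. The blocks $P_x,Q_x$ on $R$ need not commute, so no joint diagonalization reduces the quantum case to it.
\item The $Z$-channel can never separate the coefficients. For $W(\cdot|0)=\delta_0$ and $W(\cdot|1)=(q,1-q)$ you have $E_\gamma(W(\cdot|1)\|W(\cdot|0))=1-q$ for every $\gamma\ge1$, while $\alpha_+(W)\ge\sum_y\min_xW(y|x)=q$. \cref{Lem:a-Doeblin} then forces $\eta_{E_\gamma}=\etaC_{E_\gamma}=\etaD_{E_\gamma}=1-q$. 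So your claim that $\eta_{E_{\gamma'}}(W)>\eta_{E_\gamma}(W)$ fails for this channel.
\end{itemize}

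The missing idea is to use the equal-marginal constraint as an operator inequality, not only through $P_1-Q_1=-(P_0-Q_0)$. Set $\omega\coloneqq\sum_xP_x=\sum_xQ_x$. Then $0\le P_x\le\omega$ and $Q_x\ge0$, so $X_x\coloneqq P_x-\gamma Q_x\le\omega$.

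For the BSC, eliminate the other input in each output block using $\sum_xP_x=\sum_xQ_x=\omega$. Output block $y$ becomes $(1-2p)X_y-(\gamma-1)p\,\omega$.
\begin{itemize}
\item If $1-p-\gamma p\ge0$, this is at most $(1-p-\gamma p)X_y$.
\item Otherwise it is at most $(1-2p)(X_y-\omega)\le0$.
\end{itemize}
Hence, for every $p\in(0,1/2)$,
$\etaC_{E_\gamma}(\cN)\le(1-p-\gamma p)_+=\eta_{E_\gamma}(\cN)<1-2p\le\etaD_{E_\gamma}(\cN)$.

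The paper uses the special case $\gamma=(1-p)/p$. There the same bounds read $\widetilde\rho^{\,y}_R\le(1-p)\omega_R=\gamma p\,\omega_R\le\gamma\widetilde\sigma^{\,y}_R$, so $\etaC_{E_\gamma}(\cN)=0$.

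Your convex-combination observation alone only gives a ratio of at most $1$. The bound $X_y\le\omega$ is what does the work.
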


\begin{proof}
Let $\cN$ be a binary symmetric channel with flipping probability $p \in [0,1/2)$ such that $(1-p)/p=\gamma$. We deduce the above claim by showing that $0=\etaC_{E_\gamma}(\cN)<1-2p \leq \etaD_{E_\gamma}(\cN)$.

We first show that $\etaC_{E_\gamma}(\cN)=0$ even when the reference
system is quantum for the chosen classical channel $\cN$. Let
\begin{align}
    \rho_{RX}
    &= q\,\rho_R^0\otimes |0\rangle\!\langle 0|
       +(1-q)\,\rho_R^1\otimes |1\rangle\!\langle 1|,\\
    \sigma_{RX}
    &= r\,\sigma_R^0\otimes |0\rangle\!\langle 0|
       +(1-r)\,\sigma_R^1\otimes |1\rangle\!\langle 1|
\end{align}
be arbitrary classical-quantum states, where $\rho_R^0$, $\rho_R^1$, $\sigma_R^0$, and $\sigma_R^1$
are normalized quantum states, satisfying 
\begin{align}
    \rho_R=\sigma_R\eqqcolon \omega_R.
\end{align}
Thus,
\begin{align}
    q\rho_R^0+(1-q)\rho_R^1
    =
    r\sigma_R^0+(1-r)\sigma_R^1
    =\omega_R.
\end{align}
Denote the two output blocks of $(\id_R\otimes\cN)(\rho_{RX})$ by
$\widetilde\rho_R^0$ and $\widetilde\rho_R^1$. For the binary symmetric
channel with crossover probability $p$, they satisfy
\begin{align}
    \widetilde\rho_R^0
    &= (1-p)q\rho_R^0+p(1-q)\rho_R^1
     = p\omega_R+(1-2p)q\rho_R^0
     \leq (1-p)\omega_R,\\
    \widetilde\rho_R^1
    &= pq\rho_R^0+(1-p)(1-q)\rho_R^1
     = p\omega_R+(1-2p)(1-q)\rho_R^1
     \leq (1-p)\omega_R,
\end{align}
where we used $q\rho_R^0\leq \omega_R$ and $(1-q)\rho_R^1\leq \omega_R$.
Analogously, for the output blocks
$\widetilde\sigma_R^0,\widetilde\sigma_R^1$ of
$(\id_R\otimes\cN)(\sigma_{RX})$, we have
\begin{align}
    \widetilde\sigma_R^0\geq p\omega_R,
    \qquad
    \widetilde\sigma_R^1\geq p\omega_R.
\end{align}
Since $\gamma=(1-p)/p$, it follows for $x\in\{0,1\}$ that
\begin{align}
    \widetilde\rho_R^x
    \leq (1-p)\omega_R
    = \gamma p\omega_R
    \leq \gamma\widetilde\sigma_R^x.
\end{align}
Consequently,
\begin{align}
    (\id_R\otimes\cN)(\rho_{RX})
    \leq
    \gamma(\id_R\otimes\cN)(\sigma_{RX}),
\end{align}
and hence
\begin{align}
    E_\gamma\!\left(
    (\id_R\otimes\cN)(\rho_{RX})
    \middle\|
    (\id_R\otimes\cN)(\sigma_{RX})
    \right)=0.
\end{align}
As this holds for every quantum reference system $R$ and every pair
with $\rho_R=\sigma_R$, we conclude that
\begin{align}
    \etaC_{E_\gamma}(\cN)=0.
\end{align}

To obtain a lower bound on the conditional contraction coefficient, we consider the following setting by choosing the reference system $R$ to be classical: for the following input joint distributions for $r,x \in \{0,1\}$, with $r$ a random variable for the reference system and $x$ for the channel input,
\begin{align}
    P_{RX}(r,x)& \coloneqq  \mathbf{1}\{(r,x)=(1,1) \}, \\
     Q_{RX}(r,x) & \coloneqq  \left(1- \frac{1}{\gamma} \right) \mathbf{1} \{(r,x)=(0,1) \} + \frac{1}{\gamma } \mathbf{1}\{(r,x)=(1,0) \},
\end{align}
where $\mathbf{1}$ is the indicator function.
With that, along with the notation $P_{RY} \equiv W_{Y|X} (P_{RX})$ and $Q_{RY} \equiv W_{Y|X} (Q_{RX})$, we have 
\begin{align}
     E_{\gamma}(P_R \Vert Q_R) &=( 0- \gamma (1-1/\gamma))_+ + (1-\gamma / \gamma)_+ =0, \\
     E_{\gamma}(P_{RX} \Vert Q_{RX}) &=1, \\ 
    E_{\gamma}(W_{Y|X} (P_{RX}) \Vert W_{Y|X}( Q_{RX})) &= \sum_{y \in \{0,1\}} (P_{RY}(1,y) -\gamma Q_{RY}(1,y))_+ + 0\\
    &= \sum_{y \in \{0,1\}} \left(W_{Y|X}(y|1) -\gamma \frac{1}{\gamma} W_{Y|X}(y|0) \right)_+ \\
    &=(1-2p).
\end{align}
Since $p >0$ for $\gamma >1$, it follows that
\begin{align}
     \sup_{\substack{P_{RX},Q_{RX}}} 
   \frac{E_{\gamma}(W_{Y|X}(P_{RX}) \| W_{Y|X}(Q_{RX}) - E_{\gamma}(P_R \| Q_R) }{ E_\gamma (P_{RX} \Vert Q_{RX}) -E_\gamma(P_R \Vert Q_R)} &\geq \frac{(1-2p)-0 }{1-0}\\
    &  =1-2p \\
    & >0 \\
    &= \etaC_{E_\gamma}(W_{Y|X}),
\end{align}
thus proving the claim.
\end{proof}

A crucial ingredient in this example is that the hockey-stick divergence is not faithful. It remains an interesting open question to determine whether similar examples exist for faithful divergences.

\section{Applications to quantum networks}

\label{Sec:Applications1}

\subsection{Contraction of mutual information in networks}

One of the most natural motivations for considering contraction with reference systems is to improve our understanding of data processing in network settings. This can ultimately lead to a better understanding of what information is preserved in a network and how information should flow through the network. 
Ref.~\cite{polyanskiy2015strong} considered the contraction of mutual information in classical networks. 
Our goal in what follows is to find a suitable  generalization of their result that applies to quantum networks. 

Consider the setting of a quantum channel $\cN_{A\to BE} = \cR_{D\to E}\circ\cM_{A\to BD}$ that is applied to an input state $\rho_{CA}$, depicted  diagrammatically as follows:
\vspace{1em}
\begin{center}
\begin{tikzpicture}[
    node distance=2cm,
    every node/.style={},
    >={Stealth[length=2mm,width=2mm]}
]
\node (A) {$A$};
\node[left=of A] (C) {$C$};
\node[right=of A] (B) {$B$};
\node[below right=of A] (D) {$D$};
\node[right=of D] (E) {$E$};

\draw[-] (C) -- node[midway, above] {$\rho_{CA}$} (A);
\draw[->] (A) -- (B);
\draw[->] (A) --node[pos=0.2, right] {$\;\cM_{A\to BD}$}  (D);
\draw[->] (D) -- node[midway, above] {$\cR_{D\to E}$} (E);
\end{tikzpicture}
\end{center}
\vspace{1em}
In what follows, we use the notation $\cM_{A\to B}\equiv \tr_D \circ \cM_{A\to BD}$.

We are interested in bounding the contraction of the channel $\cN$, based on the contraction of $\cM$ and $\cR$. The system $C$ here plays the role of a reference system for the input, and we are interested in the mutual information between system $C$ and different outputs of the quantum network. We first establish the main result of this section. 
\begin{proposition}\label{Prop:Polyanski-quantum}
    With the above definitions, the following inequalities hold:
    \begin{align}
        \eta_{\MI}(\cN_{A\to BE}) &\leq \eta_{\cR}\cdot\eta_{\MI}(\cM_{A\to BD}) + (1-\eta_{\cR})\cdot \eta_{\MI}(\cM_{A\to B}), \label{Eq:Polyanski-quantum-1} \\
        \eta_{D}(\cN_{A\to BE}) &\leq \eta_{\cR}\cdot\eta_{D}(\cM_{A\to BD}) + (1-\eta_{\cR}) \cdot\eta_{D}(\cM_{A\to B}), \label{Eq:Polyanski-quantum-2}
    \end{align}
    where
    \begin{align}
        \eta_{\cR} \equiv \etaD_D(\cR_{D\to E}).
        \label{eq:eta-R-notation}
    \end{align}
\end{proposition}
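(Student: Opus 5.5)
The plan is to run the classical Polyanskiy--Wu argument with the conditional contraction coefficient of $\cR$ in place of the classical SDPI constant. For \eqref{Eq:Polyanski-quantum-1}, fix an input $\rho_{CA}$ and write the output state of $\cN$ as $\rho_{CBE}=\cR_{D\to E}(\rho_{CBD})$, where $\rho_{CBD}=\cM_{A\to BD}(\rho_{CA})$. By the chain rule \eqref{eq:CMI-MI-chain-rule},
\begin{align}
    I(C:BE) = I(C:B) + I(C:E|B).
\end{align}
The second term is a conditional mutual information with $B$ as the conditioning system, and $\cR$ acts only on $D$. By \cref{Cor:CMI-amo-equal}, $\eta_{\CMI}(\cR)=\etaD_D(\cR)=\eta_\cR$. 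Taking $R\leftrightarrow B$ and the quantum auxiliary system to be $C$, this gives
\begin{align}
    I(C:E|B) \leq \eta_\cR\, I(C:D|B) = \eta_\cR\left[I(C:BD)-I(C:B)\right].
\end{align}

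Combining the two displays yields
\begin{align}
    I(C:BE) \leq \eta_\cR\, I(C:BD) + (1-\eta_\cR)\, I(C:B).
\end{align}
Since $0\leq\eta_\cR\leq 1$ by \eqref{eq:simple-ineqs-relations}, both coefficients are nonnegative. I then apply the definitions $I(C:BD)\leq\eta_{\MI}(\cM_{A\to BD})I(C:A)$ and $I(C:B)\leq\eta_{\MI}(\cM_{A\to B})I(C:A)$. Dividing by $I(C:A)$ and taking the supremum over $\rho_{CA}$ gives \eqref{Eq:Polyanski-quantum-1}. The degenerate case where the denominator $I(C:D|B)$ vanishes needs no separate treatment: by data processing, $I(C:E|B)\leq I(C:D|B)=0$, so the bound holds trivially.

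For \eqref{Eq:Polyanski-quantum-2}, there are two routes, and I would use the first. It repeats the argument with classical $C=U$ and invokes \eqref{Eq:cqMI=D}, which says $\eta_D=\eta_{\cqMI}$ for each of $\cN$, $\cM_{A\to BD}$ and $\cM_{A\to B}$. The key observation is that the CMI step is unchanged: with $U$ classical, $I(U:E|B)\leq\eta_\cR\, I(U:D|B)$ still holds, because $\eta_{\operatorname{cqCMI}}(\cR)=\etaD_D(\cR)$. The same chain of inequalities then produces the bound with the cq quantities, and these equal the relative-entropy coefficients. The second route works directly with divergences, using the chain-rule identity $D(\rho_{BE}\|\sigma_{BE})=D(\rho_B\|\sigma_B)+[D(\rho_{BE}\|\sigma_{BE})-D(\rho_B\|\sigma_B)]$ and the conditional contraction of $\cR$ with reference $B$.

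The main obstacle is checking that the conditional contraction coefficient genuinely applies with $B$ as a reference system that is correlated with $D$. This is a quantum system produced jointly by $\cM$, so the states need not have equal reference marginals. That is exactly why $\etaD_D$, rather than $\etaC_D$, is the right quantity here, and \cref{Cor:CMI-amo-equal} supplies the needed CMI form for an arbitrary reference system.
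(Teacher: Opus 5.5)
Your proposal is correct and follows the paper's proof essentially verbatim. You split $I(C:BE)=I(C:B)+I(C:E|B)$, bound the conditional term by $\etaD_D(\cR)\,I(C:D|B)$ via \cref{Cor:CMI-amo-equal} with $B$ as the reference, bound $I(C:BD)$ and $I(C:B)$ by the respective MI coefficients, and repeat with classical $U$ using $\eta_D=\eta_{\cqMI}$ for the second inequality. Your remarks that $0\leq\eta_{\cR}\leq1$ and that $I(C:D|B)=0$ is handled by data processing supply details the paper leaves implicit. Your unused second route is also valid: it applies the definition of $\etaD_D(\cR)$ directly to $\cM(\rho_A)$ and $\cM(\sigma_A)$ with reference $B$, gives the second inequality without passing through $\eta_{\cqMI}$, and carries over unchanged to other divergences.
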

\begin{proof}
    We start by proving the first inequality in~\eqref{Eq:Polyanski-quantum-1}. First note the following:
    \begin{align}
        I(C:BE) &= I(C:BE) - I(C:B) + I(C:B) \\ 
        &\leq \etaD_D(\cR_{D\to E})\left[ I(C:BD) - I(C:B) \right] + I(C:B) \label{Eq:use-a-here}\\
        &= \etaD_D(\cR_{D\to E}) I(C:BD) +(1-\etaD_D(\cR_{D\to E})) I(C:B) , 
    \end{align}
    where the inequality follows from \eqref{eq:CMI-MI-chain-rule} and \cref{Cor:CMI-amo-equal}.
    This implies, using the definition of~$\eta_{\cR}$, that
    \begin{align}
        \frac{I(C:BE)}{I(C:A)} &\leq \eta_{\cR} \frac{I(C:BD)}{I(C:A)} +(1-\eta_{\cR}) \frac{I(C:B)}{I(C:A)} \\
        &\leq \eta_{\cR}\,\eta_{\MI}(\cM_{A\to BD}) + (1-\eta_{\cR}) \,\eta_{\MI}(\cM_{A\to B}). 
    \end{align}
    Now taking the supremum over all $\rho_{CA}\not\in \operatorname{PROD}$ on the left proves the claim.
    
    The second inequality in~\eqref{Eq:Polyanski-quantum-2} for the relative entropy contraction then follows in the same way, assuming that $C$ is a classical system and recalling that $\eta_D(\cN)=\eta_{\cqMI}(\cN)$. 
\end{proof}

Note that, in the classical setting, Eqs.~\eqref{Eq:Polyanski-quantum-1} and~\eqref{Eq:Polyanski-quantum-2} are equivalent; however, in the quantum case, they are different, as follows from \eqref{eq:separation-between-eta-MI-and-eta-D}.
While~\cref{Eq:Polyanski-quantum-2} appears as the more direct generalization of the classical result, \cref{Eq:Polyanski-quantum-1} is a fully quantum version that explicitly considers a quantum system $C$. Recall that we can state both results either as mutual information contraction or as relative entropy contraction, because we have seen previously that $\eta_D(\cN)=\eta_{\cqMI}(\cN)$ and $\eta^p_D(\cN)=\eta_{\MI}(\cN)$ (recall \cref{Cor:Summary1}). Interestingly, the proof is the same for classical and quantum $C$, because, from~\cref{Cor:CMI-amo-equal}, we know that the conditional mutual information contraction coefficient is the same in both cases. 

If we are interested in a computable bound for $\eta_{\cR}$ defined in \eqref{eq:eta-R-notation}, we can use the following variant. 
\begin{corollary}
        With the above definitions, 
    \begin{align}
        \eta_{\MI}(\cN_{A\to BE}) &\leq (1-\alpha_+(\cR))\cdot\eta_{\MI}(\cM_{A\to BD}) + \alpha_+(\cR)\cdot \eta_{\MI}(\cM_{A\to B}), \label{Eq:Polyanski-quantum-1a} \\
        \eta_{D}(\cN_{A\to BE}) &\leq (1-\alpha_+(\cR))\cdot\eta_{D}(\cM_{A\to BD}) + \alpha_+(\cR) \cdot\eta_{D}(\cM_{A\to B}). \label{Eq:Polyanski-quantum-2a}
    \end{align}
\end{corollary}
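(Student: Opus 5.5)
The plan is to combine \cref{Prop:Polyanski-quantum} with the Doeblin bound \cref{Lem:a-Doeblin}. The right-hand sides of \eqref{Eq:Polyanski-quantum-1} and \eqref{Eq:Polyanski-quantum-2} have the form
\begin{align}
    f(t) \coloneqq t\, a + (1-t)\, b ,
\end{align}
with $t=\eta_{\cR}$, $a=\eta_{\MI}(\cM_{A\to BD})$ and $b=\eta_{\MI}(\cM_{A\to B})$ (respectively the $\eta_D$ analogues). The map $f$ is affine in $t$ with slope $a-b$. If $a\geq b$, then $f$ is non-decreasing, and any upper bound on $t$ gives an upper bound on $f(t)$.

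The first step is to establish $a\geq b$. The channel $\cM_{A\to B}=\tr_D\circ\cM_{A\to BD}$ is obtained from $\cM_{A\to BD}$ by post-processing with the partial trace. By data processing, for every input $\rho_{CA}$ we have $I(C:B)\leq I(C:BD)$. Dividing by $I(C:A)$ and taking suprema gives $\eta_{\MI}(\cM_{A\to B})\leq\eta_{\MI}(\cM_{A\to BD})$. For the relative entropy version, the same argument applies to $D(\cM_{A\to B}(\rho)\|\cM_{A\to B}(\sigma))\leq D(\cM_{A\to BD}(\rho)\|\cM_{A\to BD}(\sigma))$ for each pair, which gives $\eta_D(\cM_{A\to B})\leq\eta_D(\cM_{A\to BD})$.

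The second step is to bound $t$. The relative entropy is jointly convex, since it satisfies the direct-sum property. Hence \cref{Lem:a-Doeblin} applies and gives $\eta_{\cR}=\etaD_D(\cR_{D\to E})\leq 1-\alpha_+(\cR)$. Because $f$ is monotone, $f(\eta_{\cR})\leq f(1-\alpha_+(\cR))$. Explicitly,
\begin{align}
    f(\eta_{\cR}) \leq (1-\alpha_+(\cR))\, a + \alpha_+(\cR)\, b .
\end{align}
Inserting this into \eqref{Eq:Polyanski-quantum-1} and \eqref{Eq:Polyanski-quantum-2} yields \eqref{Eq:Polyanski-quantum-1a} and \eqref{Eq:Polyanski-quantum-2a}.

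The only step that needs any care is the monotonicity argument, which is justified by the data-processing comparison $a\geq b$ from the first step. The rest is direct substitution.
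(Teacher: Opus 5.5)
Your argument is correct and is essentially the paper's. The paper inserts $\eta_{\cR}\leq 1-\alpha_+(\cR)$ directly into the pointwise bound \eqref{Eq:use-a-here}, where it multiplies the nonnegative quantity $I(C:BD)-I(C:B)$, and then takes suprema. You instead take suprema first via \cref{Prop:Polyanski-quantum}, and therefore need the coefficient-level comparison $\eta_{\MI}(\cM_{A\to BD})\geq\eta_{\MI}(\cM_{A\to B})$ (and its $\eta_D$ analogue) to make the affine bound monotone in $t$. Both versions rest on the same data-processing inequality under $\tr_D$, and yours has the small advantage of using the proposition as a black box.
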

\begin{proof}
    This follows as before, by using the bound $\eta_{\cR}\leq 1-\alpha_+(\cR)$ after~\eqref{Eq:use-a-here}.
\end{proof}

At this point, it is worth remarking that the results in~\cite{polyanskiy2015strong} also cover scenarios that include classical feedback. In the fully quantum setting, the limitations in the topology of our networks compared to those in~\cite{polyanskiy2015strong} are akin to the no-cloning theorem, forbidding us to copy and reuse quantum information. However, our result is actually more general, in the sense that if we restrict some communication paths to be at least partially classical, we can recover the topology discussed in~\cite{polyanskiy2015strong}. Here we consider a network of the following form:
\vspace{1em}
\begin{center}
\begin{tikzpicture}[
    node distance=2cm,
    every node/.style={},
    >={Stealth[length=2mm,width=2mm]}
]
\node (A) {$X$};
\node[left=of A] (C) {$C$};
\node[right=of A] (B) {$V$};
\node[below=of B] (D) {$D$};
\node[right=of D] (E) {$E$};

\draw[-] (C) -- node[midway, above] {$\rho_{CX}$} (A);
\draw[double,->] (A) -- node[midway, above] {$\;\cC_{X\to V}$} (B);
\draw[double,->] (B) -- (D);
\draw[double,->] (B) -- (E);
\draw[double,->] (A) --node[pos=0.2, right] {$\;\cM_{XV\to D}$}  (D);
\draw[->] (D) -- node[midway, above] {$\cR_{VD\to E}$} (E);
\end{tikzpicture}
\end{center}
\vspace{1em}
In the above diagram, $X$ and $V$ are now classical systems, and maps denoted by double arrows are classical input channels, with $\cC_{X\to V}$ being a fully classical channel. Now, we see that this is indeed a special case of~\cref{Prop:Polyanski-quantum}, by noting that we can make three copies of the output of the fully classical channel $\cC_{X\to V}$, which results in the following diagram:
\vspace{1em}
\begin{center}
\begin{tikzpicture}[
    node distance=2cm,
    every node/.style={},
    >={Stealth[length=2mm,width=2mm]},
    copy wire/.style={
        ->,
        densely dashed,
        line width=0.45pt
    }
]

\node (A) {$X$};
\node[left=of A] (C) {$C$};
\node[right=of A] (B) {$V$};
\node[below=of B] (D) {$D$};
\node[right=of D] (E) {$E$};

\draw[-]
    (C) --
    node[midway,above] {$\rho_{CX}$}
    (A);

\draw[double,->]
    (A) --
    node[midway,below] {$\;\cC_{X\to V}$}
    (B);

\draw[double,->]
    (A) --
    node[pos=0.5,left] {$\;\cM_{XV\to D}$}
    (D);

\draw[->]
    (D) --
    node[midway,above] {$\cR_{VD\to E}$}
    (E);

\draw[copy wire]
    (B) to[bend right=42]
    (A);

\draw[copy wire]
    (B) --node[midway,right,font=\scriptsize]
    {copies of $V$} (D);

\end{tikzpicture}
\end{center}
\vspace{1em}
and is the desired special case. 

The result in~\cref{Prop:Polyanski-quantum} also allows us to observe another interesting property of the mutual-information contraction coefficients. 

\begin{corollary}
\label{Cor:MI-tensor-n}
    Let $n\in \mathbb{N}$, and let $\cN$ be a quantum channel. Then
    \begin{align}
        \eta_{\MI}(\cN^{\otimes n}) \leq 1 - (1-\etaD_D(\cN))^n \leq n\cdot\etaD_D(\cN). 
    \end{align}
\end{corollary}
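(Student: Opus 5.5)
We argue by induction on $n$, using \cref{Prop:Polyanski-quantum} with a suitable decomposition of $\cN^{\otimes n}$. Write $\eta\equiv\etaD_D(\cN)$ and $a_n\coloneqq\eta_{\MI}(\cN^{\otimes n})$. For $n=1$, \cref{Cor:Summary1} gives $a_1=\eta_{\MI}(\cN)\leq\etaD_D(\cN)=\eta=1-(1-\eta)^1$.

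For the inductive step, view $\cN^{\otimes n}$ acting on $A^n=A_1\cdots A_n$ as the network channel $\cR_{D\to E}\circ\cM_{A\to BD}$ with the following identifications:
\begin{itemize}
\item $A=A_1\cdots A_n$ and $B=B_1\cdots B_{n-1}$;
\item $D=A_n$ and $E=B_n$;
\item $\cM_{A\to BD}=\cN^{\otimes(n-1)}\otimes\id_{A_n}$ and $\cR_{D\to E}=\cN$.
\end{itemize}
Then $\eta_{\cR}=\etaD_D(\cN)=\eta$, and $\cM_{A\to B}=\cN^{\otimes(n-1)}\circ\tr_{A_n}$. \cref{Prop:Polyanski-quantum}, specifically~\eqref{Eq:Polyanski-quantum-1}, yields
\begin{align}
a_n\leq \eta\,\eta_{\MI}(\cN^{\otimes(n-1)}\otimes\id) + (1-\eta)\,\eta_{\MI}(\cN^{\otimes(n-1)}\circ\tr_{A_n}).
\end{align}

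Two facts bound these terms. First, $\eta_{\MI}(\cN^{\otimes(n-1)}\otimes\id)\leq1$ by data processing. Second, $\eta_{\MI}(\cN^{\otimes(n-1)}\circ\tr_{A_n})\leq a_{n-1}$, because for any $\rho_{CA^n}$,
\begin{align}
\frac{I(C:B^{n-1})}{I(C:A^n)}\leq \frac{I(C:B^{n-1})}{I(C:A^{n-1})}\leq a_{n-1},
\end{align}
using monotonicity $I(C:A^{n-1})\leq I(C:A^n)$. The degenerate case $I(C:A^{n-1})=0$ is harmless: then $I(C:B^{n-1})=0$ and the ratio vanishes. Together these give the recursion $a_n\leq \eta+(1-\eta)a_{n-1}$, equivalently $1-a_n\geq(1-\eta)(1-a_{n-1})$. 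Iterating from $1-a_1\geq 1-\eta$ gives $1-a_n\geq(1-\eta)^n$, which is the first inequality.

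The second inequality is Bernoulli's inequality $(1-\eta)^n\geq 1-n\eta$ for $\eta\in[0,1]$, valid since $0\leq\eta\leq1$ by~\eqref{eq:simple-ineqs-relations}.

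The main point needing care is the identification of the network structure. We must check that \cref{Prop:Polyanski-quantum} allows $\cM_{A\to BD}$ to pass part of the input ($A_n$) through unchanged, which it does since $\cM$ is arbitrary. We must also check that the marginal channel $\cM_{A\to B}$ is indeed $\cN^{\otimes(n-1)}$ composed with discarding $A_n$, whose $\MI$ coefficient is controlled by $a_{n-1}$ as above. Everything else is routine.
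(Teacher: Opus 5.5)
Your proof is correct and follows the paper's argument. It uses the same identification $\cM=\cN^{\otimes(n-1)}\otimes\id_{A_n}$, $\cR=\cN$ in \cref{Prop:Polyanski-quantum}, the same recursion $\eta_{\MI}(\cN^{\otimes n})\leq \etaD_D(\cN)+(1-\etaD_D(\cN))\,\eta_{\MI}(\cN^{\otimes(n-1)})$ iterated down to $\eta_{\MI}(\cN)\leq\etaD_D(\cN)$, and Bernoulli's inequality for the final step. Your explicit check that the marginal channel is $\cN^{\otimes(n-1)}\circ\tr_{A_n}$, with $\MI$ coefficient at most $\eta_{\MI}(\cN^{\otimes(n-1)})$, fills in a step the paper leaves implicit.
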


\begin{proof}
    Let us denote the input and output systems of $\cN^{\otimes n}$ by $\hat{A}^n \equiv \hat{A}_1 \cdots \hat{A}_n$ and $\hat{B}^n \equiv \hat{B}_1 \cdots \hat{B}_n$, respectively. By identifying
         $\cM = \cN^{\otimes (n-1)} \otimes \id_{\hat{A}_{n}}$ and $\cR = \cN_{\hat{A}_{n} \to \hat{B}_{n}}$  and applying~\cref{Prop:Polyanski-quantum}, we have that
    \begin{align}
        \eta_{\MI}(\cN^{\otimes n}) &\leq \etaD_D(\cN)\cdot \eta_{\MI}\!\left(\cN^{\otimes(n-1)}\otimes\id_{A_n}\right) + (1-\etaD_D(\cN)) \eta_{\MI}(\cN^{\otimes(n-1)}) \\
        &=  \etaD_D(\cN) + (1-\etaD_D(\cN)) \eta_{\MI}(\cN^{\otimes(n-1)}). \label{Eq:Aux-tensor-bound} 
    \end{align}
    where the equality follows because  $\eta_{\MI}(\id)=1$. 
    Subtracting $1$ on both sides, this becomes,
    \begin{align}
        \eta_{\MI}(\cN^{\otimes n}) -1 &\leq  (1-\etaD_D(\cN)) \left(\eta_{\MI}(\cN^{\otimes(n-1)})-1\right) \\
        &\leq  (1-\etaD_D(\cN))^{n-1} \left(\eta_{\MI}(\cN)-1\right)\\
        &\leq  (1-\etaD_D(\cN))^{n-1} \left(\etaD_D(\cN)-1\right) \\
        &=  - (1-\etaD_D(\cN))^{n},    \end{align}
    where the second inequality iteratively applies the first and the third uses $\eta_{\MI}(\cN)\leq \etaD_D(\cN)$. Adding back $1$ on both sides proves the first claim. For the second claim we used that $1-x^n \leq n(1-x)$ for all $n\in\NN$ and $x\in[0,1]$.  
\end{proof}
This can be compared to the Doeblin upper bound derived in~\cite[Corollary 9]{george2025quantumdoeblincoefficientsinterpretations}, which is the middle inequality in the following, 
\begin{align}
    \eta_D(\cN^{\otimes n}) \leq \eta_{\tr}(\cN^{\otimes n}) \leq 1 - \alpha_\wang(\cN)^n \leq 1- \alpha_+(\cN)^n. 
\end{align}
(See \cite[Definition~6]{george2025quantumdoeblincoefficientsinterpretations} for a definition of $\alpha_\wang(\cN)$.)
Although the two might not be generally comparable,~\cref{Cor:MI-tensor-n} gives a tighter bound whenever we are interested in the contraction of mutual information. A classical analog of this result was previously reported in~\cite[Corollary~6]{polyanskiy2015strong}.

Alternatively, if we are interested in the tensor product of different channels, we can conclude results along the following lines.

\begin{corollary}
\label{Cor:NtensorM-upper}
    For two quantum channels $\cN$ and $\cM$ and $\star\in\{D,\MI\}$,
    \begin{align}
        \eta_{\star}(\cN\otimes\cM) \leq \min\begin{cases}
            \eta_{\star}(\cM) + \etaD_{D}(\cN) -\eta_{\star}(\cM)\cdot\etaD_{D}(\cN) \\
            \etaD_{D}(\cM) + \eta_{\star}(\cN) - \etaD_{D}(\cM)\cdot \eta_{\star}(\cN)
        \end{cases}.
    \end{align}
\end{corollary}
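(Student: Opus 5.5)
We apply \cref{Prop:Polyanski-quantum} to the product channel $\cN\otimes\cM$, in the same way as in the proof of \cref{Cor:MI-tensor-n}. Write $\cN\colon A_1\to B_1$ and $\cM\colon A_2\to B_2$. For the first line of the minimum, factor
\begin{align}
\cN\otimes\cM = (\cN_{A_1\to B_1}\otimes \id_{B_2})\circ(\id_{A_1}\otimes\cM_{A_2\to B_2}).
\end{align}
This has the network form $\cR_{D\to E}\circ\widetilde\cM_{A\to BD}$ with $A=A_1A_2$, $D=A_1$, $E=B_1$, $B=B_2$, $\widetilde\cM=\id_{A_1}\otimes\cM$ and $\cR=\cN$. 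The marginal channel is $\widetilde\cM_{A\to B}=\tr_{A_1}\otimes\cM$.

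For $\star=\MI$, \eqref{Eq:Polyanski-quantum-1} gives
\begin{align}
\eta_{\MI}(\cN\otimes\cM)\leq \etaD_D(\cN)\,\eta_{\MI}(\id_{A_1}\otimes\cM)+(1-\etaD_D(\cN))\,\eta_{\MI}(\tr_{A_1}\otimes\cM).
\end{align}
For $\star=D$ we use \eqref{Eq:Polyanski-quantum-2} instead, with the same factorization.

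Two simplifications are then needed. First, we bound $\eta_\star(\id_{A_1}\otimes\cM)\leq 1$, which holds by data processing (cf.~\eqref{eq:simple-ineqs-relations}). Second, we claim $\eta_\star(\tr_{A_1}\otimes\cM)\leq\eta_\star(\cM)$.

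For $\star=\MI$, the second claim holds because
\begin{align}
I(C:B_2)\leq \eta_{\MI}(\cM)\, I(C:A_2)\leq \eta_{\MI}(\cM)\, I(C:A_1A_2),
\end{align}
where the last step is data processing under $\tr_{A_1}$. For $\star=D$, the same argument works with $C$ classical, using $\eta_D=\eta_{\cqMI}$ from \eqref{Eq:cqMI=D}. Equivalently, one can argue directly from data processing of the relative entropy under $\tr_{A_1}$.

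Substituting both simplifications yields
\begin{align}
\eta_\star(\cN\otimes\cM)\leq \etaD_D(\cN)+(1-\etaD_D(\cN))\,\eta_\star(\cM),
\end{align}
which is exactly the first line of the minimum.

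The second line follows symmetrically. We swap the roles of the two channels, factoring $\cN\otimes\cM=(\id_{B_1}\otimes\cM)\circ(\cN\otimes\id_{A_2})$ with $\cR=\cM$ and $D=A_2$. Since both bounds hold, so does their minimum.

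The only delicate point is matching the network template exactly. The system $D$ must be the untouched input $A_1$ (respectively $A_2$), and $B$ the already-processed output. We must also check that \cref{Prop:Polyanski-quantum} allows a trivial identity branch from $A$ to $D$ together with a nontrivial $\cM_{A\to B}$, which it does for any channel $\cM_{A\to BD}$.
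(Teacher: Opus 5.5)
Your proposal is correct and follows the paper's proof. The paper likewise applies \cref{Prop:Polyanski-quantum} with one factor as $\cR$ and the other tensored with an identity, which is exactly \eqref{Eq:Aux-tensor-bound}, once for each channel. Your explicit justification of $\eta_\star(\operatorname{tr}_{A_1}\otimes\cM)\leq\eta_\star(\cM)$ and $\eta_\star(\id_{A_1}\otimes\cM)\leq 1$ fills in steps the paper treats as immediate.
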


\begin{proof}
    The proof follows directly from~\eqref{Eq:Aux-tensor-bound}, applied once to remove $\cN$ and once to remove~$\cM$. 
\end{proof}
This gives some general upper bounds on the contraction coefficient for tensor-product channels, which were previously only known for special cases, such as when one channel is an erasure channel (see~\cite[Lemma 3.9]{Hirche2022contraction}). Concretely, considering the qubit depolarizing channel leads to the following bounds:
\begin{align}
    (1-p)^2\leq \frac{2(1-p)^2}{1+(1-p)^2}\leq\eta_D(\cD_p^{\otimes 2}) \leq 1-2p^2+p^3 \leq 1-p^2, \label{Eq:Contraction-depol-2}
\end{align}
 The inequality $\eta_D(\cD_p^{\otimes 2}) \leq 1-2p^2+p^3$ follows from~\cref{Cor:NtensorM-upper}.
 The final inequality $1-2p^2+p^3 \leq 1-p^2$ gives, to our knowledge, the best previously known upper bound.
This follows from
$\etaD_{D}(\cD_p)\leq1-p$, based on the Doeblin upper bound. An improvement over this would also improve the upper bound in the above example. 
The loose lower bound above is given by $\eta_D(\cD_p)$ itself, while the tighter lower bound $\frac{2(1-p)^2}{1+(1-p)^2}\leq\eta_D(\cD_p^{\otimes 2})$ can be derived by fixing input states $\rho_0=a|00\rangle\!\langle00|+(1-a)|11\rangle\!\langle11|$ and $\rho_1=(1-a)|00\rangle\!\langle00|+a|11\rangle\!\langle11|$ and taking the limit $a\rightarrow\frac12$. 
We illustrate the resulting bounds in~\cref{fig:Dp-bounds}. 

\begin{figure}
    \centering
    \includegraphics[width=0.6\linewidth]{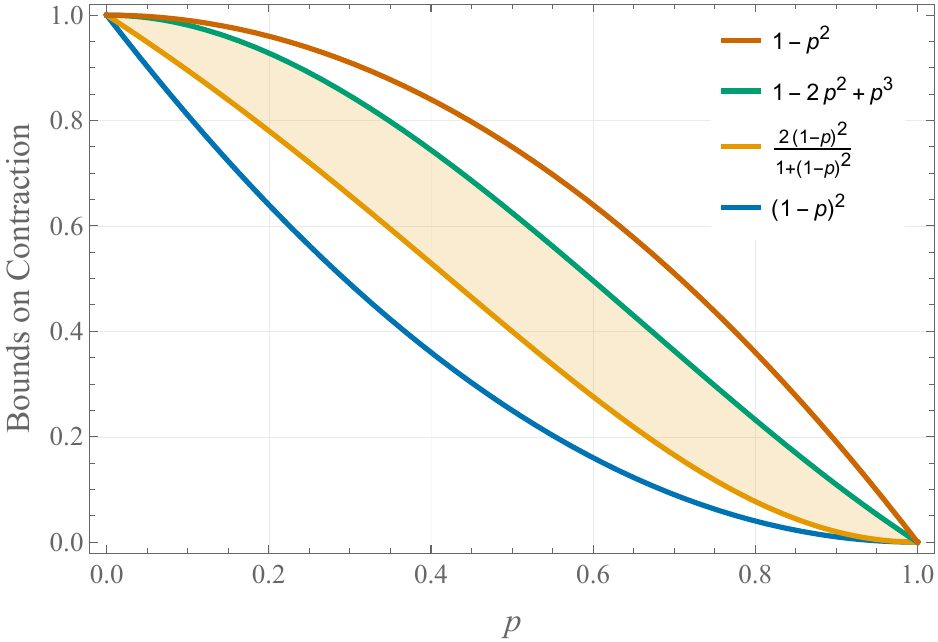}
    \caption{Illustration of the bounds on $\eta_D(\cD_p^{\otimes 2})$ as given in~\eqref{Eq:Contraction-depol-2}. The shaded area gives the remaining range of possible values for the contraction coefficient.}
    \label{fig:Dp-bounds}
\end{figure}

\subsection{Limits on quantum memories for preserving resources even with post-processing}

\label{subsec:limits-on-q-mem-with-post-processing}

A quantum memory is a device that preserves a quantum system and its correlations with other systems over time. In other words, it preserves entanglement or a useful state such that it may be used for subsequent quantum information processing tasks. For this reason, it is thought that quantum memories will be a significant component of the future quantum internet \cite{Wehner18a}.

As near-term quantum memories will be noisy, it is important to determine the limits of quantum memories with noisy components. One method for determining these limits is to model the quantum memory as a discrete, time-homogeneous Markov chain $\left(\cN^{n}_{A \to A}\right)_{n \in \mbb{N}}$ and study to what depth $n$ the chain still serves as a `good' quantum memory. One way of addressing this question involves studying the zero-error quantum capacity of $\left(\cN^{n}\right)_{n}$ over time \cite{singh2024zeroerrorcommunicationdiscretetimemarkovian,fawzi2025capacitiesquantummarkoviannoise,bhattacharyya2025sequentialtransmissionshorttimes}. However, as identified in previous works, e.g.,~\cite{singh2024zeroerrorcommunicationdiscretetimemarkovian,fawzi2025capacitiesquantummarkoviannoise}, a problem with considering the capacity is that there is only one quantum memory in practice, rather than many copies of the memory, as would be needed for the capacity to be relevant. Thus, one should consider the one-shot setting. A second problem that, to the best of our knowledge has neither been raised nor addressed, is that the users of the quantum memory can in principle post-process the state after it has undergone the noisy memory for the relevant task. For example, if the memory is to preserve entanglement between two parties, the parties could perform entanglement distillation on the state that comes out of their memories in order to obtain some entanglement, so long as the memories did not destroy all the distillable entanglement (see Fig.~\ref{fig:quantum-memory-post-proc} for a visualization). 

\begin{figure}
    \centering
    \includegraphics[width=0.7\linewidth]{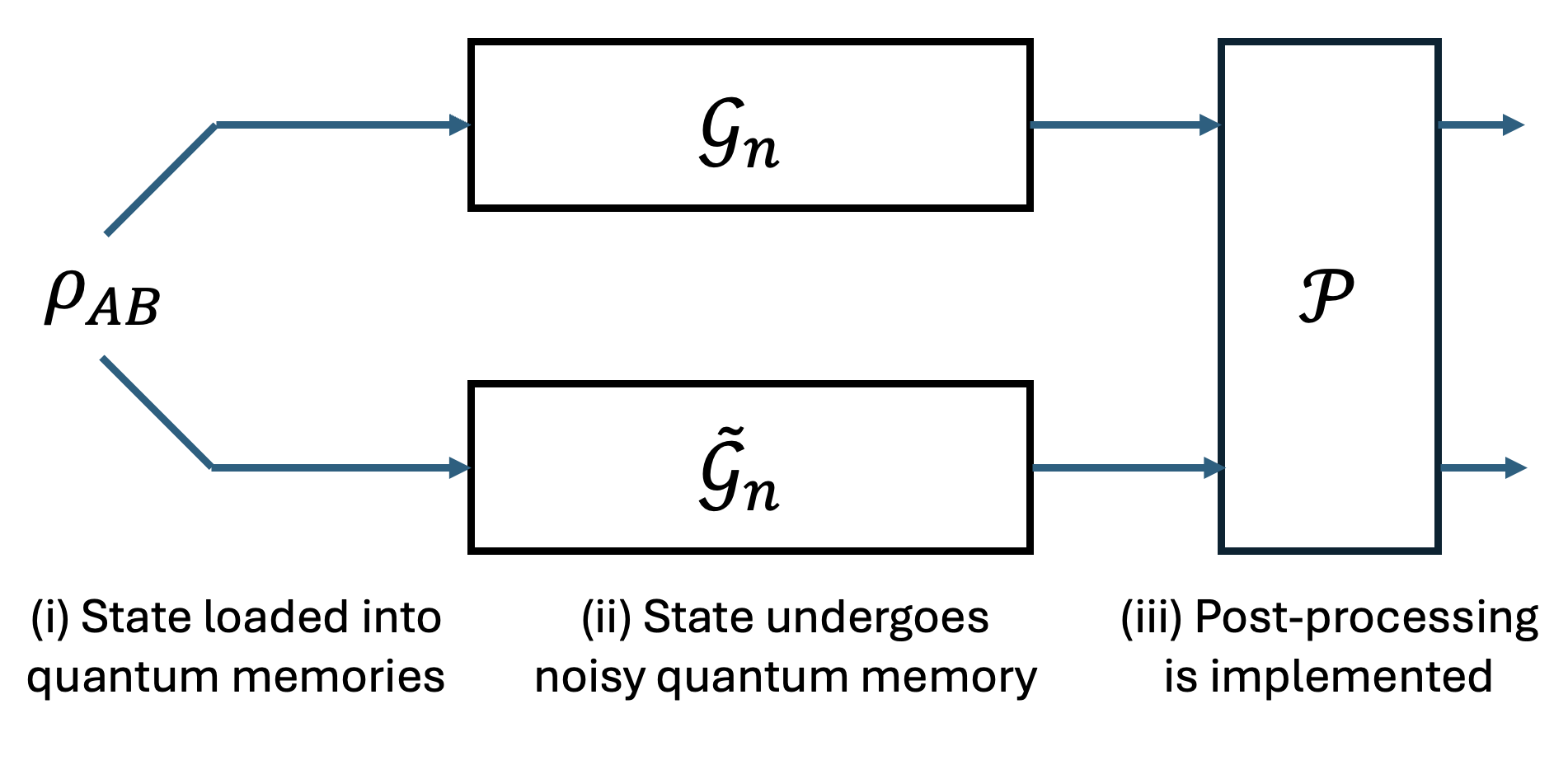}
    \caption{Depiction of the setting we consider in Section \ref{subsec:limits-on-q-mem-with-post-processing}.}
    \label{fig:quantum-memory-post-proc}
\end{figure}

In this subsection, we establish limits on the time $n$ such that the quantum memories cannot be used for storing entanglement or quantum states for secret key, even under post-processing. All the proofs follow the same idea: initially Alice and Bob have some shared, possibly resourceful state $\rho_{A_{1}B_{1}}$ loaded into their local quantum memories. These quantum memories are described by discrete, time-inhomogeneous Markov chains
\begin{equation}
\left(\cG_{n} \coloneq \bigcirc_{i \in [n]}  \cN^{i}_{A_{i} \to A_{i+1}} \right)_{n}, \qquad  \left(\wt{\cG}_{n} \coloneq \bigcirc_{i \in [n]}  \wt{\cN}^{i}_{A_{i} \to A_{i+1}} \right)_{n}   , 
\end{equation}
defined by channels $\cN^{i}_{A_{i} \to A_{i+1}}$ and $\wt{\cN}^{i}_{B_{i} \to B_{i+1}}$, respectively. At some time $n' \in \mbb{N}$, the users decide that they want the relevant resource, so they run resource distillation on the output state $(\cG_{n'} \otimes \wt{\cG}_{n'})(\rho)$. By bounding the one-shot rate of the relevant resource distillation task by mutual information or conditional mutual information and applying the appropriate contraction coefficient, we determine a time $n \in \mbb{N}$ after which no resource can be distilled from the output of the quantum memories, independent of what the initial state $\rho_{A_{1}B_{1}}$ is. The correspondence between task, allowed post-processing operations, and used contraction coefficient is summarized in Table~\ref{table:limits-on-memories}.

An interesting point to note is that entanglement distillation with LOCC post-processing is controlled by $\eta_{\CMI}$ but entanglement distillation with non-entangling operations is controlled by $\eta_{\MI}$. This is interesting as $\eta_{\MI} \leq \eta_{\CMI}$ while LOCC is a strict subset of non-entangling operations. Thus, it should be \textit{easier} to distill entanglement with non-entangling operations, but our bounds contract at least as fast in this setting. Of course, the relevant constants in the theorems differ, so this could be what makes up for the difference in principle. In contrast to entanglement distillation, for key distillation, we see that one-way LOCC is limited by $\eta_{\MI}$, whereas general LOCC is limited by $\eta_{\CMI}$. In this case, the governing contraction coefficient and the power of the post-processing resource aligns appropriately.
\begin{table}
\centering
\begin{tabular}
[c]{c|c|c}\hline\hline
Task & Allowed Post-Processing & Contraction Coefficient  \\
\hline\hline
\multirow{2}{*}{Entanglement Distillation} & Non-Entangling & $\eta_{\MI}$ (\Cref{thm:q-mem-lims-from-ent-dist-with-NE-maps}) \\ \cline{2-3}
& LOCC & $\eta_{\CMI}$ (\Cref{thm:q-mem-lims-from-ent-dist-with-LOCC-maps}) \\ \hline
\multirow{2}{*}{Key Distillation} & One-Way LOCC & $\eta_{\MI}$ (\Cref{thm:limits-for-one-way-key-distil}) \\ \cline{2-3}
& LOCC & $\eta_{\CMI}$ (\Cref{thm:limits-on-key-distil-with-two-way}) \\
\hline \hline
\end{tabular}
\caption{Summary of limits on quantum memories for achieving certain tasks with post-processing.
}
\label{table:limits-on-memories}
\end{table}

In the remaining part of this subsection, we prove  our bounds.

\subsubsection{Limits on entanglement preservation with non-entangling map post-processing}

We begin with entanglement distillation as measured in terms of singlet fraction, which for a $d$-dimensional state is the quantity 
\begin{align}
    F(\rho,\dyad{\Phi_d}) = \bra{\Phi_{d}}\rho\ket{\Phi_{d}} \ ,
\end{align} 
where $\ket{\Phi_{d}} \coloneqq  \frac{1}{\sqrt{d}} \sum_i |i\rangle |i\rangle $ denotes the standard maximally entangled state vector of Schmidt rank $d$ and $F(P,Q) \coloneqq \left\Vert \sqrt{P}\sqrt{Q} \right\Vert_{1}^{2}$ is the Uhlmann fidelity of positive semidefinite matrices $P$ and $Q$ \cite{uhlmann1976}. 
For entanglement distillation under an allowed set of channels, $\cS$, we denote the one-shot distillable entanglement by
\begin{align}
    E_{D,\cS}^{\varepsilon}(\rho) \coloneq \max_{d \in \mbb{N},\cN \in \cS} \{ \log d : \bra{\Phi_{d}}\cN(\rho)\ket{\Phi_{d}} \geq 1 - \ve \} \ .
\end{align}
The first set of channels we will consider are non-entangling channels, which we denote by $\operatorname{NE}$. To establish bounds on the depth of quantum memories at which one is able to distill entanglement under non-entangling channels, we need to bound $E^{\ve}_{D,\operatorname{NE}}$ by a correlation measure for which we have a contraction coefficient. The following establishes such a bound in terms of mutual information.
\begin{lemma}
\label{lem:one-shot-dist-ent-ub-by-MI}
    For $\varepsilon \in [0,1]$ and a bipartite state $\rho$,
    \begin{align}
        E_{D,\operatorname{NE}}^{(1),\varepsilon}(\rho) \leq \frac{1}{1-\ve}[I(A:B)_\rho +h(\ve)],
    \end{align}
    where $h(\ve)\coloneqq -\ve \log \ve - (1-\ve)\log (1-\ve)$.
\end{lemma}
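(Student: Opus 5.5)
We follow the standard hypothesis-testing (meta-converse) approach, comparing the distilled state against the set of separable states. Fix $d$ and a non-entangling channel $\cN\in\operatorname{NE}$ with $\bra{\Phi_d}\cN(\rho_{AB})\ket{\Phi_d}\geq 1-\varepsilon$. The key comparison state is $\sigma\coloneqq\cN(\rho_A\otimes\rho_B)$. Since $\rho_A\otimes\rho_B$ is a product, hence separable, state, $\sigma$ is separable. Every separable state $\sigma'$ satisfies $\bra{\Phi_d}\sigma'\ket{\Phi_d}\leq 1/d$, because the maximal overlap of $\Phi_d$ with a product pure state is $1/d$ and the bound extends to mixtures by convexity.

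Apply data processing twice. First use $\cN$, then the binary measurement $\{\Phi_d, I-\Phi_d\}$:
\begin{align}
I(A:B)_\rho = D(\rho_{AB}\|\rho_A\otimes\rho_B) \geq D(\cN(\rho_{AB})\|\sigma) \geq d_{\mathrm{bin}}(p\|q),
\end{align}
where $p\coloneqq\bra{\Phi_d}\cN(\rho_{AB})\ket{\Phi_d}\geq 1-\varepsilon$ and $q\coloneqq\bra{\Phi_d}\sigma\ket{\Phi_d}\leq 1/d$. Here $d_{\mathrm{bin}}$ denotes the binary relative entropy.

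Next expand the binary relative entropy:
\begin{align}
d_{\mathrm{bin}}(p\|q) = -h(p) - p\log q - (1-p)\log(1-q) \geq -h(p) + p\log d .
\end{align}
The inequality uses $-\log(1-q)\geq 0$ and $-\log q\geq \log d$. It follows that $p\log d\leq I(A:B)_\rho + h(p)$.

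The main subtlety is passing from $p$ to $\varepsilon$. The function $h(p)$ is not monotone, and $p\geq 1-\varepsilon$ does not give $h(p)\leq h(\varepsilon)$ directly when $\varepsilon>1/2$. The cleanest fix is to avoid the exact value $p$. Coarse-grain the measurement outcome further by mixing: replace $\cN$ by a convex combination with a channel that lowers the fidelity to exactly $1-\varepsilon$. Concretely, compose with the map $X\mapsto tX+(1-t)\Tr[X]\,\omega$ for a suitable product state $\omega$ orthogonal to $\Phi_d$, with $t$ chosen so the fidelity equals $1-\varepsilon$. This map is non-entangling, so $\sigma$ stays separable and $q\leq 1/d$ is preserved.

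Alternatively, one can use the monotonicity of $d_{\mathrm{bin}}(\cdot\|q)$ in its first argument on $[q,1]$. When $1-\varepsilon\geq 1/d\geq q$, this gives $d_{\mathrm{bin}}(p\|q)\geq d_{\mathrm{bin}}(1-\varepsilon\|q)$. The remaining case $1-\varepsilon<1/d$ gives $\log d<\log\frac{1}{1-\varepsilon}$, which should be handled separately, or it is trivially covered if $\log d$ is already small.

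With $p$ replaced by $1-\varepsilon$, we obtain $(1-\varepsilon)\log d\leq I(A:B)_\rho+h(\varepsilon)$. Maximizing over $d$ and $\cN$ gives the claim.
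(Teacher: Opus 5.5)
Your argument is correct and is essentially the paper's proof with its two cited ingredients unpacked. The paper bounds $E^{\varepsilon}_{D,\operatorname{NE}}$ by $\min_{\sigma\in\operatorname{SEP}}D_h^{\varepsilon}(\rho\|\sigma)$ via Brand\~ao--Datta, whose content is your observation that the test induced by $\cN$ and $\Phi_d$ has overlap at most $1/d$ with any separable image. It then chooses $\sigma=\rho_A\otimes\rho_B$ and applies the Wang--Renner bound $D_h^{\varepsilon}(\rho\|\sigma)\leq\frac{1}{1-\varepsilon}[D(\rho\|\sigma)+h(\varepsilon)]$, whose proof is exactly your binary-measurement data-processing step.

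Your mixing trick, which forces the fidelity to equal $1-\varepsilon$, is valid and plays the role of rescaling the test operator so that the constraint is tight. The leftover case in your alternative route is also immediate, since $(1-\varepsilon)\log\frac{1}{1-\varepsilon}\leq h(\varepsilon)$.
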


\begin{proof}
Consider that
    \begin{align}
        E_{D,\operatorname{NE}}^{(1),\varepsilon}(\rho) &\leq \max_{\substack{0 \leq \Lambda \leq I \\ \Tr[\rho \Lambda] \geq 1 - \epsilon}} \min_{\sigma \in \operatorname{SEP}} - \log\Tr[\Lambda \sigma] \\
        &\leq \min_{\sigma \in \operatorname{SEP}} \max_{\substack{0 \leq \Lambda \leq I \\ \Tr[\rho \Lambda] \geq 1 - \epsilon}} - \log\Tr[\Lambda \sigma] \\
        &= \min_{\sigma \in \operatorname{SEP}} D^{\ve}_{h}(\rho \Vert \sigma) \\
        &\leq D_{h}^{\ve}(\rho_{AB} \Vert \rho_{A} \otimes \rho_{B}) \\
        &\leq \frac{1}{1-\ve}[I(A:B)+h(\ve)] \ , 
    \end{align}
    where the first inequality follows from \cite[Theorem 1]{Brandao-Datta-one-shot-ent}, the second from the max-min inequality, the equality from the definition of hypothesis testing relative entropy (see, e.g., \cite{{Wang_2012}}), the third inequality from choosing $\sigma = \rho_{A} \otimes \rho_{B}$, which is manifestly separable, and the final inequality from \cite[Eq.~(2)]{Wang_2012} and the definition of mutual information.
\end{proof}

We now establish our limits on quantum memories.

\begin{theorem}
\label{thm:q-mem-lims-from-ent-dist-with-NE-maps}
    Let $\ve \in [0,1/2)$, and let $\left(\cG_{n}\right)_{n}$ and $ \left(\wt{\cG}_{n}\right)_{n}$ be inhomogeneous Markov chains defined by channels $\cN^{i}_{A_{i} \to A_{i+1}}$ and $\wt{\cN}^{i}_{B_{i} \to B_{i+1}}$, respectively. For $n \geq \hat{n}$ where 
    \begin{align}
        \hat{n} \coloneq \min\!\left\{n \in \mbb{N}: \left[\prod_{i \in [n]} \eta_{\MI}(\cN^{i})\eta_{\MI}(\wt{\cN}^{i}) \right] < \frac{1-\ve - h(\ve)}{2\log\min\{\vert A_{1} \vert, \vert B_{1} \vert\}} \right\} \ ,
    \end{align}
    it is the case $E^{\ve}_{D,\operatorname{NE}}\left(\cG_{n} \otimes \wt{\cG}_{n}(\rho_{A_{1}B_{1}})\right) < 1$ for all quantum states $\rho_{A_{1}B_{1}}$. That is, for every input quantum state to the quantum memories, after time $\hat{n}$ the users cannot distill a state with singlet fraction greater than $1-\ve$ using non-entangling post-processing. \

    In particular, for homogeneous Markov chains $\cN_{A \to A}$ and $\cM_{B \to B}$, 
    \begin{align}
        \hat{n} = \left \lceil \frac{\log\!\left(\frac{2\log(\min\{\vert A \vert, \vert B \vert\})}{1-\ve-h(\ve)}\right)}{\log\!\left(\frac{1}{\eta_{\MI}(\cN)\eta_{\MI}(\cM)}\right)} \right \rceil \ . 
    \end{align}
\end{theorem}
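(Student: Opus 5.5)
The plan is to combine \cref{lem:one-shot-dist-ent-ub-by-MI} with the mutual-information contraction coefficient, applied one step at a time along both chains. Fix an arbitrary input $\rho_{A_1B_1}$ and write $\omega^{(n)} \coloneqq (\cG_n\otimes\wt{\cG}_n)(\rho_{A_1B_1})$. Suppose, for contradiction, that $E^{\ve}_{D,\operatorname{NE}}(\omega^{(n)})\geq 1$, i.e., some non-entangling post-processing yields singlet fraction at least $1-\ve$ with $d\geq 2$. Then \cref{lem:one-shot-dist-ent-ub-by-MI} gives
\begin{align}
1\leq \frac{1}{1-\ve}\left[I(A_{n+1}:B_{n+1})_{\omega^{(n)}}+h(\ve)\right],
\end{align}
that is, $I(A_{n+1}:B_{n+1})_{\omega^{(n)}}\geq 1-\ve-h(\ve)$. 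This threshold is positive precisely in the regime of $\ve$ where the statement has content; for $\ve<1/2$ small enough that $h(\ve)<1-\ve$ the argument goes through, and otherwise $\hat n$ is defined vacuously.

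The key step is the contraction estimate. I use the definition of $\eta_{\MI}$ in~\eqref{eq:eta_MI}, treating one side as the quantum reference $C$ while the channel acts on the other side. Applying $\cN^i$ on the $A$ side with $C=B_i$ gives $I(A_{i+1}:B_i)\leq \eta_{\MI}(\cN^i)\, I(A_i:B_i)$. Applying $\wt{\cN}^i$ on the $B$ side with $C=A_{i+1}$ then gives $I(A_{i+1}:B_{i+1})\leq \eta_{\MI}(\wt{\cN}^i)\, I(A_{i+1}:B_i)$. This is the same two-step argument as in property 2 of \cref{Lem:zeta-MI-prop}. Iterating over $i\in[n]$ yields
\begin{align}
I(A_{n+1}:B_{n+1})_{\omega^{(n)}}\leq \left[\prod_{i\in[n]}\eta_{\MI}(\cN^i)\,\eta_{\MI}(\wt{\cN}^i)\right] I(A_1:B_1)_\rho .
\end{align}
To finish the bound I use $I(A_1:B_1)_\rho\leq 2\log\min\{|A_1|,|B_1|\}$.

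Since each factor $\eta_{\MI}\leq 1$, the product is non-increasing in $n$, so the defining strict inequality for $\hat n$ holds for every $n\geq\hat n$. In that regime the right-hand side above is strictly below $1-\ve-h(\ve)$, which contradicts the first display. Hence $E^{\ve}_{D,\operatorname{NE}}(\omega^{(n)})<1$, and this holds for every input state. For homogeneous chains the product becomes $(\eta_{\MI}(\cN)\eta_{\MI}(\cM))^n$. Solving $(\eta_{\MI}(\cN)\eta_{\MI}(\cM))^n<\frac{1-\ve-h(\ve)}{2\log\min\{|A|,|B|\}}$ by taking logarithms gives the stated ceiling expression. There is a boundary case to check: whether strict versus non-strict inequality at an integer value matches the ceiling exactly. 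I would handle this directly, and also treat separately the degenerate cases $\eta_{\MI}=0$ and $\eta_{\MI}=1$.

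I expect the main obstacle to be technical rather than conceptual. The issue is ensuring that the one-shot bound is applied to the right quantity: the distilled state lives on the post-processed systems, whereas \cref{lem:one-shot-dist-ent-ub-by-MI} bounds $E_{D,\operatorname{NE}}$ directly by the mutual information of the pre-processed state $\omega^{(n)}$. I would rely on that lemma as stated rather than on any data processing under NE maps, since mutual information need not decrease under non-entangling maps. A secondary point is the dimension bound on the initial mutual information; here I would use $I(A:B)\leq 2\log\min\{|A|,|B|\}$ for the original systems $A_1,B_1$.
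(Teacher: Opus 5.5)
Your proposal is correct and follows essentially the same route as the paper. Lemma~\ref{lem:one-shot-dist-ent-ub-by-MI} reduces the claim to showing $I(A:B)<1-\varepsilon-h(\varepsilon)$, which is then guaranteed by contracting the mutual information through both chains with $\eta_{\mathrm{MI}}$ and using the dimension bound $2\log\min\{|A_1|,|B_1|\}$; the paper applies all the $A$-side channels first and then the $B$-side ones, whereas you interleave them, which makes no difference since they act on different systems. Your additional remarks should be kept, since the paper passes over them silently: the product is monotone for $n\geq\hat n$, the statement is vacuous when $1-\varepsilon-h(\varepsilon)\leq 0$, and in the integer boundary case the ceiling formula does not match the strict inequality.
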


\begin{proof}
    It suffices to show that the one-shot distillable entanglement is bounded from above by one, as this would show that one cannot distill any state with sufficient overlap with the maximally entangled state, i.e., not even a single ebit. By arithmetical manipulation,
    \begin{align}
        \frac{1}{1-\ve}[I(A:B) + h(\ve)] < 1 \iff I(A:B) < 1 - (\ve + h(\ve)) \ .
    \end{align}
    By using the definition of the mutual information contraction coefficient, for every initial state $\rho_{A_{1}B_{1}}$,
    \begin{align}
        I(A_{n}:B_{n})_{(\cG_{n} \otimes \wt{\cG}_{n})(\rho)} &\leq \prod_{i \in [n]} \eta_{\MI}(\cN^{i}) I(A_{1}:B_{n})_{(\id \otimes \wt{\cG}_{n})(\rho)} \\
        &\leq \left[\prod_{i \in [n]} \eta_{\MI}(\cN^{i})\eta_{\MI}(\wt{\cN}^{i}) \right]I(A_{1}:B_{1})\\
        &\leq \left[\prod_{i \in [n]} \eta_{\MI}(\cN^{i})\eta_{\MI}(\wt{\cN}^{i}) \right]2\log\min\{\vert A_{1} \vert, \vert B_{1} \vert\} \ . \label{eq:MI-upper-bound}
    \end{align}
    Thus, by \cref{lem:one-shot-dist-ent-ub-by-MI}, if~\eqref{eq:MI-upper-bound} is strictly bounded from above by $1 - (\ve + h(\ve))$,  the claim holds. 

    The special case of homogeneous Markov chains follows from arithmetic and the contraction coefficient on each system being the same at every time step.
\end{proof}

\begin{remark}
    The fact that the interval does not include $1/2$ is a consequence of there always existing a product state with singlet fraction $1/2$. One may verify that \cref{lem:one-shot-dist-ent-ub-by-MI} would require $I(A:B) < 0$ in the case $\ve = 1/2$, which is consistent with this fact.
\end{remark}

\subsubsection{Limits on entanglement preservation with LOCC post-processing}

While \cref{thm:q-mem-lims-from-ent-dist-with-NE-maps} is appealing in the sense that it is a very strong no-go theorem, as non-entangling channels represent a powerful class of free channels, there are at least two natural reservations against it. First, non-entangling channels are perhaps \textit{too} powerful a class of channels to be relevant. Second, in \eqref{eq:MI-upper-bound} we relaxed the mutual information to twice the local dimension, which ought to be too loose of a starting quantity as one can only distill at most half of that--- the minimum of the local dimensions. 

The following development resolves both of these problems. First, we place limits on using LOCC post-processing, the set of channels implementable by the two users being able to implement any quantum channel to their local systems and being able to exchange classical communication between each other\footnote{Of course, if the users can perform any local channel, one might expect they also could have a fault-tolerant quantum memory. Nonetheless, this only makes our no-go theorems more powerful.}, which are considered more reasonable than non-entangling channels (see \cite{Chitambar_2014} and references therein). Second, to do this, we use bounds in terms of the squashed entanglement of a bipartite state $\rho_{AB}$ \cite{christandl2004squashed},
\begin{align}
\label{eq:squashed-entanglement}
        E_{\sq}(A:B)_{\rho} \coloneq \frac{1}{2} \inf_{\hat{\rho}_{ABE}}  \left\{I(A:B \vert E)_{\hat{\rho}} : \hat{\rho}_{AB}= \rho_{AB} \right\}\ .
\end{align} 
Due to the factor of $1/2$, the squashed entanglement is always upper bounded by the logarithm of the minimum of the local dimensions, rather than twice that.

With the physical and technical motivation and methods addressed, we state and prove our theorem.
\begin{theorem}\label{thm:q-mem-lims-from-ent-dist-with-LOCC-maps}
    Let $\ve \in (0,0.07)$, and let $\left(\cG_{n}\right)_{n}$ and $ \left(\wt{\cG}_{n}\right)_{n}$ be inhomogeneous Markov chains defined by channels $\cN^{i}_{A_{i} \to A_{i+1}}$ and $\wt{\cN}^{i}_{B_{i} \to B_{i+1}}$, respectively. For $n \geq \hat{n}$ where 
    \begin{align}
        \hat{n} \coloneq \min\!\left\{n \in \mbb{N}: \left[\prod_{i \in [n]} \eta_{\CMI}(\cN^{i})\eta_{\CMI}(\wt{\cN}^{i}) \right] < \frac{1-\sqrt{\ve} - g(\sqrt{\ve})}{\log\min\{\vert A_{1} \vert, \vert B_{1} \vert\}} \right\} \ ,
    \end{align}
    where $g(x)\coloneqq (x+1)\log (x+1) - x \log x$ is the bosonic entropy function, it is the case $E^{\ve}_{D,\operatorname{LOCC}}\left(\cG_{n} \otimes \wt{\cG}_{n}(\rho_{A_{1}B_{1}})\right) < 1$ for all quantum states $\rho_{A_{1}B_{1}}$. That is, for every input quantum state to the quantum memories, after time $\hat{n}$ the users cannot distill a state with singlet fraction greater than $1-\ve$ using LOCC post-processing.

    In particular, for homogeneous Markov chains $\cN_{A \to A}$ and $\cM_{B \to B}$, 
    \begin{align}
        \hat{n} &= \left \lceil 
        \frac{\log\left(\frac{\log\min\{\vert A_{1} \vert, \vert B_{1} \vert\}}{1-\sqrt{\ve} - g(\sqrt{\ve})}\right)}
        {\log\!\left(\frac{1}{\eta_{\CMI}(\cN)\eta_{\CMI}(\cM)}\right)}\right\rceil \ . 
    \end{align}
\end{theorem}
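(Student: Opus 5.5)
The proof parallels \cref{thm:q-mem-lims-from-ent-dist-with-NE-maps}, with two changes: the correlation measure is squashed entanglement, and the contraction coefficient is $\eta_{\CMI}$. The plan has three steps: a one-shot converse under LOCC, contraction of squashed entanglement through both memories, and a dimension bound at the input.

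\textbf{Step 1: one-shot converse.} I first want a bound of the form
\begin{align}
E^{\ve}_{D,\operatorname{LOCC}}(\rho)(1-\sqrt{\ve}) \leq E_{\sq}(A:B)_\rho + g(\sqrt{\ve}).
\end{align}
The argument is standard and uses properties of squashed entanglement.
\begin{itemize}
\item Squashed entanglement is an LOCC monotone.
\item Suppose the LOCC output has overlap at least $1-\ve$ with $\Phi_d$. Then it is $\sqrt{\ve}$-close to $\Phi_d$ in trace distance, by Fuchs--van de Graaf.
\item Squashed entanglement is uniformly continuous (Alicki--Fannes--Winter type bound). With the bosonic function $g$ and local dimension $d$, this gives $E_{\sq}(\text{output}) \geq \log d - \sqrt{\ve}\log d - g(\sqrt{\ve})$, using $E_{\sq}(\Phi_d)=\log d$.
\end{itemize}
Hence, if $E_{\sq}(A:B)_\rho < 1-\sqrt{\ve}-g(\sqrt{\ve})$, no $d\geq 2$ is achievable and the distillable entanglement is $<1$. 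The range $\ve<0.07$ is where $1-\sqrt{\ve}-g(\sqrt{\ve})>0$.

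\textbf{Step 2: contraction.} I show that $E_{\sq}$ contracts under local channels with factor $\eta_{\CMI}$. Fix any extension $\hat\rho_{ABE}$. Apply $\cN$ to $A$ with $E$ as the conditioning system and $B$ as the correlated system. \cref{Cor:CMI-amo-equal} gives
\begin{align}
I(B:A'|E) \leq \eta_{\CMI}(\cN)\, I(B:A|E).
\end{align}
Since channels on $A$ and $B$ commute, the same argument applies on the $B$ side. The image of the extension is an extension of the output state. Taking the infimum over extensions therefore gives
\begin{align}
E_{\sq}\bigl((\cN\otimes\cM)(\rho)\bigr) \leq \eta_{\CMI}(\cN)\,\eta_{\CMI}(\cM)\, E_{\sq}(\rho).
\end{align}
Iterating along the inhomogeneous chains gives the product $\prod_i \eta_{\CMI}(\cN^i)\eta_{\CMI}(\wt\cN^i)$.

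\textbf{Step 3: input bound and conclusion.} At the input I use $E_{\sq}(A_1:B_1)\leq \log\min\{|A_1|,|B_1|\}$. This follows from the trivial extension, since $\tfrac12 I(A:B)\leq \log\min\{|A|,|B|\}$. Combining this with the definition of $\hat n$ yields the claim. The homogeneous formula follows by solving $(\eta_{\CMI}(\cN)\eta_{\CMI}(\cM))^n < c$ for $n$.

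\textbf{Main obstacle.} I expect Step 1 to be the hardest: getting the exact continuity constant that produces $\sqrt{\ve}$ and $g(\sqrt{\ve})$. The dimension also needs care, because the factor multiplying $\sqrt{\ve}$ should be $\log d$ of the target and not of the input state. I would first cite the Shirokov/Winter continuity bound for squashed entanglement. Because it is applied to states near $\Phi_d$, the dimension entering the bound is $d$, which is consistent with the stated threshold.
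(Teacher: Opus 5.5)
Your proposal is correct and is essentially the paper's proof. The paper uses the one-shot converse $E^{\varepsilon}_{D,\mathrm{LOCC}}\le\frac{1}{1-\sqrt{\varepsilon}}\bigl(E_{\mathrm{sq}}+g(\sqrt{\varepsilon})\bigr)$, citing it from \cite[Theorem 13.9]{khatri2024principlesquantumcommunicationtheory}. It then bounds the output $E_{\mathrm{sq}}$ by $\tfrac12 I(A_n:B_n|E)$ for an arbitrary extension of the input carried through both memories, removes the $\eta_{\mathrm{CMI}}$ factors channel by channel, and finishes with the dimension bound; this is equivalent to your phrasing as a contraction of $E_{\mathrm{sq}}$ itself.

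Two small cautions, the first of which concerns your Step 1.
\begin{enumerate}
\item The exact constants $\sqrt{\varepsilon}$ and $g(\sqrt{\varepsilon})$ come from applying Uhlmann's theorem to the fidelity $1-\varepsilon$ directly. Every extension of the distilled state is then within trace distance $\sqrt{\varepsilon}$ of some $\Phi_d\otimes\tau_E$, and the continuity bound for conditional mutual information finishes the step. Your route, passing first to trace distance via Fuchs--van de Graaf and then using a trace-distance continuity bound for $E_{\mathrm{sq}}$ (which itself goes through Uhlmann), gives a worse dependence on $\varepsilon$.
\item With base-$2$ logarithms, $1-\sqrt{\varepsilon}-g(\sqrt{\varepsilon})$ is positive only for $\varepsilon$ below roughly $0.042$. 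On the rest of $(0,0.07)$ the set defining $\hat{n}$ is empty, so the main claim is vacuous and the closed form for $\hat{n}$ is undefined. This slip is shared with the paper's stated range.
\end{enumerate}
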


\begin{proof}
    For a state $\rho_{AB}$ that is stored in the quantum memories, we have the sequence of inequalities
    \begin{align}
        E_{D,\LOCC}^{\ve}(\rho) &\leq \frac{1}{1-\sqrt{\ve}}\left( E_{\sq}(A_{n}:B_{n})_{(\cG_{n} \otimes \wt{\cG}_{n})(\rho)} + g(\sqrt{\ve}) \right) \\
        &\leq \frac{1}{1-\sqrt{\ve}}\left( \frac{1}{2}I(A_{n}:B_{n} \vert E)_{(\cG_{n} \otimes \wt{\cG}_{n} \otimes \id_{E})(\rho)} + g(\sqrt{\ve}) \right) \\
        &\leq \frac{1}{1-\sqrt{\ve}}\left( \frac{1}{2}\prod_{i \in \{1,\ldots ,n\}} \left[\eta_{\CMI}(\cN^{i})\eta_{\CMI}(\wt{\cN}^{i}) \right]I(A_{1}:B_{1} \vert E)_{\rho} + g(\sqrt{\ve}) \right) \\
        &\leq \frac{1}{1-\sqrt{\ve}}\left( \prod_{i \in \{1,\ldots ,n\}} \left[\eta_{\CMI}(\cN^{i})\eta_{\CMI}(\wt{\cN}^{i}) \right] \log\left(\min\{\vert A_{1} \vert, \vert B_{1} \vert \right) + g(\sqrt{\ve}) \right) \ , \label{eq:ent-distil-with-LOCC}
    \end{align}
    where the first inequality follows from \cite[Theorem 13.9]{khatri2024principlesquantumcommunicationtheory},
    the second inequality from choosing an arbitrary extension of the initial $\rho_{AB}$, the fourth from the definition of the CMI contraction coeffficient, and the final inequality from the standard dimension bound on the conditional mutual information. Since the above inequality holds for an arbitrary initial state, it follows that if this final upper bound is strictly bounded above by $1$, then no entanglement can be distilled using LOCC from the state shared between the quantum memories. As $\frac{g(\sqrt{\ve})}{1-\sqrt{\ve}} \geq 1$ for $\ve \in (0.075,1]$, we restrict $\ve \in (0,0.07)$ in the theorem statement. Thus, bounding \eqref{eq:ent-distil-with-LOCC} from above by 1 and rearranging terms completes the proof.
\end{proof}

\subsubsection{Limits on preserving resources for secret key with LOCC post-processing}

It is known that distilling key from quantum states can be easier than distilling entanglement. Thus, the previous theorems do not imply that the memory is useless for preserving states enough to perform key distillation. While obviously if the users knew at the start they wanted to distill secret key, they would do so immediately rather than storing the quantum resource in a memory, if the users stored the quantum state and \textit{later} decided they wanted to distill key from it, then bounds on distillable key are relevant. This is our motivation for establishing such limits on key distillation.

We begin with our limits on key distillation when using local operations and public communication (LOPC). This will again make use of the squashed entanglement \eqref{eq:squashed-entanglement}. This is because it is known to serve as an upper bound on the asymptotics of key distillation and an approximate upper bound in the one-shot setting \cite{Wilde_squashed-ent}.

\begin{theorem}
\label{thm:limits-on-key-distil-with-two-way}
    Let $\ve \in (0,0.01)$, and let $\left(\cG_{n}\right)_{n}$ and $ \left(\wt{\cG}_{n}\right)_{n}$ be inhomogeneous Markov chains defined by channels $\cN^{i}_{A_{i} \to A_{i+1}}$ and $\wt{\cN}^{i}_{B_{i} \to B_{i+1}}$, respectively. One cannot distill a state of fidelity $> 1 - \ve$ to a secret key using LOPC from any state stored in the memories for any time larger than or equal to
    \begin{align}
        \min\!\left\{n: \left[\prod_{i \in [n]} \eta_{\CMI}(\cN^{i})\eta_{\CMI}(\wt{\cN}^{i}) \right] < \frac{1-2\sqrt{\ve} -2g(\sqrt{\ve})}{\log\min\{\vert A_{1} \vert, \vert B_{1} \vert\}} \right\} \ .
    \end{align}

    In particular, for homogeneous Markov chains $\cN_{A \to A}$ and $\cM_{B \to B}$, one cannot distill a state of fidelity $> 1 - \ve$ to a secret key using LOCC from any state stored in the memories past the critical time
    \begin{align}
        n_{c} &= \left \lceil 
        \frac{\log\left(\frac{\log\min\{\vert A_{1} \vert, \vert B_{1} \vert\}}{1-2\sqrt{\ve} -2g(\sqrt{\ve})}\right)}
        {\log\!\left(\frac{1}{\eta_{\CMI}(\cN)\eta_{\CMI}(\cM)}\right)}\right\rceil .
    \end{align}
\end{theorem}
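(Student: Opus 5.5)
We mirror the proof of \cref{thm:q-mem-lims-from-ent-dist-with-LOCC-maps}, replacing the one-shot distillable-entanglement bound by a one-shot bound on distillable key in terms of squashed entanglement. The first ingredient is the result of \cite{Wilde_squashed-ent}: for any state $\omega_{AB}$, the one-shot distillable key under LOPC at fidelity error $\ve$ satisfies an estimate of the form
\begin{align}
    K^{\ve}_{\operatorname{LOPC}}(\omega) \leq \frac{1}{1-2\sqrt{\ve}}\left( E_{\sq}(A:B)_\omega + 2 g(\sqrt{\ve})\right).
\end{align}
We take it with the constants chosen to match the threshold in the statement. Any additive slack differs from that in the entanglement case, because privacy-test fidelity $1-\ve$ yields trace-distance error of order $\sqrt{\ve}$, and continuity of $E_{\sq}$ applies on the key and shield systems.

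Next, we bound the squashed entanglement of $(\cG_n\otimes\wt\cG_n)(\rho_{A_1B_1})$ exactly as in \eqref{eq:ent-distil-with-LOCC}. Fix an arbitrary extension $\rho_{A_1B_1E}$ and upper bound $E_{\sq}$ by $\frac12 I(A_{n+1}:B_{n+1}|E)$ of the output. Then peel off the channels one at a time. When applying $\cN^{i}$ on Alice's side, we treat $E$ as the conditioning system $R$ and Bob's current system as $C$ in the definition of $\eta_{\CMI}$. This gives
\begin{align}
    I(A_{i+1}:B|E) \leq \eta_{\CMI}(\cN^i)\, I(A_i:B|E).
\end{align}
The same holds symmetrically for $\wt\cN^i$, using the symmetry of conditional mutual information. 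Iterating yields the product $\prod_i \eta_{\CMI}(\cN^i)\eta_{\CMI}(\wt\cN^i)$ times $I(A_1:B_1|E)\le 2\log\min\{|A_1|,|B_1|\}$. The factor $\frac12$ cancels the $2$.

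Finally, we demand that the resulting bound be strictly less than $1$, which rules out even one key bit. Rearranging gives
\begin{align}
    \prod_i \eta_{\CMI}(\cN^i)\eta_{\CMI}(\wt\cN^i)\,\log\min\{|A_1|,|B_1|\} < 1-2\sqrt{\ve}-2g(\sqrt{\ve}).
\end{align}
This is the stated condition. Since each $\eta_{\CMI}\le 1$, the product is nonincreasing in $n$, so the conclusion holds for all later times. The restriction $\ve\in(0,0.01)$ ensures the right-hand side is positive, which is a numerical check on $2\sqrt{\ve}+2g(\sqrt{\ve})<1$. The homogeneous formula for $n_c$ follows by taking logarithms of $(\eta_{\CMI}(\cN)\eta_{\CMI}(\cM))^n$.

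The main obstacle is the first step: pinning down the precise one-shot squashed-entanglement bound on LOPC key with the constants $2\sqrt{\ve}$ and $2g(\sqrt{\ve})$. If the cited bound comes with different constants, only the threshold changes, not the argument.
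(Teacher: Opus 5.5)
Your proposal is correct and follows the paper's route. Your first ingredient is exactly \cite[Theorem 2]{Wilde_squashed-ent}, $\log|\widehat{A}| \le E_{\mathrm{sq}} + 2\sqrt{\varepsilon}\log|\widehat{A}| + 2g(\sqrt{\varepsilon})$, applied to the protocol's output and pulled back by LOCC monotonicity of squashed entanglement; it rearranges to your $\frac{1}{1-2\sqrt{\varepsilon}}$ form with precisely the stated constants, so the obstacle you flagged does not arise. The remaining steps match the paper's: the fixed extension $E$, channel-by-channel CMI contraction with $E$ as the conditioning system, the bound $I(A_1:B_1|E)\le 2\log\min\{|A_1|,|B_1|\}$, and the arithmetic. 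Your remark that the product is nonincreasing in $n$ also justifies the ``for all later times'' claim, which the paper leaves implicit.
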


\begin{proof}
    Following the notation of \cite{Wilde_squashed-ent}, in the framework of key distillation, there exists a one-shot key distillation protocol for state $\rho_{AB}$ that distills $\lfloor \log_{2}\vert \widehat{A}\vert \rfloor$ bits of key to tolerated error $\ve \in (0,1)$ if there exists a quantum channel $\cL_{AB \to \widehat{A}A'\widehat{B}B'}$ implementable by two-way LOCC such that
    \begin{align}
        F(\cL(\rho), \gamma_{\widehat{A}A'\widehat{B}B'}) \geq 1 - \ve \ , 
    \end{align}
    where $\gamma_{\widehat{A}A'\widehat{B}B'}$ is a private state.\footnote{A private state is one such that there exist local operations that allow Alice and Bob to distill a key containing $\log_{2}|\widehat{A}|$ bits. See \cite{Wilde_squashed-ent} and references therein for more details.} Thus our goal is to show that, for the tolerated error $\ve$, it must be the case $\log\vert \widehat{A} \vert < 1$ regardless of the state initially stored in the quantum memories. To do this, let $\cL$ be an LOCC channel that achieves the tolerated error $\ve$ for dimension of key, $\vert \widehat{A} \vert$. Then, the following sequence of inequalities holds:
    \begin{align}
        \log \vert \widehat{A} \vert &\leq E_{\text{sq}}(\widehat{A}A':\widehat{B}B')_{\cL(\cG_{n} \otimes \wt{\cG}_{n}(\rho))} + f_{1}(\sqrt{\ve},\vert \widehat{A} \vert) \\ 
        &\leq E_{\text{sq}}(A_{n}:B_{n})_{\cG_{n} \otimes \wt{\cG}_{n}(\rho)} + f_{1}(\sqrt{\ve},\vert \widehat{A} \vert) \\
        &\leq \frac{1}{2}I(A_{n}:B_{n} \vert E)_{\cG_{n} \otimes \wt{\cG}_{n} \otimes \id_{E}(\rho)} + f_{1}(\sqrt{\ve},\vert \widehat{A} \vert) \\
        &\leq \prod_{i \in \{1,\ldots ,n\}} \left[\eta_{\CMI}(\cN^{i})\eta_{\CMI}(\wt{\cN}^{i}) \right] \frac{1}{2}I(A:B \vert E) + f_{1}(\sqrt{\ve},\vert \widehat{A} \vert) \\
        &\leq \prod_{i \in \{1,\ldots ,n\}} \left[\eta_{\CMI}(\cN^{i})\eta_{\CMI}(\wt{\cN}^{i}) \right]\log\min\{\vert A_{1} \vert , \vert B_{1} \vert \} + f_{1}(\sqrt{\ve},\vert \widehat{A} \vert) \ , 
    \end{align}
    where the first inequality follows from \cite[Theorem 2]{Wilde_squashed-ent} with
    \begin{equation} \label{eq:f_1_def}
     f_1(\delta, K) \coloneq 2 \delta \log(K) + 2 g(\delta)   ,
    \end{equation}
     the second because squashed entanglement monotonically decreases under LOCC
    \cite{christandl2004squashed}, the third from choosing an arbitrary extension of the initial state $\rho_{AB}$, the fourth from the definition of the CMI contraction coefficient, and the final inequality from the standard dimension bound on conditional mutual information. Next, using the definition of $f_{1}$ in~\eqref{eq:f_1_def} 
    and re-arranging terms, we rewrite the above inequality as
    \begin{align}
        \log \vert \widehat{A} \vert \leq \frac{ \prod_{i \in \{1,\ldots ,n\}} \left[\eta_{\CMI}(\cN^{i})\eta_{\CMI}(\wt{\cN}^{i}) \right]\log\min\{\vert A_{1} \vert , \vert B_{1} \vert \} + 2g(\sqrt{\ve})}{1-2\sqrt{\ve}} \ .
    \end{align}
    It thus suffices to upper bound the right-hand side by $1$. As $\frac{2g(\sqrt{\ve})}{1-2\sqrt{\ve}} \geq 1$ for $\ve \geq 0.010989$, we restrict to $\ve \in (0,0.01)$ in the theorem statement. The rest of the proof is arithmetic.
\end{proof}

While the operational relevance is less clear, we also prove bounds when the users are only willing to use one-way LOCC during post-processing.
\begin{theorem}
\label{thm:limits-for-one-way-key-distil}
    Let $\ve \in (0,0.14)$, and let $\left(\cG_{n}\right)_{n}$ and $ \left(\wt{\cG}_{n}\right)_{n}$ be inhomogeneous Markov chains defined by channels $\cN^{i}_{A_{i} \to A_{i+1}}$ and $\wt{\cN}^{i}_{B_{i} \to B_{i+1}}$, respectively. One cannot distill a state of fidelity $> 1 - \ve$ to a secret key using one-way LOCC from any state stored in the memories once the time is
    \begin{align}
        \min\!\left\{n: \left[\prod_{i \in [n]} \eta_{\MI}(\cN^{i})\eta_{\MI}(\wt{\cN}^{i}) \right] < \frac{\ell(\ve)}{2\log\min\{\vert A_{1} \vert, \vert B_{1} \vert\}} \right\} \ ,
    \end{align}
    where
    \begin{align}
        \ell(\ve) \coloneq (1-\ve)\log(1/\kappa(\ve)) - h(\ve) \text{ and } \kappa(\ve) \coloneq  \frac{1}{4} + \frac{\ve}{2} + \frac{\sqrt{3\ve(1-\ve)}}{2} \ .
    \end{align}

    In particular, for homogeneous Markov chains $\cN_{A \to A}$, $\cM_{B \to B}$, one cannot distill a state of fidelity $> 1 - \ve$ to a secret key using one-way LOCC from any state stored in the memories past the time
    \begin{align}
        \hat{n} = \left \lceil \frac{\log\!\left(\frac{2\log[\min\{\vert A \vert, \vert B \vert\}]}{\ell(\ve)}\right)}{ \log\!\left(\frac{1}{\eta_{\MI}(\cN)\eta_{\MI}(\cM)}\right)} \right \rceil \ . 
    \end{align}
\end{theorem}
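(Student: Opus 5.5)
The argument follows the template of \cref{thm:q-mem-lims-from-ent-dist-with-NE-maps}. The only new ingredient is a one-shot converse for one-way key distillation in terms of the mutual information. First I would bound the one-shot distillable key under one-way LOCC (equivalently one-way LOPC) by a hypothesis-testing relative entropy against a product state. If $\cL$ is a one-way LOCC channel with $F(\cL(\rho),\gamma)\geq 1-\ve$ for a private state $\gamma$ with key dimension $K=|\widehat{A}|$, then the privacy test of \cite{Wilde_squashed-ent} acts as a measurement $\Lambda$. It satisfies $\Tr[\Lambda\cL(\rho)]\geq 1-\ve$ and $\Tr[\Lambda\tau]\leq$ something small for every state $\tau$ that is useless for key.

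The decisive observation is that $\cL$ applied to $\rho_A\otimes\rho_B$ produces a state whose key part is uncorrelated, i.e.\ product-like across the key registers. For such ``key-attacked'' states the acceptance probability is at most $1/K$, but the fidelity relaxation costs a factor. I expect the bound to take the form $\Tr[\Lambda\,\cL(\rho_A\otimes\rho_B)]\leq \kappa(\ve)^{\log K}$-type, and for $K=2$ specifically at most $\kappa(\ve)=\frac14+\frac{\ve}{2}+\frac{\sqrt{3\ve(1-\ve)}}{2}$. This is the value obtained by optimizing the overlap of a $1-\ve$-fidelity neighbour of a one-bit private state with a product state, computed via the triangle inequality for the sine/Bures angle. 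Since only the event ``at least one key bit'' matters, it suffices to show that $K\geq 2$ forces $D_h^{\ve}(\rho_{AB}\|\rho_A\otimes\rho_B)\geq \log(1/\kappa(\ve))$, using data processing of $D_h^\ve$ under $\cL$.

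Combining this with $D_h^\ve(\rho_{AB}\|\rho_A\otimes\rho_B)\leq \frac{1}{1-\ve}[I(A:B)+h(\ve)]$ from \cite[Eq.~(2)]{Wang_2012}, as in \cref{lem:one-shot-dist-ent-ub-by-MI}, one key bit requires $I(A_n:B_n)\geq (1-\ve)\log(1/\kappa(\ve))-h(\ve)=\ell(\ve)$. Next, exactly as in \eqref{eq:MI-upper-bound}, contract the mutual information through both memories with $\eta_{\MI}$ of each step and use $I(A_1:B_1)\leq 2\log\min\{|A_1|,|B_1|\}$. This gives $I(A_n:B_n)\leq \prod_i\eta_{\MI}(\cN^i)\eta_{\MI}(\wt\cN^i)\,2\log\min\{|A_1|,|B_1|\}$. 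Hence, once the product is below $\ell(\ve)/(2\log\min)$, no key bit can be distilled. The restriction $\ve<0.14$ is exactly the regime where $\ell(\ve)>0$, which I would verify numerically. The homogeneous formula then follows by taking logarithms.

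The main obstacle is the first step: justifying that one-way LOCC maps $\rho_A\otimes\rho_B$ to states that the privacy test accepts with probability at most $\kappa(\ve)$. One-way LOCC preserves the product structure only up to classical correlations conditioned on Alice's message, so the key registers remain uncorrelated from the shield given that message. I would try to show that such states are ``key-attacked'' in the sense of \cite{Wilde_squashed-ent}, and then bound their fidelity with a $\ve$-close private state using the $\sqrt{\ve}$ angle argument that yields $\kappa(\ve)$.
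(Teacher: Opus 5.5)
Your skeleton matches the paper's: reduce ``one key bit'' to $D_h^{\varepsilon}(\rho_{AB}\|\rho_A\otimes\rho_B)\geq\log(1/\kappa(\varepsilon))$, apply the bound $D_h^{\varepsilon}\leq\frac{1}{1-\varepsilon}[I(A:B)+h(\varepsilon)]$, contract $I(A:B)$ through both memories, and use the dimension bound. However, the step you flag as the main obstacle is a genuine gap, and the mechanism you propose for it cannot produce $\kappa(\varepsilon)$. Any argument that uses only that $\mathcal{L}(\rho_A\otimes\rho_B)$ is product, separable, or key-attacked gives privacy-test acceptance at most $1/K=1/2$. A fidelity relaxation can only increase that number, whereas $\kappa(0)=1/4<1/2$. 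Concretely, $D_h^{0}(\Phi^+\,\|\,|00\rangle\!\langle 00|)=1$ although $\log(1/\kappa(0))=2$, and the maximal overlap of a private bit with a product state is $1/2$, not $1/4$. That route yields only the weaker threshold $1-\varepsilon-h(\varepsilon)$, which is the separable-state bound and holds even for two-way LOCC. It does not give $\ell(\varepsilon)$, which tends to $2$ as $\varepsilon\to 0$.

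The missing idea is extendibility, i.e., monogamy of private correlations; this is also where one-wayness actually enters. Write a one-way LOCC channel from Alice to Bob as $\mathcal{L}=\sum_x\mathcal{A}_x\otimes\mathcal{B}_x$, with Bob's channels $\mathcal{B}_x$ trace preserving. Then $\mathcal{L}(\rho_A\otimes\rho_B)=\sum_x\mathcal{A}_x(\rho_A)\otimes\mathcal{B}_x(\rho_B)$ is exactly the $AB'$ marginal of $\sum_x(\mathcal{A}_x\otimes\mathcal{B}_x\otimes\mathcal{B}_x)(\rho_{AB}\otimes\rho_{B'})$. The $AB$ marginal of that state is $\omega\coloneqq\mathcal{L}(\rho_{AB})$, so it is an extension of $\omega$.

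Now suppose $F(\omega,\gamma)\geq 1-\varepsilon$ for a private state $\gamma$. Uhlmann's theorem gives an extension $\tilde\gamma_{ABB'}$ of $\gamma$ whose fidelity with this extension is at least $1-\varepsilon$. Since $B'$ is part of the eavesdropper's system for $\gamma$, the $AB'$ marginal of $\tilde\gamma$ is classical and uniform on Alice's key register and independent of $B'$. Hence the privacy test $\Pi_\gamma$ accepts it with probability $1/K^2\leq 1/4$. The angle triangle inequality then gives $\Tr[\Pi_\gamma\,\mathcal{L}(\rho_A\otimes\rho_B)]\leq\bigl(\tfrac12\sqrt{1-\varepsilon}+\tfrac{\sqrt3}{2}\sqrt{\varepsilon}\bigr)^2=\kappa(\varepsilon)$, while $\Tr[\Pi_\gamma\omega]\geq 1-\varepsilon$.

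This is the content of \cite[Theorem~2]{Singh_2025}, which the paper invokes directly. It states that if a key bit is one-way distillable, then $\inf_{\sigma_{ABB'}:\sigma_{AB}=\rho_{AB}}D_h^{\varepsilon}(\rho_{AB}\|\sigma_{AB'})\geq\log(1/\kappa(\varepsilon))$. The paper then relaxes this with the product extension $\sigma_{ABB'}=\rho_{AB}\otimes\rho_{B'}$, which lands on the same quantity $D_h^{\varepsilon}(\rho_{AB}\|\rho_A\otimes\rho_B)$ you use. With this replacement, your data-processing route closes.
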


\begin{proof}
    By \cite[Theorem 2]{Singh_2025}, for a state $\rho_{AB}$ and Hilbert space $B' \cong B$, if
    \begin{align}
        2^{-\inf_{\substack{\sigma_{ABB'}: \\\sigma_{AB} = \rho_{AB}}} D_{h}^{\ve}(\rho \Vert \Tr_{B}[\sigma])} > \kappa(\ve) \ , 
    \end{align}
    then there does not exist a one-way LOCC protocol that can distill any key. By basic arithmetic, we can re-express this condition as
    \begin{align}
        \inf_{\substack{\sigma_{ABB'}: \\\sigma_{AB} = \rho_{AB}}} D_{h}^{\ve}(\rho \Vert \Tr_{B}[\sigma]) < \log(1/\kappa(\ve)) \ .
    \end{align}
    Moreover, note
    \begin{align}
        \inf_{\substack{\sigma_{ABB'}: \\ \sigma_{AB} = \rho_{AB}}} D_{h}^{\ve}(\rho \Vert \Tr_{B}[\sigma]) \leq D^{\ve}_{h}(\rho_{AB} \Vert \rho_{A} \otimes \rho_{B}) \leq \frac{1}{1-\ve}\left[I(A:B) + h(\ve) \right]
    \end{align}
    where the first inequality follows from considering $\sigma_{ABB'} \coloneq \rho_{AB} \otimes \rho_{B'}$ and the second from \cite[Eq.~(2)]{Wang_2012}. It thus suffices that 
    \begin{align}
        \frac{1}{1-\ve}\left[I(A:B) + h(\ve) \right] < \log(1/\kappa(\ve)) \iff I(A:B) < \ell(\ve) \ 
    \end{align}
    for there to be no one-way distillable key to the appropriate tolerated error $\ve$. We remark that $\ell(\ve) > 0$ for $\ve \in (0,0.14)$, which selects for our parameter range. The rest of the proof is identical to that of \Cref{thm:q-mem-lims-from-ent-dist-with-NE-maps}.
\end{proof}

\subsection{Limits on quantum memories for quantum secret sharing schemes}\label{subsec:limits-on-QSS}
\begin{figure}
    \centering
    \includegraphics[width=0.8\linewidth]{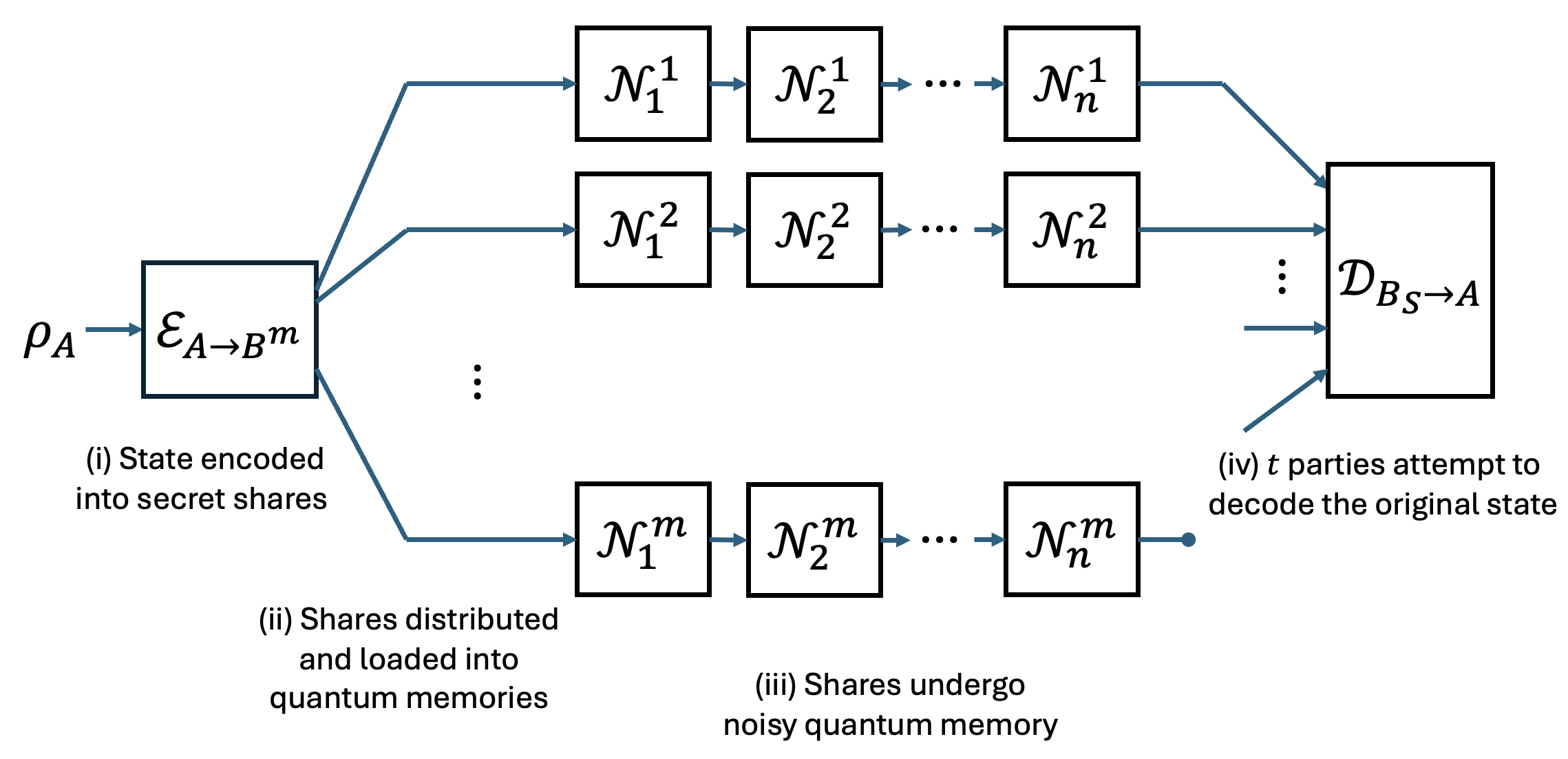}
    \caption{Depiction of model considered in Subsection \ref{subsec:limits-on-QSS}.}
    \label{fig:placeholder}
\end{figure}

In the previous section, we established limits on quantum memories by their ability to preserve the resources we want quantum states to contain even with post-processing. A similar use of a quantum memory is to store a quantum state that acts as part of a cryptographic scheme. An example of such a state consists of the shares in a quantum secret sharing scheme. To see this, recall that a $(t,m)$-quantum secret sharing (QSS) scheme is a method for mapping a single quantum system into a multipartite state made up of $m$ subsystems such that one can perfectly recover the original system by acting on only $t$ of the subsystems, but cannot recover the initial state ``much at all" with strictly less than $t$ subsystems \cite{Hillery99,gottesman2000theory}. 

Formally, consider Hilbert spaces $A$ and $\{B_{i}\}_{i \in \{1,\ldots,m\}}$, and let $B^{m} \coloneq \bigotimes_{i \in \{1,\ldots,m\}} B_{i}$. Note that $\vert B_{i} \vert$ need not be the same for each $i$. For every $S \subset \{1,\ldots,m\}$, define $B_{S} \coloneq \bigotimes_{i \in S} B_{i}$, let $S^{c}$ denote the complement of $S$ with respect to $\{1,\ldots,m\}$, and let $\cS_{t} \coloneq \{S \subset \{1,\ldots,m\}: \vert S \vert = t\}$. A $(t,m)$-quantum secret sharing scheme is then specified by an encoder $\cE_{A \to B^{m}}$ and decoders $\left\{\cD_{B_{S} \to A}\right\}_{S \in \cS_{t}}$. One may then require that the secret be recoverable with $t$ shares by demanding the channel fidelity of the encoding and decoding channels with the identity channel be large, i.e.,
\begin{align}
    1 - \ve \leq F(\cD_{B_{S} \to A} \circ \Tr_{B_{S^{c}}} \circ \cE, \id_{A}) \quad \forall S \in \cS_{t} \ ,
\end{align}
where $F(\cN,\cM) \coloneq \inf_{\rho_{RA}} F\!\left((\id_{R} \otimes \cN)(\rho), (\id_{R} \otimes \cM)(\rho)\right)$.

The concern we consider is then the following. To implement a $(t,m)$-quantum secret sharing scheme, the encoding $\cE$ will be performed, then the $m$ subsystems will be sent to the respective $m$ parties, who then will need to store their local system in their local quantum memory until a later time during which at least $t$ parties agree to gather their systems to decode the secret. One would expect that if the local quantum memories inject too much noise into the global state, the initial secret cannot be recovered. To capture this idea, one can ignore the intended decoders of the scheme and simply determine if decoders even \textit{exist}. Aligning with~\cite{Ouyang_2023}, we identify this as $\ve$-recoverability, which by itself does not actually require the channel to be a $(t,m)$ QSS scheme.
\begin{definition}\label{def:epsilon-recoverable-encoder}
    Let $\cE_{A \to B^{m}}$ and $\cN_{B^{m} \to \wt{B}^{m}}$ be quantum channels. We say $\cE$ is $\ve$-recoverable for $(t,m)$-sharing under noise model $\cN$ if
    \begin{align}
        1 - \ve \leq \min_{S \in \cS_{t}} \max_{\cD_{\wt{B}_{S} \to A}} F(\cD \circ \Tr_{\wt{B}_{S^{c}}} \circ \cN \circ \cE, \id_{A}) \ ,
    \end{align}
    where the maximization is over all decoding channels from $\wt{B}_{S} \coloneq \bigotimes_{i \in S} \wt{B}_{i}$ to $A$.
\end{definition}

The following shows that the MI contraction coefficient of the noise on $t$ memories controls the depth at which no secret sharing scheme will be $\ve$-recoverable for $(t,m)$-sharing given noisy quantum memories.

\begin{theorem}
\label{thm:QSS-no-go-theorem}
    Consider $m$ users, where user $j$ has a quantum memory described by a sequence of channels $\left(\cN_{j}^{i}\right)_{j \in \mbb{N}}$, and define $\cN^{i}_{S} \coloneqq \bigotimes_{j \in S} \cN_{j}^{i}$. Let $d \in \mbb{N}$, $\delta \in (0, \frac{d-1}{2d})$, and $\ve \in (0, \frac{d-1}{2d}-\delta)$. Then, for all $n\in\mathbb{N}$ larger than 
    \begin{align}
        \min\!\left\{n \in \mbb{N} : \exists S \in \cS_{t} : 
        \prod_{i \in \{1,\ldots,n\}} \eta_{\MI}(\cN^{i}_{S}) \leq \frac{\delta^{2}}{2\ln(2)\log(\min\{\vert B_{S} \vert, \vert A \otimes B_{S^{c}} \vert \})} \right\}
    \end{align}
    there does not exist an encoding channel that is $\ve$-recoverable for $(t,m)$ sharing under noise model $\cN^{n} \coloneq \bigcirc_{i \in \{1,\ldots,n}\} \bigotimes_{j \in \{1,\ldots,m\}} \cN^{i}_{j}$. 
\end{theorem}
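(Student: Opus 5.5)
The plan is to argue by contradiction. Suppose some encoder $\cE_{A\to B^m}$ is $\ve$-recoverable for $(t,m)$-sharing under $\cN^{n}$, with $n$ beyond the stated threshold. Fix the set $S\in\cS_t$ whose product of mutual-information contraction coefficients is below the threshold. By \cref{def:epsilon-recoverable-encoder}, recovery must in particular succeed from the shares in this $S$, and we will show that the noise on $S$ has decoupled those shares from everything else. Take $d=|A|$ and the maximally entangled state $\Phi_{RA}$ on a reference $R\cong A$, and set $\omega_{R B^m}\coloneqq \cE(\Phi_{RA})$.

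\textbf{Step 1 (contraction of correlations between $B_S$ and the rest).} Regard $C\coloneqq RB_{S^c}$ as the auxiliary system and $B_S$ as the channel input. Since the noise factorizes over users, the noise on $S^c$ acts only on $C$ and can be handled by data processing. By definition \eqref{eq:eta_MI}, applied step by step to the channels $\cN^{i}_S$ (this is the submultiplicativity under concatenation, used as in the proof of \cref{thm:q-mem-lims-from-ent-dist-with-NE-maps}), we get
\begin{align}
I(R\wt B_{S^c}:\wt B_S)\le \Big[\prod_{i=1}^n \eta_{\MI}(\cN^i_S)\Big] I(RB_{S^c}:B_S)_\omega \le \Big[\prod_{i=1}^n \eta_{\MI}(\cN^i_S)\Big]\, 2\log\min\{|B_S|,|AB_{S^c}|\},
\end{align}
where the last step is the standard dimension bound on mutual information. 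By the choice of $n$, this gives $I(R\wt B_{S^c}:\wt B_S)\le \delta^2/\ln 2$.

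\textbf{Step 2 (Pinsker).} By quantum Pinsker, $D(\rho\|\sigma)\ge \frac{1}{2\ln 2}\|\rho-\sigma\|_1^2$, so the final state $\tau$ satisfies $\|\tau_{R\wt B^m}-\tau_{R\wt B_{S^c}}\otimes\tau_{\wt B_S}\|_1\le \sqrt{2}\,\delta$. Tracing out $\wt B_{S^c}$ gives $\tfrac12\|\tau_{R\wt B_S}-\pi_R\otimes\tau_{\wt B_S}\|_1\le \delta$, up to the harmless constant $1/\sqrt2$.

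\textbf{Step 3 (recovery contradicts decoupling).} For any decoder $\cD_{\wt B_S\to A}$, recoverability evaluated on $\Phi_{RA}$ gives $\langle\Phi_d|(\id\otimes\cD)(\tau_{R\wt B_S})|\Phi_d\rangle\ge 1-\ve$. On the other hand, applying $\cD$ to the product state $\pi_R\otimes\tau_{\wt B_S}$ yields $\pi_R\otimes\xi_A$, which has singlet fraction exactly $1/d^2$. Since $|\Tr[M(\rho-\sigma)]|\le \frac12\|\rho-\sigma\|_1$ for $0\le M\le I$, we obtain $1-\ve\le 1/d^2+\delta$, i.e., $\ve\ge 1-\delta-1/d^2$.

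This contradicts the hypothesis $\ve<\frac{d-1}{2d}-\delta$, because $1-\frac1{d^2}\ge\frac{d-1}{2d}$ for every $d\ge1$. The assumed range is looser than what the argument needs, which is why I think the authors go through fidelity rather than trace distance. If $F$ in the definition means squared fidelity in a way that makes the trace-distance step lossy, I would instead use Fuchs--van de Graaf. Concretely, one gets $\sqrt{F(\cdot,\Phi)}\ge\sqrt{1-\ve}$ and bounds the fidelity of a $\delta$-close state, which yields a weaker condition of the stated form $\ve+\delta<\frac{d-1}{2d}$.

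\textbf{Main obstacle.} The bookkeeping in Step 1 is the delicate part. One must check that the fidelity is evaluated with $d=|A|$, and that the noise on $S^c$, which does not commute trivially with the choice of reference, is correctly absorbed into data processing on $C$. One must also confirm that the chained inequality over $n$ inhomogeneous steps is legitimate when the reference $C$ itself evolves. I would handle the last point by applying all of the $S^c$ noise first and then contracting step by step along $S$, since the two act on disjoint systems and therefore commute.
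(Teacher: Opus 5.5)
Your argument is correct, and it takes a genuinely different and shorter route than the paper.

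\textbf{What the paper does.} It starts from the same test state $\Phi_d$ and the same contraction estimate $\tfrac12 I(B_S:A'B_{S^c})\le \prod_i \eta_{\MI}(\cN^i_S)\log\min\{\cdot\}$. It then reads this quantity as an upper bound on the squashed entanglement, by conditioning on an uncorrelated system $E$. Next it applies the Li--Winter estimate (\cref{lem:sq-ent-bound-k-sym-ext}) with $k=2$ to get $\delta$-closeness to a state that is $2$-extendible on $B_S$. It shows that extendibility survives the decoder, and finally bounds the singlet fraction of a $2$-symmetrically extendible state by $\frac1d(1+\frac{d-1}{2})$ via twirling (\cref{prop:singlet-frac-bound-for-k-sym-ext-states}). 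That bound is exactly where the range $\ve<\frac{d-1}{2d}-\delta$ comes from.

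\textbf{What you do instead.} You use Pinsker to get closeness to the product state $\pi_R\otimes\tau_{\wt B_S}$, whose decoded singlet fraction is $1/d^2$. This works because the paper's conditioning system is uncorrelated, so what is actually being controlled is the mutual information, which forces near-decoupling. Your route exploits that directly.

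\textbf{What each route buys.} Yours is more elementary: Pinsker plus data processing replaces Li--Winter and the isotropic-twirl extendibility bound. It also gives a strictly stronger conclusion, ruling out recoverability whenever $\ve<1-\frac{1}{d^2}-\frac{\delta}{\sqrt2}$, which strictly contains the stated range. The paper's extendibility route is the one that would survive if one only controlled a conditional mutual information for some correlated extension. That is the regime of small squashed entanglement but non-negligible (e.g., classical) correlations, where decoupling fails but the no-cloning/extendibility obstruction to recovery still applies. For the statement as given, that robustness is not needed.

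\textbf{Two small corrections.}
\begin{itemize}
\item In Step 3, the decoder must be the one guaranteed by recoverability for your fixed $S$, not an arbitrary decoder.
\item The paper's $F$ is the squared fidelity, so $F(\rho,\dyad{\Phi_d})=\bra{\Phi_d}\rho\ket{\Phi_d}$ and your trace-distance step is already exact. The Fuchs--van de Graaf fallback is therefore unnecessary. Your guess about the looser stated range is also misplaced: it comes from the detour through $2$-extendibility, not from working with fidelity.
\end{itemize}
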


A limitation of the practicality of computing the bound in \cref{thm:QSS-no-go-theorem} is that it is expressed in terms of the contraction coefficient of product channels, i.e.,~$\eta_{\MI}(\cN^{i}_{S}) = \eta_{\MI}(\bigotimes_{j \in S} N_{j}^{i})$. This however makes sense from a coding theory perspective. First, a (quantum) secret sharing scheme can be built from an error-correcting code. Second, one way of seeing that contraction coefficients should not tensorize is that one can design error correcting codes to preserve the distinguishability of inputs over product channels. As such, the encoding of the secret sharing scheme may act simultaneously as a secret sharing scheme and an error correcting code for the noise in the network of memories, and this is captured in \cref{thm:QSS-no-go-theorem}. Nonetheless, one can obtain computable bounds that only look at the channels individually by using the $\alpha_{+}$ Doeblin coefficient.
\begin{corollary}
    Consider $m$ users where user $j$ has a quantum memory described by a sequence of channels $\left(\cN_{j}^{i}\right)_{j \in \mbb{N}}$ and define $\cN^{i}_{S} \coloneqq \bigotimes_{j \in S} \cN_{j}^{i}$. Let $d \in \mbb{N}$, $\delta \in \left(0, \frac{d-1}{2d} \right)$, and $\ve \in \left(0, \frac{d-1}{2d}-\delta\right)$. Then, for all $n\in\mathbb{N}$ larger than 
    \begin{align}
        \min\!\left\{n \in \mbb{N} : \exists S \in \cS_{t} : 
        \prod_{i \in \{1,\ldots,n\}} \left(1 - \prod_{j \in S}\alpha_{+}(\cN^{i}_{j}) \right) \leq \frac{\delta^{2}}{2\ln(2)\log(\min\{\vert B_{S} \vert, \vert A \otimes B_{S^{c}} \vert \})} \right\}
    \end{align}
    there does not exist an encoding channel that is $\ve$-recoverable for $(t,m)$ sharing under noise model $\cN^{n} \coloneq \bigcirc_{i \in \{1,\ldots,n\}} \bigotimes_{j \in \{1,\ldots,m\}} \cN^{i}_{j}$.
\end{corollary}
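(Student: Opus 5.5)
The plan is to reduce the corollary to \cref{thm:QSS-no-go-theorem}. It suffices to show that, for every time step $i$ and every $S\in\cS_t$,
\begin{align}
    \eta_{\MI}(\cN^{i}_{S}) \leq 1-\prod_{j\in S}\alpha_{+}(\cN^{i}_{j}).
    \label{eq:plan-QSS-doeblin}
\end{align}
Granting \eqref{eq:plan-QSS-doeblin}, the left-hand product in \cref{thm:QSS-no-go-theorem} is bounded term by term by the product in the corollary. So whenever the corollary's condition holds for some $n$ and some $S$, the theorem's condition holds for the same $n$ and $S$. Hence the minimal $n$ defined in the corollary is at least the minimal $n$ of the theorem, and every $n$ exceeding the former also exceeds the latter. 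The conclusion of \cref{thm:QSS-no-go-theorem} then applies verbatim.

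To prove \eqref{eq:plan-QSS-doeblin}, I chain earlier results. First, \cref{Cor:Summary1} gives $\eta_{\MI}(\cN^i_S)\leq \etaD_D(\cN^i_S)$. Second, \cref{Lem:a-Doeblin}, applied to the relative entropy (which is jointly convex), gives $\etaD_D(\cN^i_S)\leq 1-\alpha_+(\cN^i_S)$. It then remains to establish the supermultiplicativity
\begin{align}
    \alpha_+\Big(\textstyle\bigotimes_{j\in S}\cN_j\Big)\geq \prod_{j\in S}\alpha_+(\cN_j).
\end{align}

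This supermultiplicativity is the only genuinely new step, and I would prove it with the replacer form in \cref{prop:Hirche-Doeblin-Replacer-Form}. Write each channel as $\cN_j=(1-\epsilon_j)\cD_j+\epsilon_j\cR_{\tau_j}$, where $\epsilon_j=\alpha_+(\cN_j)$ and $\cD_j$ is a channel. Expanding the tensor product multilinearly produces $2^{|S|}$ terms. The term in which every factor is a replacer is $\big(\prod_j\epsilon_j\big)\cR_{\bigotimes_j\tau_j}$, since a tensor product of replacer channels is itself a replacer channel. Each remaining term is a nonnegative weight times a tensor product of CPTP maps, and these weights sum to $1-\prod_j\epsilon_j$. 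Therefore, if $\prod_j\epsilon_j<1$, the normalized sum of the remaining terms is a convex combination of channels, hence a channel $\cD'$, and
\begin{align}
    \textstyle\bigotimes_j\cN_j=\big(1-\prod_j\epsilon_j\big)\cD'+\big(\prod_j\epsilon_j\big)\cR_{\bigotimes_j\tau_j}.
\end{align}
By \cref{prop:Hirche-Doeblin-Replacer-Form} this representation certifies $\alpha_+(\bigotimes_j\cN_j)\geq\prod_j\epsilon_j$. In the degenerate case $\prod_j\epsilon_j=1$, the product channel is itself a replacer channel and the bound holds trivially.

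I expect no real obstacle here. The only care needed is checking that the convex-combination step yields a CPTP map, and checking that the monotonicity argument comparing the two minima goes in the right direction. Specifically, the corollary's threshold condition is stronger than the theorem's, so its minimal $n$ can only be larger.
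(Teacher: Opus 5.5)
Your proposal is correct and follows the same route as the paper: reduce to \cref{thm:QSS-no-go-theorem} via $\eta_{\MI}(\cN^i_S)\le\etaD_D(\cN^i_S)=\eta_{\CMI}(\cN^i_S)\le 1-\alpha_+(\cN^i_S)\le 1-\prod_{j\in S}\alpha_+(\cN^i_j)$, then compare the two threshold sets. The only difference is the supermultiplicativity of $\alpha_+$: the paper cites Item~2 of Lemma~III.8 in \cite{hirche2024quantum}, stated for two factors, whereas you prove it directly for all $|S|$ factors at once by expanding the replacer decompositions of \cref{prop:Hirche-Doeblin-Replacer-Form} multilinearly.
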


\begin{proof}
    This follows from \cref{thm:QSS-no-go-theorem} and
    \begin{equation}
    \eta_{\MI}(\cN \otimes \cM) \leq \eta_{\CMI}(\cN \otimes \cM) \leq 1-\alpha_{+}(\cN \otimes \cM) \leq 1-\alpha_{+}(\cN)\alpha_{+}(\cM) \ , 
    \end{equation}
    where the last inequality follows from \cite[Item 2 of Lemma III.8]{hirche2024quantum} and so if we have the stated bound, we have bounded the contraction coefficients in \cref{thm:QSS-no-go-theorem}.
\end{proof}

We now turn to proving \Cref{thm:QSS-no-go-theorem}. To do so, we begin by recalling the main definitions and lemmas of which we will make use. A state $\rho_{AB}$ is $k$-extendible on $B$ if there exists $\rho_{AB_{1}B_{2}\cdots B_{k}}$ where $B_{i} \cong B$ for all $i \in [k]$, and $\rho_{AB_{i}} = \rho_{AB}$ for all $i \in [k]$. A $k$-symmetrically extendible state $\rho_{A'A}$ on $A$ where $A' \cong A$  has bounded overlap with the maximally entangled state (also known as singlet fraction).
\begin{proposition}
\label{prop:singlet-frac-bound-for-k-sym-ext-states}
    Let $\rho_{A'A}$ be $k$-symmetrically extendible, and let $A' \cong A$. Then
    \begin{align}
        \bra{\Phi_{d}}\rho_{A'A}\ket{\Phi_{d}} \leq \frac{1}{d}\left(1 + \frac{d-1}{k}\right) \ , 
    \end{align}
    where $d = \vert A \vert$.
\end{proposition}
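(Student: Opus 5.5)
The plan is to reduce the singlet fraction to the operator norm of a sum of $k$ projectors, and then bound that norm using only how pairs of these projectors overlap.

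\textbf{Step 1 (averaging over copies).} Let $\rho_{A'A_1\cdots A_k}$ be a $k$-extension of $\rho_{A'A}$ with $\rho_{A'A_i}=\rho_{A'A}$ for all $i$. Write $P_i \coloneqq \Phi^{(d)}_{A'A_i}\otimes I_{\text{rest}}$ for the rank-one projector onto $\ket{\Phi_d}_{A'A_i}$. Then
\begin{align}
\bra{\Phi_d}\rho_{A'A}\ket{\Phi_d} = \frac1k\sum_{i=1}^k \Tr[P_i\,\rho_{A'A_1\cdots A_k}] \le \frac1k\Big\|\sum_{i=1}^k P_i\Big\|_\infty .
\end{align}
So the claim reduces to showing $\|\sum_i P_i\|_\infty \le 1+\frac{k-1}{d}$. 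Indeed, $\frac1k\bigl(1+\frac{k-1}{d}\bigr)=\frac{d+k-1}{dk}=\frac1d\bigl(1+\frac{d-1}{k}\bigr)$, which is exactly the stated bound. Only extendibility is needed, not the symmetric structure of the extension.

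\textbf{Step 2 (pairwise overlap of the projectors).} For $i\neq j$, I will verify by direct computation in the computational basis that
\begin{align}
P_iP_jP_i=\frac{1}{d^2}P_i .
\end{align}
The mechanism is that $\bra{\Phi}_{A'A_i}\ket{\Phi}_{A'A_j}$ acts as $\frac1d$ times the transfer map from $A_j$ to $A_i$. This identity gives $\|P_iP_j\|_\infty=\frac1d$, and therefore
\begin{align}
|\langle P_i\psi,P_j\psi\rangle| = |\langle P_i\psi, P_iP_j\,P_j\psi\rangle| \le \frac1d\|P_i\psi\|\,\|P_j\psi\| .
\end{align}

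\textbf{Step 3 (norm bound).} Fix a unit vector $\psi$ and set $x_i=P_i\psi$, $s=\sum_i\|x_i\|^2=\langle\psi,\sum_iP_i\psi\rangle$ and $a=\sum_i\|x_i\|$. By Cauchy--Schwarz, $s=\langle\psi,\sum_i x_i\rangle\le\|\sum_i x_i\|$. Squaring and expanding with Step 2 gives
\begin{align}
s^2\le s+\frac1d\sum_{i\neq j}\|x_i\|\|x_j\| = s+\frac1d(a^2-s)\le s+\frac{(k-1)s}{d},
\end{align}
where the last inequality uses $a^2\le ks$. Hence $s\le 1+\frac{k-1}{d}$. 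Taking the supremum over $\psi$ gives $\|\sum_iP_i\|_\infty\le 1+\frac{k-1}{d}$, which combined with Step 1 completes the argument.

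The main obstacle is the identity $P_iP_jP_i=P_i/d^2$ in Step 2, because the projectors act on overlapping tensor factors ($A'$ is shared) and the index bookkeeping must be done carefully. Once that identity holds, the rest is an elementary Gram-type estimate.
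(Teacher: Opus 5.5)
Your proof is correct, and it takes a genuinely different route from the paper's. The paper twirls with $U\otimes\overline{U}$. The twirl leaves $\langle\Phi_d|\rho_{A'A}|\Phi_d\rangle$ unchanged and preserves $k$-symmetric extendibility (twirl the extension with $U\otimes\overline{U}^{\otimes k}$). So it suffices to treat isotropic states, and for these the paper imports the known extendibility threshold $t\le\frac1d\bigl(1+\frac{d-1}{k}\bigr)$ from the literature.

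You instead dualize. Any $k$-extendible state has singlet fraction at most $\frac1k\bigl\|\sum_iP_i\bigr\|_\infty$. The identity $P_iP_jP_i=P_i/d^2$ is right; equivalently, $(\langle\Phi_d|_{A'A_i}\otimes I_{A_j})P_j(|\Phi_d\rangle_{A'A_i}\otimes I_{A_j})=\frac{1}{d^2}I_{A_j}$. Together with your Gram-type Cauchy--Schwarz estimate, it yields $\bigl\|\sum_iP_i\bigr\|_\infty\le1+\frac{k-1}{d}$.

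What your route buys:
\begin{itemize}
\item It is self-contained and avoids both representation theory and the external characterization of extendible isotropic states.
\item It uses only the marginal conditions $\rho_{A'A_i}=\rho_{A'A}$, with no twirl or symmetrization.
\item The same estimate gives $\bigl\|\sum_iP_i\bigr\|_\infty\le1+(k-1)c$ for any projectors with pairwise $\|P_iP_j\|_\infty\le c$.
\end{itemize}

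What the paper's route additionally delivers, through the cited result, is that the bound is the exact extendibility threshold. You can get this in your framework too. Let $v_i\coloneqq|\Phi_d\rangle_{A'A_i}\otimes|0\rangle^{\otimes(k-1)}$, with $|0\rangle$ on each $A_\ell$, $\ell\neq i$. These satisfy $P_jv_i=\frac1d v_j$ for $i\neq j$. Hence $\sum_iv_i$ is a permutation-symmetric eigenvector of $\sum_jP_j$ with eigenvalue $1+\frac{k-1}{d}$, and the $A'A_1$ marginal of the corresponding normalized pure state is $k$-extendible and attains the bound.

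One cosmetic slip: $(\langle\Phi_d|_{A'A_i}\otimes I_{A_j})(|\Phi_d\rangle_{A'A_j}\otimes I_{A_i})$ is $\frac1d$ times the transfer map $A_i\to A_j$, not $A_j\to A_i$. This does not affect the identity.
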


\begin{proof}
    Let $\rho_{A'A}$ be $k$-symmetrically extendible on $A$. Let $\rho_{A'A_{1}A_{2}\cdots A_{k}}$ be its extension. Note that by partial trace
    \begin{align}
        \wt{\rho}_{A'A_{1}A_{2}\cdots A_{k}} \coloneq \int dU \left(U \otimes \bigotimes_{i \in \{1,\cdots ,k\}} \overline{U} \right) \rho \left(U \otimes \bigotimes_{i \in \{1,\cdots ,k\}} \overline{U} \right)^{\ast} 
    \end{align}
    is a $k$-symmetric extension on $A$ of the state
    \begin{align}
        \wt{\rho}_{A'A} \coloneq \int dU (U \otimes \overline{U}) \rho (U \otimes \overline{U})^{\ast} \eqqcolon \cT(\rho_{A'A}) \ .
    \end{align}
    Thus, twirling preserves $k$-symmetric extendibility. Now, we may write
    \begin{align}
        \bra{\Phi_{d}}\rho_{A'A}\ket{\Phi_{d}} &= \Tr[\dyad{\Phi_{d}}\rho_{AA'}] \\
        &= \Tr[\dyad{\Phi_{d}}\cT(\rho_{A'A})] \\
        &\leq \frac{1}{d} \left(1 + \frac{d-1}{k} \right),
    \end{align}
    where we used the invariance of the maximally entangled state, that $\cT(\rho_{A'A})$ is an isotropic state $t\dyad{\Phi_{d}} + \frac{1-t}{d^{2}-1}(I-\dyad{\Phi_{d}})$ after twirling, and that an isotropic state has parameter $t$ at most $\frac{1}{d}(1 + \frac{d-1}{k})$ if it is $k$-symmetrically extendible~\cite{Johnson_2013,Kaur_2021}. 
\end{proof}

Note that $k$-symmetric extendibility of $\rho_{AB}$ on the $B$ system means that each $B_{i}$ is equally correlated with $A$. 
Also observe that the CMI of any extension of $\rho_{AB}$ is an upper bound on the squashed entanglement, as follows from \eqref{eq:squashed-entanglement}. These ideas are related through the fact that the squashed entanglement of a state is related to its $k$-extendibility, as captured in the following lemma.

\begin{lemma}[Theorem 1 of \cite{li2018squashed}]\label{lem:sq-ent-bound-k-sym-ext}
    Let $\rho_{AB}$ be a quantum state such that $E_{\sq}(A:B)_{\rho} \leq \varepsilon$. Then for every $k \in \mbb{N}$, there exists a $k$-extendible state $\sigma_{AB}$ such that $\left\Vert \rho - \sigma \right\Vert_{1} \leq (k-1)\sqrt{2\ln(2)}\sqrt{\varepsilon}$.
\end{lemma}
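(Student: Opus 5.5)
The plan is to follow the recovery-map argument of Li and Winter. First extract a near-optimal extension from the small squashed entanglement. Then build a $k$-extension by applying a Fawzi--Renner recovery channel repeatedly. Finally, symmetrize the resulting state and control the accumulated error with the Fuchs--van de Graaf inequality.

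\emph{Step 1 (near-optimal extension and recovery).} Fix $\delta>0$. By~\eqref{eq:squashed-entanglement}, there is an extension $\rho_{ABE}$ of $\rho_{AB}$ with $I(A:B|E)_\rho\le 2\varepsilon+\delta$. By the Fawzi--Renner theorem, there is a recovery channel $\cR_{E\to BE}$ satisfying
\begin{equation}
F\big(\rho_{ABE},\cR(\rho_{AE})\big)\ge 2^{-I(A:B|E)_\rho}\ge 2^{-(2\varepsilon+\delta)}.
\end{equation}
We use $1-2^{-x}\le x\ln 2$ together with $\|\rho-\sigma\|_1\le 2\sqrt{1-F(\rho,\sigma)}$. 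This gives
\begin{equation}
\big\|\rho_{ABE}-\cR(\rho_{AE})\big\|_1\le d\coloneqq 2\sqrt{(2\varepsilon+\delta)\ln 2}.
\end{equation}

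\emph{Step 2 (iterated recovery).} Start from $\omega^{(1)}_{AB_1E}\coloneqq\rho_{AB_1E}$. For $j=2,\ldots,k$, apply $\cR_{E\to B_jE}$ to the current $E$ system, which produces a state $\omega_{AB_1\cdots B_kE}$. The marginal $\omega_{AB_1}$ equals $\rho_{AB}$ exactly. For $j\ge 2$ we have $\omega_{AB_jE}=\cR(\omega^{(j-1)}_{AE})$, where $\omega^{(j-1)}_{AE}$ denotes the $AE$ marginal just before the $j$-th recovery.

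We prove by induction that $\|\omega^{(j)}_{AE}-\rho_{AE}\|_1\le (j-1)d$ and $\|\omega_{AB_jE}-\rho_{ABE}\|_1\le (j-1)d$. The inductive step is
\begin{equation}
\|\cR(\omega^{(j-1)}_{AE})-\rho_{ABE}\|_1\le \|\cR(\omega^{(j-1)}_{AE})-\cR(\rho_{AE})\|_1+\|\cR(\rho_{AE})-\rho_{ABE}\|_1\le (j-2)d+d .
\end{equation}
The first term is bounded by data processing of the trace norm under $\cR$, and the second by Step 1. The bound for $\omega^{(j)}_{AE}$ then follows by tracing out $B_j$. In particular, $\|\omega_{AB_j}-\rho_{AB}\|_1\le (j-1)d$ for every $j$.

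\emph{Step 3 (symmetrization and conclusion).} Let $\sigma_{AB_1\cdots B_k}$ be the average of $\omega_{AB_1\cdots B_k}$ over all permutations of $B_1,\ldots,B_k$. Then every marginal $\sigma_{AB_i}$ equals $\sigma_{AB}\coloneqq\frac1k\sum_{j=1}^k\omega_{AB_j}$, so $\sigma_{AB}$ is $k$-extendible. By convexity of the trace norm,
\begin{equation}
\|\rho_{AB}-\sigma_{AB}\|_1\le \frac1k\sum_{j=1}^k (j-1)d=\frac{k-1}{2}d=(k-1)\sqrt{2\ln 2}\sqrt{\varepsilon+\delta/2}.
\end{equation}
The set of $k$-extendible states on $AB$ is compact, so letting $\delta\to 0$ and passing to a convergent subsequence gives a $k$-extendible $\sigma_{AB}$ with the claimed bound.

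The main obstacle is the error bookkeeping in Step 2. Recovery is applied to a drifting $E$-marginal rather than to $\rho_{AE}$, so one must check that the drift grows only linearly in $j$. The averaging in Step 3 is what produces the factor $(k-1)/2$, which is exactly the constant needed to match the statement.
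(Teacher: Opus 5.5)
Your proof is correct and is essentially the argument of the cited Li--Winter paper; the present paper only imports the statement and gives no proof of its own. That argument takes a near-optimal extension, applies a Fawzi--Renner recovery map $k-1$ times with linearly accumulating trace-distance error, and symmetrizes over $B_1,\ldots,B_k$, which yields exactly $(k-1)\sqrt{2\ln 2}\sqrt{\varepsilon}$. One notational point: later recoveries keep acting on $E$, so your $\omega_{AB_jE}$ must be read as the state right after the $j$-th step. Only its $AB_j$ marginal is used, and that marginal is unaffected by the subsequent maps, so $\|\omega_{AB_j}-\rho_{AB}\|_1\le (j-1)d$ stands.
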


Now we prove the theorem.

\begin{proof}[Proof of \cref{thm:QSS-no-go-theorem}]
    We begin with a summary of the proof. First, we show that one can bound the $\ve$-recoverability of the encoder $\cE$ under a noise model $\cN$ by the singlet fraction of the decoded state. Second, we show that if the system after the noise is $\delta$-close to a $2$-symmetrically extendible state, then the state will still be $\delta$-close to a $2$-symmetrically extendible state after decoding. Combining these with \Cref{prop:singlet-frac-bound-for-k-sym-ext-states}, if the state after the noise is $(0,\frac{d-1}{2d}) \ni \delta$-close to a 2-symmetrically extendible state, then it is impossible for the encoding channel to be $(0, \frac{d-1}{2d}-\delta) \ni \ve$-recoverable. Finally, we use MI contraction coefficients to control the time it takes for a noise  model $\cN^{n} \coloneq \bigcirc_{i \in [n]} \bigotimes_{j \in [m]} \cN^{i}_{j}$ to make the state $\delta$-close to a symmetrically extendible state.

    First, we bound the $\ve$-recoverability of a channel $\cE_{A \to B^{m}}$ under a noise model $\cN$ in terms of the singlet fraction of the decoded subset. Fix subset $\widehat{S} \subset \{1,\ldots ,m\}$ of size $t$. Define the state
    \begin{align}
        \rho_{A'B_{\widehat{S}}B_{\widehat{S}^{c}}} = \rho_{A'B^{m}} \coloneq (\id_{A'} \otimes \cN \circ \cE)(\dyad{\Phi_{d}}_{A'A}) \ , 
    \end{align}
    where $M = \vert A \vert$.
    Then,
    \begin{align}
        &\min_{S \in \cS_{t}} \max_{\cD_{B_{S} \to A}} F(\cD \circ \Tr_{B_{S^{c}}} \circ \cN \circ \cE, \id_{A}) \notag \\
        &\leq \max_{\cD_{B_{\widehat{S}}\to A}} F(\cD \circ \Tr_{B_{\widehat{S}^{c}}} \circ \cN \circ \cE, \id_{A}) \\
        &= \max_{\cD_{B_{\widehat{S}}\to A}} \inf_{\rho_{RA}} F\left((\id_{R} \otimes \cD \circ \Tr_{B_{\widehat{S}^{c}}} \circ \cN \circ \cE)(\rho_{RA}),\rho_{RA} \right) \\
        &\leq \inf_{\rho_{RA}} \max_{\cD_{B_{\widehat{S}}\to A}} F\left((\id_{R} \otimes \cD \circ \Tr_{B_{\widehat{S}^{c}}} \circ \cN \circ \cE)(\rho_{RA}),\rho_{RA} \right) \\
        &\leq \max_{\cD_{B_{\widehat{S}}\to A}} F(\cD \circ \Tr_{B_{\widehat{S}^{c}}} \circ \cN \circ \cE(\dyad{\Phi_{d}}),\dyad{\Phi_{d}}) \\
        &= \bra{\Phi_{d}} \widehat{\cD} \circ \Tr_{B_{\widehat{S}^{c}}} \circ \cN \circ \cE(\dyad{\Phi_{d}}) \ket{\Phi_{d}} \\
        &= \bra{\Phi_{d}} (\widehat{\cD} \circ \Tr_{B_{\widehat{S}^{c}}})(\rho_{A'B_{\widehat{S}}B_{\widehat{S}^{c}}}) \ket{\Phi_{d}} \ , \label{eq:no-go-for-QSS-step-1}
    \end{align}
    where the first inequality follows from choosing the partition, the second from the max-min inequality, the third by choosing the input state to be the maximally entangled state, and the second equality is a property of the fidelity when one argument is pure as well as letting $\widehat{\cD}$ represent a maximizing decoding channel.
    
    To bound recoverability, we want to bound~\eqref{eq:no-go-for-QSS-step-1}. To that end, we now show $\rho_{A'A} \coloneq (\cD \circ \Tr_{B_{\widehat{S}^{c}}})(\rho_{A'B_{\widehat{S}}B_{\widehat{S}^{c}}})$ is $\delta$-close from a two-symmetrically extendible state so long as $\rho_{A'B_{\widehat{S}}B_{\widehat{S}^{c}}}$ is. Assume there is $\sigma_{A'B_{\widehat{S}}B_{\widehat{S}^{c}}}$ that is two-symmetrically extendible on the $B_{\widehat{S}}$ space such that $\left\Vert \rho_{A'B_{\widehat{S}}B_{\widehat{S}^{c}}} - \sigma_{A'B_{\widehat{S}}B_{\widehat{S}^{c}}} \right\Vert_{1} \leq \delta$. Denote the extension by $\sigma_{A'B_{\widehat{S}}B_{\widehat{S}^{c}}B_{\widehat{S}}'}$. By definition of two-symmetric extendibility, $\sigma_{A'B_{\widehat{S}}} \cong \sigma_{A'B_{\widehat{S}}'}$, so $\sigma_{A'B_{\widehat{S}}B_{\widehat{S}}'}$ is itself a two-symmetric extension of $\sigma_{A'B_{\widehat{S}}}$. Thus, 
    \begin{align}
        \left\Vert \rho_{A'B_{\widehat{S}}} - \sigma_{A'B_{\widehat{S}}} \right\Vert_{1} & = \left\Vert \Tr_{B_{\widehat{S}^{c}}}[\rho_{A'B_{\widehat{S}}B_{\widehat{S}^{c}}} - \sigma_{A'B_{\widehat{S}}B_{\widehat{S}^{c}}}] \right\Vert_{1} \leq \delta \ .
    \end{align}
    Thus, $\rho_{A'B_{\widehat{S}}}$ is $\delta$-close from a state that is two-symmetrically extendible. Furthermore, $\sigma_{A'AA_{2}} \coloneq (\id_{A'} \otimes \cD_{B_{\widehat{S}}\to A} \otimes \cD_{B_{\widehat{S}}' \to A})(\sigma_{A'B_{\widehat{S}}B_{\widehat{S}}'})$ satisfies $\sigma_{A'A} \cong \sigma_{A'A_{2}}$, so it is two-symmetrically extendible on the $A$ space. Thus,
    \begin{align}
        \left\Vert \rho_{A'A} - \sigma_{A'A} \right\Vert_{1} = \left\Vert \cD_{B_{\widehat{S}}}[\rho_{A'B_{\widehat{S}}} - \sigma_{A'B_{\widehat{S}}}] \right\Vert_{1} \leq \delta \ . 
    \end{align}
    It then follows by monotonicity of the Schatten $1$-norm under CPTNI maps that 
    \begin{align}
        \vert \bra{\Phi_{d}} \rho_{A'A} \ket{\Phi_{d}} - \bra{\Phi_{d}} \sigma_{A'A} \ket{\Phi_{d}} \vert \leq \delta \ . 
    \end{align}
    Applying \cref{prop:singlet-frac-bound-for-k-sym-ext-states}, 
    \begin{align}
        \vert \bra{\Phi_{d}} \rho_{A'A} \ket{\Phi_{d}} \vert \leq \delta + \frac{1}{d} + \frac{d-1}{2d} \ . 
    \end{align}
    Thus, if $\rho_{A'B_{\widehat{S}}B_{\widehat{S}^{c}}}$ is $\delta$-close to a two-symmetrically extendible state on the $A$ system, then the $\ve$-recoverability of $\cE$ is limited by
    \begin{align}
        1 - \ve \leq \delta + \frac{1}{d} + \frac{d-1}{2d} \iff \ve \geq \frac{(d-1)}{2d} - \delta \ .
    \end{align}
    Thus, the bounds on $\delta$ and $\ve$ that are specified in the theorem statement are chosen so that the right-hand side of the second inequality is positive and $\ve$ does not satisfy the inequality.

     Given the above, it suffices to guarantee that $\rho_{A'B_{\widehat{S}}B_{\widehat{S}^{c}}}$ is $\delta$-close to a two-symmetrically extendible state. First, define the extension
     \begin{align}
        \rho_{A'B_{\widehat{S}}B_{\widehat{S}^{c}}} \otimes \dyad{\psi}_{E} = \rho_{A'B^{m}E} \coloneq (\id_{A'} \otimes \cN \circ \cE \otimes \id_{E})(\dyad{\Phi_{d}}_{A'A} \otimes \dyad{\psi}_{E}) \ . 
     \end{align}
     Note that at no point does the $E$ system become correlated with the others. We use the shorthand $\cN^{i}_{S} \coloneq \bigotimes_{j \in S} \cN^{i}_{j}$. Then, using the independence of the $E$ system and the definition of the MI contraction coefficient,
    \begin{align}
        \frac{1}{2}I(B_{\widehat{S}}:AB_{\widehat{S}^{c}} \vert E)_{\rho} &= \frac{1}{2}\left[I(B_{\widehat{S}}:A'B_{\widehat{S}^{c}}E) - I(B_{\widehat{S}}:E) \right] \\
        &=\frac{1}{2}I(B_{\widehat{S}}:A'B_{\widehat{S}^{c}}) \\
        &\leq \prod_{i \in \{1,\ldots ,n\}} \left[ \eta_{\MI}(\cN^{i}_{\widehat{S}})\eta_{\MI}(\id_{A'} \otimes \cN^{i}_{\widehat{S}^{c}}) \right] \frac{1}{2}I(B_{\widehat{S}}:A'B_{\widehat{S}^{c}})_{\cE(\dyad{\Phi_{d}})} \\
        &=  \prod_{i \in \{1,\ldots ,n\}} \left[ \eta_{\MI}(\cN^{i}_{\widehat{S}}) \right] \log(\min\{\vert B_{\widehat{S}} \vert, \vert A \otimes B_{\widehat{S}^{c}} \vert \}) \ , \label{eq:CMI-upr-bnd-by-contraction}
    \end{align}
    where we used that $\eta_{\MI}(\id_{A} \otimes \cN) = 1$. 
    
    It follows by \cref{lem:sq-ent-bound-k-sym-ext} that if \eqref{eq:CMI-upr-bnd-by-contraction} is bounded from above by $\delta^{2}/2\ln(2)$, then $\rho_{A'B_{\widehat{S}}B_{\widehat{S}^{c}}}$ is $\delta$-close to a two-symmetrically extendible state. None of this relied on a specific choice of $\widehat{S}$, so one may minimize over it. It also did not depend on the channel $\cE$, so once the condition is satisfied, no encoding $\cE$ can exist.
\end{proof}

\subsection{Quantum channel capacities}

Contraction coefficients and partial orders are closely connected to the structure of channel capacities~\cite{watanabe2012private,Hirche2022contraction,hirche2022bounding,belzig2024reversetypedataprocessinginequality}. Here we briefly discuss some connections of our work to the existing literature. 

The quantum capacity is given by
\begin{align}
    Q(\cN) &= \lim_{n \to\infty} \frac{1}{n}Q^{(1)}(\cN^{\otimes n}), \\
    Q^{(1)}(\cN) &\coloneqq   \sup_{\rho} \left\{H(\mathcal{N}(\rho)) - H(\mathcal{N}^c(\rho))\right\}, 
\end{align}
where the supremum is over every channel input state $\rho$ and $\mathcal{N}^c$ denotes a complementary channel of $\mathcal{N}$ (see, e.g., \cite{devetak2005capacity} or \eqref{eq:complementary-ch} below).
Due to the regularization, an important question is for which channels the quantum capacity is additive, meaning for which channels the regularization is unnecessary. The best known condition for additivity is if the channel $\cN$ is degradable \cite{devetak2005capacity}. A potentially more general set of channels for which the quantum capacity is additive is that of informationally degradable channels, as introduced in~\cite{cross2017uniform}, sometimes also referred to as fully quantum less noisy. 

For a channel $\cN_{A \to B}$, let $V_{A \to BE}$ be an isometry such that $\cN_{A \to B}(\cdot)= \Tr_{E}[V(\cdot) V^\dag]$. Then a complementary channel of $\cN$ is given by
\begin{equation}
\cN^c_{A \to E}(\cdot) \coloneq \Tr_B[V(\cdot) V^\dag].
\label{eq:complementary-ch}
\end{equation}
A channel $\cN$ with complementary channel $\cN^c$ is called informationally degradable if, for every state $\rho_{RA}$, 
\begin{align}
    I(R:B) \geq I(R:E), 
\end{align}
or equivalently, 
\begin{align}
    \eta_{\MI}(\cN^c,\cN) \leq 1. 
\end{align}
A second criterion that guarantees additivity is if $\cN$ is regularized less noisy, which is the case if the inequality 
\begin{align}
    I(U:B^n) \geq I(U:E^n)
\end{align}
holds for all $n\geq1$.
The guiding question is now whether there exists a channel that is informationally degradable or regularized less noisy but not degradable. Ref.~\cite{belzig2024reversetypedataprocessinginequality} recently proposed  investigating this question using relative expansion and contraction coefficients, and they reported progress by constructing a channel that is less noisy but not degradable.

As identified in~\cite{Hirche2022contraction}, there is a more restrictive criterion called completely less noisy, also discussed earlier in~\cref{Sec:Partial-Orders}, that implies both informational degradability and regularized less noisy. 
Here we briefly give a variant of the corresponding result in~\cite[Proposition 6.1]{belzig2024reversetypedataprocessinginequality} that constructs a candidate channel for completely less noisy channels.

\begin{proposition}
    Let $\cN$ and $\cM$ be degradable channels. Then for all
    \begin{align}
        p\in\left[ \frac{1}{1+\cetaD_D(\cN,\cM)\left(1-\etaD_D(\cN^c,\cN) \right)},1 \right], 
    \end{align}
    where $\etaD_D(\cN^c,\cN)$ and $\cetaD_D(\cN,\cM)$ are defined as in~\eqref{eq:relative_contraction_channels_cond} and~\eqref{eq:relative_expansion_channel_cond}, the quantum channel
    \begin{align}
        \Psi_{p,\cN,\cM}=p|0\rangle\!\langle0|\otimes\cN + (1-p)|1\rangle\!\langle1|\otimes\cM^c
    \end{align}
    is completely less noisy and hence also informationally degradable and regularized less noisy. 
\end{proposition}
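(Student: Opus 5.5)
The plan is to reduce complete less noisiness of $\Psi\equiv\Psi_{p,\cN,\cM}$ to a pointwise inequality between conditional relative entropies. By~\eqref{eq:cln-rel-ent-def}, it suffices to show
\begin{align}
D(\Psi(\rho_{RA})\|\Psi(\sigma_{RA})) \geq D(\Psi^c(\rho_{RA})\|\Psi^c(\sigma_{RA}))
\end{align}
for all $\rho_{RA},\sigma_{RA}$. The step that needs care is fixing $\Psi^c$. Writing Stinespring isometries $V_\cN\colon A\to BE$ and $V_\cM\colon A\to B'E'$, one isometry for $\Psi$ is
\begin{align}
\sqrt{p}\,|0\rangle|0\rangle\otimes V_\cN+\sqrt{1-p}\,|1\rangle|1\rangle\otimes V_\cM ,
\end{align}
where the second copy of the flag goes to the environment. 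In the $|1\rangle$ branch the channel output is the $E'$ system of $V_\cM$, so that branch implements $\cM^c$. Reading off the other subsystems, up to isometries acting on the flag-conditioned blocks, the complementary channel is $\Psi^c = p|0\rangle\!\langle0|\otimes\cN^c+(1-p)|1\rangle\!\langle1|\otimes\cM$. Since the relative entropy is invariant under isometries, any choice of complement yields the same condition.

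Next, because both outputs are block diagonal in the classical flag with the same weights $p,1-p$, the direct-sum property~\eqref{eq:direct-sum-prop} splits both sides. For a channel $\cX$, write
\begin{align}
\Delta_\cX\coloneqq D(\cX(\rho_{RA})\|\cX(\sigma_{RA}))-D(\rho_R\|\sigma_R).
\end{align}
The weights sum to one, so subtracting $D(\rho_R\|\sigma_R)$ from both sides turns the goal into
\begin{align}
p\left(\Delta_\cN-\Delta_{\cN^c}\right)\geq(1-p)\left(\Delta_\cM-\Delta_{\cM^c}\right).
\end{align}

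Now I apply the coefficients. By the definition~\eqref{eq:relative_contraction_channels_cond}, $\Delta_{\cN^c}\leq\etaD_D(\cN^c,\cN)\,\Delta_\cN$, which gives $\Delta_\cN-\Delta_{\cN^c}\geq(1-\etaD_D(\cN^c,\cN))\Delta_\cN$. On the right-hand side I use $\Delta_{\cM^c}\geq0$. This holds by data processing under the partial trace of the channel output, which gives $D(\cM^c(\rho_{RA})\|\cM^c(\sigma_{RA}))\geq D(\rho_R\|\sigma_R)$. By~\eqref{eq:relative_expansion_channel_cond}, $\Delta_\cM\leq\Delta_\cN/\cetaD_D(\cN,\cM)$. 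Together these reduce the goal to
\begin{align}
p\,(1-\etaD_D(\cN^c,\cN))\,\cetaD_D(\cN,\cM)\geq 1-p,
\end{align}
which is exactly the stated range of $p$.

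Degenerate cases have to be checked separately. If $\Delta_\cM=0$, the right-hand side is at most $0\leq$ the left-hand side, because the left-hand side is nonnegative once $\etaD_D(\cN^c,\cN)\leq1$. That bound is implicit in the stated interval being meaningful, and it holds because $\cN$ is degradable, so $\cN^c=\cD\circ\cN$ and data processing applies. If $\Delta_\cN=0$ with $\Delta_\cM>0$, then $\cetaD_D(\cN,\cM)=0$ and the range forces $p=1$, where the claim is just $\Delta_\cN\geq\Delta_{\cN^c}$, again by degradability. This case bookkeeping and the identification of $\Psi^c$ are where I expect the main care to be needed. Finally, completely less noisy implies informational degradability and regularized less noisy, as identified in~\cite{Hirche2022contraction}, which gives the remaining claims.
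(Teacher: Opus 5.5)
Your proposal is correct and follows the paper's route. The paper's proof simply defers to the proof of Proposition~6.1 in Belzig et al., and your argument is the conditional adaptation of it: you identify $\Psi^c=p|0\rangle\!\langle0|\otimes\cN^c+(1-p)|1\rangle\!\langle1|\otimes\cM$, split both sides with the direct-sum property, bound the $\cN^c$ branch by $\etaD_D(\cN^c,\cN)$, drop the nonnegative $\cM^c$ term, and compare $\cN$ with $\cM$ through $\cetaD_D(\cN,\cM)$, all with $D(\rho_R\|\sigma_R)$ subtracted throughout.
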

\begin{proof}
    The proof is essentially identical to that of~\cite[Proposition 6.1]{belzig2024reversetypedataprocessinginequality}.
\end{proof}
Note that an example of a completely less noisy, but not degradable, channel has recently been provided in~\cite{Zhu-Incapacity2026}. The construction described here could be used for a guided search for further such examples. 

\section{Applications to discrete-time quantum Markov chains}
\label{Sec:Applications2}

\subsection{On hierarchies of operational times}

\label{sec:Hierarchies-operational-times}

A canonical application of SDPI constants involves bounding mixing times. This application has been generalized to other contraction coefficients subsequently: Ref.~\cite{george2025quantumdoeblincoefficientsinterpretations} showed that one can generalize the notion of mixing time to `decoupling times' and upper bound this time with the complete SDPI constant, or in the case of a time-inhomogeneous Markov chain, the complete contraction coefficient. In a slightly different direction, Ref.~\cite{iyer2025quantum} used reverse contraction coefficients to place lower bounds on mixing times.

In this paper we have introduced various contraction coefficients, and in principle they all capture distinct alternative notions of mixing. The alternative notions of mixing that each constant captures is straightforward, and so we only focus on the definition that directly pertains to the conditional contraction coefficient and SDPI, which we call `replacing.' In particular, we  show that any notion of a time homogeneous Markov chain that implies mixing and is implied by being replacing is asymptotically equivalent (\Cref{prop:asymptotic-equiv-of-types-of-mixing}). This result follows because it turns out that these are all different ways of expressing that the channel converges to a replacer channel. Motivated by this, we show that a channel is mixing if and only if there exists a finite number of channel applications at which the quantum Doeblin coefficient becomes non-trivial (\Cref{cor:mixing-equiv-to-q-Doeblin-being-non-triv-at-some-depth}).

\subsubsection{Replacing, mixing, and their equivalence}

A quantum \textit{homogeneous} Markov chain is a process where, at each time step, the same quantum channel $\cN_{A \to A}$ occurs; i.e., for all $n \in \mbb{N}$, the total process is as follows:
\begin{equation}
\cN^{n} \coloneq \bigcirc_{i \in [n]} \cN = \underbrace{\cN \circ \cdots \circ \cN}_{n \text{ times}} \ .
\end{equation}
We thus will talk of a `homogeneous Markov chain $\cN_{A \to A}$,' as this completely specifies the Markov chain. We say a homogeneous Markov chain $\cN$ is replacing if
\begin{align} 
\lim_{n\to \infty } \left\Vert (\id_{R} \otimes \cN^{n})(\rho_{RA} - \sigma_{RA}) \right\Vert_{1} = \left\Vert \rho_{R} - \sigma_{R} \right\Vert_{1}  \quad \forall \rho_{RA},\sigma_{RA} \ .
\end{align}
We note that this definition is similar to the notion of a homogeneous Markov chain with unique fixed point $\omega$ being `mixing' \cite{Burgarth_2007,burgarth2013ergodic,george2025quantumdoeblincoefficientsinterpretations}: 
\begin{align} 
\lim_{n\to \infty } \left\Vert \cN^{n}(\rho_{A}) - \omega_{A} \right\Vert_{1}=0 \quad \forall \rho_{A} \ . 
\end{align}

The distinction between these two definitions is two-fold. First, the Markov chain being replacing involves a reference system, whereas mixing does not. Second, the Markov chain being replacing does not explicitly include a fixed point, which might make it seem that replacing ought to be measured using the conditional contraction coefficient rather than a conditional SDPI constant. However, we will show that in fact both replacing and mixing identify the same property: $\cN^{n}$ converging to the replacer channel 
\begin{align}
    \cR^{\omega}[X] \coloneqq \Tr[X]\omega \ ,
\end{align}
where $\omega$ is the unique fixed point of $\cN$. 

\paragraph{Norm equivalences for channels} To establish our asymptotic equivalence of replacing and mixing, we require a few notions of norms on channels and their equivalence in finite dimensions. For a Hermitian-preserving map $\Phi_{A \to B}$ we recall the $1 \to 1$ norm and the diamond norm (completely bounded $1 \to 1$ norm)
\begin{align}
    \left\Vert \Phi \right\Vert_{1} &\coloneqq  \max_{\rho_{A} \in \Density(A)} \left\Vert \Phi(\rho) \right\Vert_{1} , \\
    \left\Vert \Phi \right\Vert_{\diamond} &\coloneqq  \sup_{|R|, \,  \rho_{RA}} \left\Vert (\id_{R} \otimes \Phi)(\rho)\right\Vert_{1} = \max_{\dyad{\psi}_{A'A} \in \Density(A' \otimes A)} \left\Vert (\id_{A'} \otimes \Phi)(\dyad{\psi}) \right\Vert_{1} \ , \label{eq:diamond-norm-as-max-over-state}
\end{align}
where the maximizations always come from the sets being optimized over being compact and $A' \cong A$. We also consider the norm induced by the Schatten $1$-norm on the Choi operator, $\left\Vert \Gamma^{\Phi} \right\Vert_{1}$, where the Choi operator $\Gamma^{\Phi}$ is defined as 
\begin{align}\label{eq:choi-operator}
    \Gamma^{\Phi} \coloneqq \ d_{A}(\id \otimes \Phi)(\dyad{\Phi_{d}}) \coloneq (\id \otimes \Phi)(\dyad{\Gamma}) \ ,
\end{align}
i.e., the map acting on one share of the unnormalized maximally entangled state of dimension $d = \vert A \vert$.

\begin{proposition}[Equivalence of norms on Hermitian-preserving maps]\label{prop:equiv-of-chan-norms}
    Let $\cN_{A \to B}$ and $\cM_{A \to B}$ be completely positive maps. Then,
    \begin{align}
        \left\Vert \cN - \cM \right\Vert_{1} & \leq \left\Vert \cN - \cM \right\Vert_{\diamond} \leq 2d_{A} \left\Vert \cN - \cM \right\Vert_{1} \label{eq:diamond-and-one-to-one-norm-equiv}, \\
        \left\Vert \cN - \cM \right\Vert_{\diamond} & \leq \left\Vert \Gamma^{\cN} - \Gamma^{\cM} \right\Vert_{1} \leq d_{A} \left\Vert \cN - \cM \right\Vert_{\diamond}.  \label{eq:diamond-and-choi-dist-norm-equiv}
    \end{align}
\end{proposition}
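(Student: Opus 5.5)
The plan is to prove the four inequalities in \cref{prop:equiv-of-chan-norms} one at a time. Throughout, $\Delta\coloneqq\cN-\cM$ denotes the Hermitian-preserving difference map, and only two things are used: the characterization of the diamond norm in \eqref{eq:diamond-norm-as-max-over-state} and the Choi operator \eqref{eq:choi-operator}. The two lower bounds are immediate. For the first, restricting the diamond-norm optimization to a trivial reference system $|R|=1$ gives $\|\Delta\|_1\le\|\Delta\|_\diamond$.

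For the bound $\|\Delta\|_\diamond\le\|\Gamma^\Delta\|_1$, take the optimal pure state $\psi_{A'A}$ from \eqref{eq:diamond-norm-as-max-over-state} and write it as $\ket{\psi}=(X\otimes I)\ket{\Gamma}$ with $\|X\|_2=1$, so that $\|X\|_\infty\le 1$. Then
\begin{equation}
(\id\otimes\Delta)(\psi)=(X\otimes I)\Gamma^{\Delta}(X\otimes I)^\dagger,
\end{equation}
and Hölder's inequality gives $\|(\id\otimes\Delta)(\psi)\|_1\le\|X\|_\infty^2\|\Gamma^\Delta\|_1\le\|\Gamma^\Delta\|_1$.

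For the reverse bound $\|\Gamma^\Delta\|_1\le d_A\|\Delta\|_\diamond$, use $\Gamma^\Delta=d_A(\id\otimes\Delta)(\Phi_d)$. Because $\Phi_d$ is a feasible state in the diamond-norm optimization, this bound follows directly.

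The main step is $\|\Delta\|_\diamond\le 2d_A\|\Delta\|_1$. Again take the optimal $\psi$ with $\ket\psi=\sum_i \sqrt{p_i}\ket{i}_{A'}\ket{\phi_i}_A$ via the Schmidt decomposition. Then
\begin{equation}
(\id\otimes\Delta)(\psi)=\sum_{i,j}\sqrt{p_ip_j}\,|i\rangle\!\langle j|\otimes\Delta(|\phi_i\rangle\!\langle\phi_j|).
\end{equation}
The off-diagonal operators $|\phi_i\rangle\!\langle\phi_j|$ are handled by polarization: write each as a combination of at most four rank-one projectors onto $(\phi_i+ i^k\phi_j)/\sqrt2$, each with coefficient of modulus $1/2$. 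This bounds $\|\Delta(|\phi_i\rangle\!\langle\phi_j|)\|_1\le 2\|\Delta\|_1$. Since $\||i\rangle\!\langle j|\otimes Y\|_1=\|Y\|_1$, the triangle inequality gives
\begin{equation}
\|(\id\otimes\Delta)(\psi)\|_1\le 2\|\Delta\|_1\Big(\sum_i\sqrt{p_i}\Big)^2\le 2d_A\|\Delta\|_1,
\end{equation}
where the last step is Cauchy--Schwarz. Note that the diagonal terms actually only cost a factor of $1$, so the constant $2$ is generous. The hard step is exactly this polarization bookkeeping of the constant; a cruder count would still give a bound of order $d_A$, but the factor needs care to land on $2d_A$.
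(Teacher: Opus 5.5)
Your proof is correct: all four inequalities go through as you describe. It differs from the paper mainly in that the paper gives no argument at all. It states that the first pair of inequalities is proved exactly as in \cite[Lemma~6]{bluhm2026equivalencenonlocalcomputationtasks}, and the second pair exactly as in \cite[Lemma~7]{Wallman_2014}. It only remarks that those proofs do not need equal input and output dimensions.

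\textbf{Choi-operator pair.} Your argument here is the standard one for this comparison, so the only difference is that you write it out. You write the optimal pure input as $(X\otimes I)\ket{\Gamma}$ with $\|X\|_\infty\le\|X\|_2=1$ and apply H\"older. For the reverse bound you use $\Gamma^{\cN}-\Gamma^{\cM}=d_A(\id\otimes(\cN-\cM))(\Phi_d)$ with $\Phi_d$ feasible.

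\textbf{The bound $\|\cN-\cM\|_\diamond\le 2d_A\|\cN-\cM\|_1$.} Your Schmidt decomposition, polarization and Cauchy--Schwarz argument replaces the citation with a self-contained proof. It also shows where the factor $2$ comes from. The paper defines $\|\cdot\|_1$ as a maximum over density operators only. Polarization is exactly what converts the off-diagonal rank-one inputs $|\phi_i\rangle\!\langle\phi_j|$ into pure-state inputs, at cost $2$. Your count in fact gives $(2d_A-1)\|\cN-\cM\|_1$, since the diagonal terms contribute only $\|\cN-\cM\|_1$ in total.

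\textbf{Generality.} Your argument never uses complete positivity, or even Hermiticity preservation. It needs only linearity and convexity of the trace norm, so the statement holds for the difference of any two linear maps, which fits the proposition's title better than its hypothesis.

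\textbf{What each route buys.} The citation route is shorter. However, it leaves the reader to check that the cited proofs match the paper's normalization of the norms (maxima over states rather than over all trace-norm-one inputs) and allow $d_A\neq d_B$. Your route removes that dependence.
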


\begin{proof}
    The proof of~\eqref{eq:diamond-and-one-to-one-norm-equiv} is identical to \cite[Lemma 6]{bluhm2026equivalencenonlocalcomputationtasks}, which need not make an assumption that the input and output linear operators act on the same dimension of Hilbert space (so long as they are finite). The proof of~\eqref{eq:diamond-and-choi-dist-norm-equiv} may be done identically to \cite[Lemma 7]{Wallman_2014} by noting that it need not require the input and output spaces be the same.
\end{proof}

\paragraph{The equivalence}
\begin{proposition}
\label{prop:asymptotic-equiv-of-types-of-mixing}
    For a time-homogeneous Markov chain, being mixing and being replacing are equivalent. Moreover, this means that any notion of asymptotic behaviour of a homogeneous Markov chain that is implied by replacing and implies mixing, such as being decoupling\footnote{Decoupling means that $\lim_{n \to \infty} \Vert (\id_{R} \otimes \cN^{n})(\rho_{RA}-\rho_{R} \otimes \omega_{A})\Vert_{1} = 0$} \cite{george2025quantumdoeblincoefficientsinterpretations}, is also equivalent.
\end{proposition}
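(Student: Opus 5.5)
The plan is to show mixing $\Rightarrow$ replacing and replacing $\Rightarrow$ mixing. The key reformulation is that both properties are equivalent to $\cN^n \to \cR^{\omega}$ in a suitable channel norm.

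\textbf{Replacing $\Rightarrow$ mixing.} Choose trivial $R$, so replacing gives $\Vert \cN^n(\rho_A-\sigma_A)\Vert_1\to 0$ for all $\rho_A,\sigma_A$. We first need a fixed point. Since $\cN$ is a channel on a finite-dimensional space, Brouwer (or a Cesàro average) yields a fixed state $\omega=\cN(\omega)$. Taking $\sigma_A=\omega$ then gives $\Vert\cN^n(\rho_A)-\omega\Vert_1\to0$. Uniqueness of $\omega$ follows because any two fixed points $\omega,\omega'$ satisfy $\Vert\omega-\omega'\Vert_1=\Vert\cN^n(\omega-\omega')\Vert_1\to0$. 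Hence $\cN$ is mixing.

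\textbf{Mixing $\Rightarrow$ replacing.} Mixing says $\Vert \cN^n(\rho)-\cR^\omega(\rho)\Vert_1\to0$ for every $\rho$. We first upgrade this pointwise convergence to convergence in the $1\to1$ norm. Since $\cN^n-\cR^\omega$ is linear, it converges to zero on a finite basis of density operators spanning the operator space, and therefore converges uniformly on the compact set of states. Concretely, write each $\rho$ as a combination of a fixed finite family with bounded coefficients, using that the states span all Hermitian operators in finite dimension. This gives $\Vert\cN^n-\cR^\omega\Vert_1\to0$.

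By \cref{prop:equiv-of-chan-norms}, specifically the inequality~\eqref{eq:diamond-and-one-to-one-norm-equiv} applied to the completely positive maps $\cN^n$ and $\cR^\omega$, we get $\Vert\cN^n-\cR^\omega\Vert_\diamond\le 2d_A\Vert\cN^n-\cR^\omega\Vert_1\to 0$. Then, for arbitrary $R$ and states $\rho_{RA},\sigma_{RA}$, the triangle inequality gives
\begin{align}
\Big|\Vert(\id_R\otimes\cN^n)(\rho_{RA}-\sigma_{RA})\Vert_1-\Vert(\id_R\otimes\cR^\omega)(\rho_{RA}-\sigma_{RA})\Vert_1\Big| \le 2\Vert\cN^n-\cR^\omega\Vert_\diamond.
\end{align}
We also have $(\id_R\otimes\cR^\omega)(\rho_{RA}-\sigma_{RA})=(\rho_R-\sigma_R)\otimes\omega$, whose trace norm is $\Vert\rho_R-\sigma_R\Vert_1$. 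Letting $n\to\infty$ yields replacing. The convergence is uniform in $R$, which is where the diamond norm is needed.

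\textbf{Intermediate notions.} The final claim is immediate: if property $P$ is implied by replacing and implies mixing, then the equivalence of mixing and replacing closes the loop, so $P$ is equivalent to both. For decoupling, we only need to check the two implications. Replacing implies decoupling by taking $\sigma_{RA}=\rho_R\otimes\omega$, which uses that $\cN(\omega)=\omega$, itself derived from replacing as above. Decoupling implies mixing by taking $R$ trivial.

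The main obstacle is the pointwise-to-uniform upgrade in the second step, since we must bound the diamond norm uniformly over unbounded $R$. Finite-dimensionality together with \cref{prop:equiv-of-chan-norms} handles this.
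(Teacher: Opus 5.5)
Your proof is correct and takes essentially the same route as the paper. Mixing gives $\|\cN^n-\cR^\omega\|_1\to0$, which \cref{prop:equiv-of-chan-norms} upgrades to diamond-norm convergence; comparing $\id_R\otimes\cN^n$ with $\id_R\otimes\cR^\omega$ via the triangle inequality then yields replacing, and the converse uses a trivial or product reference. You also fill in points the paper leaves implicit: the pointwise-to-uniform upgrade, the existence and uniqueness of the fixed point in the replacing $\Rightarrow$ mixing direction, and the matching lower bound that the paper's one-sided estimate omits (it follows from monotonicity of the trace norm under the partial trace over the output).
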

\begin{proof}
    A homogeneous Markov chain $\cN$ that is replacing under trace distance implies it is mixing by choosing $\rho_{RA} = \rho_{R} \otimes \rho_{A}$ and $\sigma_{RA} = \rho_{R} \otimes \omega_{A}$. It thus suffices to prove mixing implies replacing. By the definition of mixing, 
    \begin{align}\label{eq:mixing-approx}
        \forall \delta > 0, \exists n_{\delta} \in \mbb{N} : \forall n \geq n_{\delta} \, , \,  \left\Vert \cN^{n}(\rho) - \omega \right\Vert_{1} \leq \delta \quad \forall \rho \in \Density(A) \ . 
    \end{align}
    We will use this to show that $\cN^{n}$ converges to the replacer channel $\cR^{\omega}$ in diamond norm. Note that 
    \begin{align}
        \left\Vert \cN^{n}(\rho) - \omega \right\Vert_{1} = \left\Vert (\cN^{n} - \cR^{\omega})(\rho) \right\Vert_{1} \ . 
    \end{align}
    Thus, \cref{eq:mixing-approx} implies that, for all $\delta > 0$, there exists $n_{\delta} \in \mbb{N}$ such that for all $n \geq n_{\delta}$, $\left\Vert \cN^{n} - \cR^{\omega} \right\Vert_{1} \leq \delta$ for all $n \geq n_{\delta}$. By the equivalence of channel norms (\cref{prop:equiv-of-chan-norms}), the same claim holds for diamond distance; i.e., for all $\delta > 0$, there exists $n_{\delta} \in \mbb{N}$ such that for all $n \geq n_{\delta}$, $ \left\Vert \cN^{n} - \cR^{\omega} \right\Vert_{\diamond} \leq \delta$, so 
    \begin{align}\label{eq:convergence-in-CB1-norm}
        \lim_{n \to \infty} \left\Vert \cN^{n} - \cR^{\omega} \right\Vert_{\diamond} = 0 \ . 
    \end{align}
    Then for all $\rho_{RA}$ and $\sigma_{RA}$,
    \begin{align}
        &\left\Vert (\id \otimes \cN^{n})(\rho_{RA}) - (\id \otimes \cN^{n})(\sigma_{RA}) \right\Vert_{1} - \left\Vert \rho_{R} - \sigma_{R} \right\Vert_{1} \notag \\ 
        &= \left\Vert (\id \otimes(\cN^{n} + \cR^{\omega} - \cR^{\omega}))(\rho_{RA} - \sigma_{RA}) \right\Vert_{1} - \left\Vert \rho_{R} - \sigma_{R} \right\Vert_{1} \\
        &\leq \left\Vert \id \otimes (\cN^{n} - \cR^{\omega})(\rho_{RA} - \sigma_{RA}) \right\Vert_{1} + \left\Vert (\id \otimes \cR^{\omega})(\rho_{RA} - \sigma_{RA}) \right\Vert_{1}  - \left\Vert \rho_{R} - \sigma_{R} \right\Vert_{1} \\
        &= \left\Vert \id \otimes (\cN^{n} - \cR^{\omega})(\rho_{RA} - \sigma_{RA}) \right\Vert_{1} + \left\Vert \rho_{R} \otimes \omega - \sigma_{R} \otimes \omega \right\Vert_{1} - \left\Vert \rho_{R} - \sigma_{R} \right\Vert_{1} \\
        &= \left\Vert \id \otimes (\cN^{n} - \cR^{\omega})(\rho_{RA} - \sigma_{RA}) \right\Vert_{1} \\
        &\leq \left\Vert \cN^{n} - \cR^{\omega} \right\Vert_{\diamond} \left\Vert \rho_{RA} - \sigma_{RA} \right\Vert_{1} \label{eq:replacement-time-in-1-norm}
    \end{align}
    where the first equality follows from grouping terms and adding and subtracting a channel, the first inequality from the triangle inequality, the second equality from the action of the replacer channel, the third equality from the Schatten $1$-norm being multiplicative over tensor products, $\left\Vert \omega \right\Vert_{1} = 1$, and canceling terms, and the final inequality from the definition of the completely bounded 1-norm. Combining~\eqref{eq:convergence-in-CB1-norm} and~\eqref{eq:replacement-time-in-1-norm},
    \begin{align}
        \lim_{n \to \infty} [\left\Vert (\id \otimes \cN^{n})(\rho_{RA}) - (\id \otimes \cN^{n})(\sigma_{RA}) \right\Vert_{1} - \left\Vert \rho_{R} - \sigma_{R} \right\Vert_{1}] = 0 \ . 
    \end{align}
    Thus, by definition, the homogeneous Markov chain $\cN$ is replacing.

    Finally, if some notion implies $\cN$ is mixing and is implied by replacing, then one concludes that it is equivalent as mixing implies replacing.
\end{proof}

\subsubsection{Finite-time bounds}

As shown, a homogeneous Markov chain $\cN_{A \to A}$ being replacing captures that the channel converges to the replacer channel. Thus, it is natural to measure the time it takes for the Markov chain to converge in terms of how different the output of $\cN^{n}$ is in relation to the fixed point. We call this the $\mbb{D}$-replacement time and provide the pre-existing analogous definitions for clarity.
\begin{definition}\label{def:time-hom-times}
    Given a divergence $\DD$, a quantum channel $\cN_{A \to A}$ with unique fixed point $\omega$, and $\delta \geq 0$, we define
    \begin{enumerate}[leftmargin=*]
        \item \textbf{$\DD$-Mixing Time:}
            \begin{align}
                t_{\operatorname{mix}}^{\mbb{D}}(\cN,\delta) \coloneq \inf\!\left \{n \in \mbb{N}: \mbb{D}\!\left(\cN^{n}(\rho), \omega \right) \leq \delta \quad \forall \rho \in \Density(A) \right \} \ , 
            \end{align}
        \item \textbf{$\DD$-Decoupling Time:}
            \begin{align} \label{eq:decoupling_time}
                t_{\operatorname{dec}}^{\mbb{D}}(\cN,\delta) \coloneqq \inf \left\{ n \in \mbb{N} : \begin{matrix}   \mbb{D}((\id_{R} \otimes \cN^{n})(\rho_{RA}) \Vert \rho_{R} \otimes \omega_{A}) \leq \delta \, , \\
                \forall R \, , \, \forall \rho_{RA} \in  \Density(R \otimes A) \end{matrix} \right\} \ , 
            \end{align}
        \item \textbf{$\DD$-Replacement Time:}
            \begin{align} \label{eq:replacement_time_good}
                t_{\operatorname{rep}}^{\mbb{D}}(\cN,\delta) \coloneqq \inf \left\{ n \in \mbb{N} : \begin{matrix}   \mbb{D}((\id_{R} \otimes \cN^{n})(\rho_{RA}) \Vert \sigma_{R} \otimes \omega_{A}) - \DD(\rho_{R} \Vert \sigma_{R}) \leq \delta \, , \\
                \forall R \, , \, \forall \rho_{RA} \in \Density(R \otimes A),\sigma_{R} \in  \Density(R) \end{matrix} \right\} \ .
            \end{align}
    \end{enumerate}
\end{definition}
\noindent We explicitly specify that $\cN_{A \to A}$ has a unique fixed point $\omega$, as otherwise the above times are trivially infinite by choosing $\rho_{A}$ to be a different fixed point. 

We then obtain a bound on the replacement time in terms of the conditional SDPI constant, which is analogous to previous bounds for other operational times (c.f.~\cite{george2025quantumdoeblincoefficientsinterpretations,george2025unifiedapproachquantumcontraction}).
\begin{proposition}
    Let $\DD$ be a divergence. Consider a homogeneous Markov chain $\cN_{A \to A}$ with fixed point $\omega_{A}$. Then 
    \begin{align}
        t^{\DD}_{\operatorname{rep}}(\cN,\delta) \leq \log(C^{\Delta}_{\DD}/\delta)/\log(1/\eta_{\DD}^{\Delta,\operatorname{stab}}(\cN,\omega)) ,
    \end{align}
    where 
    \begin{equation}
        C_{\mathbb{D}}^{\Delta}\coloneqq\sup_{\rho_{RA},\sigma_{R}}\mathbb{D}(\rho_{RA}\|\sigma_{R} \otimes \omega_{A})-\mathbb{D}(\rho_{R}\|\sigma_{R}).\label{eq:distin-increase-refs}
    \end{equation}
\end{proposition}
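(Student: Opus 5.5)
The plan is to show that the conditional quantity in the definition of $t^{\DD}_{\operatorname{rep}}$ contracts by a factor of $\eta\coloneqq\eta_{\DD}^{\Delta,\operatorname{stab}}(\cN,\omega)$ at each step. Iterating this $n$ times gives a bound of the form $\eta^n C^\Delta_\DD$, and the time bound then follows by solving $\eta^n C^\Delta_\DD\le\delta$ for $n$.

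\textbf{Step 1 (one-step contraction).} Fix a reference system $R$, a state $\rho_{RA}$ and a state $\sigma_R$, and set $\sigma_{RA}\coloneqq\sigma_R\otimes\omega_A$. Since $\cN(\omega)=\omega$, we have $(\id_R\otimes\cN)(\sigma_{RA})=\sigma_{RA}$. In the stabilized constant I would take $S=R$, $\tau_S=\sigma_R$, and the original reference of $\omega$ trivial. Then $\sigma_R\otimes\omega_A$ is an admissible second state for $\etaD_\DD(\cN,\sigma_R\otimes\omega_A)\le\eta$. The definition of the conditional SDPI constant gives
\begin{align}
\DD(\cN(\rho_{RA})\|\sigma_R\otimes\omega_A)-\DD(\rho_R\|\sigma_R)\le \eta\left[\DD(\rho_{RA}\|\sigma_R\otimes\omega_A)-\DD(\rho_R\|\sigma_R)\right].
\end{align}
This holds whenever the bracket is positive and finite. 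In the degenerate cases (bracket zero or infinite) I would argue separately, or simply restrict attention to the admissible states, following the paper's convention that the supremum ranges over admissible states.

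\textbf{Step 2 (iteration).} Replace $\rho_{RA}$ by $\rho^{(k)}_{RA}\coloneqq(\id\otimes\cN^k)(\rho_{RA})$. Its $R$-marginal is still $\rho_R$ and the second state is unchanged, so the same inequality applies at every step. Inducting on $k$ gives
\begin{align}
\DD(\cN^n(\rho_{RA})\|\sigma_R\otimes\omega_A)-\DD(\rho_R\|\sigma_R)\le\eta^n\left[\DD(\rho_{RA}\|\sigma_R\otimes\omega_A)-\DD(\rho_R\|\sigma_R)\right]\le \eta^n C^\Delta_\DD.
\end{align}

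\textbf{Step 3 (conclusion).} If $n\ge\log(C^\Delta_\DD/\delta)/\log(1/\eta)$, then $\eta^nC^\Delta_\DD\le\delta$, uniformly over $R$, $\rho_{RA}$ and $\sigma_R$. Hence the infimum defining $t_{\operatorname{rep}}^\DD$ is at most this value, up to the integer ceiling implicit in the statement. The trivial cases $\eta\in\{0,1\}$ or $C^\Delta_\DD=\infty$ should be read with the usual conventions.

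\textbf{Main obstacle.} The hard part is Step 1's bookkeeping of the reference system. The stabilized constant only allows states of the form $\tau_S\otimes\omega$, so the second state must be exactly $\sigma_R\otimes\omega_A$. It is therefore essential that the replacement-time definition uses a product second argument. One also needs the fixed-point property so that the second argument is invariant under $\cN$, and this is what makes the recursion close. A further point to check is that $R$ may be of arbitrary finite dimension, which is allowed because the supremum over $S$ in $\eta^{\Delta,\operatorname{stab}}$ ranges over all finite-dimensional $S$.
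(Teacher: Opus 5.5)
Your proposal is correct and matches the paper's proof step for step. The paper also uses $\cN(\omega)=\omega$ to rewrite the second argument and applies the stabilized conditional SDPI with $S\to R$, $\tau_S=\sigma_R$ and a trivial original reference. It then iterates $n$ times using that the $R$-marginal is unchanged, bounds by $C^\Delta_\DD$, and solves for $n$. Your remarks on degenerate denominators and the integer ceiling make explicit conventions that the paper leaves implicit.
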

\begin{proof}
    We have the sequence of inequalities
    \begin{align}
        &\mbb{D}((\id_{R} \otimes \cN^{n})(\rho_{RA}) \Vert \sigma_{R} \otimes \omega_{A}) - \DD(\rho_{R} \Vert \sigma_{R}) \notag \\
        &=\mbb{D}((\id_{R} \otimes \cN^{n})(\rho_{RA}) \Vert \sigma_{R} \otimes \cN^{n}(\omega_{A})) - \DD(\rho_{R} \Vert \sigma_{R}) \\
        &\leq \eta_{\mbb{D}}^{\operatorname{\Delta,stab}}(\cN,\omega_{A})\left[\mbb{D}((\id_{R} \otimes \cN^{n-1})(\rho_{RA}) \Vert \sigma_{R} \otimes \omega_{A}) - \DD(\rho_{R} \Vert \sigma_{R})  \right] \\
        &\leq \eta_{\mbb{D}}^{\operatorname{\Delta,stab}}(\cN,\omega_{A})^{n} \left[\mbb{D}(\rho_{RA} \Vert \sigma_{R} \otimes \omega_{A}) - \DD(\rho_{R} \Vert \sigma_{R})  \right] \\
        &\leq \eta_{\mbb{D}}^{\operatorname{\Delta,stab}}(\cN,\omega_{A})^{n} C^{\Delta}_{\DD} \ ,
    \end{align}
    where the equality follows because $\omega_{A}$ is a fixed point of $\cN$, the first inequality from \eqref{eq:stabilized-conditional-SDPI} with the replacements $S \to R$, $R \to \mbb{C}$, $A \to A$, the second inequality from doing the previous inequality $n-1$ times, and the final inequality from the definition of $C^{\Delta}_{\DD}$. By bounding the final term from above by $\delta$, taking the logarithm, and solving for $n$ completes the proof.
\end{proof}
Here we recall that by~\cref{Prop:tensorization-stabilized-conditional-SDPI}, if we choose $\DD$ as the relative entropy, $\eta_{D}^{\operatorname{\Delta,stab}}$ tensorizes, implying stability under tensor powers of the channel $\cN$. 

Finally, we observe that, for the relative entropy, decoupling and replacing time become equivalent.
\begin{proposition}
    For every channel $\cN$ and all $\delta \in (0,\infty)$, 
    \begin{equation}
    t^{D}_{\operatorname{rep}}(\cN,\delta) = t^{D}_{\operatorname{dec}}(\cN,\delta).    
    \end{equation}
\end{proposition}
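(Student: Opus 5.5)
The plan is to show that, for the relative entropy, the two quantities under the infimum have the same supremum at every fixed $n$. Then the sets of admissible $n$ coincide, and so do the infima. Fix $n$ and write $\cN^n$ for the $n$-fold composition.

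First I show that the replacing condition implies the decoupling condition. In the supremum defining the replacing condition, choose $\sigma_R=\rho_R$. Then $D(\rho_R\|\sigma_R)=0$, and the replacing expression reduces to $D((\id_R\otimes\cN^n)(\rho_{RA})\|\rho_R\otimes\omega_A)$, which is exactly the decoupling expression. Hence
\begin{align}
\sup_{R,\rho_{RA}} D((\id_R\otimes\cN^n)(\rho_{RA})\|\rho_R\otimes\omega_A) \leq \sup_{R,\rho_{RA},\sigma_R} \left[ D((\id_R\otimes\cN^n)(\rho_{RA})\|\sigma_R\otimes\omega_A) - D(\rho_R\|\sigma_R) \right].
\end{align}
Consequently $t^D_{\operatorname{dec}}(\cN,\delta)\leq t^D_{\operatorname{rep}}(\cN,\delta)$.

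For the reverse inequality I use the relative-entropy chain rule~\eqref{eq:rel-ent-chain}, with the roles of the two tensor factors swapped. This is legitimate because the identity is symmetric. Set $\tau_{RA}\coloneqq(\id_R\otimes\cN^n)(\rho_{RA})$, and note that $\tau_R=\rho_R$ since the channel acts only on $A$. The chain rule gives
\begin{align}
D(\tau_{RA}\|\sigma_R\otimes\omega_A) = D(\tau_{RA}\|\rho_R\otimes\omega_A) + D(\rho_R\|\sigma_R).
\end{align}
The identity follows from $\log(\sigma_R\otimes\omega_A)=\log\sigma_R\otimes I+I\otimes\log\omega_A$ together with $\tau_R=\rho_R$.

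Subtracting $D(\rho_R\|\sigma_R)$ shows that the replacing expression equals the decoupling expression, for every $\sigma_R$. So the two suprema coincide, which gives $t^D_{\operatorname{rep}}\leq t^D_{\operatorname{dec}}$ and proves the equality.

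The only delicate point is infinite values. Suppose $\operatorname{supp}(\rho_R)\not\subseteq\operatorname{supp}(\sigma_R)$. Then $D(\rho_R\|\sigma_R)=\infty$, and the replacing expression has the form $\infty-\infty$. I would handle this by adopting the paper's convention that such pairs are excluded, that is, the difference is taken only when it is finite. Under this convention, whenever $D(\rho_R\|\sigma_R)<\infty$ the chain rule above holds in the extended reals: if the left side is $+\infty$, then so is $D(\tau_{RA}\|\rho_R\otimes\omega_A)$, because the supports are compatible. Thus the identification of the two expressions survives. This support bookkeeping is the main obstacle; the rest is routine.
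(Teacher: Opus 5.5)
Your proof is correct and matches the paper's argument. The chain rule $D(\tau_{RA}\|\sigma_R\otimes\omega_A)=D(\tau_{RA}\|\rho_R\otimes\omega_A)+D(\rho_R\|\sigma_R)$, applied with $\tau_R=\rho_R$, makes the replacing and decoupling expressions identical for every $n$, so the two infima coincide. Your separate $\sigma_R=\rho_R$ direction is redundant given this identity, and your handling of the $\infty-\infty$ case and supports is extra care the paper omits. The identity you need is not a mere role swap of \eqref{eq:rel-ent-chain}, since there the kept factor is the marginal, while here $\omega_A$ is arbitrary. Your direct computation via $\log(\sigma_R\otimes\omega_A)$ does establish it correctly.
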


\begin{proof}
    By the relative entropy chain rule \eqref{eq:rel-ent-chain},
    \begin{align}
        &D((\id_{R} \otimes \cN^{n})(\rho_{RA}) \Vert \sigma_{R} \otimes \omega_{A}) - D(\rho_{R} \Vert \sigma_{R}) \notag \\ &= D((\id_{R} \otimes \cN^{n})(\rho_{RA}) \Vert \rho_{R} \otimes \omega_{A}) + D(\rho_{R} \Vert \sigma_{R}) - D(\rho_{R} \Vert \sigma_{R}) \\
        &= D((\id_{R} \otimes \cN^{n})(\rho_{RA}) \Vert \rho_{R} \otimes \omega_{A}) \ .
    \end{align}
    Thus
    \begin{equation}
    D((\id_{R} \otimes \cN^{n})(\rho_{RA}) \Vert \sigma_{R} \otimes \omega_{A}) - D(\rho_{R} \Vert \sigma_{R}) \leq \delta    
    \end{equation}
    if and only if
    \begin{equation}
    D((\id_{R} \otimes \cN^{n})(\rho_{RA}) \Vert \rho_{R} \otimes \omega_{A}) \leq \delta,    
    \end{equation}
    which are the conditions in the definitions. 
\end{proof}

An interesting corollary of this is that relative entropy replacement time is controlled by the product SDPI constant rather than the trivially stabilized conditional SDPI constant.
\begin{corollary}
    For a channel $\cN_{A \to A}$ with fixed point $\omega > 0$,
    \begin{align}
        t^{D}_{\operatorname{rep}}(\cN,\delta) = t^{D}_{\operatorname{dec}}(\cN,\delta) \leq \left \lceil \frac{\log(C^{p}_{D}/\delta)}{\log(1/\eta^{p}_{D}(\cN,\omega))}\right \rceil \ .
    \end{align}
    where $C^{p}_{D} \coloneq \sup_{\rho_{RA}} D(\rho_{RA} \Vert \rho_{R} \otimes \omega_{A})$.
\end{corollary}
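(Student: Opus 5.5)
The plan is to combine the equality $t^{D}_{\operatorname{rep}}(\cN,\delta)=t^{D}_{\operatorname{dec}}(\cN,\delta)$ from the preceding proposition with a direct iteration of the product-reference SDPI constant $\eta^p_D(\cN,\omega)$ from~\eqref{SDPI-product}. Since the equality is already established for all $\delta\in(0,\infty)$, it suffices to bound the decoupling time.

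\textbf{Step 1 (one-step contraction).} Fix $R$ and $\rho_{RA}$, and set $\rho^{(k)}_{RA}\coloneqq(\id_R\otimes\cN^{k})(\rho_{RA})$. Because $\cN$ acts only on $A$, the $R$-marginal of $\rho^{(k)}_{RA}$ is always $\rho_R$. Because $\omega$ is a fixed point, $\rho_R\otimes\omega_A=\rho_R\otimes\cN(\omega_A)$. Applying the definition of $\eta^p_D(\cN,\omega)$ to the input $\rho^{(k-1)}_{RA}$ therefore gives
\begin{align}
    D(\rho^{(k)}_{RA}\|\rho_R\otimes\omega_A)
    = D\big((\id\otimes\cN)(\rho^{(k-1)}_{RA})\,\big\|\,\rho_R\otimes\cN(\omega_A)\big)
    \leq \eta^p_D(\cN,\omega)\, D(\rho^{(k-1)}_{RA}\|\rho_R\otimes\omega_A).
\end{align}
The inequality also holds trivially when the right-hand divergence is zero, because data processing then forces the left-hand side to be zero.

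\textbf{Step 2 (iteration).} Iterating Step 1 $n$ times yields
\begin{align}
    D(\rho^{(n)}_{RA}\|\rho_R\otimes\omega_A)\leq \eta^p_D(\cN,\omega)^n\, D(\rho_{RA}\|\rho_R\otimes\omega_A)\leq \eta^p_D(\cN,\omega)^n C^p_D.
\end{align}

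\textbf{Step 3 (finiteness and the claimed bound).} The assumption $\omega>0$ ensures that $C^p_D$ is finite. By the chain rule~\eqref{eq:rel-ent-chain},
\begin{align}
    D(\rho_{RA}\|\rho_R\otimes\omega_A)=I(R:A)_\rho+D(\rho_A\|\omega_A)\leq 2\log|A|+\log\|\omega^{-1}\|_\infty .
\end{align}
This bound is uniform in $|R|$, since $I(R:A)\le 2\log|A|$ holds for every $R$. Requiring $\eta^p_D(\cN,\omega)^n C^p_D\leq\delta$ and solving for $n$ gives $n\geq\log(C^p_D/\delta)/\log(1/\eta^p_D(\cN,\omega))$. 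Taking the ceiling gives the smallest admissible integer, and hence the stated bound on $t^{D}_{\operatorname{dec}}(\cN,\delta)$, which equals $t^{D}_{\operatorname{rep}}(\cN,\delta)$.

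\textbf{Expected obstacle.} The only delicate point is the degenerate case $\eta^p_D(\cN,\omega)=1$. There the right-hand side is $+\infty$ under the usual convention, so the inequality holds trivially. Apart from this, the argument needs only the fixed-point property and the invariance of the $R$-marginal, so it is essentially a direct one-line iteration.
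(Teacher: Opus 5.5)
Your proof is correct and follows the paper's argument. You invoke $t^{D}_{\operatorname{rep}}=t^{D}_{\operatorname{dec}}$ from the preceding proposition, use $\cN(\omega)=\omega$ to write $\rho_R\otimes\omega_A=\rho_R\otimes\cN(\omega_A)$, iterate the product-reference SDPI constant $n$ times, bound the result by $C^p_D$, and solve for $n$. Your extra remarks (the $R$-marginal is invariant under $\id_R\otimes\cN$, the zero-denominator case, and $C^p_D$ is finite by the chain rule and $\omega>0$) supply details the paper leaves implicit.
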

\begin{proof}
As $\cN(\omega) = \omega$ and using the definition of the product SDPI constant (see \eqref{SDPI-product}) $n$ times,
    \begin{align}
        D((\id_{R} \otimes \cN^{n})(\rho_{RA}) \Vert \rho_{R} \otimes \omega_{A}) &= D((\id_{R} \otimes \cN^{n})(\rho_{RA}) \Vert \rho_{R} \otimes \cN^{n}(\omega_{A})) \\
        &\leq \left[\eta^{p}_{D}(\cN,\omega)\right]^{n}D(\rho_{RA} \Vert \rho_{R} \otimes \omega_{A}) \\
        &\leq \left[\eta^{p}_{D}(\cN,\omega)\right]^{n}C^{p}_{D} \ .
    \end{align}
    Bounding the right-hand side by $\delta$ and solving for $n$ completes the proof.
\end{proof}

\begin{remark}[Time-Inhomogeneous Markov Chains]
    In \cite{george2025quantumdoeblincoefficientsinterpretations}, finite time bounds on time \textit{inhomogeneous} Markov chains were developed using contraction coefficients. The same ideas may be applied for replacing times. However, this is repetitive and does not lead to a new insight beyond that the Markov chain must converge to a sequence of replacer channels rather than a fixed replacer channel. Thus, we omit a more in-depth discussion. The reader can work out the ideas straightforwardly by combining the above proof methods with the notion of replacing and the definition of a time-inhomogeneous Markov chain from \cite{george2025quantumdoeblincoefficientsinterpretations}.
\end{remark}

\subsubsection{Doeblin coefficient is eventually non-trivial for mixing channels}

In \cite{george2025quantumdoeblincoefficientsinterpretations}, the authors considered the quantum Doeblin coefficient:
\begin{align}
\label{eq:quantum-Doeblin}
    \alpha(\cN) \coloneqq  \max\{ \Tr[H] \colon I \otimes H \leq \Gamma^{\cN}, \ H \in \operatorname{Herm} \} \ . 
\end{align}
This quantity is motivated as being an efficiently computable quantity used to upper bound the trace distance contraction coefficient, $\eta_{\text{Tr}}(\cN) \leq 1 - \alpha(\cN)$. This is valuable as the trace distance contraction coefficient is NP-hard to calculate in general \cite{delsol2025computationalaspectstracenorm}, and the trace distance contraction coefficient can be used to bound the speed of mixing, i.e., to establish finite-time guarantees for mixing. 

One of the undesired aspects of the quantum Doeblin coefficient is that for a channel $\cN_{A \to A}$ such that $\vert A \vert > 2$, it can be the case $\alpha(\cN) = 0$ even when $\eta_{\text{Tr}}(\cN) < 1$; i.e., it is not faithful. Because the Choi operator of a replacer channel $\cR^{\omega}$ is $I \otimes \omega$, \cref{eq:quantum-Doeblin} in effect measures `how much' of a replacer \textit{map} that prepares a Hermitian \textit{operator} is `contained' in the Choi operator of the channel. In the previous section, we showed that all variations on mixing are asymptotically equivalent in finite dimensions because the (time-homogeneous) Markov chain converges to a replacer channel. Thus, one might hope that, for a determinable number $n$, one can guarantee $\alpha(\cN^{n}) > 0$. In this section we show that this is indeed the case as long as the channel when applied for $d^{2}$ iterations is sufficiently contractive.

\begin{proposition}
\label{prop:non-trivial-Doeblin-depth}
    Let $\cN_{A \to A}$ be a quantum channel, $d \coloneq \vert A \vert$, and $\ve > 0$. Then either
    \begin{enumerate}
        \item $\eta_{\Tr}(\cN^{d^{2}}) > 1 - \ve$ or
        \item $\alpha(\cN^{m\cdot d^{2}}) > 0$ for all $m \geq \frac{\log\left(4d^2(d+1)\right)}{\log\left(\frac{1}{1-\ve}\right)} $.
    \end{enumerate}
\end{proposition}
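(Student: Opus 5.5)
The plan is to assume that alternative~1 fails, i.e.\ $\eta_{\Tr}(\cN^{d^{2}}) \leq 1-\ve$, and to show that $\cN^{m d^2}$ then lies so close to a replacer channel that its Choi operator dominates an operator of the form $I\otimes H$ with $\Tr[H]>0$. The value $d^2$ plays no structural role here; it is simply the block length at which contractivity is assumed. Write $n \coloneq m d^2$.

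\textbf{Step 1: contraction at depth $n$.} By the concatenation bound for contraction coefficients (the unconditioned analogue of \eqref{eq:concatenation_etaD}, proved in the same way), we have $\eta_{\Tr}(\cN^{n}) \leq \eta_{\Tr}(\cN^{d^2})^m \leq (1-\ve)^m$.

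\textbf{Step 2: closeness to a replacer.} Set $\omega \coloneq \cN^{n}(\pi_A)$, where $\pi_A = I/d$. This $\omega$ need not be a fixed point; any reference output works. For every state $\rho$,
\begin{equation}
\|\cN^{n}(\rho)-\omega\|_1 = 2E_1(\cN^n(\rho)\|\cN^n(\pi_A)) \leq 2(1-\ve)^m E_1(\rho\|\pi_A) \leq 2(1-\ve)^m .
\end{equation}
Hence $\|\cN^n - \cR^\omega\|_1 \leq 2(1-\ve)^m$ in the $1\to1$ norm. Applying both parts of \cref{prop:equiv-of-chan-norms}, we first get $\|\cN^n-\cR^\omega\|_\diamond \leq 4d(1-\ve)^m$. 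We then get
\begin{equation}
\|\Gamma^{\cN^n} - I\otimes\omega\|_1 \leq 4d^2(1-\ve)^m \eqqcolon \delta,
\end{equation}
using that the Choi operator of $\cR^\omega$ is $I\otimes\omega$.

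\textbf{Step 3: a feasible Doeblin witness.} Since the operator norm is bounded by the trace norm, we have $\Gamma^{\cN^n} \geq I\otimes \omega - \delta\, I\otimes I = I\otimes(\omega-\delta I)$. Therefore $H \coloneq \omega - \delta I$ is Hermitian and feasible in \eqref{eq:quantum-Doeblin}. This gives
\begin{equation}
\alpha(\cN^n) \geq \Tr[H] = 1 - d\delta = 1 - 4d^3(1-\ve)^m .
\end{equation}
The hypothesis $m \geq \log(4d^2(d+1))/\log(1/(1-\ve))$ gives $(1-\ve)^m \leq 1/(4d^2(d+1))$. It follows that $\alpha(\cN^n) \geq 1 - \tfrac{d}{d+1} = \tfrac{1}{d+1} > 0$, which proves alternative~2.

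\textbf{Main obstacle.} The only delicate point is the constant bookkeeping. One must keep track of the factor $2$ between $\|\cdot\|_1$ and $E_1$, and of the dimension factors that \cref{prop:equiv-of-chan-norms} introduces when passing from the $1\to1$ norm to the Choi trace norm. The resulting threshold should land at or below $\log(4d^2(d+1))/\log(1/(1-\ve))$. If the constants come out slightly worse, I would replace the crude bound $\|X\|_\infty\leq\|X\|_1$ by a direct estimate. Since $\Gamma^{\cN^n}-I\otimes\omega = (\id\otimes(\cN^n-\cR^\omega))(\dyad{\Gamma})$, its operator norm can be bounded directly through the diamond norm, avoiding one factor of $d$.
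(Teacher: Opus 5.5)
Your proposal is correct and follows the paper's proof step for step. Submultiplicativity gives $\eta_{\Tr}(\cN^{md^2})\leq(1-\ve)^m$, the norm chain of \cref{prop:equiv-of-chan-norms} gives $\|\Gamma^{\cN^{md^2}}-I\otimes\omega\|_\infty\leq 4d^2(1-\ve)^m$, and the witness $H=\omega-\delta I$ from \cref{lem:Choi-op-norm-to-Doeblin} yields $\alpha(\cN^{md^2})\geq 1/(d+1)$. The only difference is that you take $\omega=\cN^{md^2}(\pi_A)$ rather than a fixed point, which works equally well, and your constants land exactly on the stated threshold, so the fallback in your ``main obstacle'' paragraph is unnecessary.
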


We remark that the first condition is natural in the sense that it has been shown that if the channel were to be mixing, then its trace distance contraction coefficient would be strictly less than one after $d^{2}$ iterations. 
\begin{proposition}
    (\cite[Theorems A.11 and B.10]{singh2024zeroerrorcommunicationdiscretetimemarkovian}) A channel $\cN_{A \to A}$ is mixing if and only if $\eta_{\Tr}(\cN^{d^{2}}) < 1$ where $d = \vert A \vert$.
\end{proposition}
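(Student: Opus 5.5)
The plan is to prove the two implications separately, using the structure already developed in this section. Throughout, $d=|A|$ and $\omega$ denotes a fixed point of $\cN$. Such a point always exists, since $\cN$ is a continuous affine self-map of the compact convex set $\Density(A)$.

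For the direction $\eta_{\Tr}(\cN^{d^{2}})<1\Rightarrow$ mixing, set $c\coloneqq\eta_{\Tr}(\cN^{d^2})$. By the concatenation bound (the unconditioned analogue of \eqref{eq:concatenation_etaD}) and $\eta_{\Tr}\le 1$, for $n=kd^2+r$ with $0\le r<d^2$ we get $\eta_{\Tr}(\cN^{n})\le c^{k}\to 0$. Then, for every $\rho$,
\begin{equation}
\left\Vert \cN^{n}(\rho)-\omega\right\Vert_{1}=\left\Vert \cN^{n}(\rho)-\cN^{n}(\omega)\right\Vert_{1}\le 2\,\eta_{\Tr}(\cN^{n})\to 0 .
\end{equation}
This gives mixing. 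Uniqueness of the fixed point is automatic, since any other fixed point would also have to converge to $\omega$.

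For the converse, assume $\cN$ is mixing. As in the proof of \cref{prop:asymptotic-equiv-of-types-of-mixing}, $\cN^{n}\to\cR^{\omega}$ uniformly, hence $\eta_{\Tr}(\cN^{n})\to0$, so $\eta_{\Tr}(\cN^{n})<1$ for some finite $n$. The real content is the bound $n\le d^{2}$. I would argue by contraposition: assuming $\eta_{\Tr}(\cN^{d^2})=1$, I would show $\eta_{\Tr}(\cN^{n})=1$ for all $n$, contradicting mixing.

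\begin{itemize}
\item By \eqref{Eq:Ey-orthogonal-pure} with $\gamma=1$ and compactness, $\eta_{\Tr}(\cN^n)=1$ holds iff there are orthogonal pure states $\psi\perp\phi$ with $\cN^{n}(\psi)\perp\cN^{n}(\phi)$. This is a one-shot zero-error condition.
\item In Kraus form, with $\mathcal K_n$ the span of length-$n$ products of Kraus operators and $\mathcal S_n\coloneqq\operatorname{span}\{K^\dagger L: K,L\in\mathcal K_n\}$ the associated operator system, the condition becomes the existence of $\psi\perp\phi$ with $\langle\psi|X|\phi\rangle=0$ for all $X\in\mathcal S_n$.
\item The systems satisfy the recursion $\mathcal S_{n+1}=\operatorname{span}\{K^\dagger X L : X\in\mathcal S_n,\ K,L\in\mathcal K_1\}$, i.e., $\mathcal S_{n+1}=\Phi(\mathcal S_n)$ for a fixed linear operation $\Phi$ on subspaces of the $d^2$-dimensional matrix space.
\end{itemize}

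The key step is then a stabilization argument. I would like to show that the sequence $(\mathcal S_n)_n$ becomes constant after at most $d^2$ steps, or at least that the zero-error property becomes stationary by then. Once $\mathcal S_{m}=\mathcal S_{m+1}$ for some $m\le d^2$, the recursion gives $\mathcal S_n=\mathcal S_m$ for all $n\ge m$. The zero-error witness at depth $d^2$ then persists forever, so $\eta_{\Tr}(\cN^n)=1$ for all $n$.

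I expect this stabilization to be the main obstacle, because $\mathcal S_n$ need not be nested a priori. My first attempt would be to use $I\in\mathcal S_1$ (trace preservation gives $\sum_i K_i^\dagger K_i=I$) to show $\mathcal S_n\subseteq\mathcal S_{n+1}$ or the reverse. If that fails, I would fall back on a dimension count on the nested sums $\sum_{k\le n}\mathcal S_k$, following the argument in \cite[Theorems A.11 and B.10]{singh2024zeroerrorcommunicationdiscretetimemarkovian}. That sequence is increasing in a $d^2$-dimensional space, so it must stall within $d^2$ steps, and the remaining work would be to transfer stalling of the sums to stationarity of the zero-error property.
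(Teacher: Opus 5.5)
Your proposal is correct. The step you flag as the main obstacle is settled by your first attempt, so the fallback via the sums $\sum_{k\le n}\mathcal{S}_k$ is not needed.

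\textbf{Nestedness.} Set $\mathcal{S}_0\coloneqq\mathbb{C}I$, the span of products of length zero. Trace preservation gives $I=\sum_i K_i^\dagger K_i\in\Phi(\mathcal{S}_0)=\mathcal{S}_1$, so $\mathcal{S}_0\subseteq\mathcal{S}_1$. Since $\Phi$ is monotone under inclusion, induction gives $\mathcal{S}_n=\Phi^n(\mathcal{S}_0)\subseteq\Phi^n(\mathcal{S}_1)=\mathcal{S}_{n+1}$ for all $n$. The chain therefore increases, which matches the fact that zero-error constraints only accumulate under data processing.

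\textbf{Stabilization.} An increasing chain of subspaces of the $d^2$-dimensional matrix space that starts in dimension one can grow strictly at most $d^2-1$ times. Hence $\mathcal{S}_m=\mathcal{S}_{m+1}$ for some $m\le d^2-1$, and your recursion then gives $\mathcal{S}_n=\mathcal{S}_m$ for all $n\ge m$. The argument in fact yields the slightly sharper threshold $d^2-1$.

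\textbf{Final contradiction.} You do not need uniform convergence here. The witness pair $\psi\perp\phi$ is fixed, so pointwise mixing already forces $\Vert\cN^n(\psi)-\cN^n(\phi)\Vert_1\to0$. Orthogonality of the outputs keeps this quantity equal to $2$ for every $n\ge d^2$, which is the contradiction.

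The paper does not prove this statement itself; it only cites \cite{singh2024zeroerrorcommunicationdiscretetimemarkovian}. Your argument, via stabilization of the increasing chain of operator systems of $\cN^n$, gives a complete self-contained proof.
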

\noindent In particular, this allows us to conclude the following.
\begin{corollary}\label{cor:mixing-equiv-to-q-Doeblin-being-non-triv-at-some-depth}
     There exists $n \in \mbb{N}$ such that $\alpha(\cN^{n}) > 0$ if and only if $\cN$ is mixing.
\end{corollary}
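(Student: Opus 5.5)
The plan is to combine \cref{prop:non-trivial-Doeblin-depth} with the characterization of mixing from \cite[Theorems A.11 and B.10]{singh2024zeroerrorcommunicationdiscretetimemarkovian}, i.e.\ $\cN$ is mixing if and only if $\eta_{\Tr}(\cN^{d^{2}})<1$. The two directions are handled separately.

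\emph{Mixing implies some $\alpha(\cN^n)>0$.} If $\cN$ is mixing, then $\eta_{\Tr}(\cN^{d^2})<1$. I would choose $\ve>0$ with $\eta_{\Tr}(\cN^{d^2})\le 1-\ve$, for instance $\ve\coloneqq 1-\eta_{\Tr}(\cN^{d^2})$, which is strictly positive. With this choice, alternative 1 of \cref{prop:non-trivial-Doeblin-depth} fails, so alternative 2 must hold. Taking any integer $m\ge \log(4d^2(d+1))/\log(1/(1-\ve))$ then gives $\alpha(\cN^{md^2})>0$, and we set $n\coloneqq md^2$. The only boundary case is $\ve=1$; there $\log(1/(1-\ve))=\infty$, so the condition on $m$ reads $m\ge 0$, and any $m\ge1$ works, or one simply passes to a smaller $\ve$.

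\emph{Some $\alpha(\cN^n)>0$ implies mixing.} Suppose $\alpha(\cN^n)>0$. By the known bound $\eta_{\Tr}(\cM)\le 1-\alpha(\cM)$ recalled just before \cref{prop:non-trivial-Doeblin-depth}, we get $\eta_{\Tr}(\cN^n)<1$. Submultiplicativity under concatenation, $\eta_{\Tr}(\cM\circ\cK)\le\eta_{\Tr}(\cM)\eta_{\Tr}(\cK)$ (the standard special case of the concatenation lemma without reference), then gives $\eta_{\Tr}(\cN^{kn})\le\eta_{\Tr}(\cN^n)^k\to0$. I would conclude mixing directly from this decay rather than passing back through $d^2$. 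The monotonicity $\eta_{\Tr}(\cN^{j})\le1$ handles the intermediate powers $kn\le j<(k+1)n$, so $\sup_{\rho,\sigma}\|\cN^j(\rho)-\cN^j(\sigma)\|_1\to0$. Hence $(\cN^j(\rho))_j$ is Cauchy uniformly in $\rho$: indeed $\cN^{j+l}(\rho)=\cN^j(\cN^l(\rho))$, so its distance to $\cN^j(\rho)$ is controlled by the same supremum. The sequence therefore converges, and the limit $\omega$ is independent of $\rho$. Finally, $\omega$ is a fixed point by continuity of $\cN$, and it is unique, since any fixed point $\tau$ satisfies $\tau=\cN^j(\tau)\to\omega$. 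This is exactly mixing.

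The main obstacle is the reverse direction: making sure the definition of mixing used in the paper (convergence to a unique fixed point) is derived cleanly from contraction of a single power. The forward direction is just bookkeeping with the choice of $\ve$.
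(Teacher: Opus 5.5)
Your proof is correct and follows the paper's intended argument. The forward direction is exactly the combination of \cref{prop:non-trivial-Doeblin-depth} with the characterization $\eta_{\Tr}(\cN^{d^{2}})<1$ of mixing, taking $\ve = 1-\eta_{\Tr}(\cN^{d^{2}})$. The reverse direction, via $\eta_{\Tr}(\cN^{n})\le 1-\alpha(\cN^{n})<1$ and submultiplicativity, is the step the paper leaves implicit; your self-contained Cauchy argument, which derives mixing from $\eta_{\Tr}(\cN^{n})<1$ at an arbitrary $n$, cleanly avoids having to relate $n$ back to $d^{2}$.
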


We now turn to proving \cref{prop:non-trivial-Doeblin-depth}. First, we identify bounds on the Doeblin coefficient as a function of its distance from a replacer channel.
\begin{lemma}
\label{lem:Choi-op-norm-to-Doeblin}
    Let $\cN_{A \to A}$ be a channel, and let $\omega$ be a state. If $\left\Vert \Gamma^{\cN} - I \otimes \omega \right\Vert_{\infty} \leq \delta$, then $\alpha(\cN) \geq 1 - \delta \vert A \vert $.
\end{lemma}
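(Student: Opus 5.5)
The plan is to exhibit an explicit feasible point in the SDP defining $\alpha(\cN)$ in \eqref{eq:quantum-Doeblin}. We take $H$ to be a shifted version of $\omega$ itself:
\begin{align}
    H \coloneqq \omega - \delta I_A,
\end{align}
which is Hermitian. Here $\Gamma^{\cN}$ is the unnormalized Choi operator from \eqref{eq:choi-operator}, acting on $A\otimes A$, with the identity factor in $I\otimes H$ sitting on the input copy. For a channel this gives $\Tr_{\text{out}}\Gamma^{\cN} = I$, consistent with $\Gamma^{\cR^\omega} = I\otimes\omega$.

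The first step is to verify feasibility, i.e., $I\otimes H \leq \Gamma^{\cN}$. The hypothesis $\|\Gamma^{\cN} - I\otimes\omega\|_\infty \leq \delta$ gives the operator inequality
\begin{align}
    \Gamma^{\cN} - I\otimes \omega \geq -\delta\, I\otimes I .
\end{align}
This rearranges to
\begin{align}
    \Gamma^{\cN} \geq I\otimes(\omega - \delta I) = I\otimes H,
\end{align}
so $H$ is feasible. The second step is to evaluate the objective:
\begin{align}
    \Tr[H] = \Tr[\omega] - \delta\,\Tr[I_A] = 1 - \delta|A| .
\end{align}
Since $\alpha(\cN)$ is the maximum of $\Tr[H]$ over feasible $H$, this yields $\alpha(\cN)\geq 1-\delta|A|$.

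There is essentially no obstacle here. The only point needing care is the convention: the identity factor must be on the Choi input system, and the normalization must make $I\otimes\omega$ equal to the Choi operator of the replacer $\cR^\omega$. Both facts are as stated in the surrounding text, and the argument follows the remark there that $\alpha$ measures how much of a replacer map is contained in $\Gamma^{\cN}$.
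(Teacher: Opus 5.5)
Your proposal is correct and is the paper's proof: the paper also picks $\widehat{H} = \omega - \delta I_A$ and gets $\Gamma^{\cN} \geq I\otimes\widehat{H}$ from $\lambda_{\min}(\Gamma^{\cN} - I\otimes\omega) \geq -\delta$. It then evaluates $\Tr[\widehat{H}] = 1 - \delta|A|$ to conclude $\alpha(\cN) \geq 1-\delta|A|$.
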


\begin{proof}
    As $\delta \geq \left\Vert \Gamma^{\cN} - I \otimes \omega\right \Vert_{\infty}$, $\lambda_{\min}(\Gamma^{\cN} - I \otimes \omega) \geq -\delta$. It follows that
    \begin{align}
        \Gamma^{\cN} - I \otimes \omega \geq -\delta I_{AB} \iff \Gamma^{\cN} &\geq I \otimes \omega - \delta I_{AB} \\
        &= I_{A} \otimes (\omega - \delta I_{A}) \eqqcolon I_{A} \otimes \widehat{H} \ .
    \end{align}
    As $\widehat{H}$ is Hermitian and $\Gamma^{\cN} \geq I_{A} \otimes H$, by \eqref{eq:quantum-Doeblin}, $\widehat{H}$ is a feasible point for $\alpha(\cN^{n})$. Moreover, $\Tr[H] = 1 - \delta \vert A \vert$. As \eqref{eq:quantum-Doeblin} is a maximization, this completes the proof.
\end{proof}

Now we prove the proposition.
\begin{proof}[Proof of \Cref{prop:non-trivial-Doeblin-depth}]
    First observe that for a fixed point $\omega$ of $\cN$ and all $n \in \mbb{N}$,
    \begin{align}
        \left\Vert \Gamma^{\cN^{n}} - I \otimes \omega \right\Vert_{\infty} &\leq \left\Vert \Gamma^{\cN^{n}} - I \otimes \omega \right\Vert_{1} \\
        &\leq d \left\Vert \cN^{n} - \cR^{\sigma} \right\Vert_{\diamond} \\
        &\leq 2d^2 \left\Vert \cN^{n} - \cR^{\sigma} \right\Vert_{1} \\
        &= 2d^2 \cdot \sup_{\rho} \left\Vert \cN^{n}(\rho) - \omega \right\Vert_{1} \\
        &\leq 4d^2 \cdot \eta_{\Tr}(\cN^{n}) \ , \label{eq:doeblin-non-triv-step-1}
    \end{align}
    where we used the ordering of Schatten norms, \Cref{prop:equiv-of-chan-norms}, the definition of the $1 \to 1$ norm, the fact that $\omega$ is a fixed point of $\cN^{n}$ and then the fact that $\eta_{\Tr}(\cN) \coloneq \sup_{\rho \neq \sigma} \frac{1}{2}\left\Vert \cN(\rho - \sigma) \right\Vert_{1}$.

    Now assume that $\eta_{\Tr}(\cN^{n}) \leq 1 - \ve$. By submultiplicativity of the contraction coefficient \cite{Hirche2022contraction}, $\eta_{\Tr}(\cN^{n \cdot m}) \leq \eta_{\Tr}(\cN^{n})^{m} \leq (1-\ve)^{m}$. Combining \eqref{eq:doeblin-non-triv-step-1} with this,
    \begin{align}
        \left\Vert \Gamma^{\cN^{n}} - I \otimes \omega \right\Vert_{\infty} \leq 4d^2(1-\ve)^{m} \ .
    \end{align}
    So long as $4d^2(1-\ve)^{m} \leq \frac{1}{d+1}$, we obtain $\alpha(\cN^{n\cdot m}) > 0$ by \Cref{lem:Choi-op-norm-to-Doeblin}. Re-arranging to isolate $m$ completes the proof.
\end{proof}

\subsection{Correlation decay across a \texorpdfstring{$1D$}{} open chain}

The authors of~\cite{kato2019quantum}  identified conditions on a quantum state $\rho$ representing a $1D$-chain to be approximated by the Gibbs state of a 2-local Hamiltonian:

\begin{proposition}[Theorem 1 of \cite{kato2019quantum}]
\label{prop:2-parent-hamiltonian}
    Let $\ve \geq 0$, and let $\rho_{A^{m}_{1}}$ be a quantum state on systems $A_1 \cdots A_m$ such that 
    \begin{align}
        I(A_{1}^{i-1}:A_{i+1}^{m} \vert A_{i}) \leq \ve \quad \forall i \in \{1,\ldots,m\} \ . 
    \end{align}
    Then there exist Hermitian operators $h_{i} \in \operatorname{Herm}(A_{i} \otimes A_{i+1})$ such that $D(\rho \Vert \exp(-H)/Z) \leq \ve m$ where $H = \sum_{i=1}^{m-1} h_{i}$ and $Z$ is the corresponding partition function.
\end{proposition}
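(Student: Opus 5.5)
The plan is to write down an explicit $2$-local candidate Hamiltonian built from the marginals of $\rho$, compute its relative-entropy distance from $\rho$ by telescoping the chain rule, and handle the partition function separately. Set
\begin{align}
    L \coloneqq \sum_{i=1}^{m-1} \log \rho_{A_iA_{i+1}} - \sum_{i=2}^{m-1} \log \rho_{A_i},
\end{align}
where each term is understood as tensored with identities on the remaining sites. Then $-L = \sum_{i=1}^{m-1} h_i$ with $h_i \in \operatorname{Herm}(A_i\otimes A_{i+1})$: attach $-\log\rho_{A_iA_{i+1}}+\log\rho_{A_i}$ to $h_i$ for $i\ge 2$, and $-\log\rho_{A_1A_2}$ to $h_1$. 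If some marginals are not full rank, I would first restrict to the support of $\rho$ or perturb $\rho$ slightly and take a limit; this is a technicality.

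\textbf{Step 1 (unnormalized distance).} Since $\Tr[\rho \log \rho_{X}] = -S(X)_\rho$ for every marginal $X$, the extended definition of $D$ to positive second arguments gives
\begin{align}
    D(\rho\|e^{L}) = -S(A_1^m) + \sum_{i=1}^{m-1} S(A_iA_{i+1}) - \sum_{i=2}^{m-1} S(A_i).
\end{align}
I then apply the chain rule
\begin{align}
    S(A_1^{k+1}) = S(A_1^k) + S(A_kA_{k+1}) - S(A_k) - I(A_1^{k-1}:A_{k+1}|A_k)
\end{align}
inductively in $k$. The right-hand side above telescopes to
\begin{align}
    \sum_{i=2}^{m-1} I(A_1^{i-1}:A_{i+1}|A_i).
\end{align}
By data processing, i.e.\ monotonicity of CMI under discarding $A_{i+2}^m$, each term is at most $I(A_1^{i-1}:A_{i+1}^m|A_i)\le\ve$. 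Hence $D(\rho\|e^{L})\le \ve(m-2)$.

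\textbf{Step 2 (normalization).} With $Z\coloneqq \Tr e^{L}$ we have
\begin{align}
    D(\rho\|e^{L}/Z) = D(\rho\|e^{L}) + \log Z,
\end{align}
so it suffices to show $\log Z \le \ve$, or more generally $\log Z \le 2\ve$, which would give the $\ve m$ bound. For a single link, Lieb's three-matrix inequality gives $\Tr\exp(\log X_{AB}+\log Y_{BC}-\log W_B)\le 1$ whenever $X,Y,W$ are consistent marginals. This yields $Z\le1$ for $m=3$ and recovers the familiar identity "CMI $=$ distance to $\exp(\log\rho_{AB}+\log\rho_{BC}-\log\rho_B)$."

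For longer chains I would argue recursively. Let $\sigma_k \propto \exp(L_k)$, where $L_k$ is the truncation of $L$ to $A_1^k$, and write
\begin{align}
    Z_{k+1} = \Tr\exp\!\left(\log \sigma_k + \log Z_k + \log\rho_{A_kA_{k+1}} - \log\rho_{A_k}\right).
\end{align}
Lieb's inequality then bounds $Z_{k+1}/Z_k$ by $\Tr\big[\sigma_{A_k}\,\mathcal{T}_{\rho_{A_k}^{-1}}(\rho_{A_k})\big]$ for the appropriate Lieb operator $\mathcal{T}$. This equals $1$ exactly when $\sigma_{k,A_k}=\rho_{A_k}$.

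\textbf{Main obstacle.} The difficulty lies entirely in this marginal mismatch: $\sigma_k$ need not reproduce $\rho_{A_k}$, so the recursive bound on $Z$ can exceed $1$. My first attempt would be to exploit the freedom in choosing the $h_i$. I would replace $-\log\rho_{A_k}$ in the $k$-th correction by $-\log\sigma_{k,A_k}$, i.e.\ build $\sigma_{k+1}\propto\exp(\log\sigma_k+\log\rho_{A_kA_{k+1}}-\log\sigma_{k,A_k})$, which stays $2$-local because $\log \sigma_{k,A_k}$ acts on $A_k$ alone. With this choice Lieb gives $Z_{k+1}\le Z_k$ exactly. The price is an extra term in Step 1: the telescoping now produces $D(\rho_{A_k}\|\sigma_{k,A_k})$-type corrections in addition to the CMIs.

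I would then bound these corrections by induction using data processing, $D(\rho_{A_k}\|\sigma_{k,A_k})\le D(\rho_{A_1^k}\|\sigma_k)$, and aim for an inductive estimate of the form
\begin{align}
    D(\rho_{A_1^{k+1}}\|\sigma_{k+1}) \le D(\rho_{A_1^k}\|\sigma_k) + I(A_1^{k-1}:A_{k+1}|A_k).
\end{align}
Summing this over $k$ gives the claimed $\ve m$. Establishing this one-step inequality, in which the new $A_k$-correction and the old error must combine without doubling, is where I expect the real work to lie.
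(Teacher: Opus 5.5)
Your diagnosis of the naive $L$ is right: for $m\ge 4$ nothing controls $Z$. Your modified construction is also the right fix. As written, however, the proposal stops exactly where the proof lies. The one-step inequality is left open, and you have the sign of the marginal correction backwards.

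Take $\sigma_2=\rho_{A_1A_2}$ and $\sigma_{k+1}=\exp(\log\sigma_k+\log\rho_{A_kA_{k+1}}-\log\sigma_{k,A_k})/Z_{k+1}$, and expand $-\operatorname{Tr}[\rho\log\sigma_{k+1}]$ term by term. The $-\log\sigma_{k,A_k}$ piece contributes $+\operatorname{Tr}[\rho_{A_k}\log\sigma_{k,A_k}]=-S(A_k)_\rho-D(\rho_{A_k}\|\sigma_{k,A_k})$. This gives the exact identity
\begin{align}
D(\rho_{A_1^{k+1}}\|\sigma_{k+1}) = D(\rho_{A_1^{k}}\|\sigma_{k}) + I(A_1^{k-1}:A_{k+1}|A_k)_\rho - D(\rho_{A_k}\|\sigma_{k,A_k}) + \log Z_{k+1}.
\end{align}
The mismatch term is subtracted, not added, so no data processing and no control of "doubling" is needed. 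Lieb's inequality gives
\begin{align}
Z_{k+1}\le\operatorname{Tr}\Big[\rho_{A_k}\int_0^\infty(\sigma_{k,A_k}+t)^{-1}\sigma_{k,A_k}(\sigma_{k,A_k}+t)^{-1}\,dt\Big]=1 .
\end{align}
Dropping the two nonpositive terms yields your one-step inequality, and hence $D(\rho\|\sigma_m)\le\sum_{k=2}^{m-1}I(A_1^{k-1}:A_{k+1}|A_k)_\rho\le(m-2)\varepsilon$. The Hamiltonian is $h_1=-\log\rho_{A_1A_2}$ and $h_k=-\log\rho_{A_kA_{k+1}}+\log\sigma_{k,A_k}$, with $Z=\prod_k Z_{k+1}$.

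For rank-deficient marginals, do not perturb $\rho$ itself. Instead plug $(1-\delta)\rho_{A_kA_{k+1}}+\delta\pi$ into the construction. Lieb still gives $Z_{k+1}\le 1$, and each pair costs only an extra $D_{\max}\le\log\frac{1}{1-\delta}$. For $\varepsilon>0$ this is absorbed by the slack $2\varepsilon$. For $\varepsilon=0$ and rank-deficient $\rho$, the statement can only hold as an infimum, because $e^{-H}/Z$ is always full rank.

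The paper gives no proof; it imports the result from Kato and Brand\~{a}o, where it comes, as I recall, from the maximum-entropy principle. There, the entropy maximizer $\sigma^\star$ with $\rho$'s nearest-neighbour marginals is (a limit of) a $2$-local Gibbs state. Matching marginals give $D(\rho\|\sigma^\star)=S(\sigma^\star)-S(\rho)$. Iterated strong subadditivity bounds $S(\sigma^\star)$ by $S(A_1A_2)+\sum_k[S(A_kA_{k+1})-S(A_k)]=S(\rho)+\sum_k I(A_1^{k-1}:A_{k+1}|A_k)_\rho$.

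Your completed argument reaches the same bound constructively. It uses Lieb's triple-matrix inequality in place of the max-entropy/Gibbs duality, which sidesteps the existence and boundary issues for $\sigma^\star$. In exchange, your $\sigma_m$ neither reproduces $\rho$'s two-site marginals nor is the optimal $2$-local Gibbs approximation; the variational route delivers both.
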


Motivated by this, the authors of~\cite{chen2020matrix} showed that, given a broadcast channel $\cN_{C \to BC}$, then for $A \cong C$ and $\rho_{AB^{n}C}$  generated by applying $\cN$ first to the $\ol{A}$ system of a maximally entangled state $\dyad{\Phi_{d}}_{A\ol{A}}$ where $d = \vert A \vert$ and then to the output $C$ system the remaining times, it is the case that $I(A:C \vert B^{\ell}_{1}) = O(\eta^{\ell})$ for some $\eta < 1$ under certain conjectures including a conditional mutual information contraction coefficient being non-trivial (see \cite[Proposition III.4]{chen2020matrix}). That is, under certain conjectures, the correlation from one end of the chain to the other decays exponentially in the application of the channel.

Here we prove a similar result that does not require the input state to be a maximally entangled state. We also show that the exponential decay holds, so long as the sequence of channels have Choi operators with a minimum eigenvalue bounded from below by a number strictly greater than zero.

First we recall the following chain rule.
\begin{proposition}
    For $\rho_{ABCD}$, 
    \begin{align}\label{eq:CMI-to-CMI-chain-rule}
        I(A:BC \vert D) = I(A:C \vert D) + I(A:B \vert CD)
    \end{align}
\end{proposition}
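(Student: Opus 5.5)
The plan is to reduce the identity to the defining two-term chain rule \eqref{eq:CMI-MI-chain-rule}, $I(X:Y|Z)=I(X:YZ)-I(X:Z)$, applied three times to mutual informations of the single state $\rho_{ABCD}$. We first expand each of the three conditional mutual informations in the claim into unconditional ones. The left-hand side becomes
\begin{equation}
I(A:BC|D)=I(A:BCD)-I(A:D),
\end{equation}
and the two terms on the right-hand side become
\begin{equation}
I(A:C|D)=I(A:CD)-I(A:D),\qquad I(A:B|CD)=I(A:BCD)-I(A:CD).
\end{equation}

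Next we add the two right-hand expressions. The term $I(A:CD)$ appears once with each sign and cancels, so the sum telescopes to $I(A:BCD)-I(A:D)$. This is exactly the expansion of the left-hand side, which proves the claim.

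There is no real obstacle here, since everything is finite-dimensional and all quantities are finite. The only point needing care is that the mutual informations are all evaluated on marginals of the same state $\rho_{ABCD}$, so the cancellation is legitimate. If one prefers to argue directly from the definitions rather than from \eqref{eq:CMI-MI-chain-rule}, one can instead write each mutual information as a sum of entropies, $I(X:Y)=H(X)+H(Y)-H(XY)$, and check that the entropy terms cancel; this gives the same conclusion.
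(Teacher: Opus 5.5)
Your proof is correct and follows essentially the same route as the paper: both expand each conditional mutual information via $I(X:Y|Z)=I(X:YZ)-I(X:Z)$ and observe that the $I(A:CD)$ terms cancel, leaving $I(A:BCD)-I(A:D)=I(A:BC|D)$. The entropy-based alternative you mention is also valid but not needed.
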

\begin{proof}
    This follows from just using the chain rule $I(A:B \vert C) = I(A:BC) - I(A:C)$ three times:
    \begin{align}
        I(A:B \vert CD) + I(A:C \vert D) &= I(A:BCD) - I(A:CD) + I(A:CD) - I(A:D) \\
        &= I(A:BCD) - I(A:D) \\
        &= I(A:BC \vert D) \ , 
    \end{align}
    thus concluding the proof.
\end{proof}

Now we prove the generic decay (see also the proof of \cite[Proposition III.4]{chen2020matrix}).
\begin{proposition}
\label{prop:CMI-decay}
Let $\left(\cN^{i}_{\ol{A}_{i} \to A_{i}\ol{A}_{i+1}}\right)_{i \in \{1,...,m-1\}}$ be a tuple of quantum channels, and let $\cN^{m}_{\ol{A}_{m} \to A_{m}}$ be a quantum channel. Define
\begin{align}
    \rho_{A^{m}_{0}} \coloneq \bigcirc_{i \in \{1,\ldots,m\}} (\id_{A^{i-1}_{1}} \otimes \cN^{i})(\rho_{A_{0}\ol{A}_{1}}) \ . 
\end{align}
Then for every $k \in \{1,\ldots ,m\}$
\begin{align}
    I(A_{0}:A_{k} \vert A_{1}^{k-1}) &\leq \left[\prod_{i \in \{1,\ldots,k\}} \eta_{\CMI}(\cN^{i}) \right] \cdot I(A_{0}:\ol{A}_{1}) \\
    & \leq \left[\prod_{i \in \{1,\ldots,k\}} \eta_{\CMI}(\cN^{i}) \right] 2 \log \min\{\vert A_{0} \vert, \vert \ol{A}_{1} \vert \} \ . 
\end{align}
\end{proposition}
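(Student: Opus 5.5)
The idea is to peel off the channels one at a time, from the last to act on $A_k$ back to the first, using the CMI contraction coefficient at each step. The reference system grows by the outputs already produced, and the data-processing direction must be kept straight throughout.

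\textbf{Setup.} Fix $k$ and trace out everything produced after step $k$. Let $\tau^{(j)}$ denote the state after channels $\cN^1,\dots,\cN^j$ have acted, on systems $A_0 A_1\cdots A_j \ol{A}_{j+1}$. The relevant marginal for $A_k$ is $\tr_{\ol{A}_{k+1}}\tau^{(k)}$; for $k=m$ the last channel has no $\ol{A}$ output, and nothing changes. The quantity to bound is $I(A_0:A_k|A_1^{k-1})$.

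\textbf{Inductive step.} I would prove, for each $j\le k$,
\begin{align}
I(A_0:A_j\ol{A}_{j+1}|A_1^{j-1})_{\tau^{(j)}} \leq \eta_{\CMI}(\cN^j)\, I(A_0:\ol{A}_j|A_1^{j-1})_{\tau^{(j-1)}}.
\end{align}
This follows directly from the definition of $\eta_{\CMI}$. Take $C=A_0$ and $R=A_1^{j-1}$, and let the channel $\cN^j\colon \ol{A}_j\to A_j\ol{A}_{j+1}$ act on the input $\tau^{(j-1)}_{A_0A_1^{j-1}\ol{A}_j}$.

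Next I would reduce the left-hand side to the input for step $j+1$. By the chain rule \eqref{eq:CMI-to-CMI-chain-rule},
\begin{align}
I(A_0:A_j\ol{A}_{j+1}|A_1^{j-1}) = I(A_0:A_j|A_1^{j-1}) + I(A_0:\ol{A}_{j+1}|A_1^{j}).
\end{align}
Both terms are nonnegative, so:
\begin{enumerate}
\item For $j=k$, discarding the second term gives $I(A_0:A_k|A_1^{k-1})\le I(A_0:A_k\ol{A}_{k+1}|A_1^{k-1})$.
\item For $j<k$, discarding the first term gives $I(A_0:\ol{A}_{j+1}|A_1^{j})\le I(A_0:A_j\ol{A}_{j+1}|A_1^{j-1})$. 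The left side is exactly the input quantity for step $j+1$.
\end{enumerate}

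\textbf{Chaining.} Combining the inductive step with these reductions for $j=k,k-1,\dots,1$ telescopes to
\begin{align}
I(A_0:A_k|A_1^{k-1}) \le \Big[\prod_{i=1}^k \eta_{\CMI}(\cN^i)\Big]\, I(A_0:\ol{A}_1).
\end{align}
At $j=1$ the conditioning system is trivial, so the final input quantity is the ordinary mutual information of $\rho_{A_0\ol{A}_1}$. The second inequality in the statement is the standard dimension bound $I(X:Y)\le 2\log\min\{|X|,|Y|\}$.

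\textbf{Main obstacle.} The delicate point is the bookkeeping of the $\eta_{\CMI}$ step. The definition of $\eta_{\CMI}(\cN)$ requires the reference $R$ to be untouched by the channel, while the auxiliary $C$ is arbitrary. Here $R=A_1^{j-1}$ consists of outputs of earlier channels, and $\cN^j$ acts only on $\ol{A}_j$. The later channels act only on $\ol{A}_{j+1}$, so they do not affect the $A_0A_1^{j}$ marginals used above. I would also need to confirm that the optimization in $\eta_{\CMI}$ allows arbitrary output systems, here the composite $A_j\ol{A}_{j+1}$. Finally, when an input CMI vanishes, the corresponding output CMI must also vanish, which follows from data processing of CMI when $R$ is untouched.
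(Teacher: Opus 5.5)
Your proposal is correct and follows essentially the same route as the paper. Both arguments apply $\eta_{\CMI}(\cN^j)$ with $C=A_0$ and $R=A_1^{j-1}$, then use the chain rule \eqref{eq:CMI-to-CMI-chain-rule} and nonnegativity of CMI to drop $I(A_0:A_{j-1}|A_1^{j-2})$, and iterate down to $I(A_0:\overline{A}_1)$. Your explicit treatment of $k<m$ (discarding $\overline{A}_{k+1}$ via the chain rule) and of vanishing denominators spells out what the paper leaves implicit in ``start this argument for all $k\leq m$.''
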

\begin{proof}
    First, $I(A_{0}:A_{m} \vert A_{1}^{m-1}) \leq \eta_{\CMI}(\cN^{m})I(A_{0}:\ol{A}_{m} \vert A_{1}^{m-1})$. Now to apply data processing again, we need the $A^{m-1}$ register together with $\ol{A}_{m}$. By using~\eqref{eq:CMI-to-CMI-chain-rule} with $A \to A_{0}$, $B \to \ol{A}_{m}$, $C \to A_{m-1}$, and $D \to A^{m-2}_{1}$,
    \begin{align}
        I(A_{0}:\ol{A}_{m} \vert A^{m-1}_{1}) &= I(A_{0}:\ol{A}_{m}A_{m-1}\vert A^{m-2}_{1}) - I(A_{0}:A_{m-1}\vert A^{m-2}_{1}) \\
        &\leq I(A_{0}:\ol{A}_{m}A_{m-1}\vert A^{m-2}_{1}) \\
        &\leq \eta_{\CMI}(\cN^{m-1})I(A_{0}:\ol{A}_{m-1} \vert A^{m-2}_{1}) \ .
    \end{align}
    By applying the same argument iteratively, we obtain the mutual information bound. The subsequent bound follows from \cite[Exercise 11.7.9]{Wilde-Book}. Lastly, we can start this argument for all $k \leq m$, which completes the proof.
\end{proof}

To apply this decay, one needs some guarantee that $\eta_{\CMI}(\cN^{i})$ is not equal to one. By \Cref{Lem:a-Doeblin}, it would suffice for $\alpha_{+}(\cN^{i})$ to be non-zero. A sufficient condition for this is the following.
\begin{proposition}
    For every quantum channel $\cN_{A \to B}$,
    \begin{equation}
    \alpha_{+}(\cN) \geq \vert B \vert \lambda_{\min}(\Gamma^{\cN}).    
    \end{equation}
    In particular, $\alpha_{+}(\cN) > 0$ for every channel with a full rank Choi operator.
\end{proposition}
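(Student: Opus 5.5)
We use the replacer form of the Doeblin coefficient in \cref{prop:Hirche-Doeblin-Replacer-Form}. We exhibit an explicit $\epsilon$ and a channel $\cD$ with $\cN = (1-\epsilon)\cD + \epsilon\cR_{\tau}$, taking $\tau = \pi_B = I_B/|B|$ to be the maximally mixed state. The natural candidate is $\epsilon = |B|\lambda_{\min}(\Gamma^{\cN})$, because the Choi operator of $\cR_{\pi_B}$ is $I_A\otimes \pi_B = I_{AB}/|B|$.

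\textbf{Step 1: positivity of the remainder.} Write $\lambda\coloneqq\lambda_{\min}(\Gamma^{\cN})$. Then $\Gamma^{\cN}\geq \lambda I_{AB} = \epsilon\, (I_A\otimes\pi_B)$. So the map $\cN-\epsilon\cR_{\pi_B}$ has positive semidefinite Choi operator $\Gamma^{\cN}-\epsilon I_A\otimes \pi_B$, and hence is completely positive.

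\textbf{Step 2: trace preservation.} Taking the partial trace over $B$ gives
\[
\Tr_B[\Gamma^{\cN}-\epsilon I_A\otimes\pi_B] = (1-\epsilon) I_A .
\]
Trace preservation of $\cN$ supplies $\Tr_B\Gamma^{\cN} = I_A$, and the replacer term contributes $\epsilon I_A$. We next need $\epsilon\le 1$. Since $\Tr\Gamma^{\cN}=|A|$ and $\Gamma^{\cN}$ has dimension $|A||B|$, the minimal eigenvalue satisfies $\lambda\le 1/|B|$, so $\epsilon\in[0,1]$.

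\textbf{Step 3: conclusion.} If $\epsilon<1$, set $\cD\coloneqq(1-\epsilon)^{-1}(\cN-\epsilon\cR_{\pi_B})$. This map is CPTP, and $\cN=(1-\epsilon)\cD+\epsilon\cR_{\pi_B}$. If $\epsilon=1$, then $\Gamma^{\cN}=I_{AB}/|B|$, so $\cN=\cR_{\pi_B}$ and the decomposition holds with any $\cD$. In both cases \cref{prop:Hirche-Doeblin-Replacer-Form} gives $\alpha_+(\cN)\ge\epsilon=|B|\lambda_{\min}(\Gamma^{\cN})$.

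\textbf{The "in particular" claim.} A full-rank Choi operator has $\lambda_{\min}>0$, so $\alpha_+(\cN)>0$.

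The only delicate point is the boundary case $\epsilon=1$ together with the bound $\lambda\le1/|B|$ established in Step 2; beyond that the argument is routine.
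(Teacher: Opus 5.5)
Your proof is correct and is essentially the paper's argument. The paper defers to \cite[Proposition~38]{george2025quantumdoeblincoefficientsinterpretations}, which uses the same witness $X_B=\lambda_{\min}(\Gamma^{\cN})\,I_B\geq 0$ as a feasible point of the Doeblin SDP constraint $I_A\otimes X_B\leq\Gamma^{\cN}$; your detour through the replacer form of \cref{prop:Hirche-Doeblin-Replacer-Form} is equivalent (with $\epsilon\tau=X_B$) and adds only the trace-preservation and $\epsilon\leq 1$ bookkeeping, which you handle correctly.
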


\begin{proof}
    The proof is identical to \cite[Proposition 38]{george2025quantumdoeblincoefficientsinterpretations}.
\end{proof}

Combining this with \Cref{prop:CMI-decay} allows us to conclude the following.
\begin{proposition}
    Consider sequences of channels $(\cN^{i}_{\overline{A}_{i} \to A_{i}\overline{A}_{i+1}})$ where $\lambda_{\min}(\cN^{i}) \geq \epsilon > 0$ for all $i$. Define $(\wt{\cN}^{i} \coloneq \Tr_{\ol{A}_{i+1}} \circ \cN^{i})$ and, for every $m > 1$, 
    \begin{align}
    \rho_{A^{m}_{0}} \coloneq \wt{\cN}^{m} \circ \bigcirc_{i \in \{1,\ldots,m-1\}} (\id_{A^{i-1}_{1}} \otimes \cN^{i})(\rho_{A_{0}\ol{A}_{1}}) \ . 
\end{align}
    If all of the channels are full rank, then $I(A_{0}:A_{m} \vert A_{1}^{m-1})$ decays exponentially fast in the depth.
\end{proposition}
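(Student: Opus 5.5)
We will chain three earlier results. The target is a uniform bound $\eta_{\CMI}(\cN^{i})\leq 1-\kappa$ for some $\kappa>0$ that does not depend on $i$. Once we have it, \cref{prop:CMI-decay} will give the claimed decay. The chain we intend to use is
\begin{align}
    \eta_{\CMI}(\cN^{i}) = \etaD_D(\cN^{i}) \leq 1-\alpha_+(\cN^{i}) \leq 1-\vert A_i\overline{A}_{i+1}\vert\,\lambda_{\min}(\Gamma^{\cN^{i}}).
\end{align}
The equality is \cref{Cor:CMI-amo-equal}, and the two inequalities are \cref{Lem:a-Doeblin} and the preceding proposition. We read the hypothesis $\lambda_{\min}(\cN^{i})\geq\epsilon$ as $\lambda_{\min}(\Gamma^{\cN^{i}})\geq\epsilon$. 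Assuming the output dimensions $\vert A_i\overline{A}_{i+1}\vert\geq 1$ are fixed, or at least bounded below, we obtain $\eta_{\CMI}(\cN^{i})\leq 1-\epsilon$ for every $i$.

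The last channel in the chain is $\wt{\cN}^{m}=\Tr_{\overline{A}_{m+1}}\circ\cN^{m}$, so it needs a separate check. One route is direct: data processing under the partial trace shows that $\etaD_D$, and hence $\eta_{\CMI}$, does not increase, so $\eta_{\CMI}(\wt{\cN}^{m})\leq\eta_{\CMI}(\cN^{m})\leq 1-\epsilon$. The other route uses Choi operators. We have $\Gamma^{\wt{\cN}^{m}}=\Tr_{\overline{A}_{m+1}}\Gamma^{\cN^{m}}\geq \epsilon\,\vert\overline{A}_{m+1}\vert\, I$, which is again full rank. Either route works, and the monotonicity route is cleaner.

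We then apply \cref{prop:CMI-decay} with the tuple $(\cN^1,\ldots,\cN^{m-1},\wt{\cN}^m)$. This gives
\begin{align}
    I(A_0:A_m\vert A_1^{m-1}) \leq (1-\epsilon)^{m}\, 2\log\min\{\vert A_0\vert,\vert\overline{A}_1\vert\},
\end{align}
which is exponential decay in the depth $m$.

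The main obstacle is bookkeeping rather than analysis. First, the dimension factor in $\alpha_+\geq\vert B\vert\lambda_{\min}$ must not degrade the rate. Since it only improves the bound, this is harmless. Second, the uniformity in $i$ has to be explicit. If a cleaner rate is wanted, one can simply take the per-step constant to be $1-\inf_i \vert A_i\overline{A}_{i+1}\vert\epsilon<1$.
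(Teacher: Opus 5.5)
Your proposal is correct and follows the same route the paper intends. The bound $\lambda_{\min}(\Gamma^{\cN^i})\geq\epsilon$ gives $\alpha_+(\cN^i)\geq\epsilon$, hence $\eta_{\CMI}(\cN^i)=\etaD_D(\cN^i)\leq 1-\epsilon$ via \cref{Cor:CMI-amo-equal} and \cref{Lem:a-Doeblin}, and \cref{prop:CMI-decay} then yields the $(1-\epsilon)^m$ decay. Your explicit handling of the traced-out final channel $\wt{\cN}^m$, using monotonicity of $\etaD_D$ under partial trace, fills in a step the paper leaves implicit.
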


We remark that, in comparison to \cite{chen2020matrix}, our result does not require the initial state to be one share of an entangled state, nor do we require a unitality property of the channel. It does however require the channels to have full-rank Choi states. We also highlight that since $\alpha_{+}$ is a semidefinite program, one can in principle compute a bound on the exponential decay.

\section{Applications to quantum machine learning}

\label{Sec:Applications3}

In this section, we discuss how some of the results developed in this work provide insights into various performance indicators of quantum machine learning models, including stability, generalization, and training due to the application of noisy quantum channels and algorithms. 

\subsection{Stability of quantum learning algorithms}

A learning algorithm is known to
be stable if its output does not depend too much on the input data~\cite{bousquet2002stability,raginsky2016information}. It is also known that stability and generalization of a classical learning algorithm to new inputs are closely related, where stability implies generalization using information-theoretic tools~\cite{raginsky2016information}.

It is an interesting research question to explore whether noisy quantum learners also provide stability and generalization in quantum learning settings. In~\cite{caro2023information}, it was shown that the generalization error of classical-quantum learners is bounded from above by a function of the mutual information between the input and output spaces, plus a Holevo information term. 
In many noisy quantum learning algorithms, stability is primarily considered in differentially private (a form of noisy) quantum models~\cite{nuradha2024contraction,dasgupta2026privacyimpliesstabilityinformationtheoretic,nuradha2023quantum}.

\textit{Setup}: We focus on the setup considered in~\cite[Section~III]{dasgupta2026privacyimpliesstabilityinformationtheoretic}. Fix a finite classical dataset such that 
$s\coloneqq (z_1,\ldots,z_n)\in\cS\coloneqq\cZ^n$, $z_i\coloneqq (x_i,y_i)$, and $s$ is drawn from a  data distribution
$p_S$. Each data sample $z_i\in\cZ$ is encoded into a quantum state as $\rho_{z_i} \in \cD(\hat{T_1} \otimes \hat{T_2})$, where $\hat{T_1}$ and $ \hat{T_2}$ are quantum systems. The aggregate state for $n$ samples of data in $s$ is
\begin{equation}
\rho_s \coloneq \bigotimes_{i=1}^n \rho_{z_i} \in \cD( T_1 \otimes T_2),    
\end{equation}
where $T_1 \equiv \hat{T}_1^{\otimes n}$ and $T_2 \equiv \hat{T}_2^{\otimes n}$.

The quantum algorithm is carried out by a party known as the Data Processor.
The Data Processor executes a learning algorithm modeled as a
collection of quantum channels
$\cN\coloneqq\{\cN^{(s)}\colon T_1 \to B\}_{s\in\cS}$,
where $B\equiv WB'$ consists of a classical hypothesis register $W$ and a
quantum system $B'$, and the register $T_2$ cannot be accessed by the algorithm. Then,
the action on the full encoded state is $\id_{T_2}\otimes
\cN^{(s)}$. Denoting $\sigma_s^{\cN}\coloneqq(\id_{T_2}\otimes
\cN^{(s)})(\rho_s)$, the joint classical-quantum state after the Data Processor acts is
\begin{equation} \label{eq:sigma_N_output}
    \sigma^{\cN}\coloneqq \sum_{s\in\cS} p_S(s) |s\rangle\!\langle s|\otimes \sigma_s^{\cN}.
\end{equation}

\medskip
In~\cite[Section VI.B]{nuradha2024contraction}, stability for a quantum model with classical data is quantified by Holevo information. Here, we consider the stability notion defined for a general quantum learning system in~\cite[Definition~1]{dasgupta2026privacyimpliesstabilityinformationtheoretic} specialized for a fixed data distribution $p_S$ (note that ~\cite[Definition~1]{dasgupta2026privacyimpliesstabilityinformationtheoretic} has a maximization over $p_S$).

\begin{definition}[Stability] \label{def:stability_def}
    For $\gamma \geq 0$, a quantum learning algorithm $\cN$ is
    $\gamma$-stable with respect to $p_S$ and the collection $\left(\rho_s\right)_{s\in \mathcal{S}}$ of states if
    \begin{equation}
         I(S T_2 ;WB')_{\sigma^\cN} \le \gamma,
    \end{equation}
    where $\sigma^\cN$ is given by~\eqref{eq:sigma_N_output}.
\end{definition}

In prior work on the quantum setting, stability is primarily considered in differentially private quantum models~\cite{nuradha2024contraction,dasgupta2026privacyimpliesstabilityinformationtheoretic,nuradha2023quantum}.
Next, we discuss the stability of the quantum learning model where $\cN^{(s)}=\cM$ for all $s \in \cS$; i.e., $\mathcal{N}_{(s)}$ is equal to the same quantum channel $\mathcal{M}$. This is relevant to the scenario where the learning model is fixed or pre-trained. Also, note that the setting  $\cN^{(s)} \neq \cN^{(s')}$ for $s \neq s'$ is the general setting; we leave the analysis of this more general scenario for future work. 
\begin{proposition}
    Let $\cM$ be a quantum channel, and let $\cN$ be a quantum learning algorithm such that $\cN^{(s)}= \cM$ for all $s \in \cS$.
    The algorithm $\cN$ is $\gamma'$-stable if
    \begin{equation}
\eta^p_D(\cM) \ I(S T_2;T_1) \leq \gamma'.
    \end{equation}
Furthermore, it is $ \gamma'$-stable if
\begin{equation}
   (1-\alpha_+(\cM)) \ I(S T_2; T_1)  \leq \gamma'. 
\end{equation}

\label{prop:Stability_QML_noise}
\end{proposition}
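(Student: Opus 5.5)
The plan is to recognize the stability quantity $I(ST_2;WB')_{\sigma^\cN}$ as the output mutual information of the product-reference setting, and then apply $\eta^p_D(\cM)$ directly. Since $\cN^{(s)}=\cM$ for every $s$, the joint output state is
\begin{equation}
\sigma^\cN=(\id_{ST_2}\otimes\cM_{T_1\to B})(\rho_{ST_2T_1}),
\qquad
\rho_{ST_2T_1}\coloneqq\sum_{s} p_S(s)\,|s\rangle\!\langle s|\otimes\rho_s .
\end{equation}
This is a single channel acting on $T_1$, with the composite system $C\equiv ST_2$ playing the role of a (partly classical, partly quantum) reference. Consequently $I(ST_2;WB')_{\sigma^\cN}=I(C:B)_{\cM(\rho)}$, where $B\equiv WB'$.

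Next, I bound this ratio. By the definition of $\eta_{\MI}$ in~\eqref{eq:eta_MI}, with the supremum taken over all $\rho_{CA}$ and finite-dimensional $C$, we have $I(C:B)\le \eta_{\MI}(\cM)\,I(C:T_1)$. If $I(C:T_1)=0$, the state is product and the output mutual information is also zero. By \cref{Lem:MI-p}, $\eta_{\MI}(\cM)=\eta^p_D(\cM)$. Alternatively, one can argue directly: write $I(C:B)=D(\cM(\rho_{CT_1})\|\rho_C\otimes\cM(\rho_{T_1}))$ and invoke the definition of $\eta^p_D$ with $\sigma_A=\rho_{T_1}$. Either route gives
\begin{equation}
I(ST_2;WB')_{\sigma^\cN}\le \eta^p_D(\cM)\, I(ST_2;T_1),
\end{equation}
so the stated hypothesis implies $\gamma'$-stability by \cref{def:stability_def}.

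For the second claim, I chain inequalities. \cref{Cor:Summary1} gives $\eta^p_D(\cM)\le\etaD_D(\cM)$, and \cref{Lem:a-Doeblin} gives $\etaD_D(\cM)\le 1-\alpha_+(\cM)$; the latter applies because relative entropy is jointly convex. Hence $\eta^p_D(\cM)I(ST_2;T_1)\le(1-\alpha_+(\cM))I(ST_2;T_1)\le\gamma'$, and the first part applies.

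The only delicate point is dimensional. The reference $ST_2$ is finite-dimensional, since $\cS$ is finite and the encodings are finite-dimensional, so it is admissible in the supremum defining $\eta_{\MI}$. Beyond that, one only needs the block structure of $\sigma^\cN$ to confirm that $\id_{ST_2}\otimes\cM$ reproduces it exactly. I do not expect any real obstacle here.
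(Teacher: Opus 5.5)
Your proof is correct and is essentially the paper's argument. The paper first splits $I(ST_2:WB')$ over the classical register $S$ using the direct-sum property, then applies the definition of $\eta^p_D(\cM)$ blockwise with reference $T_2$ and the averaged $T_1$ marginal as the fixed second argument; you absorb $S$ into the reference $C\equiv ST_2$ and apply the same bound in one step, which is equally valid and does not even use that $S$ is classical. The second claim follows exactly as you have it, via $\eta^p_D(\cM)\le\etaD_D(\cM)\le 1-\alpha_+(\cM)$.
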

\begin{proof}
Consider the following:
\begin{align}
    I(ST_2:WB') &= D(\sigma^\cN_{ST_2WB'}\| \sigma^\cN_{ST_2}\otimes \sigma^\cN_{WB'}) \\
    &= \sum_s p_S(s) D(\sigma^{\cN,s}_{T_2WB'}\| \sigma^{\cN,s}_{T_2}\otimes \sigma^\cN_{WB'}) \\
    &\leq \sum_s p_S(s) \eta^p_D(\cM) D(\sigma^{\cN,s}_{T_2T1}\| \sigma^{\cN,s}_{T_2}\otimes \sigma^\cN_{T1}) \\
    &= \eta^p_D(\cM)  \sum_s p_S(s)  D(\sigma^{\cN,s}_{T_2T1}\| \sigma^{\cN,s}_{T_2}\otimes \sigma^\cN_{T1}) \\
    &= \eta^p_D(\cM) I(ST_2:T_1),
\end{align}
where the first inequality follows from the application of~\eqref{SDPI-product} with the system choice being $R\equiv T_2$ and $A\equiv T_1$ along with $\cN_{A \to B} \coloneq \cM_{T_1 \to WB'}$.
With that, we arrive at the first statement. 

Next, by the use of 
$\eta^p_D(\cM)\leq \etaD_D(\cM) \leq  1-\alpha_+(\cM)$, which follow from \cref{Cor:Summary1} and \eqref{eq:CMI-contract-doeblin}, we conclude the second statement as well.
\end{proof}

\begin{remark}
\cref{prop:Stability_QML_noise} provides an approach to understanding how the encoding procedure and the design of the learning model impact the stability of the trained learning algorithm. The contraction coefficients cover the impact of the noisy learning model, while the mutual information terms capture the dependencies and correlations in the dataset and the encoding protocol.

If $\cN=\cP_d \circ \cU_d \circ \cdots \cU_1 \circ \cP_1 $, where each $\cP_i$ is a  quantum channel and each $\cU_i$ is a unitary channel,
then $\cN$ is $\prod_{i=1}^d (1-\alpha_+(\cP_i)) I(ST_2;T_1)$-stable, in such a way that the terms in front of the mutual information term $I(ST_2;T_1)$ decay more with increasing $d$.
\end{remark}

\begin{remark}[Connection to generalization error]
    We see that the stability defined in~\cref{def:stability_def} has a direct connection to the generalization error from~\cite[Theorem~1]{dasgupta2026privacyimpliesstabilityinformationtheoretic}. Generalization error quantifies how well a model provides inference for fresh input data. Thus, generalization error in the setup we consider in~\cref{prop:Stability_QML_noise} scales as $\sqrt{\gamma'}$ with 
    \begin{equation}
    \eta^p_D(\cM) I(ST_2:T_1) \leq \gamma',    
    \end{equation}
    as a consequence of~\cite[Theorem~1]{dasgupta2026privacyimpliesstabilityinformationtheoretic}.
\end{remark}

\subsection{Information decay of layered quantum circuits}

In training and using noisy quantum circuits for inference, it is important to understand how the output information could be utilized to infer information about the inputs or how outputs of different inputs can be distinguished.
In~\cite{george2025quantumdoeblincoefficientsinterpretations}, the existence of noise-induced barren-plateaus and the sample complexity of error mitigation  were discussed for composite noisy channels of the form:
\begin{equation}
    \cM_d= \id_R \otimes (\cN_d \circ \cU_d \circ \cdots \cN_1 \circ \cU_1),
\end{equation}
where the subsystem $R$ is noiseless.
This way when we input states $\rho_{RA},\sigma_{RA}$, we get $\Tr_A[\cM_d(\rho)]= \Tr_A[\cM_d(\sigma)]$ whenever $\rho_R=\sigma_R$. However, this may not always be the case with quantum circuits. For instance, there may be noisy circuits (channels) that are of the form:
\begin{equation}
    \cM_d=(\id_R \otimes (\cN_d \circ \cU_d \circ \cdots \cN_1 \circ \cU_1)) \circ \cE, 
\end{equation}
where $\cE$ represents a noisy encoding channel or a noisy pre-trained quantum circuit that involves decoherence, randomness, correlations, etc. In these cases we may have $ \rho_R^\cE \coloneq \Tr_A[\cE(\rho)] \neq \Tr_A[\cE(\sigma)] \eqcolon \sigma_R^\cE$, which makes the previous analysis in~\cite[Section~8]{george2025quantumdoeblincoefficientsinterpretations} not applicable. In such scenarios, the tools developed in this work will provide useful insights on the impact of the channel $\cE$.

Consider the following:
\begin{align}
    &T(\cM_d(\rho_{RA}), \cM_d(\sigma_{RA})) \nonumber \\ 
    &\leq \etaD_{\Tr}(\cN_d) \left( T(\cM_{d-1}(\rho_{RA}), \cM_{d-1}(\sigma_{RA})) -T(\rho_R^\cE,\sigma_R^\cE )\right) +T(\rho_R^\cE,\sigma_R^\cE ) \\
    &\leq \prod_{i=1}^{d} \etaD_{\Tr}(\cN_i) \left( T(\cE(\rho_{RA}), \cE(\sigma_{RA})) -T(\rho_R^\cE,\sigma_R^\cE )\right) +T(\rho_R^\cE,\sigma_R^\cE ) \\ 
    &=\left(\prod_{i=1}^{d} \etaD_{\Tr}(\cN_i) \right)  T(\cE(\rho_{RA}), \cE(\sigma_{RA})) +\left(1- \prod_{i=1}^{d} \etaD_{\Tr}(\cN_i) \right)T(\rho_R^\cE,\sigma_R^\cE ). 
\end{align}
If $T(\rho_R^\cE,\sigma_R^\cE )$ is also quite small, then similar conclusions as in~\cite[Section~8]{george2025quantumdoeblincoefficientsinterpretations} can be drawn. Under that setting, an exponential number of data samples are required for error mitigation, and noisy barren plateaus also exist in the considered noisy circuit setup (see Section~8.1 and Section~8.2~in~\cite{george2025quantumdoeblincoefficientsinterpretations}). More generally, this would give us some insights on how we can decide on choosing circuit depth, encoder based on noisy channels to minimize information decay.

\subsection{Information scrambling in quantum circuits with prior correlations} 

Quantum information scrambling refers to the delocalization of initially localized information across other subsystems or over the entire quantum circuit.
An important distinction is that scrambling does not necessarily destroy information. For an isolated system undergoing unitary evolution, the information remains globally recoverable, even though it may become inaccessible to observers restricted to small subsystems. This observation means that applying a contraction coefficient directly to the full unitary circuit does not provide any useful insight regarding scrambling. That is because, for every unitary channel $\cU$ and every divergence that satisfies unitary invaraince,
\begin{align}
\etaD_{\DD}(\cU)=\etaC_{\DD}(\cU)=1.
\end{align}
With that, we investigate scrambling through reduced channels describing the information accessible to restricted observers, which may be of relevance for distributed quantum circuits. Approaches along these lines have been studied for example in~\cite{ding2016conditional}. 

Let us consider the following setting: Consider a quantum state $\rho_{RAC}$. System $A$ is passed through the quantum circuit (channel) denoted by $\cE_{A \to B_1\cdots B_n}$. System $C$ is inaccessible at the output of the quantum circuit, while system $R$ corresponds to prior correlations within subsystems. We want to study the information accessible to a collection of $B_i$ about $C$ given access to $R$. With that in mind we define the following: 
\begin{align}
    \cN_{S} & \coloneq \Tr_{S^c} \circ \cE , \\
    \Gamma_m & \coloneq \sup_{S \in \cR_m} \etaD_{D}(\cN_S),
\end{align}
where $\cR_m \coloneq \{ S \subseteq \{1,\ldots,n\}: |S|=m   \}$ for $m \leq n$.
With that, we arrive at the following for all $S \in \cR_m$:
\begin{equation}
    I(C:\cup_{i\in S}  B_i|R) \leq \Gamma_m \ I(C:A|R).
\end{equation}
by application of the conditional contraction coefficient. 

To differentiate the information loss and information scrambling across subsystems we could use the assistance of the two quantities $\etaD_D (\cE)$ and $ \Gamma_m$. Note that 
\begin{equation}
    \etaD_{D}(\cN_S) \leq \etaD_D (\cE) 
\end{equation}
by the concatenation property in~\eqref{eq:concatenation_etaD} since $\cN_S= \Tr_{S^c} \circ \cE$ and $\etaD_D(\cdot) \leq 1$.
If $\Gamma_m$ is small and $\etaD_D (\cE) $ is closer to one, then information survives globally for some input states but is hidden from small regions corresponding to $\cR_m$, showcasing a regime where scrambling occurs. If both of them are small, then the circuit has destroyed the information rather than delocalizing it.
Furthermore, if  $\Gamma_m$ is small and $\cetaD_D (\cE) $ is closer to one, the information survives globally for all input states but is hidden from all small regions corresponding to $\cR_m$. However, it is not always the case that $\cetaD_D (\cE) $ is closer to one.

\section{Conclusion}

\label{Sec:Conclusions}

Our paper provided an in-depth investigation of contraction of quantum divergences in settings with reference systems. Such problems naturally appear in network and multi-user settings. While many of the quantities considered here reduce to the usual contraction coefficient in the classical setting, quantum references exhibit a rich structure that shows a clear departure from the classical setting.

We also reported progress in the study of SDPI constants. Interestingly, while relative-entropy SDPI constants, and more generally many $f$-divergence SDPI constants, tensorize classically, the commonly considered classical-quantum SDPI constants do not tensorize in general. In contrast, fully quantum mutual information and conditional mutual information SDPI constants recover the tensorization property. This suggests that tensorization in the quantum setting depends crucially on how quantum side information is incorporated, and that fully quantum formulations can possess favorable compositional properties absent from their classical-quantum counterparts (with related tensor-stable constructions appearing, e.g., in~\cite{gao2020fisher,gao2022complete}).

Finally, we presented several applications. Although our list is extensive, it is certainly not exhaustive, and we expect our framework to find use in many places where one cares about information decay in multi-user quantum settings. More broadly, these results point towards a compositional theory of information loss in quantum networks, in which contraction properties of individual noisy components can be used to control the propagation and lifetime of information in larger quantum systems. A central question going forward is therefore to understand precisely when such local-to-global principles hold in the presence of quantum side information, in particular also in the setting of SDPI constants.

To end, we briefly state some open problems:
\begin{itemize}
\item When considering the hockey-stick divergence, can the optimization in the contraction coefficient be restricted to orthogonal states, similar to~\cref{Prop:sup-ort-enough}? This would directly imply that the associated contraction coefficient is monotone in $\gamma$, which in turn would have applications similar to the results in~\cite{hirche2024quantumDivergences}.
\item Can we find a concrete example such that $\etaC_\DD(\cN) < \etaD_\DD(\cN)$ for some faithful divergence $\DD$? A particularly interesting case would be that of the relative entropy. Although it would be natural to conjecture that such an example exists, the unbounded reference systems make it difficult to confirm.
\item Although most basic questions about tensorization now appear to be resolved, it would still be interesting to characterize more precisely when quantities such as the relative entropy SDPI tensorize or fail to tensorize, and which structural properties of the divergence and the allowed side information determine this behavior.
\item We did not touch on non-linear strong data-processing inequalities with reference systems (e.g., as studied in~\cite{nuradha2025nonlinearSDPI}). This could be a natural future direction.
\end{itemize}

\section*{Acknowledgments}

CH received funding by the Deutsche Forschungsgemeinschaft (DFG, German
Research Foundation) – 550206990, the Federal Ministry of Research, Technology and Space (BMFTR), Germany, under the QC service center QUICS (grant no.~13N17418), and by Germany's Excellence Strategy – EXC 2123/2 QuantumFrontiers – 390837967. This project is supported by the Ministry of Education, Singapore, through grant T2EP20124-0005. This project is supported by the National Research Foundation, Singapore under the NRF Postdoctoral award. TN acknowledges support from the
Department of Mathematics and the IQUIST Postdoctoral Fellowship from
the Illinois Quantum Information Science and Technology Center at
the University of Illinois Urbana-Champaign.
MMW acknowledges support from the National Science Foundation under grant
no.~2329662.

\section*{Statement on AI usage}

We acknowledge the use of ChatGPT (GPT-5.6 Sol) in the conception of~\cref{lem:Choi-op-norm-to-Doeblin} and~\cref{Prop:no-tensorization-stabilized-complete-SDPI}, as well as for polishing the figures, correcting some typos, and detecting an error in an earlier draft. 

\bibliographystyle{quantum}
\bibliography{lib}

\appendix

\section{Contraction coefficients with classical reference systems}

\label{app:classical-ref}

In this appendix, we begin by defining various contraction coefficients that involve classical reference systems only, and after that, we prove that the resulting contraction coefficients are actually equal to the standard one. As such, there is no need to consider complete or conditional contraction coefficients when the reference system is classical.

\begin{definition}[Generalized contraction coefficient with classical reference]
    For a quantum channel $\cN$ and a generalized divergence $\DD$, we define the complete contraction coefficient with classical reference
    \begin{align}
        \etaC_{\DD,\operatorname{CQ}}(\cN) \coloneqq \sup_{\substack{\rho_{RA}\neq\sigma_{RA},\\ \rho_R=\sigma_R, 
        |R|,\\
        \rho_{RA},\sigma_{RA} \in \operatorname{CQ}}} \frac{\DD((\id\otimes\cN)(\rho_{RA})\|(\id\otimes\cN)(\sigma_{RA}))}{\DD(\rho_{RA}\|\sigma_{RA})}, \label{eq:Comp_CC_CQ}
    \end{align}
    where CQ denotes the set of classical--quantum states.
    We also define the conditional contraction coefficient, 
    \begin{align}
        \etaD_{\DD,\operatorname{CQ}}(\cN) \coloneqq \sup_{\substack{|R|,\,\rho_{RA}\neq\sigma_{RA},\\
        \rho_{RA},\sigma_{RA} \in \operatorname{CQ}}} \frac{\DD((\id\otimes\cN)(\rho_{RA})\|(\id\otimes\cN)(\sigma_{RA})) - \DD(\rho_R\|\sigma_R)}{\DD(\rho_{RA}\|\sigma_{RA})- \DD(\rho_R\|\sigma_R)}.\label{Eq:conditional-def_CQ}
    \end{align}
\end{definition}

\begin{lemma} \label{lem:extended_Direct_sum}
Let $\rho_{RA}$ and $\sigma_{RA}$ be classical-quantum states of the following form:
\begin{align}
    \rho_{RA} & = \sum_r p(r) |r\rangle\!\langle r|_R \otimes \rho^r_A, \\
    \sigma_{RA} & = \sum_r q(r) |r\rangle\!\langle r|_R \otimes \sigma^r_A, 
\end{align}
where $p(r)$ and $q(r)$ are probability distributions and $\left(\rho^r_A\right)_r$ and $\left(\sigma^r_A\right)_r$ are tuples of quantum states.
    Let $\DD$ be a divergence with the extended direct-sum property, 
    \begin{align}
        \DD(\rho_{RA}\|\sigma_{RA}) = \sum_r g(p(r),q(r)) \DD(\rho_A^r \|\sigma_A^r) + \DD(p\|q),
    \end{align}
    where $g(\cdot,\cdot)$ is a positive function. Then
   the following equalities hold:
    \begin{align}
        \eta_{\DD,\operatorname{CQ}}(\cN) = \eta^c_{\DD,\operatorname{CQ}}(\cN) = \etaD_{\DD,\operatorname{CQ}}(\cN) . 
    \end{align}
\end{lemma}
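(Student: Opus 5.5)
We establish the chain
\begin{equation}
\eta_{\DD}(\cN)\leq \eta^c_{\DD,\operatorname{CQ}}(\cN)\leq \etaD_{\DD,\operatorname{CQ}}(\cN)\leq \eta_{\DD}(\cN).
\end{equation}
Here $\eta_{\DD,\operatorname{CQ}}(\cN)$ is read as the standard coefficient $\eta_\DD(\cN)$; with a one-dimensional classical register the two coincide.

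The first two inequalities are the classical-register versions of~\eqref{eq:simple-ineqs-relations}. For the first, take $|R|=1$ in~\eqref{eq:Comp_CC_CQ}. For the second, note that when $\rho_R=\sigma_R$ we have $\DD(\rho_R\|\sigma_R)=0$. The extended direct-sum property with $p=q$ also forces $\DD(p\|p)=0$ (compare it with the trivial-reference case). So every feasible ratio in~\eqref{eq:Comp_CC_CQ} is also a feasible ratio in~\eqref{Eq:conditional-def_CQ}.

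The main step is the last inequality, $\etaD_{\DD,\operatorname{CQ}}(\cN)\leq \eta_\DD(\cN)$. Fix c-q states $\rho_{RA},\sigma_{RA}$ of the stated form. Then $\rho_R=p$ and $\sigma_R=q$ are diagonal, so $\DD(\rho_R\|\sigma_R)=\DD(p\|q)$, a classical divergence. The map $\id_R\otimes\cN$ preserves the c-q structure, sending the blocks $\rho^r_A,\sigma^r_A$ to $\cN(\rho^r_A),\cN(\sigma^r_A)$ with the same weights $p(r),q(r)$. Applying the extended direct-sum property to both the input and output pairs and subtracting $\DD(p\|q)$ gives
\begin{align}
\frac{\DD(\cN(\rho_{RA})\|\cN(\sigma_{RA}))-\DD(\rho_R\|\sigma_R)}{\DD(\rho_{RA}\|\sigma_{RA})-\DD(\rho_R\|\sigma_R)}
=\frac{\sum_r g(p(r),q(r))\,\DD(\cN(\rho^r_A)\|\cN(\sigma^r_A))}{\sum_r g(p(r),q(r))\,\DD(\rho^r_A\|\sigma^r_A)}.
\end{align}
Term by term, $\DD(\cN(\rho^r_A)\|\cN(\sigma^r_A))\leq \eta_\DD(\cN)\,\DD(\rho^r_A\|\sigma^r_A)$. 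Because $g\geq 0$, summing these bounds shows the ratio is at most $\eta_\DD(\cN)$. Taking the supremum over $|R|$ and over c-q pairs closes the chain.

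The step most likely to cause trouble is making this term-by-term bound rigorous. Terms with $\DD(\rho^r_A\|\sigma^r_A)=0$ or with infinite value need care. Under the paper's convention of strictly positive finite denominators, every term in the denominator sum is finite. If some $\DD(\rho^r_A\|\sigma^r_A)=0$, data processing (and, if needed, non-negativity) gives $0\leq \DD(\cN(\rho^r_A)\|\cN(\sigma^r_A))\leq 0$, so that term contributes nothing to the numerator. We would also state explicitly that $g$ positive means $g\geq 0$, and that the terms with $g=0$ drop out of both sums.
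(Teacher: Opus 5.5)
Your proposal is correct and takes essentially the same route as the paper. Both apply the extended direct-sum property to the input and output pairs, cancel $\DD(p\|q)=\DD(\rho_R\|\sigma_R)$, bound each block by $\eta_\DD(\cN)$ to obtain $\etaD_{\DD,\operatorname{CQ}}(\cN)\leq\eta_\DD(\cN)$, and close the chain with the definitional inequalities. Your extra care about zero or infinite block terms, weights with $g=0$, and $\DD(p\|p)=0$ just makes explicit what the paper leaves implicit.
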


\begin{proof}
    The result follows from applying the extended direct-sum property and the fact that $\DD(p\|q) = \DD(\rho_R\|\sigma_R)$: 
    \begin{align}
        &\DD((\id\otimes\cN)(\rho_{RA})\|(\id\otimes\cN)(\sigma_{RA})) - \DD(\rho_R\|\sigma_R) \notag \\
        &= \sum_r g(p(r),q(r)) \DD(\cN(\rho^r_{A})\|\cN(\sigma^r_{A})) + \DD(p\|q) - \DD(\rho_R\|\sigma_R) \\
        &= \sum_r g(p(r),q(r)) \DD(\cN(\rho^r_{A})\|\cN(\sigma^r_{A})) \\
        &\leq \sum_r g(p(r),q(r)) \eta_\DD(\cN) \DD(\rho^r_{A}\|\sigma^r_{A}) \\
        &= \eta_\DD(\cN) \left[ \DD(\rho_{RA}\|\sigma_{RA}) - \DD(\rho_R\|\sigma_R) \right].
    \end{align}
     Thus,
    \begin{align}
    \frac{\DD((\id\otimes\cN)(\rho_{RA})\|(\id\otimes\cN)(\sigma_{RA})) - \DD(\rho_R\|\sigma_R)}{\DD(\rho_{RA}\|\sigma_{RA}) - \DD(\rho_R\|\sigma_R)} \leq \eta_\DD(\cN).
    \end{align}
    Since this inequality holds for all classical-quantum states $\rho_{RA} $ and $\sigma_{RA}$, we conclude that $\etaD_{\DD,\operatorname{CQ}}(\cN)\leq \eta_\DD(\cN)$. We conclude the proof after combining this inequality with $\eta_\DD(\cN) \leq \eta^c_{\DD,\operatorname{CQ}}(\cN) \leq \etaD_{\DD,\operatorname{CQ}}(\cN)$, which are direct consequences of definitions.
\end{proof}

Note that the extended direct sum property required in~\cref{lem:extended_Direct_sum} is satisfied by Umegaki and Belavkin–Staszewski quantum relative, with $g(p(r),q(r))=p(r)$, and many quantum $\alpha$-Hellinger divergences, with $g(p(r),q(r))=p(r)^\alpha q(r)^{1-\alpha}$, see for example~\cite{khatri2024principlesquantumcommunicationtheory}. 
However, it does not include other common divergences, such as the trace distance or the hockey stick divergences.

\section{Limit property of relative entropy and generalized divergences}

\label{app:prop-rel-ent}

\begin{lemma}
\label{lem:prop-rel-ent}
Let $\rho$ and $\sigma$ be full-rank states, define
\begin{equation}
\rho(\lambda)\coloneqq\frac{\sigma-\lambda\rho}{1-\lambda},
\label{eq:rho-lam-state-def}
\end{equation}
where $\lambda\in\left[0,1\right]$ is chosen such that $\rho(\lambda)$
is a full-rank state. Then
\begin{equation}
\left.\frac{\partial}{\partial\lambda}D(\rho(\lambda)\|\sigma)\right|_{\lambda=0}=0.
\end{equation}
\end{lemma}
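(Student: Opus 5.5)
The plan is to write $D(\rho(\lambda)\|\sigma)=\Tr[\rho(\lambda)\log\rho(\lambda)]-\Tr[\rho(\lambda)\log\sigma]$ and differentiate each term at $\lambda=0$, where $\rho(0)=\sigma$. The path $\lambda\mapsto\rho(\lambda)$ is smooth near $\lambda=0$, and its derivative is
\begin{equation}
\rho'(\lambda)=\frac{-\rho(1-\lambda)+(\sigma-\lambda\rho)}{(1-\lambda)^2}=\frac{\sigma-\rho}{(1-\lambda)^2},
\end{equation}
so $\rho'(0)=\sigma-\rho$. This derivative is traceless, because both $\rho$ and $\sigma$ are states.

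The key ingredient is the standard trace-functional derivative formula. For a smooth path $X(\lambda)$ of positive definite operators,
\begin{equation}
\frac{d}{d\lambda}\Tr[X(\lambda)\log X(\lambda)]=\Tr[X'(\lambda)\log X(\lambda)]+\Tr[X'(\lambda)].
\end{equation}
I would justify this via the scalar function $f(x)=x\log x$ and the identity $\frac{d}{d\lambda}\Tr f(X(\lambda))=\Tr[f'(X(\lambda))X'(\lambda)]$. That identity holds because the trace kills the noncommutative part of the Fréchet derivative, which one sees either by cyclicity applied to the integral representation of $\log$, or in the eigenbasis of $X$ using divided differences.

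The second term $\Tr[\rho(\lambda)\log\sigma]$ is linear in $\rho(\lambda)$, so its derivative is $\Tr[\rho'(\lambda)\log\sigma]$. Combining the two pieces at $\lambda=0$ with $\rho(0)=\sigma$ gives
\begin{equation}
\left.\frac{\partial}{\partial\lambda}D(\rho(\lambda)\|\sigma)\right|_{\lambda=0}=\Tr[(\sigma-\rho)\log\sigma]+\Tr[\sigma-\rho]-\Tr[(\sigma-\rho)\log\sigma]=\Tr[\sigma-\rho]=0.
\end{equation}

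The only step needing care is the trace-derivative formula; the full-rank assumption keeps $\log$ smooth in a neighborhood of the path, so it is routine. An alternative route that avoids it is to note that $D(\rho(\lambda)\|\sigma)\ge0$, that it vanishes at $\lambda=0$, and that it is $O(\lambda^2)$ by the second-order Taylor bound. The $O(\lambda^2)$ bound would follow from the inequality $D(\omega\|\sigma)\le\Tr[(\omega-\sigma)^2\sigma^{-1}]$ ($\chi^2$ upper bound), applied to $\omega=\rho(\lambda)$ with $\omega-\sigma=\lambda(\sigma-\rho)/(1-\lambda)$. That route also gives a vanishing derivative.
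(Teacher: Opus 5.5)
Your proposal is correct and is essentially the paper's argument: both differentiate $\Tr[\rho(\lambda)\ln\rho(\lambda)]$ via $\frac{d}{d\lambda}\Tr[f(X(\lambda))]=\Tr[f'(X(\lambda))X'(\lambda)]$ with $f(x)=x\ln x$ and $\rho'(\lambda)=(\sigma-\rho)/(1-\lambda)^2$. The only cosmetic difference is that the paper simplifies to $\frac{1}{(1-\lambda)^2}\Tr[(\ln\rho(\lambda)-\ln\sigma)(\sigma-\rho)]$ for general $\lambda$ before letting $\lambda\to0$, while you substitute $\rho(0)=\sigma$ directly. Your backup argument is also valid and sidesteps the matrix-derivative formula: it sandwiches $0\le D(\rho(\lambda)\|\sigma)\le\Tr[(\rho(\lambda)-\sigma)^2\sigma^{-1}]=O(\lambda^2)$, measured in nats and obtained from the Petz--R\'enyi bound $D(\omega\|\sigma)\le\ln\Tr[\omega^2\sigma^{-1}]$.
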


\begin{proof}
Using the facts that
\begin{align}
\frac{\partial}{\partial x}\Tr[f(A(x))] & =\Tr\!\left[f'(A(x))\frac{\partial}{\partial x}A(x)\right],\\
\frac{\partial}{\partial x}\left(x\ln x\right) & =\ln x+1,\\
\frac{\partial}{\partial\lambda}\rho(\lambda) & =\frac{\sigma-\rho}{\left(1-\lambda\right)^{2}},
\end{align}
consider that
\begin{align}
\frac{\partial}{\partial\lambda}D(\rho(\lambda)\|\sigma) & =\frac{\partial}{\partial\lambda}\Tr[\rho(\lambda)\ln\rho(\lambda)]-\frac{\partial}{\partial\lambda}\Tr[\rho(\lambda)\ln\sigma]\\
 & =\Tr\!\left[\left(\ln\rho(\lambda)+I\right)\frac{\partial}{\partial\lambda}\rho(\lambda)\right]-\Tr\!\left[\left(\frac{\partial}{\partial\lambda}\rho(\lambda)\right)\ln\sigma\right]\\
 & =\Tr\!\left[\left(\ln\rho(\lambda)+I\right)\left(\frac{\sigma-\rho}{\left(1-\lambda\right)^{2}}\right)\right]-\Tr\!\left[\left(\frac{\sigma-\rho}{\left(1-\lambda\right)^{2}}\right)\ln\sigma\right]\\
 & =\frac{1}{\left(1-\lambda\right)^{2}}\Tr\!\left[\left(\ln\rho(\lambda)-\ln\sigma\right)\left(\sigma-\rho\right)\right].
\end{align}
After observing that $\lim_{\lambda\to0}\rho(\lambda)=\sigma$, we
conclude the proof.
\end{proof}

In the main text, two different manifestations of the above condition appear. Here we clarify that one does imply the other, even for general divergences. 

\begin{lemma}
\label{Lem:partial-equiv}
Let $\rho$ and $\sigma$ be full-rank states and suppose that
\begin{align}
\left.
\frac{\partial}{\partial\lambda}
\DD(\lambda\tau+(1-\lambda)\sigma\|\sigma)
\right|_{\lambda=0^+}
=0
\end{align}
for every state $\tau$.
Then, for
\begin{align}
\rho(\lambda)
\coloneqq
\frac{\sigma-\lambda\rho}{1-\lambda},
\end{align}
we have
\begin{align}
\left.
\frac{\partial}{\partial\lambda}
\DD(\rho(\lambda)\|\sigma)
\right|_{\lambda=0^+}
=0.
\end{align}
\end{lemma}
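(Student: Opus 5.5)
The plan is to reparametrize $\rho(\lambda)$ as a mixture $\mu\tau+(1-\mu)\sigma$ moving along a straight line out of $\sigma$, and then apply the hypothesis with a suitable $\tau$ together with the chain rule. The key observation is that
\begin{align}
\rho(\lambda)=\frac{\sigma-\lambda\rho}{1-\lambda}=\sigma+\frac{\lambda}{1-\lambda}(\sigma-\rho).
\end{align}
This lies on the line through $\sigma$ in the direction $\sigma-\rho$, which is the opposite direction from $\rho$. So we need a state $\tau$ with $\tau-\sigma$ proportional to $\sigma-\rho$.

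Since $\sigma$ is full rank, choose $t>0$ small enough that $\tau\coloneqq(1+t)\sigma-t\rho\geq 0$. This $\tau$ has unit trace, so it is a state, and $\tau-\sigma=t(\sigma-\rho)$. Then
\begin{align}
\rho(\lambda)=\mu(\lambda)\,\tau+(1-\mu(\lambda))\,\sigma,\qquad \mu(\lambda)\coloneqq\frac{\lambda}{t(1-\lambda)}.
\end{align}
The function $\mu$ is smooth near $0$, with $\mu(0)=0$, $\mu'(0)=1/t>0$, and $\mu(\lambda)\geq 0$ for $\lambda\geq 0$. Moreover $\mu(\lambda)\in[0,1]$ for small $\lambda$, so $\rho(\lambda)$ is a legitimate state on the segment between $\sigma$ and $\tau$.

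Define $g(\mu)\coloneqq\DD(\mu\tau+(1-\mu)\sigma\|\sigma)$, so that $\DD(\rho(\lambda)\|\sigma)=g(\mu(\lambda))$. By hypothesis, the right derivative $g'(0^+)$ exists and equals $0$, and $g(0)=\DD(\sigma\|\sigma)$. One-sided chain rule: write
\begin{align}
\frac{g(\mu(\lambda))-g(0)}{\lambda}=\frac{g(\mu(\lambda))-g(0)}{\mu(\lambda)}\cdot\frac{\mu(\lambda)}{\lambda}.
\end{align}
As $\lambda\to0^+$ we have $\mu(\lambda)\to0^+$. Hence the first factor tends to $g'(0^+)=0$ and the second tends to $1/t$, so the product tends to $0$. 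This gives $\partial_\lambda\DD(\rho(\lambda)\|\sigma)|_{\lambda=0^+}=0$.

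The only point needing care is this one-sided chain rule, and in particular that $\mu(\lambda)>0$ for $\lambda>0$, so that only the right derivative of $g$ is used. No differentiability of $\DD$ beyond the hypothesis is required. Full rank of $\rho$ is not even needed; only full rank of $\sigma$ matters, to guarantee that $\tau\geq0$ for some $t>0$.
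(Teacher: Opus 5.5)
Your proof is correct and matches the paper's argument: the paper also sets $\tau=(1+\varepsilon)\sigma-\varepsilon\rho$, rewrites $\rho(\lambda)=(1-\mu(\lambda))\sigma+\mu(\lambda)\tau$ with $\mu(\lambda)=\lambda/(\varepsilon(1-\lambda))$, and concludes by the chain rule using $\mu'(0)=1/\varepsilon$. Your explicit one-sided chain-rule step and your observation that only full rank of $\sigma$ is needed are sound additions and do not change the argument.
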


\begin{proof}
Since $\sigma$ is full rank, there exists $\varepsilon>0$ sufficiently
small such that
\begin{align}
\tau\coloneqq(1+\varepsilon)\sigma-\varepsilon\rho
\end{align}
is a quantum state. Then
\begin{align}
\rho(\lambda)
&=
\sigma+\frac{\lambda}{1-\lambda}(\sigma-\rho)\\
&=
\sigma+
\frac{\lambda}{\varepsilon(1-\lambda)}(\tau-\sigma)\\
&=
(1-\mu(\lambda))\sigma+\mu(\lambda)\tau,
\end{align}
where
\begin{align}
\mu(\lambda)
\coloneqq
\frac{\lambda}{\varepsilon(1-\lambda)}.
\end{align}
For all sufficiently small $\lambda\geq0$, also $\mu(\lambda)\geq0$,
with $\mu(0)=0$ and $\mu'(0)=1/\varepsilon$. Hence, by the chain rule,
\begin{align}
\left.
\frac{\partial}{\partial\lambda}
\DD(\rho(\lambda)\|\sigma)
\right|_{\lambda=0^+}
=
\frac1{\varepsilon}
\left.
\frac{\partial}{\partial\mu}
\DD((1-\mu)\sigma+\mu\tau\|\sigma)
\right|_{\mu=0^+}
=0,
\end{align}
where the final equality follows from the assumed vanishing first
derivative property.
\end{proof}

\section{Failure of tensorization for the stabilized complete SDPI}
\label{Sec:no-tensor-for-stabC}
In this section, we prove that the stabilized complete SDPI constant does not tensorize. 
\begin{proposition}
\label{Prop:no-tensorization-stabilized-complete-SDPI}
There exist a quantum channel $\cN\colon A\to B$ and a state
$\sigma_{RA}$ such that
\begin{align}
    \eta_D^{c,\mathrm{stab}}(\cN,\sigma_{RA})
    =
    \frac12
    <
    \eta_D^{c,\mathrm{stab}}
    \!\left(
        \cN^{\otimes2},
        \sigma_{RA}^{\otimes2}
    \right).
\end{align}
Consequently, the stabilized complete relative-entropy SDPI constant
does not tensorize in general.
\end{proposition}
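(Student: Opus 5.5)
The plan is to work entirely on the mutual-information side. By \cref{Prop:MIR-stabilized-SDPI}, $\eta_D^{c,\mathrm{stab}}(\cN,\sigma_{RA})=\eta_{\operatorname{MIR}}(\cN,\sigma_{RA})$ and $\eta_D^{c,\mathrm{stab}}(\cN^{\otimes 2},\sigma_{RA}^{\otimes 2})=\eta_{\operatorname{MIR}}(\cN^{\otimes 2},\sigma_{RA}^{\otimes 2})$. So it suffices to find $(\cN,\sigma_{RA})$ for which two statements hold:
\begin{itemize}
\item every extension $\rho_{RCA}$ with $\rho_{RA}=\sigma_{RA}$ and $\rho_{CR}=\rho_C\otimes\sigma_R$ satisfies $I(C:BR)\le \tfrac12 I(C:AR)$, with equality approached;
\item some two-copy extension beats the ratio $\tfrac12$.
\end{itemize}
The reference system is essential, not just a way to fix dimensions. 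If $R$ is trivial, then $\eta_{\operatorname{MIR}}(\cN,\sigma_A)=\eta_{\MI}(\cN,\sigma_A)$, which tensorizes by \cref{Lem:MI-product-SDPI} and the theorem on $\eta^p_D$. A classical $R$ similarly splits into branches that each behave like $\eta_{\MI}$. The example must therefore use a quantum, entangled $\sigma_{RA}$. In that case the constraint $\rho_{CR}=\rho_C\otimes\sigma_R$ restricts which correlations $C$ may carry differently for one copy and for two copies.

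For the concrete choice I would take qubits and
\begin{equation*}
\sigma_{RA}=\tfrac12\left(\Phi^+ + \Phi^-\right).
\end{equation*}
A convenient feature is that any mixture of Bell states has $R$-marginal $\pi_R$. Hence every decomposition of $\sigma_{RA}$ into Bell-diagonal branches automatically satisfies the product constraint on $CR$.

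For one copy, the auxiliary $C$ can at best learn the single Bell label ($\Phi^+$ versus $\Phi^-$), which is the $Z$-phase bit. For two copies, $C$ can additionally be correlated with the parity of the two phase bits, via the branches $\tfrac12(\Phi^+\Phi^+ + \Phi^-\Phi^-)$ and $\tfrac12(\Phi^+\Phi^- + \Phi^-\Phi^+)$, while still keeping $\rho_{CR_1R_2}=\rho_C\otimes\pi\otimes\pi$.

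I would then pick $\cN$ to be a noisy channel, say a phase-flip or depolarizing channel composed with a suitable unitary. The requirement is that $\cN$ preserves the single phase bit in the presence of $R$ with ratio exactly $\tfrac12$, while $\cN^{\otimes 2}$ preserves the parity of two phase bits, jointly with $R_1R_2$, strictly better than $\tfrac12$. The parameters would be tuned so that the single-copy optimum equals $\tfrac12$.

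The steps, in order, are as follows.
\begin{enumerate}
\item Compute a single-copy upper bound of $\tfrac12$. First I would show that, up to data processing on $C$, the feasible extensions are essentially classical mixtures over decompositions of $\sigma_{RA}$ with $R$-marginal $\pi_R$. Then I would bound the resulting cqMIR-type ratio. Here \cref{Cor:MIR-equals-complete-SDPI} allows restricting to binary classical $U$ in the non-stabilized part, and the stabilization over $S,\tau_S$ must be handled separately.
\item Exhibit a family of feasible single-copy extensions achieving $\tfrac12$, showing the first value is attained.
\item Construct the explicit two-copy extension based on the parity branches and evaluate $I(C:B_1B_2R_1R_2)/I(C:A_1A_2R_1R_2)$, either in closed form or numerically, to see that it exceeds $\tfrac12$.
\end{enumerate}

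The main obstacle is step 1. The single-copy upper bound is a supremum over an unbounded quantum auxiliary system, and the stabilization over $S$ makes matters worse. I would first try to reduce it via symmetry: twirl by the Pauli group on $RA$, which preserves $\sigma_{RA}$, is covariant for a Pauli-covariant $\cN$, and does not increase the ratio's relevant quantities. This should collapse the optimization to Bell-diagonal branches. The resulting finite-dimensional classical-type problem can then be solved directly, for instance by concavity arguments or a direct derivative computation in the style of \cref{Thm:MIR-equivalence}.

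If the Bell-diagonal choice fails to give a gap, I would fall back to a $\sigma_{RA}$ with larger $A$, in which $R$ is entangled with one subsystem and a second subsystem carries the phase information. I would then search numerically for $\cN$ before proving the bound analytically.
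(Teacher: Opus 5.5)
\textbf{There is a genuine gap: the proposal is a search plan, not a proof.} No channel $\cN$ is fixed. Neither the exact one-copy value $\frac12$ nor the two-copy violation is established, and the one-copy value includes the supremum over the stabilizing pair $S,\tau_S$, which you correctly flag as the crux. Beyond that, the pieces you do specify do not work:
\begin{itemize}
\item \emph{The parity mechanism points the wrong way.} For $\sigma_{RA}=\frac12(\Phi^++\Phi^-)$ the two parity branches differ by $\frac12(\Phi^+-\Phi^-)^{\otimes 2}$. For the noise you suggest, $\cN(\Phi^+-\Phi^-)=\lambda(\Phi^+-\Phi^-)$, with $\lambda=1-2p$ for the phase flip and $\lambda=1-p$ for depolarizing. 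So the parity signal is damped by $\lambda^2$ rather than $\lambda$. For the phase flip, the parity extension gives the ratio $1-h(\frac{1+\lambda^2}{2})$, strictly below the value $1-h(\frac{1+\lambda}{2})$ of the single phase bit. Correlations that are admissible on both copies are more fragile, not less, so they cannot produce a two-copy gain.
\item \emph{The symmetry reduction fails as stated.} Note that $\frac12(\Phi^++\Phi^-)=\frac12(|00\rangle\!\langle00|+|11\rangle\!\langle11|)$, which is not invariant under the Pauli group on $RA$; for instance, $X\otimes I$ maps it to $\frac12(|01\rangle\!\langle01|+|10\rangle\!\langle10|)$. Moreover, twirling lowers numerator and denominator simultaneously, which gives no control over their ratio. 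So it does not collapse the stabilized optimization to Bell-diagonal branches.
\item \emph{Your entanglement premise is false.} Your own $\sigma_{RA}$ is separable and even classical on $R$. Bell branches still give $I(C:AR)>0$ for it, so the heuristic that a classical $R$ splits into $\eta_{\MI}$ branches is also wrong: here those branches are pure and would predict $0$.
\end{itemize}

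\textbf{What is missing} is a construction in which the one-copy ratio is \emph{rigid}, together with a two-copy perturbation of \emph{mixed} type. The paper's construction is as follows.
\begin{itemize}
\item \emph{Setup.} $R$ is a qubit and $A$ is classical with four labels. Take $\sigma_{RA}=\frac14\sum_x\omega_x\otimes|x\rangle\!\langle x|_A$ with the pure, non-commuting states $\omega_x\in\{|+\rangle\!\langle+|,|-\rangle\!\langle-|,|0\rangle\!\langle0|,|1\rangle\!\langle1|\}$. Then $\omega_0+\omega_1=\omega_2+\omega_3=I_R$, and $\sigma_{RA}$ is again separable. The channel $\cN$ only records whether $x=2$.
\item \emph{One copy.} Dephasing $A$ is harmless, since $\cN$ only sees the diagonal. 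After that, purity of the $\omega_x$ and finiteness of the relative entropy force every admissible first state, for every $S$ and $\tau_S$, to be block-product $\sum_x p_x\rho_S^x\otimes\xi_x$. The equal-marginal condition then pins it to a rigid normal form on which the output relative entropy is exactly half the input. This settles the stabilized one-copy value without any optimization.
\item \emph{Two copies.} The relation $\omega_0+\omega_1-\omega_2-\omega_3=0$ permits the perturbation $D\otimes\xi_2+\xi_2\otimes D$ with $D=\xi_0+\xi_1-\xi_2-\xi_3$. This is an admissible direction on one copy tensored with a non-admissible one ($\xi_2$) on the other. Each term alone still gives ratio $\frac12$ at second order. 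Their overlap on the label pair $(2,2)$, however, is transmitted perfectly by $\cN^{\otimes 2}$, which lifts the ratio to $\frac35$.
\end{itemize}
In your Bell-state language, the genuinely new two-copy freedom is a coherence on one copy correlated with a population ($Z$-type) direction on the other. A parity of two coherences is not that freedom.
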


\begin{proof}
Let $R$ be a qubit and let $A$ have orthonormal basis
$\{\ket{x}\}_{x=0}^3$. Define
\begin{align}
    \omega_0&\coloneqq\ket{+}\!\bra{+},
    &
    \omega_1&\coloneqq\ket{-}\!\bra{-},
    &
    \omega_2&\coloneqq\ket{0}\!\bra{0},
    &
    \omega_3&\coloneqq\ket{1}\!\bra{1},
\end{align}
where $\ket{\pm}=(\ket0\pm\ket1)/\sqrt{2}$. The crucial relation is
\begin{align}
    \omega_0+\omega_1
    =
    \omega_2+\omega_3
    =
    I_R.
    \label{eq:four-state-relation-counterexample}
\end{align}
Set
\begin{align}
    \xi_x
    \coloneqq
    \omega_x\otimes\ket{x}\!\bra{x}_A,
    \qquad
    \sigma_{RA}
    \coloneqq
    \frac14\sum_{x=0}^3\xi_x,
    \label{eq:fixed-state-counterexample}
\end{align}
so that $\sigma_R=I_R/2$. Let $\cN\colon A\to B$ record whether the input is equal to $2$:
\begin{align}
    \cN(\rho)
    \coloneqq
    \Tr[\ket{2}\!\bra{2}\rho] \ \ket{1}\!\bra{1}_B
    +
    \Tr[(I-\ket{2}\!\bra{2})\rho] \ \ket{0}\!\bra{0}_B.
    \label{eq:channel-counterexample-stabilized-complete}
\end{align}
We use the following standard identity for states that are block
diagonal in the same classical basis:
\begin{align}
    &D\!\left(
        \sum_i p_i\rho_i\otimes\ket{i}\!\bra{i}
        \middle\|
        \sum_i q_i\sigma_i\otimes\ket{i}\!\bra{i}
    \right)=
    D(p\|q)+\sum_i p_iD(\rho_i\|\sigma_i).
    \label{eq:relative-entropy-direct-sum-used}
\end{align}

\paragraph{One copy.}
Fix an arbitrary stabilizing system $S$ and state $\tau_S$. Since
$\cN$ depends only on the diagonal entries of $A$, dephasing $A$
leaves the numerator of the complete-SDPI ratio unchanged and cannot
increase its denominator. Hence it suffices to consider a dephased
state.

The $x$-th block of the fixed second state
$\tau_S\otimes\sigma_{RA}$ is
$\frac14\tau_S\otimes\omega_x$. Finiteness of the relative entropy
therefore requires the corresponding $SR$-block of the first state to
be supported on
\begin{align}
    \operatorname{supp}(\tau_S)
    \otimes
    \operatorname{supp}(\omega_x).
\end{align}
Since $\omega_x$ is pure, its support on $R$ is one-dimensional.
Thus the $SR$-block is necessarily a product state, and we may write
\begin{align}
    \rho_{SRA}
    =
    \sum_{x=0}^3
    p_x\rho_S^x\otimes\xi_x,
    \label{eq:block-form-counterexample}
\end{align}
where $(p_x)_x$ is a probability distribution and the $\rho_S^x$ are
normalized states.
The equal-reference-marginal condition is
\begin{align}
    \sum_{x=0}^3
    p_x\rho_S^x\otimes\omega_x
    =
    \tau_S\otimes\frac{I_R}{2}.
    \label{eq:equal-reference-marginal-counterexample}
\end{align}
Writing the four states in the Pauli basis,
\begin{align}
    \omega_0&=\frac12(I_R+X_R),
    &
    \omega_1&=\frac12(I_R-X_R),\\
    \omega_2&=\frac12(I_R+Z_R),
    &
    \omega_3&=\frac12(I_R-Z_R),
\end{align}
the left-hand side of
\eqref{eq:equal-reference-marginal-counterexample} becomes
\begin{align}
    &\frac12
    \left(\sum_xp_x\rho_S^x\right)\otimes I_R
    +
    \frac12
    (p_0\rho_S^0-p_1\rho_S^1)\otimes X_R
    \nonumber\\
    &\qquad+
    \frac12
    (p_2\rho_S^2-p_3\rho_S^3)\otimes Z_R.
\end{align}
Here $X_R$ and $Z_R$ are the usual Pauli matrices acting on $R$.
Since $I_R$, $X_R$, and $Z_R$ are linearly independent, their
operator-valued coefficients on $S$ must agree separately with those
on the right-hand side of
\eqref{eq:equal-reference-marginal-counterexample}. Hence
\begin{align}
    p_0\rho_S^0&=p_1\rho_S^1,
    &
    p_2\rho_S^2&=p_3\rho_S^3,
    &
    \sum_xp_x\rho_S^x&=\tau_S.
\end{align}
Taking traces, we can therefore write, for some $\alpha\in[0,1]$,
\begin{align}
    \rho_{SRA}
    &=
    \frac{\alpha}{2}\rho_S^+
        \otimes(\xi_0+\xi_1)
    +
    \frac{1-\alpha}{2}\rho_S^-
        \otimes(\xi_2+\xi_3),
    \label{eq:one-copy-normal-form}\\
    \tau_S
    &=
    \alpha\rho_S^+
    +(1-\alpha)\rho_S^-.
    \label{eq:one-copy-tau}
\end{align}
Applying~\eqref{eq:relative-entropy-direct-sum-used} to the four
orthogonal values of $A$ gives
\begin{align}
    &D(\rho_{SRA}\|\tau_S\otimes\sigma_{RA})=
    \alpha D(\rho_S^+\|\tau_S)
    +(1-\alpha)D(\rho_S^-\|\tau_S)+
    D\!\left(
        (\alpha,1-\alpha)
        \middle\|
        \left(\frac12,\frac12\right)
    \right).
    \label{eq:one-copy-input-counterexample}
\end{align}
Using~\eqref{eq:four-state-relation-counterexample}, the two output
states can be written as
\begin{align}
    \cN(\rho_{SRA})
    &=
    \frac{\alpha}{2}
    \rho_S^+\otimes\omega_2\otimes\ket0\!\bra0_B
    +
    \frac{1-\alpha}{2}
    \rho_S^-\otimes\omega_2\otimes\ket1\!\bra1_B
    \nonumber\\
    &\quad+
    \frac12
    \tau_S\otimes\omega_3\otimes\ket0\!\bra0_B,
    \label{eq:one-copy-first-output}\\
    \cN(\tau_S\otimes\sigma_{RA})
    &=
    \frac14\tau_S\otimes\omega_2\otimes\ket0\!\bra0_B
    +
    \frac14\tau_S\otimes\omega_2\otimes\ket1\!\bra1_B
    \nonumber\\
    &\quad+
    \frac12\tau_S\otimes\omega_3\otimes\ket0\!\bra0_B.
    \label{eq:one-copy-second-output}
\end{align}
The three terms in each expression have mutually orthogonal supports.
Applying~\eqref{eq:relative-entropy-direct-sum-used} to these three
orthogonal sectors gives
\begin{align}
    D\!\left(
        \cN(\rho_{SRA})
        \middle\|
        \cN(\tau_S\otimes\sigma_{RA})
    \right)
    =
    \frac12
    D(\rho_{SRA}\|\tau_S\otimes\sigma_{RA}).
\end{align}
The value $1/2$ is attained by
\begin{align}
    \rho_{SRA}
    =
    \frac12\tau_S\otimes(\xi_0+\xi_1).
\end{align}
Since $S$ and $\tau_S$ were arbitrary,
\begin{align}
    \eta_D^{\mathrm{C,stab}}(\cN,\sigma_{RA})
    =
    \frac12.
    \label{eq:one-copy-value-counterexample}
\end{align}

\paragraph{Two copies.}
For $0<\varepsilon<1/2$, consider the state
\begin{align}
    \rho^\varepsilon
    \coloneqq
    \sigma_{RA}^{\otimes2}
    +
    \frac{\varepsilon}{16}
    \Big[
        (\xi_0+\xi_1-\xi_2-\xi_3)\otimes\xi_2
          +
        \xi_2\otimes
        (\xi_0+\xi_1-\xi_2-\xi_3)
    \Big].
    \label{eq:two-copy-small-perturbation}
\end{align}
The reference marginal is unchanged because
\begin{align}
    \omega_0+\omega_1-\omega_2-\omega_3=0.
\end{align}
Hence
\begin{align}
    \rho^\varepsilon_{R_1R_2}
    =
    \sigma_{R_1}\otimes\sigma_{R_2},
\end{align}
so $\rho^\varepsilon$ is admissible for the two-copy complete SDPI.
The state $\sigma_{RA}^{\otimes2}$ corresponds to the uniform
distribution over the $16$ classical label pairs $(x,y)$. The
perturbation in~\eqref{eq:two-copy-small-perturbation} changes this
distribution by the relative amounts
\begin{align}
    +\varepsilon
    &\quad\text{for }(0,2),(1,2),(2,0),(2,1),\\
    -\varepsilon
    &\quad\text{for }(3,2),(2,3),\\
    -2\varepsilon
    &\quad\text{for }(2,2),
\end{align}
and leaves all other probabilities unchanged. This also makes
positivity clear for $0<\varepsilon<1/2$.
Since the conditional quantum state
$\omega_x\otimes\omega_y$ associated with each label pair is the same
in $\rho^\varepsilon$ and $\sigma_{RA}^{\otimes2}$, their relative
entropy is simply the classical relative entropy of these label
distributions. We use the elementary expansion
\begin{align}
    D\!\left(
        \big(q_i(1+\varepsilon a_i)\big)_i
        \middle\|
        (q_i)_i
    \right)
    =
    \frac{\varepsilon^2}{2}
    \sum_iq_i a_i^2
    +O(\varepsilon^3),
    \label{eq:classical-relative-entropy-expansion}
\end{align}
valid when $\sum_iq_i a_i=0$. Since every label pair has probability
$1/16$ under the uniform distribution, we obtain
\begin{align}
    D(\rho^\varepsilon\|\sigma_{RA}^{\otimes2})
    &=
    \frac{\varepsilon^2}{32}
    \left(
        4\cdot1^2+2\cdot(-1)^2+(-2)^2
    \right)
    +O(\varepsilon^3)\\
    &=
    \frac{5}{16}\varepsilon^2
    +O(\varepsilon^3).
    \label{eq:two-copy-input-expansion}
\end{align}

At the output,
\begin{align}
    \cN(\xi_0+\xi_1-\xi_2-\xi_3)
    =
    \omega_2\otimes
    \left(
        \ket0\!\bra0-\ket1\!\bra1
    \right)_B,
    \label{eq:one-copy-perturbation-output}
\end{align}
while
\begin{align}
    \cN(\xi_2)
    =
    \omega_2\otimes\ket1\!\bra1_B.
\end{align}
Therefore,
\begin{align}
    &\cN^{\otimes2}(\rho^\varepsilon)
    -
    \cN(\sigma_{RA})^{\otimes2}=
    \frac{\varepsilon}{16}
    \omega_2^{\otimes2}
    \otimes
    \left(
        \ket{01}\!\bra{01}
        +
        \ket{10}\!\bra{10}
        -
        2\ket{11}\!\bra{11}
    \right)_{B_1B_2}.
    \label{eq:two-copy-output-perturbation}
\end{align}
Furthermore,
\begin{align}
    \cN(\sigma_{RA})
    &=
    \frac14\omega_2\otimes\ket0\!\bra0_B
    +
    \frac14\omega_2\otimes\ket1\!\bra1_B
    +
\frac12\omega_3\otimes\ket0\!\bra0_B.
\end{align}
Thus the three mutually orthogonal states appearing in
\eqref{eq:two-copy-output-perturbation} each have probability $1/16$
under the unperturbed two-copy output. Their relative perturbations are
$+\varepsilon,+\varepsilon,-2\varepsilon$, respectively. Applying
\eqref{eq:classical-relative-entropy-expansion} again gives
\begin{align}
    D\!\left(
        \cN^{\otimes2}(\rho^\varepsilon)
        \middle\|
        \cN(\sigma_{RA})^{\otimes2}
    \right) &=
    \frac{\varepsilon^2}{32}
    \left(
        1^2+1^2+(-2)^2
    \right)
    +O(\varepsilon^3)\\
    &=
    \frac{3}{16}\varepsilon^2
    +O(\varepsilon^3).
    \label{eq:two-copy-output-expansion}
\end{align}
Consequently,
\begin{align}
    \lim_{\varepsilon\searrow0}
    \frac{
        D\!\left(
            \cN^{\otimes2}(\rho^\varepsilon)
            \middle\|
            \cN(\sigma_{RA})^{\otimes2}
        \right)
    }{
        D(\rho^\varepsilon\|\sigma_{RA}^{\otimes2})
    }
    =
    \frac35
    >
    \frac12.
\end{align}
Hence, for all sufficiently small $\varepsilon>0$,
\begin{align}
    \eta_D^{\mathrm{C,stab}}
    \!\left(
        \cN^{\otimes2},
        \sigma_{RA}^{\otimes2}
    \right)
    >
    \frac12
    =
    \eta_D^{c,\mathrm{stab}}(\cN,\sigma_{RA}),
\end{align}
which completes the proof.
\end{proof}

\end{document}